\newcommand{\remove}[1]{}
\newcommand{\KDDinclude}[1]{#1}
\newcommand{\OnlyInFull}[1]{#1}
\newcommand{\OnlyInShort}[1]{}
\newtheorem{theorem}{Theorem}
\newtheorem{fact}{Fact}
\newtheorem{definition}{Definition}
\newtheorem{assumption}{Assumption}
\newtheorem{lemma}[fact]{Lemma}
\newtheorem{proposition}[fact]{Proposition}
\newtheorem{axiom}{Axiom}
\newcommand{\compilehidecomments}{false}
	\newcommand{\wei}[1]{}
	\newcommand{\shanghua}[1]{}
	\newcommand{\wei}[1]{{\color{blue!50!black}  [\text{Wei:} #1]}}
	\newcommand{\shanghua}[1]{{\color{brown!60!black} [\text{Shanghua:} #1]}}
\newcommand{\hide}[1]{}
\newcommand{\argmax}{\operatornamewithlimits{argmax}}
\newcommand{\calI}{{\mathcal{I}}}
\newcommand{\E}{\mathbb{E}}
\newcommand{\I}{\mathbb{I}}
\newcommand{\R}{\mathbb{R}}
\newcommand{\bT}{\mathbb{T}}
\newcommand{\bI}{\boldsymbol{I}}
\newcommand{\bL}{\boldsymbol{L}}
\newcommand{\bR}{\boldsymbol{R}}
\renewcommand{\bT}{\boldsymbol{T}}
\newcommand{\bu}{\boldsymbol{u}}
\newcommand{\bv}{\boldsymbol{v}}
\newcommand{\bW}{\boldsymbol{W}}
\newcommand{\bX}{\boldsymbol{X}}
\newcommand{\bY}{\boldsymbol{Y}}
\newcommand{\bx}{\boldsymbol{x}}
\newcommand{\by}{\boldsymbol{y}}
\newcommand{\bZ}{\boldsymbol{Z}}
\newcommand{\bpi}{\boldsymbol{\pi}}
\newcommand{\cA}{\mathcal{A}}
\newcommand{\cB}{\mathcal{B}}
\newcommand{\cV}{\mathcal{V}}
\newcommand{\btau}{\boldsymbol{\tau}}
\newcommand{\btheta}{\boldsymbol{\theta}}
\newcommand{\boldeta}{\boldsymbol{\eta}}
\newcommand{\epsi}{\hat{\boldsymbol{\psi}}}
\newcommand{\vv}{\vec{v}}
\newcommand{\vb}{\vec{b}}
\newcommand{\vzero}{\vec{0}}
\def\INPUT{\REQUIRE}
\def\OUTPUT{\ENSURE}
\def\est{\boldsymbol{\it est}}
\def\LB{\boldsymbol{\it LB}}
\def\ASVRR{{\sf ASV-RR}}
\def\ASVRRW{{\sf ASV-RR-W}}
\def\ASNIRR{{\sf ASNI-RR}}
\def\IMM{{\sf IMM}}
\def\Degree{{\sf Degree}}
\def\Shapley{\textit{Shapley}}
\def\SNI{\textit{SNI}}
\def\@copyrightspace{\relax}
\begin{document}





\title{Interplay between Social Influence and Network Centrality:
	A Comparative Study 
	on Shapley Centrality and Single-Node-Influence Centrality
	}

\numberofauthors{2} 

\author{
	\alignauthor Wei Chen \\
	\affaddr{Microsoft Research}\\
	\affaddr{Beijing, China}\\
	\email{weic@microsoft.com}
	\and
	\alignauthor Shang-Hua Teng \\
	\affaddr{University of Southern California}\\
	\affaddr{Los Angeles, CA, U.S.A.}\\
	\email{shanghua@usc.edu}
}



\maketitle

\begin{abstract}
	We study network centrality
	based on dynamic influence propagation
	models in social networks.
To illustrate our integrated mathematical-algorithmic approach
  for understanding the fundamental 
  interplay between dynamic influence processes and static network structures,
  we focus on two basic centrality measures: 
	(a) {\em Single Node Influence} (SNI) {\em centrality}, 
	which measures each node's significance
  by its influence spread;\footnote{The influence spread of 
		a group is the expected number 
		of nodes this group 
		can activate as the initial active set.}
	and
	(b) {\em Shapley Centrality}, which uses the Shapley value 
	of the influence spread function ---
	formulated based on a fundamental 
	cooperative-game-theoretical concept --- to measure 
	the significance of nodes.
	We present a comprehensive comparative study of these 
	two centrality measures.
	Mathematically, we present 
	axiomatic characterizations, 
	which precisely capture the essence of these
	two centrality measures and their fundamental differences.
	Algorithmically,
	we provide scalable algorithms for approximating
	them for a large family of social-influence instances.
	Empirically, we demonstrate their similarity and differences
	in a number of real-world social networks,
	as well as the efficiency of our scalable algorithms.
	Our results shed light on their applicability:
	SNI centrality is suitable for assessing individual influence in isolation
	while Shapley centrality assesses individuals' performance in group influence settings.

\end{abstract}

%
%
%
%

\keywords{Social network; social influence; influence diffusion model; 
	interplay between network and influence model;
	network centrality; Shapley values; 
	scalable algorithms}

\sloppy

\section{Introduction}\label{sec:Introduction}


Network science is a fast growing discipline 
  that uses mathematical graph structures to represent real-world
  networks --- such as the Web, Internet, social networks, 
  biological networks, and power grids ---
  in order to study fundamental network properties.
However, network phenomena are far more complex 
  than what can be captured only by nodes and edges, making
  it essential to formulate network concepts by incorporating network facets
  beyond graph structures  \cite{NetworkEssence}.
For example, network centrality is a key concept in network analysis.
The {\em centrality} of nodes, usually measured by a real-valued function,
	reflects their significance, importance, or crucialness 
	within the given network.
Numerous centrality measures have been proposed, 
  based on degree, closeness, betweenness and eigenvector (i.e., PageRank)
(cf. \cite{NewmanBook}).
However, most of these centrality measures focus
  only on the static topological structures of the networks, 
  while	real-world network data 
  include much richer interaction dynamics beyond static topology.
	
{\em Influence propagation} is a wonderful example of interaction dynamics in social networks.
As envisioned by 
Domingos and Richardson \cite{RichardsonDomingos,DomingosRichardson},
and beautifully formulated by Kempe, Kleinberg, and Tardos \cite{kempe03},
{\em social influence propagation}
can be viewed as a stochastic dynamic process
over an underlying  static graph:
After a group of nodes becomes {\em active}, 
these {\em seed nodes} propagate their influence 
through the graph structure.
Even when the static graph structure of a social network is fixed, 
	dynamic phenomena such as the spread of ideas, epidemics, and technological
	innovations can follow different processes.
Thus, network centrality, which aims to measure nodes' importance 
  in social influence,
  should be based not only
  on static graph structure, 
  but also on the dynamic influence propagation process.
	
In this paper, we address 
  the basic question of {\em how to formulate network centrality
		measures that reflect dynamic influence propagation}.
We will focus on the study 
  of the {\em interplay between social influence and network centrality}.

\KDDinclude{A social influence} instance\remove{\KDDinclude{$\calI$}}
specifies a directed graph $G=(V,E)$
and an influence model $P_{\calI}$ (see Section \ref{sec:prel}).
\remove{,  we will review various  network influence models introduced in 
	\cite{kempe03}.}
For each \remove{group}$S\subseteq V$, $P_{\calI}$ 
  defines a stochastic influence process on $G$ with $S$ as the initial active set,
which 
activates a random  set $\bI(S) \supseteq S$ with probability 
$P_{\calI}(S,\bI(S))$.
Then, $\sigma(S) = \E[|\bI(S)|]$
\KDDinclude{is the} {\em influence spread} \KDDinclude{of $S$.}
The question above can be restated as: 
Given a social-influence instance $(V,E,P_{\calI})$,
	how should we define the centrality of nodes in $V$?

A natural centrality measure for each node $v\in V$ is
    its influence spread $\sigma(\{v\})$.
However, this measure --- referred to as
  the {\em single node influence} (SNI)  {\em centrality} ---
  completely  ignores the influence profile of groups of nodes 
  and a node's role in such group influence.
Thus, other more sensible centrality measures accounting 
  for group influence may better capture nodes' roles in social influence.
As a concrete formulation of group-influence analyses, we
  apply Shapley value \cite{Shapley53} ---
  a fundamental concept from cooperative game theory ---
  to define a new centrality measure 
  for social influence.

Cooperative game theory is a mathematical theory studying people's performance and behavior in coalitions
	(cf. \cite{MyersonBook}).
Mathematically, an $n$-person {\em coalitional game}
is defined by 
a {\em characteristic function}
$\tau: 2^V \rightarrow \R$, where $V=[n]$, and $\tau(S)$ is the utility of the coalition $S$ \cite{Shapley53}.
In this game,
the {\em Shapley value} $\phi_v^{\Shapley}(\tau)$  of 
$v\in V$  is $v$'s {\em expected marginal contribution in a random order}. 
More precisely:
\begin{equation} \label{eq:shapleydef}
\phi_v^{\Shapley}(\tau) = \E_{\bpi}[\tau(S_{\bpi,v} \cup \{v\}) - \tau(S_{\bpi,v})],
\end{equation}
where $S_{\bpi, v}$ denotes the set of players preceding $v$ 
in a random permutation $\bpi$ of $V$:
The Shapley value enjoys an 
  axiomatic characterization (see Section \ref{sec:prel}), 
and 
is widely considered to be the {\em fairest} measure of 
a player's power in a cooperative~game.

Utilizing the above framework, 
  we view influence spread $\sigma(\cdot)$ as a characteristic function, and
  define the {\em Shapley centrality} of an influence instance 
  as the Shapley value of $\sigma$.

In this paper, we present a comprehensive comparative study 
 of SNI and Shapley centralities.
In the age of Big Data, networks are massive.
Thus, an effective solution concept in network science
should be both  {\em mathematically meaningful} and
{\em algorithmically efficient}.
In our 
  study, we will address both the
  conceptual and algorithmic questions.

Conceptually,
 influence-based centrality can be viewed as a {\em dimensional reduction} from the high dimensional
influence model $P_{\calI}$ to a low dimensional centrality measure.
Dimensional reduction of data is a challenging task, because 
inevitably some information is lost.
\OnlyInFull{As highlighted by Arrow's celebrated impossibility theorem on voting 
	\cite{ArrowBook},
	for various (desirable) properties, 
	{\em conforming} dimensional reduction scheme may not even exist. }Thus, it is fundamental 
to characterize what each centrality measure captures.

So, ``what do Shapley and SNI centralities capture?
what are their basic differences?''
Axiomatization is an instrumental approach for such characterization.
In Section \ref{sec:Axioms}, we present 
our axiomatic characterizations.
We present five axioms for Shapley centrality,
  and prove that it is the unique centrality measure satisfying these
axioms. 
We do the same for the SNI centrality with three axioms.
Using our axiomatic characterizations,
we then provide a detailed comparison of Shapley and SNI centralities.
Our characterizations show that 
(a) SNI centrality focuses on 
individual influence and 
would not be appropriate for models concerning group influence,
such as threshold-based models.
(b) Shapley centrality focuses on 
individuals' ``irreplaceable power" in group influence settings, 
but may not	be interpreted well if one 
prefer to focus on individual influence in isolation.

The computation of influence-based centralities is also a challenging 
  problem:
Exact computation of influence spread 
  in the basic {\em independent cascade}
  and {\em linear-threshold} models has been shown to be 
  $\#$P-complete \cite{wang2012scalable,ChenYuanZhang}.\OnlyInShort{ Shapley centrality computation 
  seems to be more challenging 
  since its definition as in Eq.~\eqref{eq:shapleydef}
 involves various influence spreads derived from $n!$ permutations. }\OnlyInFull{ Shapley centrality computation seems to be more challenging since its definition as in Eq.~\eqref{eq:shapleydef}
	involves $n!$ permutations, and existing Shapley value computation in several simple network games
	have quadratic or cubic time complexity 
	\cite{ShapleyValueForCentrality1}. }Facing these challenges, 
  in Section \ref{sec:ScalableAlgorithm}, 
  we present provably-good scalable
  algorithms for approximating both Shapley and SNI centralities 
   of a large family of social influence instances.
Surprisingly, both algorithms share 
  the same algorithm structure, which 
  extends techniques from the recent algorithmic breakthroughs
  in influence maximization \cite{BorgsBrautbarChayesLucier,tang14,tang15}.
We further 
  conduct empirical evaluation of Shapley and SNI centralities 
  in a number of real-world networks.
Our experiments --- see Section \ref{sec:experiments} --- 
 show that our  algorithms can scale up 
  to networks with tens of millions of nodes and edges,
  and these two centralities are similar 
  in several cases but also have noticeable differences.

These combined mathematical/algorithmic/empirical analyses together present
(a) a systematic case study of 
  the interplay between influence dynamics and network centrality
  based on Shapley and SNI centralities;
(b) axiomatic characterizations for two basic centralities that precisely 
  capture their similarities and differences; and
 (c)  new scalable algorithms 
  for influence models.
We believe that the dual 
  axiomatic-and-algorithmic characterization 
  provides a comparative framework for evaluating 
  other influence-based 
  network concepts in the future.
\OnlyInShort{Due to space constraint, 
 proofs and 
additional results are in \cite{CT16}.
	}
	
	\OnlyInFull{
		For presentation clarity, we move the technical proofs into the appendix,
		which also contains 
		additional technical materials for (algorithmic and axiomatic) generalization to weighted influence models.
	}

\subsection{Related Work}

Network centrality has been extensively studied 
(see~\cite{NewmanBook} and the references therein for
	a 
comprehensive introduction).
Most classical centralities, 
  based on degree, closeness, betweenness, eigenvector, 
  are defined on static graphs.
But some also have dynamic interpretations 
  based on random-walks or network flows~\cite{BorgattiCentrality}.
Eigenvector centrality~\cite{Bonacich1972} and 
its closely related Katz-\cite{Katz} and
  Alpha-centrality~\cite{Bonacich1987power}
  can be viewed as some forms of influence measures,
  since their dynamic processes are  non-conservative~\cite{GL14}, 
  meaning that 
  items could be replicated and propagated, 
  similar to diffusion of ideas, opinions, etc.
PageRank \OnlyInFull{\cite{PageRank,PageRankBMW98}}\OnlyInShort{\cite{PageRank}} and other random-walk related centralities 
correspond to conservative processes, 
  and thus may not be suitable for propagation dynamics.
Percolation centrality~\cite{PercolationCentrality} also 
  addresses diffusion process, 
  but its definition only involves static percolation. 
None of above maps specific propagation models to network centrality.
Ghosh et al.~\cite{GhoshInterplay} maps a 
  linear dynamic process characterized by parameterized Laplacian 
  to centrality 
  but the social influence models we consider in this paper 
  are beyond 
  such linear dynamic framework.
Michalak et al. use Shapley value as network centrality~\cite{ShapleyValueForCentrality1}, but they only consider five basic 
  network games based on local sphere of influence,
  and their algorithms run in (least) quadratic time.
To the best of our knowledge, 
  our study is the first to explicitly map general social network
  influence propagation models to network centrality.



Influence propagation has been extensively studied, 
 but most focusing on influence maximization
	tasks~\cite{kempe03,wang2012scalable,ChenYuanZhang},
which aims to efficiently select a set of nodes 
  with the largest influence spread.
The solution is not a centrality measure and 
  the seeds in the solution may not be the high centrality nodes.
Borgatti~\cite{borgatti06} provides clear conceptual discussions on the difference between
	centralities and such key player set identification problems.
Algorithmically, our construction extends the idea of reverse reachable sets,
  recently introduced
	 in~\cite{BorgsBrautbarChayesLucier,tang14,tang15}
  for scalable influence maximization.

In terms of axiomatic characterizations of network centrality, Sabidussi is the first
	who provides a set of axioms that a centrality measure should satisfy~\cite{Sabidussi66}.
A number of other studies since then either provide other axioms that a centrality measure
	should satisfy (e.g.~\cite{Nieminen73,BV14,SB16}) or a set of axioms that uniquely
	define a centrality measure (e.g. \cite{PageRankAxioms} on PageRank without the damping factor).
All of these axiomatic characterizations focus on
  static graph structures, while our axiomatization 
	focuses on the interplay between dynamic influence processes
  and static graph structures, and thus our
	study fundamentally
	differs from all the above characterizations.
While we are heavily influenced by the axiomatic characterization of the 
	Shapley value~\cite{Shapley53}, we are also inspired by
	social choice theory \cite{ArrowBook}, and particularly by
	\cite{IntellectualInfluence} on measures of intellectual influence and
	\cite{PageRankAxioms} on PageRank.

\OnlyInShort{\vspace{-2mm}}
\section{Influence and Centrality} \label{sec:prel}

\OnlyInFull{In this section, we review the basic concepts about 
	social influence models and Shapley value, and define
	the Shapley and single node influence centrality measures.
}


\OnlyInShort{\vspace{-2mm}}
\subsection{Social Influence Models} \label{sec:infmodel}

A network-influence instance is usually specified by
  a triple $\calI = (V,E,P_{\calI})$, 
  where a directed graph $G = (V,E)$ represents
  the structure of a social network, and 
  $P_{\calI}$ defines the influence model~\cite{kempe03}.
As an example, consider the classical discrete-time
  {\em independent cascade (IC) model},  
in which each directed  edge $(u,v) \in E$
  has an influence probability $p_{u,v}\in [0,1]$.
At time $0$,  nodes in a given seed set $S$ are activated while other nodes are inactive.
At time $t \ge 1$,
   for any node $u$ activated at time $t-1$, it has one
   chance to activate each of its inactive out-neighbor 
   $v$ with an independent probability $p_{u,v}$.
When there is no more activation,
  the stochastic process ends with a random set $\bI(S)$
  of nodes activated during the process.
\remove{We say}\remove{denote the random set of 
. 
Let 
  $\bI(S)$ denote the random set of 
  all nodes activated in a run of the influence process 
  with seed set $S$.}The {\em influence spread} of $S$ is $\sigma(S) = \E[|\bI(S)|]$,
  the expected number of nodes influenced by 
  $S$.
Throughout the paper, we use boldface symbols to represent 
   random variables.

Algorithmically, we will focus on
  the (random) {\em triggering model} \cite{kempe03}, which has IC model as a special case.
In this model, 
  each $v\in V$ has a random {\em triggering set} $\bT(v)$, 
  drawn	from a distribution defined by the influence model over the power set  of all in-neighbors of $v$.
At time $t=0$, triggering sets $\{\bT(v)\}_{v\in V}$ are drawn independently,
  and the seed set $S$ is activated.
At \remove{time }$t\ge 1$, if $v$ is not active,
  it becomes activated\remove{unless $v$ is already active}
  if some $u\in \bT(v)$ is activated at time $t-1$.
\OnlyInFull{The {\em influence spread} of $S$ is $\sigma(S) = \E[|\bI(S)|]$,
  where $\bI(S)$ denotes the random set activated by $S$.
IC is the triggering model that:
For each directed edge $(u,v)\in E$, 
  add $u$ to $\bT(v)$ with an independent probability of $p_{u,v}$.}
The triggering model can be equivalently viewed under the 
{\em live-edge graph model}:
(1) Draw independent random triggering sets~$\{\bT(v)\}_{v\in V}$;
(2) form a {\em live-edge graph} $\bL = (V,\{(u,v): u\in \bT(v) \})$, where $(u,v), u\in \bT(v)$ 
	is referred as a {\em live edge}.
For any subgraph $L$  of $G$ and $S\subseteq V$, 
let $\Gamma(L, S)$ be the set of nodes in $L$ 
	reachable from set $S$.
Then set of active nodes with seed set $S$ is $\Gamma(\bL, S)$, and influence spread
	$\sigma(S) = \E_{\bL}[|\Gamma(\bL, S)|] 
	= \sum_{L} \Pr(\bL = L)\cdot |\Gamma(L, S)|$.
We say a set function $f(\cdot)$ 
  is {\em monotone} if $f(S)\le f(T)$ 
   whenever $S\subseteq T$, and {\em submodular}
   if $f(S\cup \{v\}) - f(S) \ge f(T\cup \{v\}) - f(T)$ whenever
	$S \subseteq T$ and $v\not\in T$.
As shown in \cite{kempe03},  
  in any triggering model,  $\sigma(\cdot)$  is monotone and 
  \OnlyInShort{submodular.}\OnlyInFull{submodular,
  because $|\Gamma(L, S)|$ is monotone and submodular 
  for each graph $L$.}

More generally, we 
	define an {\em influence instance} 
  as a triple $\calI = (V,E,P_{\calI})$, where
$G = (V,E)$ represents the underlying network, and
$P_{\calI}: 2^V \times 2^V \rightarrow \R$ defines the probability  that
  in the influence process, 
any seed set $S\subseteq V$ activates {\em exactly} nodes in any target set $T\subseteq V$ and no other nodes: 
If $\bI_{\calI}(S)$ denotes the random set activated by seed set $S$, 
then $\Pr(\bI_{\calI}(S) = T) = P_{\calI}(S,T)$.
This probability profile 
  is commonly defined by a succinct influence model, 
  such as the triggering model, which interacts with network $G$.
We also require that: 
%
(a) $P_{\calI}(\emptyset, \emptyset) = 1$, 
$P_{\calI}(\emptyset, T) = 0$, $\forall T\neq \emptyset$, and
(b) if $S \not\subseteq T$ then $P_{\calI}(S, T) = 0$,
i.e., $S$ always activates itself ($S\subseteq \bI_{\calI}(S)$).
Such model is also referred to as the {\em progressive} influence model.
The {\em influence spread} of \remove{the seed set }$S$ is: 
$$\sigma_{\calI}(S) = \E[|\bI_{\calI}(S)|] = \sum_{T\subseteq V, S\subseteq T}P_{\calI}(S, T) \cdot |T|.$$

\vspace{-2mm}
\subsection{Coalitional Games and Shapley Values}

An $n$-person {\em coalitional game} over $V = [n]$
  is specified by a {\em characteristic function}
  $\tau: 2^V \rightarrow \R$, 
  where for any coalition $S\subseteq V$, 
  $\tau(S)$ denotes the {\em cooperative utility} 
  of  
   $S$. 
In cooperative game theory,
a {\em ranking function}  $\phi$ 
  is a mapping from a\remove{any} characteristic 
  function $\tau$ to a vector in $\R^n$. 
A fundamental solution concept of cooperative game theory 
  is the ranking function given by the {\em Shapley value} \cite{Shapley53}:
Let $\Pi$ be the set of all permutations of $V$.
For any $v\in V$ and $\pi \in \Pi$, 
  let $S_{\pi, v}$ denote the set of nodes in $V$
  preceding $v$ in permutation $\pi$.
Then, $\forall v\in V$:\remove{, its Shapley value $\phi^{\Shapley}_v(\tau)$ is:}
\vspace{-1mm}
\begin{align*}
\phi^{\Shapley}_v(\tau) = & \frac{1}{n!}\sum_{\pi \in \Pi} \left( \tau(S_{\pi,v} \cup \{v\}) - \tau(S_{\pi,v})\right) \\
	= & \sum_{S\subseteq V \setminus \{v\}} \frac{|S|!(n-|S|-1)!}{n!}
		\left( \tau(S \cup \{v\}) - \tau(S)\right).
\end{align*}
\remove{With a bit of abuse of notation,
  we will also use $\Pi$ to denote the 
  distribution
  that each permutation of $V$ has an equal probability
	of $1/n!$.
Let $\bpi$ be a random permutation drawn from $\Pi$.}We use $\bpi\sim \Pi$ to denote that 
 $\bpi$ is a random permutation uniformly drawn from $\Pi$.
Then: \remove{$\phi_v(\tau)$ can be equivalently defined as:}
\begin{equation} \label{eq:sprandom}
\phi^{\Shapley}_v(\tau) = \E_{\bpi\sim \Pi}[\tau(S_{\bpi,v} \cup \{v\}) - \tau(S_{\bpi,v})].
\end{equation}
The Shapley value of $v$ measures
  $v$'s marginal contribution over the set preceding $v$
	in a random permutation.

Shapley 
 \cite{Shapley53} proved a \remove{the following}remarkable representation theorem:
  The Shapley value is the unique ranking function
	that satisfies all the following four conditions:
\remove{\begin{itemize}
\item }(1) {\sf Efficiency}:  
  $\sum_{v\in V} \phi_{v}(\tau) = \tau(V)$.
\remove{\item }(2) {\sf Symmetry}:
For any $u,v\in V$, 
  if  $\tau(S\cup \{u\}) = \tau(S\cup \{v\})$, $\forall 
  S\subseteq V\setminus \{u,v\}$,
 then $\phi_u(\tau) = \phi_v(\tau)$.
\remove{\item }(3) {\sf Linearity}: 
For any two characteristic functions $\tau$ and $\omega$,  for any
 $\alpha, \beta > 0$, 
$\phi(\alpha\tau + \beta\omega) = \alpha\phi(\tau) + \beta\phi(\omega)$.
\remove{\item }(4) {\sf Null Player}:
For any $v\in V$, 
	if $\tau(S\cup \{v\}) - \tau(S) = 0$,  $\forall 
   S\subseteq V \setminus \{v\}$, 
   then~$\phi_v(\tau)=0$.
\remove{\end{itemize}}{\sf Efficiency} states that the total utility is fully distributed.  
{\sf Symmetry} states that two players' ranking values should be the same
 if they have the identical marginal utility profile. 
{\sf Linearity} states that the ranking values of the weighted sum of 
  two coalitional games is the same as the weighted
  sum of their ranking values.
{\sf Null Player}
  states that a player's ranking value should be zero if 
  the player has zero marginal utility to every subset.


\subsection{Shapley and SNI Centrality}

The influence-based centrality measure 
  aims at assigning a value for every node under every influence instance:
\OnlyInShort{\vspace{-2mm}}
\begin{definition}[Centrality Measure]\label{def:CM}
	An {\em (influence-based) centrality measure} $\psi$ is a mapping from 
	an influence instance $\calI = (V,E,P_{\calI})$ 
	to a real vector $(\psi_{v}(\calI))_{v\in V} \in \R^{|V|}$.
\end{definition}



\vspace{-2mm}

The {\em single node influence} (SNI) {\em centrality}, denoted by
  $\psi^{\SNI}_v(\calI)$,
  assigns the influence spread of node $v$ as $v$'s centrality measure:
$\psi^{\SNI}_v(\calI) = \sigma_{\calI}(\{v\})$.

The {\em Shapley centrality}, denoted by
	$\psi^{\Shapley}(\calI)$, is the Shapley value
	of the influence spread function $\sigma_{\calI}$:
$\psi^{\Shapley}(\calI) = \phi^{\Shapley}(\sigma_{\calI})$.
As a subtle point, note that  $\phi^{\Shapley}$ maps from a 
  $2^{|V|}$ dimensional $\tau$ to a $|V|$-dimensional vector, 
  while, formally, $\psi^{\Shapley}$ 
  maps from $P_{\calI}$ --- whose dimensions is close to $2^{2|V|}$ --- to a 
	$|V|$-dimensional vector.

To help understand these definitions, 
Figure~\ref{fig:example} provides a simple example of a
$3$-node graph in the IC model 
  with influence probabilities shown on the edges.
The associated table shows the result for Shapley and SNI centralities.
While SNI is straightforward in this case, 
  the Shapley centrality calculation already looks complex.
\OnlyInShort{Due to space constraint, we left readers to verify the computation.}\OnlyInFull{For example, 
	for node $u$, its second term in the Shapley computation, $\frac{1}{3}(1-p)\cdot 1.5 $,
	accounts for the case where $u$ is ordered in the second place (with probability $1/3$), in which
	case only when the first-place node (either $v$ or $w$) does not activate $u$ (with probability
	$1-p$), it could have marginal influence of $1$ in activating itself, and $0.5$ in activating
	the remaining node.
Similarly, the third term for the Shapley computation for node $v$ accounts for the case where 
	$v$ is ordered second and $w$ is ordered first (with probability $1/6$), 
	in which case if $w$ does not activate $u$ (with probability $1-p$), 
	$v$'s marginal influence spread is $1$ for itself and $p$ for activating $u$; while if $w$
	activates $u$ (with probability $p$), only when $u$ does not activate $v$ (with probability $0.5$), 
	$v$ has marginal influence of $1$ for itself.
The readers can verify the rest.}
Based on the result, 
  we find that for interval $p \in (1/2, 2/3)$, Shapley and SNI centralities
  do not align in ranking: 
  Shapley places $v,w$ higher than $u$ while SNI puts $u$ higher than $v,w$.
\begin{figure} [t]
	\centering
	{\includegraphics[width=1.0\linewidth]{./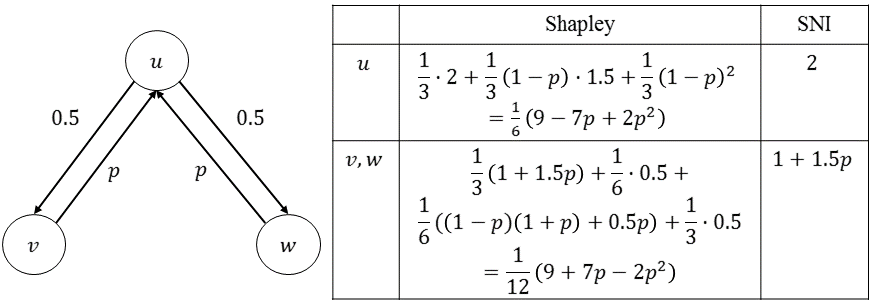}}
	\caption{Example on Shapley and SNI centrality.}
	\label{fig:example}
\end{figure}
This simple example already illustrates that 
  (a) computing Shapley centrality could be
  a nontrivial task; and (b) the relationship between Shapley 
  and SNI centralities could be 
  complicated.
Addressing both the computation and characterization questions
   are the subject of the remaining sections.


\OnlyInShort{\vspace{-2mm}}
\section{Axiomatic Characterization}\label{sec:Axioms}

In this section, we present two sets of axioms uniquely 
  characterizing Shapley and SNI centralities,
  respectively, based on which we 
  analyze their similarities and differences.

\subsection{Axioms for Shapley Centrality}

Our set of axioms for characterizing the Shapley centrality
  is adapted from the classical  Shapley's axioms \cite{Shapley53}.

The first axiom states that
labels on the nodes should have no effect on centrality measures.
This ubiquitous axiom is similar to the isomorphic axiom 
 in some other centrality characterizations, e.g.~\cite{Sabidussi66}.


\vspace{-1mm}
\begin{axiom}[Anonymity] \label{axiom:anonymity}
For any influence instance $\calI = (V,E,P_{\calI})$, and permutation 
$\pi\in \Pi$, $\psi_{v}(\calI) = \psi_{\pi(v)}({\pi(\calI)})$, $\forall v \in V$.
\end{axiom}
\vspace{-1mm}
In Axiom \ref{axiom:anonymity},
  $\pi(\calI) = (\pi(V),\pi(E), \pi(P_{\calI}))$
denotes the isomorphic instance:
(1) $\forall u,v\in V$,  $(\pi(u),\pi(v))\in \pi(E)$ iff  $(u,v)\in E$, and
(2) $\forall S,T\subseteq V$,
$P_{\calI}(S,T) = P_{\pi(\calI)}(\pi(S),\pi(T))$.

The second axiom states that the 
  centrality measure divides 
  the total share of influence $|V|$. 
In other words, the average centrality 
   is normalized 
	to $1$.


\vspace{-1mm}
\begin{axiom}[Normalization] \label{axiom:normalization}
For every influence instance $\calI = (V,E,P_{\calI})$, 
  $\sum_{v\in V} \psi_{v}(\calI)  = |V|$.
\end{axiom}

The next axiom characterizes the centrality 
  of a type of extreme nodes in  social influence.
\remove{\begin{definition}[Sink Node] \label{def:sinknode} In an influence instance}In instance $\calI = (V,E,P_{\calI})$,
  we say $v\in V$ is a {\em sink node}
  if $\forall S, T\subseteq V\setminus \{v\}$, 
$P_{\calI}(S\cup \{v\}, T\cup \{v\}) = P_{\calI}(S, T) + P_{\calI}(S, T\cup\{v\})$.
\remove{\end{definition}}In the extreme case  when $S = T=\emptyset$, 
$P_{\calI}(\{v\}, \{v\}) = 1$,
i.e., $v$ can only influence itself.
When  $v$ joins another $S$ to form a seed set, 
  the influence to a target $T\cup\{v\}$ can always be achieved by
  $S$ alone (except perhaps the influence to $v$ itself).
In the triggering model, a sink node is (indeed)
	a node without outgoing edges, matching the name ``sink''.

Because a sink node $v$ has no influence on other nodes, 
  we can ``remove'' it and obtain a projection of the influence 
  model on the \remove{social }network without $v$:\remove{, as the following:}
\remove{\begin{definition} [Influence Projection: Sink Nodes] \label{def:projection}
Suppose $v\in V$ is a sink node in  an \remove{influence }instance 
 $\calI = (V,E,P_{\calI})$.
}Let $\calI\setminus\{v\} = (V \setminus \{v\}, E\setminus\{v\}, P_{\calI\setminus\{v\}})$ denote the {\em projected}
	instance  over 
        $ V \setminus \{v\}$,
	where  $E\setminus\{v\} = \{(i,j) \in E: v\not\in \{i,j\} \}$
	and $P_{\calI \setminus\{v\}}$ is the influence model such 
	that for all $S,T\subseteq V \setminus \{v\}$:
	\[P_{\calI \setminus \{v\}}(S, T)
	= P_{\calI}(S, T) + P_{\calI}(S, T\cup\{v\}).
	\]
\remove{\end{definition}}
Intuitively, since sink node $v$ is removed, the previously distributed influence from $S$ to
	$T$ and $T\cup\{v\}$ is merged into the influence from $S$ to $T$ in the projected instance.
For the triggering model, influence projection is simply removing the
	sink node $v$ and its incident incoming edges without changing the triggering set distribution of
	any other nodes.

Axiom 3 below considers the simple case
  when the influence instance has two sink nodes $u, v\in V$.
In such a case, $u$ and $v$ have no influence to each other,
  and they influence no one else.
\remove{no other nodes.}Thus,\remove{intuitively,} 
  their centrality should be fully determined by $V \setminus \{u,v\}$:
Removing one sink node --- say  $v$ ---
  should not affect the centrality measure of another sink node $u$.
\remove{The next axiom makes this intuition explicit.}

\vspace{-1mm}
\begin{axiom}[Independence of Sink Nodes] \label{axiom:sink}
For any influence instance $\calI = (V,E,P_{\calI})$, 
  for any pair of sink nodes $u,v\in V$ in $\calI$, it should be the case:
$\psi_{u}(\calI) = \psi_{u}({\calI\setminus\{v\}})$.
\end{axiom}

\vspace{-1mm}
The next axiom 
	considers {\em Bayesian social influence} 
  through a given network:
\remove{\begin{definition}[Bayesian social influence Process] \label{def:bayesian}}Given a graph $G=(V,E)$, and $r$ influence instances on $G$:
$\calI^{\eta} = (V,E,P_{\calI^{\eta}})$ with $\eta \in [r]$.
Let $\lambda = (\lambda_1, \lambda_2, \ldots, \lambda_r)$ 
  be a prior distribution on $[r]$, 
  i.e. $\sum_{\eta=1}^{r} \lambda_\eta = 1$, and $\lambda_{\eta}\ge 0$, 
  $\forall \eta\in [r]$.
The {\em Bayesian influence instance}
  $\calI_{\cB(\{\calI^{\eta}\}, \lambda)}$ has the following
  influence process for a seed set $S\subseteq V$: 
  (1) Draw a random index $\boldeta \in [r]$ according to
  distribution $\lambda$ (denoted as $\boldeta \sim \lambda $).
 (2) Apply the influence process of $\calI^{\boldeta}$ 
     with seed set $S$ to obtain the activated set $T$.
Equivalently, we have for all $S, T\subseteq V$, 
	$P_{\calI_{\cB(\{\calI^{\eta}\}, \lambda)}}(S, T)
	= \sum_{\eta=1}^{r} \lambda_\eta P_{\calI^\eta}(S, T)$.
In the triggering model, we can view each live-edge graph and the deterministic diffusion on it via reachability
	as an influence instance, and the diffusion of the triggering model is by the Bayesian
	(or convex) combination of these live-edge instances.
%
%
%
%
The next axiom reflects the linearity-of-expectation principle:

\vspace{-1mm}
\begin{axiom}[Bayesian Influence] \label{axiom:bayesian}
For any network $G = (V,E)$ and Bayesian social-influence
  model  $\calI_{\cB(\{\calI^{\eta}\}, \lambda)}$:
\vspace{-2mm}
\begin{eqnarray*}
\psi_v(\calI_{\cB(\{\calI^{\eta}\}, \lambda)}) =
 \E_{\boldeta\sim \lambda} \left[ \psi_v(\calI^{\boldeta})\right]= \sum_{\eta=1}^r \lambda_\eta \cdot\psi_v(\calI^{\eta}), \forall v\in V.
\end{eqnarray*}
\end{axiom}

\vspace{-1mm}
The above axiom essentially says that the centrality of a Bayesian instance before realizing
	the actual model $\calI^\eta$ is the same as the expected centrality after realizing $\calI^\eta$.


The last axiom characterizes the centrality\remove{ measures}
  of a family of simple social-influence instances.
For any $\emptyset \subset R\subseteq U \subseteq V$, a {\em critical set instance} 
  $\calI_{R,U} = (V, E,   P_{\calI_{R, U}})$ 
  is such that: 
  (1) The network $G = (V,E)$ contains a complete directed
  bipartite sub-graph from $R$ to $U\setminus R$, together with 
  isolated nodes $V \setminus U$.
  (2) For all $S \supseteq R$, $P_{\calI_{R,U}}(S, U \cup S) = 1$, and
  (3) For all $S \not \supseteq R$, $P_{\calI_{R,U}}(S, S) = 1$.
  In $\calI_{R,U}$, $R$ is called the {\em critical set}, and
  $U$ is called the {\em target set}.
In other words, a seed set containing $R$ activates all nodes in $U$, but missing any node in $R$
	the seed set only activates itself.
We use $\calI_{R,v}$ to denote the special case of $U=R\cup \{v\}$ and $V= U$.
That is, only if all nodes in $R$ work together they can activate $v$.

\vspace{-1mm}
\begin{axiom}[Bargaining with Critical Sets] \label{axiom:critical}
In any critical set  instance
$\calI_{R,v}$, \remove{ = (R\cup\{v\}, \{(u,v): u\in R\}, P_{\calI_{R, v}})}
the centrality of $v$ 
	is $\frac{|R|}{|R|+1}$, i.e. $\psi_v(\calI_{R,v}) = \frac{|R|}{|R|+1}$. 
\end{axiom}

\vspace{-1mm}
Qualitatively, Axiom~\ref{axiom:critical} together with Normalization and Anonymity axioms implies
	that the relative importance of $v$ comparing to a node in the critial set $R$ increases when
	$|R|$ increases, which is reasonable because when the critical set $R$ grows, individuals in $R$ becomes weaker 
	and $v$ becomes relatively stronger.\OnlyInShort{ The actual quantity can be explained by Nash's solution 
	to the bargaining game~\cite{NashBargining} (see~\cite{CT16}).}
\OnlyInFull{This axiom can be 
  interpreted through  Nash's solution \cite{NashBargining}
  to the bargaining game 
between a player representing the critical set $R$ 
  and the sink node $v$.\remove{as the other player
  in a two-player bargaining game.}
Let $r = |R|$.
Player $R$ can influence all nodes by itself, achieving utility $r+1$, while
	player $v$ can only influence itself, with utility $1$.
The {\em threat point} of this bargaining game is $(r,0)$, 
  which reflects the credits that 
  each player agrees that the other player 
  should at least receive:
	Player $v$ agrees that player $R$'s 
        contribution is at least $r$, while player $R$
	thinks that player $v$ may not have any contribution
         because $R$ can activate  everyone.
The slack in this threat point is  $\Delta = r+1 - (r+0) = 1$.
However, in this case, player $R$ is actually 
  a coalition of $r$ nodes,  and these $r$ nodes have to
   cooperate in order to influence all $r+1$ nodes --- 
  missing any node in $R$ will not influence \remove{all nodes,   particularly will not influence not }$v$.
The need to  cooperative 
  in order to bargain with player $v$ weakens player $R$.
The ratio of $v$'s bargaining weight to that of $R$ is thus $1$ to $1/r$.
Nash's bargaining solution \cite{NashBargining}
  provides a fair division of this slack  between the two players:
\remove{Then, by Nash's bargaining solution the fair division point is:}
$$(x_1,x_2) \in \argmax_{x_1\ge r, x_2\ge0, x_1+x_2=r+1} (x_1-r)^{1/r} \cdot x_2.$$
The unique solution is $(x_1,x_2) = (r+\frac{1}{r+1}, \frac{r}{r+1})$.
Thus, node $v$ should receive a credit of $\frac{r}{r+1}$, 
  as stated in Axiom~\ref{axiom:critical}.
}

Our first axiomatic representation theorem can now be stated as the following:
 
\OnlyInShort{\vspace{-1mm}}
\begin{theorem} {\sc (Axiomatic Characterization of Shapley Centrality)} 
\label{thm:ShapleyCen}
The Shapley centrality $\psi^{\Shapley}$ is 
  the unique centrality measure that satisfies
Axioms \ref{axiom:anonymity}-\ref{axiom:critical}.
Moreover, every axiom in this set is independent of others.
\end{theorem}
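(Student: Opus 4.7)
The plan is to prove existence, uniqueness, and independence of the five axioms in three separate pieces.

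Existence amounts to checking that $\psi^{\Shapley}$ satisfies each axiom. Anonymity follows from the relabeling identity $\sigma_{\pi(\calI)}(\pi(S)) = \sigma_{\calI}(S)$; Normalization from Shapley's efficiency combined with $\sigma_{\calI}(V)=|V|$ (every seeded node is active); and Bayesian Influence by composing two linearities, since $\sigma$ is linear in $P_{\calI}$ and $\phi^{\Shapley}$ is linear in $\sigma$. For Independence of Sink Nodes, the defining property of a sink $v$ gives $\bI_{\calI}(S\cup\{v\}) = \bI_{\calI}(S)\cup\{v\}$ in distribution, hence $\sigma_{\calI}(S) = \sigma_{\calI\setminus\{v\}}(S) + \Pr[v\in\bI_{\calI}(S)]$; plugging this into the Shapley formula for another sink $u$ and pairing the coefficients of subsets with and without $v$ via
\begin{equation*}
\frac{|S|!(n-|S|-1)!}{n!} + \frac{(|S|+1)!(n-|S|-2)!}{n!} = \frac{|S|!(n-|S|-2)!}{(n-1)!}
\end{equation*}
telescopes exactly to the Shapley formula over $V\setminus\{v\}$. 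Bargaining follows from a direct computation on $\calI_{R,v}$: $\sigma(S\cup\{v\})-\sigma(S)=1$ for $S\subsetneq R$ and $0$ for $S=R$, so the Shapley sum collapses to $\sum_{k=0}^{|R|-1}\binom{|R|}{k}\frac{k!(|R|-k)!}{(|R|+1)!} = \frac{|R|}{|R|+1}$.

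For uniqueness, let $\psi$ satisfy Axioms 1--5. First I show $\psi$ agrees with $\psi^{\Shapley}$ on every critical set instance $\calI_{R,U}$ with arbitrary ambient $V\supseteq U$. In the base case $V=R\cup\{v\}$, Axiom 5 pins $\psi_v = |R|/(|R|+1)$, Anonymity renders the $R$-nodes interchangeable, and Normalization pins their common value. For $V=U$ strictly larger than $R\cup\{v\}$, the nodes of $U\setminus R$ are pairwise symmetric sinks, so iterating Axiom 3 strips them off one at a time down to the base case, and Normalization then pins the non-sink centralities in $R$. For isolated sinks in $V\setminus U$, I strip away all other sinks via Axiom 3 to reduce to an all-isolated trivial instance in which Anonymity and Normalization force each centrality to $1$.

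The second and harder step extends uniqueness from critical set instances to arbitrary instances. The key combinatorial claim is that the $P_{\calI}$'s of critical set instances (on a sufficiently rich ambient $V$, padded with isolated sinks as needed) \emph{affinely span} the polytope of all valid $P_{\calI}$'s. Given this, any target $\calI$ admits a decomposition $P_{\calI} = \sum_\eta \mu_\eta P_{\calI^\eta}$ with $\sum \mu_\eta = 1$ and possibly negative $\mu_\eta$. Since Axiom 4 only directly handles convex combinations, I realize the affine relation via a chain trick: setting $\mu^+ := \sum_{\mu_\eta>0}\mu_\eta$ and letting $\calI^+$ and $\calI^-$ be the convex combinations built from positive-weight and absolute-value-negative-weight critical sets respectively, one has $\calI^+ = \tfrac{1}{\mu^+}\calI + \tfrac{\mu^+-1}{\mu^+}\calI^-$ as a legitimate Bayesian combination, so Axiom 4 yields $\psi(\calI) = \mu^+\psi(\calI^+) - (\mu^+-1)\psi(\calI^-)$ with both right-hand-side terms already known. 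The main obstacle of the proof is the affine-span claim together with verifying that the resulting intermediate instances are legitimate; I would attack it by induction on $|V|$, constructing an explicit affine basis from critical set instances (exploiting that padding with isolated sinks introduces new independent directions) and showing it achieves maximum rank. Independence of the axioms is then routine: for each axiom I would exhibit a measure satisfying the other four but not it (e.g., a lexicographic tie-breaker to break Anonymity, the zero map to break Normalization, and replacing the critical-set constant $|R|/(|R|+1)$ by an alternative to break Bargaining).
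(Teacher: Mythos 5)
Your soundness argument and the first half of your uniqueness argument track the paper's proof closely: the paper likewise verifies the five axioms directly (its sink-node computation conditions on whether $u$ precedes $v$ in a random permutation rather than pairing coefficients, but that is equivalent), pins down the centrality of every critical set instance $\calI_{R,U}$ by stripping sinks and invoking Anonymity, Normalization and Bargaining, and then extends to arbitrary instances by expressing $P_{\calI}$ in a basis of critical-set profiles and invoking the Bayesian axiom through an affine-mapping lemma. Your ``chain trick'' for converting an affine combination with negative weights into legitimate convex combinations is a valid variant of the paper's device (the paper instead mixes with an interior point of the convex hull).

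However, there are two genuine gaps. First, the spanning claim you call ``the main obstacle'' is precisely the content you must supply, and as stated it is false: for a fixed $V$ there are exactly $M$ critical set instances, where $M$ is the dimension of the profile space, so they can at best span $\R^M$ linearly; $M$ points cannot \emph{affinely} span $\R^M$. You must adjoin the null instance (all nodes isolated, whose profile is the zero vector and whose centrality is forced to be all ones by Anonymity plus Normalization) as the affine basepoint, and you must actually prove linear independence --- the paper does this by taking a minimal $S$ with a nonzero coefficient $\alpha_{S,T}$ and evaluating the combination at coordinate $(S,T)$, where all other terms vanish. Your proposal to ``pad with isolated sinks'' into a richer ambient $V$ changes the dimension count and is not needed. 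Second, the independence of the axioms is not routine, and your witness for Normalization fails: the zero map violates the Bargaining axiom, which demands $\psi_v(\calI_{R,v}) = |R|/(|R|+1) \neq 0$, so it cannot show that Normalization is independent of the other four. The paper's actual construction assigns free parameters to the critical-set and null instances and extends linearly, with a lengthy verification that the sink axiom survives; similarly, a ``lexicographic tie-breaker'' for Anonymity must be a nonuniform permutation distribution that still places each node last with probability exactly $1/n$, or Bargaining breaks.
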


The soundness of this representation theorem
  --- that the Shapley centrality satisfies all axioms ---
  is relatively simple.
However, because of the intrinsic complexity in influence models, 
  the uniqueness proof is in fact complex.
We give a high-level proof sketch here and the full proof is 
	in\OnlyInFull{ Appendix~\ref{app:thm2}.}\OnlyInShort{ \cite{CT16}.} We follow Myerson's proof strategy
      \cite{MyersonBook} of Shapley's theorem.
The probabilistic profile $P_{\calI}$ of influence instance 
	$\calI = (V,E,P_{\calI})$ is viewed as a vector in a large space $R^M$, where
	$M$ is the number of independent dimensions in $P_{\calI}$.
Bayesian Influence Axiom enforces that any conforming centrality measure
  is an affine mapping from $R^M$ to $\R^n$.
We then prove that the critical set instances $\calI_{R,U}$ form a 
  full-rank basis of the linear space  $R^M$.
Finally, we prove that any axiom-conforming centrality 
  measure over critical set instances (and the additional null instance in which every node is a sink node)
  must be unique.
The uniqueness of the critical set instances and the null instance, the linear independence
	of critical set instances in $R^M$, plus the affine
	mapping from $R^M$ to $\R^n$, together imply that the centrality measure of every influence instance is
	uniquely determined.
Our overall proof is more complex and --- to a certain degree ---
  more subtle than Myerson's proof, because
  our axiomatic framework is based on the influence model in a much larger dimensional space
  compared to the subset utility functions.
Finally, for independence, we need to show that for each axiom, we can construct an
	alternative centrality measure if the axiom is removed.
Except for Axiom~\ref{axiom:critical}, the constructions and the proofs for other axioms
	are nontrivial, and they shed lights on how related centrality measures could be formed
	when some conditions are relaxed.
  
\subsection{Axioms for SNI Centrality}

We first examine which of 
   Axioms \ref{axiom:anonymity}-\ref{axiom:critical} 
  are satisfied by SNI centrality.
It is easy to verify that  
  Anonymity and Bayesian Influence Axioms hold for SNI centrality.
For the Independence of Sink Node Axiom (Axiom~\ref{axiom:sink}), 
  since every sink node can only influence itself,
  its SNI centrality is $1$. Thus,
  Axiom~\ref{axiom:sink} is satisfied by SNI because of a stronger reason.

For the Normalization Axiom (Axiom~\ref{axiom:normalization}),
   the sum of single node influence
   is typically more than the total number of nodes (e.g., when the influence spread is submodular), 
   and thus Axiom~\ref{axiom:normalization} does not
		hold for SNI centrality. 
The Bargaining with Critical Sets Axiom (Axiom~\ref{axiom:critical}) does not hold either,
	since node $v$ in $\calI_{R,v}$ is a sink node and thus its SNI centrality is $1$.
	
We now present our axiomatic 
  characterization of 
  SNI centrality, which  will
  retain Bayesian Influence Axiom~\ref{axiom:bayesian},
  strengthen Independence of Sink Node Axiom~\ref{axiom:sink},
and recharacterize the centrality of a node in a critical set:

\begin{axiom}	[Uniform Sink Nodes] \label{axiom:uniformsink}
Every sink node has centrality~$1$.
\end{axiom}

\begin{axiom} [Critical Nodes] \label{axiom:criticalnodes}
In any critical set instance $\calI_{R,U}$, 
  the centrality of a node $w\in R$ is $1$ if $|R| > 1$, and
	is $|U|$ if $|R| = 1$.
\end{axiom}

These three axioms are sufficient to uniquely characterize
  SNI centrality, as they also imply Anonymity Axiom:
\OnlyInShort{\vspace{-1mm}}
\begin{theorem} {\sc (Axiomatic Characterization of SNI Centrality)} 
	\label{thm:SNICen}
	The SNI centrality $\psi^{\SNI}$ is 
	the unique centrality measure that satisfies
	Axioms \ref{axiom:bayesian}, \ref{axiom:uniformsink}, and \ref{axiom:criticalnodes}.
	Moreover, each of these axioms is independent of the others.
\end{theorem}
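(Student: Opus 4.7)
The plan is to follow the structure of the proof of Theorem~\ref{thm:ShapleyCen}, adapted to the revised axiom set.

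For \textbf{soundness}, I would verify each axiom for $\psi^{\SNI}$ directly. Bayesian Influence is a consequence of linearity of expectation on single-seed influence spread: $\sigma_{\calI_{\cB(\{\calI^{\eta}\}, \lambda)}}(\{v\}) = \sum_\eta \lambda_\eta \sigma_{\calI^\eta}(\{v\})$. Uniform Sink Nodes follows because a sink $v$ satisfies $P_{\calI}(\{v\},\{v\})=1$, so $\sigma(\{v\})=1$. Critical Nodes follows by direct computation in $\calI_{R,U}$: when $|R|>1$, a single seed $\{w\}$ with $w\in R$ misses some member of $R$ and thus activates only itself, giving $\sigma(\{w\})=1$; when $|R|=1$, $\{w\}\supseteq R$ so $\{w\}$ activates all of $U$, giving $\sigma(\{w\})=|U|$.

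For \textbf{uniqueness}, I would reuse the core apparatus of the Shapley characterization proof. Bayesian Influence forces any conforming $\psi$ to be an affine map from the probability-profile vector in $\R^M$ to $\R^n$. Invoking the basis result established there, the critical set instances $\{\calI_{R,U}\}$ together with the null instance (in which every node is a sink) span the profile space. It then remains to pin down $\psi$ on each basis element. On the null instance, Uniform Sink Nodes gives $\psi_v=1=\psi^{\SNI}_v$ for every $v$. On any $\calI_{R,U}$, Critical Nodes determines $\psi_w$ for $w\in R$ (either $1$ or $|U|$), and every other $v\in V\setminus R$ is either a target in $U\setminus R$ with no outgoing edges or an isolated node in $V\setminus U$---in both cases a sink node, so Uniform Sink Nodes gives $\psi_v=1$. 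Hence $\psi$ matches $\psi^{\SNI}$ on every basis element, and by the affine-map property $\psi=\psi^{\SNI}$ on all influence instances. Anonymity then follows automatically, since $\psi^{\SNI}$ itself is anonymous.

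For \textbf{independence}, for each axiom I would construct a centrality measure satisfying the remaining two but not this one, following the general device from the Shapley proof: alter $\psi^{\SNI}$ on carefully chosen basis elements (or replace affinity by a non-affine rule) so that exactly one axiom breaks. For Bayesian Influence, use a non-affine transform (e.g.\ a coordinate-wise nonlinear function of $\sigma$) that still matches SNI on critical-set and null instances. For Uniform Sink Nodes, define $\psi$ to match $\psi^{\SNI}$ on every critical-set basis element but differ on the null instance, and then extend affinely. For Critical Nodes, rescale the prescribed value $|U|$ on single-critical-node instances by a constant.

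The \textbf{main obstacle} is the basis/affine-map step, which requires importing the decomposition from the Shapley proof and verifying that Axioms~\ref{axiom:uniformsink} and~\ref{axiom:criticalnodes} jointly pin down $\psi$ on every basis element---the conceptual role of Uniform Sink Nodes here is to cover every non-critical coordinate in $\calI_{R,U}$ in one stroke, which is what allows the basis argument to close. A secondary subtlety is designing the independence witnesses so that each breaks exactly one axiom without collaterally violating another.
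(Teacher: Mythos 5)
Your proposal matches the paper's proof essentially step for step: soundness by direct verification of the three axioms, uniqueness by importing the critical-set basis and affine-extension machinery from the Shapley characterization (Uniform Sink Nodes pins down the null instance and every non-critical coordinate of $\calI_{R,U}$, Critical Nodes pins down the rest), and independence via per-axiom witnesses. The only divergence is in the independence witnesses, where the paper takes slightly easier routes (e.g.\ the constant all-ones measure for the Critical Nodes axiom, which avoids having to re-verify Uniform Sink Nodes for an affinely extended modification of the basis values).
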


Theorems~\ref{thm:ShapleyCen} and~\ref{thm:SNICen} establish
the following appealing property:
Even though all our axioms are on probabilistic profiles $P_{\calI}$ of influence instances,
	the unique centrality measure satisfying these axioms is 
	in fact fully determined by the influence spread profile $\sigma_{\calI}$.
We find this amazing because 
the distribution profile $P_{\calI}$ has much higher dimensionality 
than its influence-spread profile $\sigma_{\calI}$.


\subsection{Shapley Centrality versus SNI Centrality}

We now  provide a comparative analysis
   between Shapley and SNI centralities based on their
  definitions, axiomatic characterizations, and 
  various other properties they satisfy.

%
\vspace{1mm}
\noindent{\bf Comparison by definition.\ } 
The definition of SNI centrality is more straightforward as 
  it uses individual node's influence spread as the centrality measure.
Shapley centrality is more sophisticatedly formulated,
   involving groups' influence spreads.
SNI centrality disregards the 
  influence profile of groups.
Thus, it may limit its usage in more complex
	situations where group influences should be considered.
Meanwhile, Shapley centrality considers group influence 
  in a particular way involving marginal influence of a node on 
	a given group randomly ordered before the node.
Thus, Shapley centrality is more suitable for 
  assessing marginal influence of a node in a group setting.

%
\vspace{1mm}
\noindent{\bf Comparison by axiomatic characterization.\ }
Both SNI and Shapley centralities
 satisfy Anonymity, Independence of Sink Nodes, 
 and Bayesian Influence axioms,
	which seem to be natural axioms 
	for desirable social-influence centrality measures. 
Their unique axioms characterize exactly their differences.
The first difference is on the Normalization Axiom, satisfied by Shapley but not SNI centrality.
This indicates that Shapley centrality aims at dividing the total share of possible influence spread $|V|$
	among all nodes, but SNI centrality does not enforce such share division among nodes.
If we artificially normalize the SNI centrality values of all nodes 
	to satisfy the Normalization Axiom, 
	the normalized SNI centrality would not satisfy
	the Bayesian Influence Axiom.
(In fact, it is not easy to find a new characterization for
	the normalized SNI centrality similar to Theorem~\ref{thm:SNICen}.)
We will see shortly that the Normalization Axiom would also cause a drastic difference
	between the two centrality measures for the symmetric IC influence
  model.

The second difference is on their treatment of sink nodes, exemplified by sink nodes in the critical set instances.
For SNI centrality, sink nodes are always treated with the same centrality of $1$ 
	(Axiom~\ref{axiom:uniformsink}).
But the Shapley centrality of a sink node may be affected by other nodes that influence the sink.
In particular, for the critical set instance $\calI_{R,v}$, $v$ has centrality $|R|/(|R|+1)$, 
	which increases with $R$.
As discussed earlier, larger $R$ indicates $v$ is getting stronger
	comparing to nodes in $R$.
In this aspect, 
 Shapley centrality assignment is sensible.
Overall, when considering $v$'s centrality, SNI centrality disregards other nodes' influence to $v$
	while Shapley centrality considers other nodes' influence to $v$.

The third difference is their treatment of critical nodes in the critical set instances.
For SNI centrality, in the critical set instance $\calI_{R,v}$, 
	Axiom~\ref{axiom:criticalnodes} 
obliviously assigns the same value $1$
	for nodes $u\in R$ whenever $|R|>1$, effectively equalizing
	the centrality of node $u\in R$ with $v$.
In contrast, Shapley centrality would assign $u\in R$ a value of $1+ \frac{1}{|R|(|R|+1)}$, decreasing with $R$ but
	is always larger than $v$'s centrality of $\frac{|R|}{|R|+1}$.
Thus Shapley centrality assigns more sensible values in this case, because $u\in R$ as part of a coalition
	should have larger centrality than $v$, who has no influence power at all.
%
We believe this shows the limitation of the SNI centrality --- it only considers individual influence and
	disregards group influence.
Since the critical set instances reflect the threshold behavior in influence propagation --- a node would be
	influenced only after the number of its influenced neighbors reach certain threshold --- this suggests that
	SNI centrality could be problematic in threshold-based influence models.

%
\vspace{1mm}
\noindent{\bf Comparison by additional properties.\ }
Finally, we compare additional properties they satisfy.
First, it is straightforward to verify that both centrality measures satisfy
	the {\em Independence of Irrelevant Alternatives (IIA))} property:
If an instance $\calI = (V,E,P_{\calI})$ 
is the {\em union of two disjoint and independent influence instances}, 
$\calI_1 = (V_1,E_1,P_{\calI_1})$ and $\calI_2 = (V_2,E_2,P_{\calI_2})$,
then for $k\in \{1,2\}$ and any $v \in V_k$:
$\psi_v({\calI}) = \psi_v(\calI_k).$

The IIA property together with the Normalization Axiom leads to a clear difference between SNI and Shapley
	centrality.
Consider an example of two undirected and connected graphs $G_1$ with $10$ nodes and $G_2$ with $3$ nodes, 
	and	the IC model on them with edge probability $1$.
Both SNI and Shapley centralities assign same values to nodes within each graph, but due to normalization,
	Shapley assigns $1$ to all nodes, while SNI assigns $10$ to nodes in $G_1$ and $3$ to nodes in $G_2$.
The IIA property ensures that the centrality does not change when we put $G_1$ and $G_2$ together.
That is, SNI considers nodes in $G_1$ more important while Shapley considers them the same.
While SNI centrality makes sense from individual influence point of view, the view of Shapley centrality 
	is that a node in $G_1$ is easily replaceable by any of the other $9$ nodes in $G_1$ but a node in $G_2$
	is only replaceable by two other nodes in $G_2$.
Shapley centrality uses marginal influence in randomly ordered groups to
	determine that the ``replaceability factor'' cancels out individual influence and
	assigns same centrality to all nodes.
	
The above example generalizes to the symmetric IC model where
	$p_{u,v} = p_{v,u}$, $\forall u,v \in V$: 
{\em Every node has Shapley centrality of $1$ in such models.}
The technical reason is that such models have an equivalent {\em undirected} live-edge graph representation, 
	containing a 
  number of connected components just like the above example.
The Shapley symmetry in the 
  symmetric IC model may sound counter-intuitive, since it 
  appears to be independent of network structures or edge probability values.
But we believe what it unveils is that symmetric IC model might be an unrealistic model in practice ---
	it is hard to imagine that between every pair of individuals the influence strength is symmetric.
For example, in a star graph, when we perceive that the node in the center has higher centrality, it is
	not just because of its center position, but also because that it typically exerts higher influence
	to its neighbors than the reverse direction.
This exactly reflects our original motivation that mere positions in a static network may not be an important
	factor in determining the node centrality, and 
	what important is the effect of individual nodes participating in the dynamic influence~process. 

From the above discussions, we clearly see that (a) SNI centrality focuses on individual influence in isolation, while
	(b) Shapley centrality focuses on marginal influence in group influence settings, and measures 
		the {\em irreplaceability} of the nodes in some sense.

\OnlyInShort{\vspace{-2mm}}
\section{Scalable~Algorithms}
\label{sec:ScalableAlgorithm}
In this section, we first give a sampling-based algorithm for approximating 
  the Shapley centrality 
 $\psi^{\Shapley}(\calI)$ of any influence instance in the triggering model.
We then give a slight adaptation to approximate SNI centrality.
In both cases, we characterize the performance 
  of our algorithms and prove that they are scalable
  for a large family of social-influence instances.
In next section, we empirically show that these algorithms are 
  efficient for real-world networks.
\subsection{Algorithm for Shapley Centrality}

In this subsection, 
 we use $\psi$ as a shorthand for 
 $\psi^{\Shapley}$. 
Let $n=|V|$ and $m=|E|$.
To precisely state our result, 
  we make the following general computational assumption, as in~\cite{tang14,tang15}: 
\vspace{-1mm}
\begin{assumption}
	\label{assump:ComputationalTriggerModel}
	The time to draw a random
	triggering set $\bT(v)$ is proportional to the in-degree of $v$.
\end{assumption}

\OnlyInShort{\vspace{-1mm}}

The key combinatorial structures that we use
  are the following random sets generated by the {\em reversed diffusion process} of the triggering model.
A {\em (random) reverse reachable (RR) set} 
  $\bR$ is generated as follows:
(0) Initially, $\bR = \emptyset$. 
(1) Select a node $\bv \sim V$ uniformly at random (called the {\em root} of $\bR$), 
	and add $\bv$ to $\bR$.
(2) Repeat the following process until every node in $\bR$ has a triggering
    set:  For every $\bu \in \bR$ not yet having a triggering set, 
   draw its random triggering set $\bT(\bu)$, and add $\bT(\bu)$ to $\bR$.
Suppose $\bv\sim V$ is selected in Step (1).
The reversed diffusion process uses $\bv$ as the seed, 
and 
  follows the incoming edges instead of the outgoing edges to iteratively
  ``influence'' triggering sets. 
Equivalently, an RR set $\bR$ is the set of nodes in a random live-edge graph $\bL$ that can reach
node $\bv$.

The following key lemma elegantly connects 
RR sets with Shapley centrality.
We will defer its intuitive explanation to the end of this section.
Let $\bpi$ be a random permutation on $V$.
Let $\I\{{\mathcal{E}}\}$ be the indicator function for event~$\mathcal{E}$.
\vspace{-2mm}
\begin{lemma}[Shapley Centrality Identity] \label{lem:spexpMainbody}
Let $\bR$ be a random RR set.
Then, $\forall u\in V$, $u$'s Shapley centrality is $\psi_u = n\cdot \E_{\bR}[\I\{u\in \bR\}/|\bR|]$.
\end{lemma}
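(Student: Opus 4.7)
}

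The plan is to start from the Shapley-value definition of $\psi_u$ in terms of a random permutation, expand the influence spread using the live-edge graph representation of the triggering model, and then reinterpret the resulting ``first-to-appear in a random order'' event as a uniform sampling from an RR set. This is essentially a double-expectation argument where I swap the order of averaging over the permutation $\bpi$ and the live-edge graph $\bL$.

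\textbf{Step 1 (unfold $\psi_u$).} Using (\ref{eq:sprandom}) applied to $\tau = \sigma_{\calI}$ and the live-edge representation $\sigma_{\calI}(S) = \E_{\bL}[|\Gamma(\bL,S)|]$, write
\begin{equation*}
\psi_u \;=\; \E_{\bpi}\,\E_{\bL}\bigl[\,|\Gamma(\bL, S_{\bpi,u}\cup\{u\})| - |\Gamma(\bL, S_{\bpi,u})|\,\bigr].
\end{equation*}
For a fixed $L$, the difference is the number of nodes $v\in V$ that are reachable from $u$ in $L$ but not from any node preceding $u$ in $\bpi$. Hence, with $R_L(v)$ denoting the set of nodes that can reach $v$ in $L$ (i.e.\ the deterministic ``RR set'' rooted at $v$ in $L$),
\begin{equation*}
\psi_u \;=\; \sum_{v\in V}\E_{\bpi,\bL}\!\left[\I\{u\in R_{\bL}(v)\}\cdot\I\{S_{\bpi,u}\cap R_{\bL}(v)=\emptyset\}\right].
\end{equation*}

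\textbf{Step 2 (isolate the permutation).} Condition on $\bL$ and note that the two events concern $\bpi$ only through the relative order of elements of $R_{\bL}(v)$. The joint event says that $u\in R_{\bL}(v)$ \emph{and} $u$ is the earliest element of $R_{\bL}(v)$ in $\bpi$. Since $\bpi$ is uniform, every element of $R_{\bL}(v)$ is equally likely to appear first, so the conditional probability that $u$ is first, given $u\in R_{\bL}(v)$, equals $1/|R_{\bL}(v)|$. Therefore
\begin{equation*}
\psi_u \;=\; \sum_{v\in V}\E_{\bL}\!\left[\frac{\I\{u\in R_{\bL}(v)\}}{|R_{\bL}(v)|}\right].
\end{equation*}

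\textbf{Step 3 (recognize the RR distribution).} By construction, a random RR set $\bR$ is obtained by first drawing $\bv\sim V$ uniformly and then setting $\bR = R_{\bL}(\bv)$ for an independent live-edge graph $\bL$ (the reversed diffusion process just exposes $R_{\bL}(\bv)$ edge-by-edge). Consequently
\begin{equation*}
\E_{\bR}\!\left[\frac{\I\{u\in\bR\}}{|\bR|}\right] \;=\; \frac{1}{n}\sum_{v\in V}\E_{\bL}\!\left[\frac{\I\{u\in R_{\bL}(v)\}}{|R_{\bL}(v)|}\right] \;=\; \frac{\psi_u}{n},
\end{equation*}
which rearranges to the claimed identity.

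\textbf{Main obstacle.} The technically delicate point is Step~2: one must carefully justify that conditional on $\bL$, the permutation-only event ``$u\in R_{\bL}(v)$ and no predecessor of $u$ in $\bpi$ lies in $R_{\bL}(v)$'' has probability $\I\{u\in R_{\bL}(v)\}/|R_{\bL}(v)|$, since $\bpi$ is a permutation of $V$ and not of $R_{\bL}(v)$. This follows from the exchangeability of $\bpi$, which makes the induced order on any fixed subset of $V$ uniform, but it is the one place where conflating the two random objects $\bpi$ and $\bL$ could lead to an error. Once this exchangeability step is cleanly stated, the rest is bookkeeping and the identification of the RR-set distribution with the uniform-root reversed diffusion, which is immediate from the definition.
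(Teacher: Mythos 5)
Your proof is correct and follows essentially the same route as the paper's: both arguments rest on the two facts that a uniform random permutation places a given element of a fixed set $R$ first with probability $1/|R|$ (the paper's Proposition on ordering first) and that a random RR set is the reverse-reachable set of a uniform random root in a random live-edge graph, combined via a swap of expectations. The only cosmetic difference is that you expand the marginal spread as an explicit sum over roots $v$ before recombining, whereas the paper packages that step into its Marginal Contribution lemma ($\sigma(S\cup\{u\})-\sigma(S)=n\cdot\Pr(u\in\bR\wedge S\cap\bR=\emptyset)$).
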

\vspace{-1mm}
This lemma is instrumental to our scalable algorithm.
It guarantees that we can use random RR sets 
 to build {\em unbiased estimators} of Shapley centrality.
Our algorithm {\ASVRR}  (standing for ``Approximate Shapley Value by RR Set'')
  is presented in Algorithm~\ref{alg:rrshapleynew}.
It takes $\varepsilon$, $\ell$, and $k$ as input parameters, 
representing the relative error, the confidence of the error, and the number of nodes
with top Shapley values that achieve the error bound, respectively.
Their exact meaning will be made clear in Theorem~\ref{thm:ASVRR}.

\begin{algorithm}[t]
	\centering
	\caption{\ASVRR $(G, \bT, \varepsilon,\ell, k)$} \label{alg:rrshapleynew}
	\begin{algorithmic}[1]
		\INPUT{Network: $G=(V,E)$; Parameters:  random triggering set distribution $\{\bT(v)\}_{v\in V}$, 
			$\varepsilon > 0$, $\ell>0$, $k\in [n]$
}
		\OUTPUT{$\epsi_v$,  $\forall v\in V$: estimated centrality measure 
		}
		\STATE \{Phase 1. Estimate the number of RR sets needed \}
		\STATE $\LB = 1$; $\varepsilon' = \sqrt{2} \cdot \varepsilon$; $\theta_0 = 0$
		\STATE $\est_v = 0$ for every $v\in V$
		\FOR {$i=1$ to $ \lfloor \log_2 n \rfloor - 1$} \label{line:phase1forb}
		\STATE $x = n / 2^i$
		\STATE $\theta_i = \left\lceil \frac{ n \cdot 
			((\ell + 1)\ln n + \ln \log_2 n + \ln 2) \cdot (2+\frac{2}{3}\varepsilon')}
		{\varepsilon'^2 \cdot x} \right \rceil $ \label{line:setthetai}
		\FOR {$j = 1$ to $\theta_i - \theta_{i-1}$}
		\STATE generate a random RR set $\bR$
		\STATE for every $\bu\in \bR$, $\est_{\bu} = \est_{\bu} + 1/|\bR|$ \label{line:estnew1}
		\ENDFOR
		\STATE $\est^{(k)} = $ the $k$-th largest value in $\{\est_v\}_{v\in V}$ \label{line:kmaxest}
		\IF{ $ n\cdot \est^{(k)} / \theta_i \ge (1+\varepsilon') \cdot x$} \label{line:cond1}
		\STATE $ \LB = n\cdot \est^{(k)} / (\theta_i \cdot (1+\varepsilon'))$ \label{line:setLB}
		\STATE {\bf break}
		\ENDIF
		\ENDFOR \label{line:phase1fore}
		\STATE $\btheta = \left\lceil \frac{n ((\ell+1)\ln n + \ln 4)(2+ \frac{2}{3} \varepsilon) }{\varepsilon^2 \cdot \LB} \right\rceil$ \label{line:thetanew}
		\STATE \{Phase 2. Estimate Shapley value\}
		\STATE $\est_v = 0$ for every $v\in V$ \label{line:resetest}
		\FOR {$j = 1$ to $\btheta$ }
		\STATE generate a random RR set $\bR$
		\STATE for every $\bu\in \bR$, $\est_{\bu} = \est_{\bu} + 1/|\bR|$ \label{line:estnew2}
		\ENDFOR
		\STATE for every $v\in V$, $\epsi_v = n \cdot \est_v / \btheta$ 
		\label{line:adjustnew2}
		\STATE return $\epsi_v$, $v\in V$
	\end{algorithmic}
\end{algorithm}

{\ASVRR} follows the structure of the {\IMM} algorithm of \cite{tang15} but with some
	key differences.
In Phase 1, Algorithm~\ref{alg:rrshapleynew} estimates 
  the number of RR sets needed for the Shapley estimator.
For a given parameter $k$, we first estimate a lower bound $\LB$ of
	the $k$-th largest Shapley centrality $\psi^{(k)}$.
Following a similar structure as the sampling method in {\IMM} \cite{tang15},
	the search of the lower bound is carried out in at most $ \lfloor \log_2 n \rfloor - 1$ iterations,
	each of which halves the lower bound target $x = n/2^i$ and obtains the number of RR sets $\theta_i$
	needed in this iteration (line~\ref{line:setthetai}).
The key difference is that we do not need to store the RR sets and  compute a max cover.
Instead, for every RR set $\bR$, we only update the estimate $\est_{\bu}$ of each node $\bu \in \bR$ 
	with an additional $1/|\bR|$ (line~\ref{line:estnew1}), which is based on Lemma~\ref{lem:spexpMainbody}.
In each iteration, we select the $k$-th largest estimate (line~\ref{line:kmaxest}) and
	plug it into the condition in line~\ref{line:cond1}.
Once the condition holds, we calculate the lower bound $\LB$ in
	line~\ref{line:setLB} and break the loop.
Next we use this $\LB$ to obtain the number of RR sets $\btheta$ needed in Phase 2 (line~\ref{line:thetanew}).
In Phase 2, we first reset the estimates (line~\ref{line:resetest}),
	then generate $\btheta$ RR sets and again updating $\est_u$ with $1/|\bR|$ increment
	for each $\bu \in \bR$ (line~\ref{line:estnew2}).
Finally, these estimates are transformed into the Shapley estimation in line~\ref{line:adjustnew2}.

Unlike {\IMM}, we do not reuse the RR sets generated in Phase 1, because it would
	make the RR sets dependent and the resulting Shapley centrality estimates biased.
Moreover, our entire algorithm does not need to store any RR sets, and thus {\ASVRR} does not have the
	memory bottleneck encountered by {\IMM} when dealing with large networks.
The following theorem summarizes the performance of Algorithm~\ref{alg:rrshapleynew}, where
	$\psi$ and $\psi^{(k)}$ are Shapley centrality and $k$-th largest
	Shapley centrality value, respectively.

\vspace{-2mm}
\begin{theorem} 
	\label{thm:ASVRR}
For any $\epsilon > 0$, $\ell > 0$, and $k\in [n]$, 
Algorithm {\ASVRR} 
  returns an estimated Shapley value $\epsi_v$ that satisfies
(a) unbiasedness: $\E[\epsi_v] = \psi_v, \forall v\in V$;
(b) absolute normalization: $\sum_{v\in V} \epsi_v = n$ in every run; and
(c) robustness: under the condition that $\psi^{(k)}\ge 1$, with probability at least $1-\frac{1}{n^\ell}$:
\vspace{-1mm}
	\begin{equation} \label{eq:relativeerror}
	\left\{ 
	\begin{array}{lr}
	|\epsi_v - \psi_v| \le \varepsilon \psi_v & \forall v\in V \mbox{ with } \psi_v > \psi^{(k)},\\
	|\epsi_v - \psi_v| \le \varepsilon \psi^{(k)} & \forall v\in V \mbox{ with } \psi_v \le \psi^{(k)}.
	\end{array}
	\right.
	\end{equation}
Under Assumption \ref{assump:ComputationalTriggerModel} and the condition
  $\ell \ge (\log_2 k - \log_2 \log_2 n)/\log_2 n$,
	the expected running time 
	of {\ASVRR} is $O(\ell (m+n) \log n \cdot \E[\sigma(\tilde{\bv})]/ (\psi^{(k)} \varepsilon^2))$,
	where $\E[\sigma(\tilde{\bv})]$ is the expected influence spread of a random 
	node $\tilde{\bv}$ drawn from $V$ with probability proportional to the 
	in-degree of $\tilde{\bv}$.
\end{theorem}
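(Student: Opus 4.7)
The plan is to decompose the theorem into four independent pieces and use Lemma~\ref{lem:spexpMainbody} as the single bridge between RR sets and Shapley values. The three correctness claims (a)--(c) rely on different aspects of the algorithm, and the running time is a separate counting argument.

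Parts (a) and (b) should be essentially one-liners. Because Phase~2 (lines~\ref{line:resetest}--\ref{line:adjustnew2}) resets $\est_v$ to zero and then draws fresh RR sets that are independent of everything in Phase~1, conditional on $\btheta$ the quantity $\est_v$ is a sum of $\btheta$ i.i.d.\ copies of $\I\{v\in\bR\}/|\bR|\in[0,1]$, each with expectation $\psi_v/n$ by Lemma~\ref{lem:spexpMainbody}. Hence $\E[\epsi_v\mid\btheta]=\psi_v$ and so $\E[\epsi_v]=\psi_v$ by the tower law, giving (a). For (b), note that every RR set $\bR$ contributes exactly $\sum_{v\in\bR}1/|\bR|=1$ to $\sum_v\est_v$, so $\sum_v\est_v=\btheta$ deterministically in every run, and therefore $\sum_v\epsi_v=n$.

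The heart of the proof is (c). Here I would mimic the two-phase analysis of \IMM\ in~\cite{tang15}, modified because our estimator uses $1/|\bR|\in[0,1]$ in place of a $0/1$ coverage indicator. First I would prove that, with probability at least $1-n^{-\ell}/2$, the output $\LB$ of Phase~1 satisfies $c_1\psi^{(k)}\le \LB\le \psi^{(k)}$ for an explicit constant $c_1$. The upper inequality follows because at the iteration $i^*$ where the break in line~\ref{line:cond1} fires, a Bernstein-type bound for the bounded variables $1/|\bR|$ applied to the $k$-th order statistic $\est^{(k)}$ forces $\LB = n\est^{(k)}/(\theta_{i^*}(1+\varepsilon'))$ to lie below $\psi^{(k)}$; the lower inequality holds because once $x=n/2^i$ falls below a constant multiple of $\psi^{(k)}$, the threshold $\theta_i$ chosen in line~\ref{line:setthetai} is already large enough to make $n\est^{(k)}/\theta_i \ge (1+\varepsilon')x$ with overwhelming probability. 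Both directions need a union bound over the at most $\log_2 n$ iterations and the $n$ candidate indices for $\est^{(k)}$; the hypothesis $\ell \ge (\log_2 k-\log_2\log_2 n)/\log_2 n$ together with $\psi^{(k)}\ge 1$ is exactly what keeps that union bound feasible. Conditional on this good event for $\LB$, the value of $\btheta$ in line~\ref{line:thetanew} is $\Theta(n\log n/(\varepsilon^2\psi^{(k)}))$, and a second Bernstein bound applied per node, using that each summand has mean $\psi_v/n$ and variance at most $\psi_v/n$, yields $|\epsi_v-\psi_v|\le \varepsilon\max(\psi_v,\psi^{(k)})$ with failure probability at most $n^{-\ell-1}$, which unions to $n^{-\ell}/2$ over $v\in V$, completing~\eqref{eq:relativeerror}.

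For the running time, the total number of RR sets generated across both phases is $O(\btheta)$ in expectation, because the geometric halving schedule forces $\sum_{i\le i^*}\theta_i = O(\theta_{i^*})$ and $\theta_{i^*}=O(\btheta)$ on the good event. Under Assumption~\ref{assump:ComputationalTriggerModel} the expected cost of one RR set equals the expected sum of in-degrees of its nodes, and by reversing the edge direction this equals $(m/n)\cdot\E[\sigma(\tilde{\bv})]$ where $\tilde{\bv}$ is drawn proportional to in-degree, giving the stated bound. The main obstacle I anticipate is the Phase~1 analysis: since the stopping rule is data-adaptive and relies on the $k$-th order statistic $\est^{(k)}$ rather than a fixed coordinate, one must verify that the good event controlling $\LB$ holds simultaneously with the per-node concentration of Phase~2, and decouple the two phases cleanly. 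The cleanest route is to define a single good event $\mathcal{G}$ of probability $\ge 1-n^{-\ell}$ on which both bounds hold, exploiting the independence between Phases~1 and~2 guaranteed by the reset in line~\ref{line:resetest}.
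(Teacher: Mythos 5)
Your proposal follows essentially the same route as the paper's own proof in Appendix~\ref{app:thm1}: unbiasedness and absolute normalization fall out of the per-RR-set identity of Lemma~\ref{lem:spexpMainbody} exactly as you describe, robustness is proved by the same two-phase {\IMM}-style argument (first showing $\LB\le\psi^{(k)}$ with probability $1-\tfrac{1}{2n^\ell}$, then a per-node Chernoff/Bernstein bound on the $[0,1]$-valued summands $1/|\bR|$ with a union bound over $V$), and the running time uses the same identity $\E[\omega(\bR)]=\tfrac{m}{n}\E[\sigma(\tilde{\bv})]$. The only ingredients you leave implicit that the paper spells out are the martingale tail bounds used for the adaptively stopped Phase-1 samples and the martingale stopping theorem (a Wald-type identity) needed to equate the expected total generation cost with $\E[\btheta']\cdot\E[\omega(\bR)]$; both are standard and fit within your plan.
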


\vspace{-1mm}
Eq.~\eqref{eq:relativeerror} above shows that for the top $k$ Shapley values,
	{\ASVRR} guarantees the multiplicative error of $\varepsilon$ relative to
	node's own Shapley value\OnlyInFull{ (with high probability)}, and
	for the rest Shapley value, the error is relative to the $k$-th largest Shapley value
	$\psi^{(k)}$.
This is reasonable since typically we only concern nodes with top Shapley values.
For time complexity, the condition $\ell \ge (\log_2 k - \log_2 \log_2 n)/\log_2 n$ always hold
	if $k \le \log_2 n$ or $\ell \ge 1$.
When fixing $\varepsilon$ as a constant, the running time depends almost linearly on the graph size ($m+n$)
	multiplied by a ratio $\E[\sigma(\tilde{\bv})]/\psi^{(k)}$.
This ratio is upper bounded by the ratio between the largest single node influence and the $k$-th
	largest Shapley value. 
	When these two quantities are about the same order, we have 
  a near-linear time, i.e., scalable \cite{TengScalable}, algorithm. 
Our experiments show that in most datasets tested the ratio $\E[\sigma(\tilde{\bv})]/\psi^{(k)}$ is 
	indeed less than $1$.
Moreover, if we could relax the robustness requirement in Eq.~\eqref{eq:relativeerror} to allow
	the error of $|\epsi_v - \psi_v| $ to be relative to the largest single node influence, 
	then we could indeed slightly modify the 
  algorithm to obtain a near-linear-time algorithm
	without the ratio $\E[\sigma(\tilde{\bv})]/\psi^{(k)}$  in the time complexity
	\OnlyInShort{(see~\cite{CT16}).}\OnlyInFull{(see Appendix~\ref{app:nearlt}).}


The accuracy of {\ASVRR} is based on Lemma~\ref{lem:spexpMainbody} while the time complexity analysis
	follows a similar structure as in~\cite{tang15}.
\OnlyInShort{Due to space limit, the proofs of Lemma \ref{lem:spexpMainbody} 
  and Theorems~\ref{thm:ASVRR} are presented in our full 
  report~\cite{CT16}.}\OnlyInFull{The proofs of Lemma \ref{lem:spexpMainbody} 
  and Theorem~\ref{thm:ASVRR} are presented in Appendix~\ref{app:thm1}.}
Here, we give a high-level explanation.
In the triggering model, 
 as for influence 
  maximization~\cite{BorgsBrautbarChayesLucier,tang14,tang15},
  a random RR set $\bR$ 
  can be equivalently obtained by first
  generating a random live-edge graph $\bL$,
  and then constructing $\bR$  as the set of nodes that can 
  reach a random $\bv \sim V$ in $\bL$.
The fundamental equation associated with this 
  live-edge graph process is:
\begin{equation} \label{eq:keyequation}
\sigma(S) = \sum_L \Pr_{\bL}(\bL = L) \Pr_{\bv}(\bv \in \Gamma(L, S)) \cdot n.
\end{equation}
Our Lemma \ref{lem:spexpMainbody}
 is the result of the following crucial observations:
First, the Shapley centrality $\psi_u$  of node $u\in V$
  can be equivalently formulated as 
  the expected Shapley centrality of $u$
  over all live-edge graphs and random choices of root $\bv$, from Eq.~\eqref{eq:keyequation}.
The chief advantage of this formulation is that
  it localizes the contribution of marginal influences:
On a fixed live-graph $L$ and root $v\in V$, 
  we only need to compute the marginal influence of $u$ 
  in terms of activating $v$ to obtain the Shapley contribution 
  of the pair.
We do not need to compute the marginal influences of $u$ for activating 
  other nodes.
Lemma \ref{lem:spexpMainbody} then follows from our 
  second crucial observation.
When $R$ is the fixed set that can reach $v$ in $L$, 
  the marginal influence of $u$ activating $v$ in a random order is $1$ 
  if and only if the following two conditions hold concurrently:
(a) $u$ is in $R$ --- so $u$ has chance to activate $v$, and
(b) $u$ is ordered before any other node in $R$ --- so $u$ can activate $v$ before other nodes
	in $R$ do so.
In addition, in a random permutation $\bpi \sim \Pi$ over $V$, 
  the probability that $u\in R$ is ordered first in $R$ is 
  exactly $1/|R|$.
This explains the contribution of $\I\{u\in \bR\}/|\bR|$ in 
  Lemma \ref{lem:spexpMainbody}, which is also 
 precisely what the updates 
  in lines~\ref{line:estnew1} and~\ref{line:estnew2} of Algorithm~\ref{alg:rrshapleynew} do.
The above two observations together establish Lemma \ref{lem:spexpMainbody},
  which is the basis for the unbiased estimator of $u$'s Shapley centrality.
Then, by a careful probabilistic analysis, 
  we can bound the number of random RR sets needed 
  to achieve approximation accuracy stated in Theorem \ref{thm:ASVRR} 
  and establish the scalability for Algorithm {\ASVRR}.

\subsection{Algorithm for SNI Centrality}

Algorithm~\ref{alg:rrshapleynew} relies on the key fact given in Lemma~\ref{lem:spexpMainbody} about the
	Shapley centrality:  $\psi^{\Shapley}_u = n\cdot \E_{\bR}[\I\{u\in \bR\}/|\bR|]$.
A similar fact holds for the SNI centrality: $\psi^{\SNI}_u = \sigma(\{u\}) = n\cdot \E_{\bR}[\I\{u\in \bR\}]$
	\cite{BorgsBrautbarChayesLucier,tang14,tang15}.
Therefore, it is not difficult to verify that we only need to replace $\est_{\bu} = \est_{\bu} + 1/|\bR|$
	in lines~\ref{line:estnew1} and~\ref{line:estnew2} with $\est_{\bu} = \est_{\bu} + 1$ to obtain
	an approximation algorithm for SNI centrality.
Let {\ASNIRR} denote the algorithm adapted from {\ASVRR} with the above change, and let
	$\psi_v$ below denote SNI centrality $\psi^{\SNI}_v$ and $\psi^{(k)}$ denote the $k$-th largest
	SNI value.
	
\OnlyInShort{\vspace{-1mm}}
\begin{theorem} 
	\label{thm:SNI}
	For any $\epsilon > 0$, $\ell > 0$, and $k\in \{1,2,\ldots, n \}$,
	Algorithm {\ASNIRR} 
	returns an estimated SNI centrality $\epsi_v$ that satisfies
	(a) unbiasedness: $\E[\epsi_v] = \psi_v, \forall v\in V$; and
	(b) robustness:  with probability at least $1-\frac{1}{n^\ell}$:
	\begin{equation} \label{eq:relativeerrorsni}
	\left\{ 
	\begin{array}{lr}
	|\epsi_v - \psi_v| \le \varepsilon \psi_v & \forall v\in V \mbox{ with } \psi_v > \psi^{(k)},\\
	|\epsi_v - \psi_v| \le \varepsilon \psi^{(k)} & \forall v\in V \mbox{ with } \psi_v \le \psi^{(k)}.
	\end{array}
	\right.
	\end{equation}
	Under Assumption \ref{assump:ComputationalTriggerModel} and the condition
	$\ell \ge (\log_2 k - \log_2 \log_2 n)/\log_2 n$,
	the expected running time 
	of {\ASNIRR} is $O(\ell (m+n) \log n \cdot \E[\sigma(\tilde{\bv})]/ (\psi^{(k)} \varepsilon^2))$,
	where $\E[\sigma(\tilde{\bv})]$ is the same as defined in Theorem~\ref{thm:ShapleyCen}.
\end{theorem}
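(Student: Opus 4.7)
The plan is to mirror the proof of Theorem~\ref{thm:ASVRR} essentially step-for-step, substituting the SNI identity $\psi^{\SNI}_u = \sigma(\{u\}) = n\cdot \E_{\bR}[\I\{u\in \bR\}]$ for the Shapley identity of Lemma~\ref{lem:spexpMainbody}. The key structural observation is that the only modification in {\ASNIRR} is replacing the per-RR-set increment $1/|\bR|$ with the indicator $1$. Both quantities live in $[0,1]$, and both, when averaged over $\btheta$ independent RR sets and rescaled by $n$, form an unbiased estimator of the respective centrality. This reduces the proof to reusing the unbiasedness and concentration machinery of Theorem~\ref{thm:ASVRR} with only cosmetic changes.

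For part (a), unbiasedness follows immediately: each independently generated RR set $\bR$ gives $\E[\I\{u \in \bR\}] = \Pr[u \in \bR] = \psi^{\SNI}_u/n$ by the SNI identity, so $\E[\epsi_v] = n\cdot \E[\est_v]/\btheta = \psi^{\SNI}_v$. Note that the analogue of the absolute normalization statement (part (b) of Theorem~\ref{thm:ASVRR}) is dropped here, consistent with the earlier observation that SNI does not satisfy the Normalization Axiom.

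For part (b), the robustness bound in Eq.~\eqref{eq:relativeerrorsni}, I would reuse the two-phase analysis of {\ASVRR}. In Phase~1, the same iterative halving of $x = n/2^i$ produces a lower bound $\LB$ on $\psi^{(k)}$; the condition on line~\ref{line:cond1} and the setting of $\theta_i$ on line~\ref{line:setthetai} are driven by Chernoff-type concentration of sums of independent $[0,1]$-valued random variables, so replacing $1/|\bR|$ by $\I\{u\in \bR\}$ does not affect the derivation. In Phase~2, the sample size $\btheta$ guarantees that each per-node estimator concentrates multiplicatively around its mean $\psi^{\SNI}_v/n$ for nodes with $\psi^{\SNI}_v > \psi^{(k)}$, and additively within $\varepsilon \psi^{(k)}$ otherwise; a union bound over the $n$ nodes and the at most $\log_2 n$ iterations of Phase~1 yields the overall failure probability $1/n^\ell$ exactly as in the Shapley case.

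For part (c), the running time bound follows from the same accounting. Under Assumption~\ref{assump:ComputationalTriggerModel}, the expected cost of generating a single RR set rooted at a uniformly random $\bv$ is $O((m+n)\E[\sigma(\tilde{\bv})]/n)$, where $\tilde{\bv}$ is sampled proportionally to in-degree (by a standard reverse-BFS argument). Multiplying by the total number $\btheta = \Theta(n \ell \log n / (\psi^{(k)} \varepsilon^2))$ of RR sets, and noting that Phase~1 contributes at most a constant factor under the stated condition on $\ell$, yields the claimed $O(\ell(m+n)\log n \cdot \E[\sigma(\tilde{\bv})]/(\psi^{(k)}\varepsilon^2))$ bound. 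The main (and essentially only) subtlety to check is that the Chernoff-type tail inequalities invoked in the {\ASVRR} analysis hold verbatim here, which they do since $\I\{u \in \bR\} \in \{0,1\} \subseteq [0,1]$; no step in the original argument relied on the finer structure of the $1/|\bR|$ weights beyond their boundedness.
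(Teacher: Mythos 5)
Your proposal is correct and follows essentially the same route as the paper: the paper's proof also reduces Theorem~\ref{thm:SNI} to the analysis of Theorem~\ref{thm:ASVRR} by replacing the per-RR-set contribution $X_R(v)=\I\{v\in R\}/|R|$ with $X'_R(v)=\I\{v\in R\}$, invoking $\psi^{\SNI}_v = n\cdot\E_{\bR}[\I\{v\in\bR\}]$ (Eq.~\eqref{eq:rrsetinf} of Lemma~\ref{lem:margin}) in place of the Shapley identity, observing that every concentration step used only $[0,1]$-boundedness of the increments, and dropping the absolute-normalization claim. The one cosmetic imprecision is describing the Phase~1 concentration as being over \emph{independent} RR sets; as in the {\ASVRR} analysis, the Phase~1 sets are dependent through the stopping rule and the bounds used are martingale tail bounds, but this is inherited unchanged from the argument you are reusing and does not affect correctness.
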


\OnlyInShort{\vspace{-1mm}}
Together with Algorithm~{\ASVRR} and Theorem~\ref{thm:ASVRR}, we see that although Shapley and SNI centrality are quite
	different conceptually, surprisingly they share the same RR-set based scalable computation structure.
Comparing Theorem~\ref{thm:SNI} with Theorem~\ref{thm:ASVRR}, we can see that computing SNI 
	centrality should  be faster for small $k$ since the $k$-th largest SNI value is usually larger than
	the $k$-th largest Shapley value.

\section{Experiments}\label{sec:experiments}
We conduct experiments on a number of 
 real-world social networks to compare their Shapley and SNI
 centrality, and test the efficiency of
 our algorithms {\ASVRR} and {\ASNIRR}.
	
\subsection{Experiment Setup}
\begin{table}[t]
	{\centering \caption{\label{tab:stat} Datasets used in the experiments. 
			\OnlyInFull{``\# Edges''
			refers to the number of undirected edges for the first three datasets and
			the number of directed edges for the last dataset.}}
	\small
	\begin{tabular}{|l|r|r|c|}
		\hline 
		Dataset & \# Nodes & \# Edges & Weight Setting \\
		\hline
		\hline
		Data mining (DM) & 679 & 1687 & WC, PR, LN \\
		Flixster (FX)  & 29,357 & 212,614 & LN \\
\OnlyInFull{DBLP (DB)  & 654,628 & 1,990,159 & WC, PR \\
		}LiveJournal (LJ)  & 4,847,571 & 68,993,773 & WC \\
		\hline
	\end{tabular}}
\end{table}
The network datasets we used are summarized in Table~\ref{tab:stat}. 

The first dataset is a relatively small one used as a case study.
It is a collaboration network in the field of Data Mining (DM), 
  extracted from the ArnetMiner archive (arnetminer.org) \cite{ChiKDD09}: each node is an author
  and two authors are connected if they have coauthored a paper.
\OnlyInFull{The mapping from node ids to author names is available, 
 allowing us to gain some intuitive observations
 of the centrality measure. }We use \OnlyInFull{three}\OnlyInShort{two} large networks to 
  demonstrate the effectiveness of the Shapley and SNI centrality
  and the scalability of our algorithms.
Flixster (FX) \cite{barbieri2012topic} is a directed network extracted from movie rating site flixster.com.
The nodes are users and 
  a directed edge from $u$ to $v$ means that $v$ has rated some movie(s) 
  that $u$ rated earlier.
\OnlyInFull{Both network and the influence probability profile
   are obtained from the authors of \cite{barbieri2012topic},
   which shows how to learn topic-aware influence 
   probabilities. }We use influence probabilities on topic 1 
  in their provided data as an example.
\OnlyInFull{DBLP (DB) is another academic collaboration network extracted from online 
		archive DBLP (dblp.uni-trier.de)
	  and used for influence studies 
	  in \cite{wang2012scalable}. }Finally, LiveJournal (LJ) is the largest network we tested with.
It is a directed network of bloggers, 
  obtained from Stanford's SNAP project~\cite{SNAP}, and it was
	also used in \cite{tang14,tang15}.

	\begin{table*}[t] 
		\caption{Top 10 authors from DM dataset, ranked by Shapley, SNI, and degree centrality. }
		{\includegraphics[width=1\linewidth]{./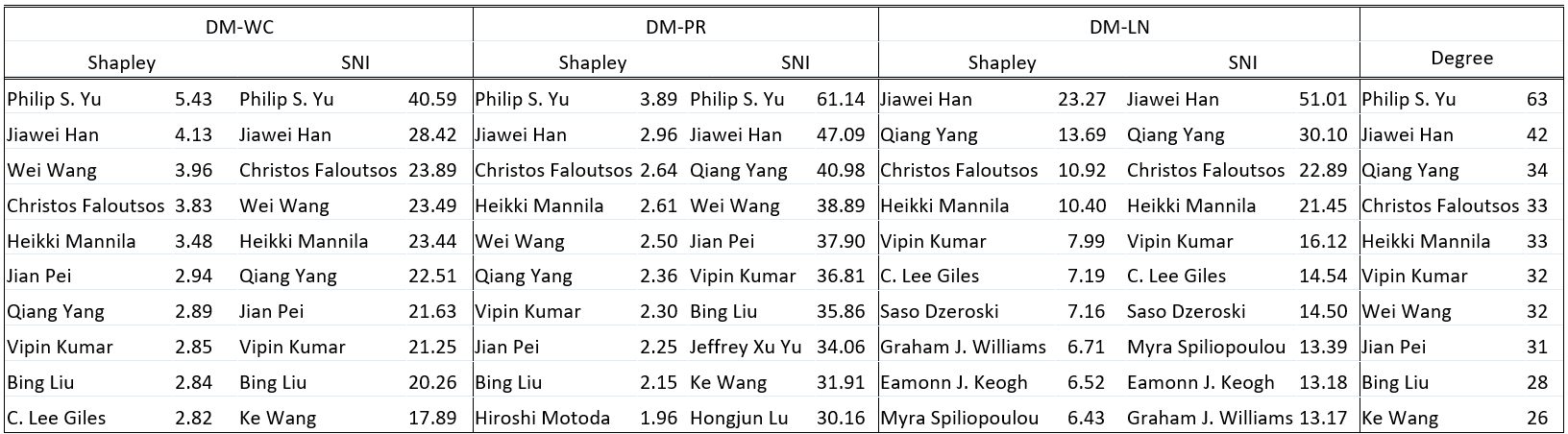}\label{tab:dmlist}}
	\end{table*}

We use the independent cascade (IC) model in our experiments.
The schemes for generating influence-probability 
  profiles are also shown in Table~\ref{tab:stat},
   where WC, PR,  and LN stand for {\em weighted cascade}, {\em PageRank-based}, and {\em learned from real data},
   respectively.
WC is a scheme of \cite{kempe03},
  which assigns $p_{u,v} = 1/d_v$ to edge $(u,v)\in E$,
  where $d_v$ is the in-degree of node $v$.
PR uses the nodes' PageRanks \cite{PageRank}
	instead of in-degrees:
We first compute the PageRank score $r(v)$ for every node $v \in V$
  in the unweighted network, using $0.15$ as the restart parameter.
\OnlyInFull{Note that in our influence network, edge $(u,v)$ means $u$ has influence to $v$; then
	when computing PageRank, we should reverse the edge direction to $(v,u)$ so that
	$v$ gives its PageRank vote to $u$, in order to be consistent
	on influence direction. }Then, for each original edge $(u,v)\in E$,  PR assigns an edge probability of $r(u)/(r(u)+r(v)) \cdot n/(2m^U)$,
	where $m^U$ is the number of undirected edges in the graph.
\OnlyInFull{The assignment achieves the effect that a higher PageRank node has larger influence to a
	lower PageRank nodes than the reverse direction (when both directions exist).
The scaling factor $n/(2m^U)$ is to normalize the total edge probabilities 
  to $\Theta(n)$, which is similar	to the setting of WC.
PR defines a PageRank-based asymmetric IC model. }LN applies to DM and FX datasets,
   where we obtain learned influence probability profiles from 
  the authors of the original studies.
For the DM dataset, the influence probabilities on edges are
	learned by the topic affinity algorithm TAP proposed in 
	\cite{ChiKDD09};
	for FX, the influence probabilities are learned using maximum likelihood from the 
	action trace data of user rating events.

We implement all algorithms 
  in Visual C++, compiled in Visual Studio 2013, and 
  run our tests on a server computer with 2.4GHz Intel(R) Xeon(R)
	E5530 CPU, 2 processors (16 cores), 48G memory, and Windows Server 2008 R2
	(64 bits).

\OnlyInShort{An additional dataset DBLP with WC and PR settings is 
		also tested and results are included in~\cite{CT16}.}
	

\OnlyInShort{\vspace{-2mm}}
\subsection{Experiment Results}

\noindent{\bf Case Study on DM.\ }
We set $\varepsilon = 0.01$, $\ell = 1$, and $k=50$ for both {\ASVRR} and 
and {\ASNIRR} algorithms.
For the three influence profiles: WC, PR, and LN, 
	Table~\ref{tab:dmlist} lists the 
  top 10 nodes in both Shapley and SNI ranking together with their numerical values. 
The names appeared in all ranking results are well-known 
  data mining researchers in the field,\OnlyInFull{ at the time of the data collection 2009, }but the ranking details have some difference.
  
We compare the Shapley ranking versus SNI ranking under the same probability profiles.
In general, the two top-10 ranking results 
align quite well with each other, showing that
in these influence instances, high individual influence 
usually translates into high marginal influence.
Some noticeable exception also exists.
For example, Christos Faloutsos is ranked No.3 
in the DM-PR Shapley centrality, but he is not in Top-10 based on DM-PR individual influence ranking.
Conceptually, this would mean that, in the DM-PR model,  Professor Faloutsos 
has better Shapley ranking  because he has more unique and marginal impact 
comparing to his individual influence.
\OnlyInFull{In terms of the numerical values, SNI values are larger than the Shapley values, which is expected
	due to the normalization factor in Shapley centrality.}

We next compare Shapley and SNI centrality with the structure-based degree centrality.
The results show that the Shapley and SNI rankings in DM-WC and DM-PR are similar to the degree centrality ranking, which
	is reasonable because DM-WC and DM-PR are all heavily derived from node degrees.
However, DM-LN differs from degree ranking a lot, since it is derived from topic modeling, not node degrees.
This implies that when the influence model parameters are learned from real-world data, it may contain further information such that
	its influence-based Shapley or SNI ranking may differ from structure-based ranking significantly.

\OnlyInFull{
When comparing the numerical values of the same centrality measure but across different influence models, 
	we see that Shapley values of top researchers in DM-LN are much higher
	than Shapley values of top researchers under DM-WC or DM-PR, which suggests that influence models
	learned from topic profiles differentiating nodes more than the synthetic WC or PR methods.
	
The above results differentiating DM-LN from DM-WC and DM-PR clearly demonstrate the interplay between social influence 
and network centrality: Different influence processes can lead to different centrality rankings,
	but when they share some aspects of common ``ground-truth'' influence, 
their induced rankings are more closely correlated.}

\OnlyInFull{
\subsubsection*{\sc Tuning Parameter $\varepsilon$}

\begin{figure} [t]
	\centering
	\captionsetup[subfigure]{font=scriptsize,oneside,margin={0.6cm,0.0cm}}
	\setcounter{subfigure}{0}%
	\subfloat[\textit{Shapley computation}]
	{\includegraphics[width=0.50\linewidth]{./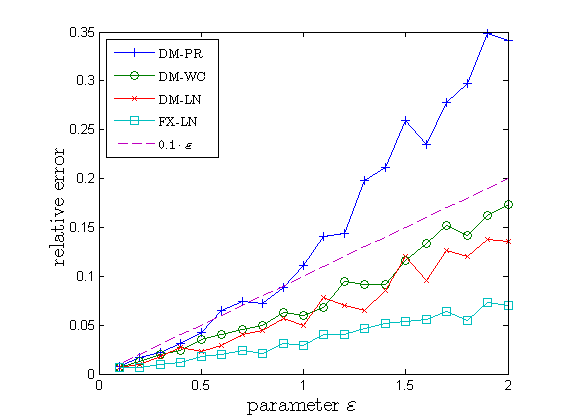}}
	\subfloat[\textit{SNI computation}]
	{\includegraphics[width=0.50\linewidth]{./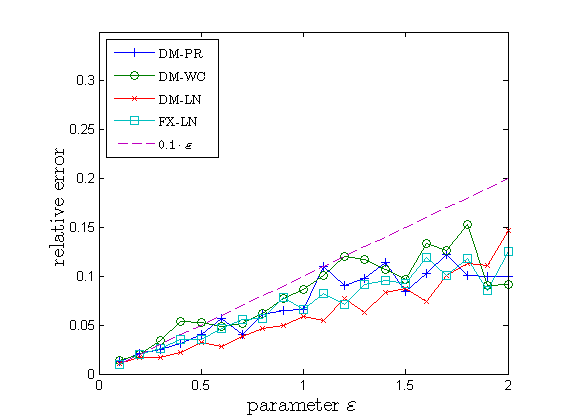}}
	\caption{Relative error of centrality computation when $\varepsilon$ setting increases.}
	\label{fig:epsilon}
\end{figure}

We now investigate the impact of our {\ASVRR}/{\ASNIRR}
  parameters, to be applied to our tests on large datasets.
Parameter $\ell$ is a simple parameter controlling 
  the probability, $1-\frac{1}{n^\ell}$, that the accuracy guarantee holds.
We set it to $1$, which is the same as in \cite{tang14,tang15}.
For parameter $\varepsilon$, a smaller value improves accuracy
   at the cost of higher running time.
Thus, we want to set $\varepsilon$ at a proper level
   to balance accuracy and efficiency.

We test different $\varepsilon$ values 
  from $0.1$ to $2$, on both DM and FX datasets, for both algorithms. 
To evaluate the accuracy, we use the results 
  from $\varepsilon^* = 0.01$ as the benchmark:
For $v\in V$, suppose $s^*_v$ and $s_v$ 
  are the Shapley values computed for $\varepsilon^* = 0.01$ 
  and a larger $\varepsilon$ value, respectively. 
Then, we compute $|s_v - s^*_v|/ s^*_v$ and use it as the relative error at $v$.
Since the top rankers' relative errors are more important, 
  we take top 50 nodes from the two ranking results (using $\varepsilon^*$ and $\varepsilon$ respectively), and 
  compute the average relative error over the union of these two sets 
  of top 50 nodes.
Accordingly, we set parameter $k=50$.
We also apply the same relative error computation to SNI centrality.

Figure~\ref{fig:epsilon} reports our results on the three DM options 
  and the FX dataset, for both Shapley and SNI computations.
We can see clearly that
	when $\varepsilon \le 0.5$, the relative errors of all datasets are within
	$0.05$.
In general, the actual relative error is below one tenth of $\varepsilon$ in most cases, except for
	DM-PR dataset with $\varepsilon \ge 1$.
Hence, for the tests on large datasets, we use $\varepsilon=0.5$ to provide reasonable accuracy for
	top values.
Comparing to $\varepsilon=0.01$, this reduces the running time $2500$ fold, because
	the running time is proportional to $1/\varepsilon^2$.
}

\vspace{2mm}
\noindent{\bf Results on Large Networks.\ }
\OnlyInShort{We set $\varepsilon = 0.5$, $\ell = 1$, and $k=50$ for this test, where
		$\varepsilon=0.5$ is obtained with an omitted tuning step~\cite{CT16}.
Due to the lack of user profiles, we assess the effectiveness of Shapley and SNI centralities
	through lens of influence maximization (IM), by comparing the IM performance of their top-ranked nodes with
	nodes selected by the {\IMM} algorithm \cite{tang15}.
A simple heuristic {\Degree}  based on degree centrality is also compared as a baseline.

Figure~\ref{fig:infmax} shows the influence spread results and Table~\ref{tab:time} shows
	the running time.
We see that both Shapley and SNI have similar IM performance and is reasonable close to
	the specialized {\IMM} algorithm, but Shapley is noticeably better 
	(average $8.3\%$ improvement) than SNI in Flixster-LN test.
This is perhaps due to that Shapley centrality accounts for more marginal influence, which is closer to what
is needed for influence maximization.
In FX-LN, both {\ASNIRR} and {\ASVRR} performs significantly better than {\Degree}, again indicating that
	influence learned from the real-world data may contain significantly more information than the graph
	structure, in which case degree centrality is not a good index for node importance.
Both {\ASNIRR} and {\ASVRR} can scale to the large LiveJournal group with $69M$ edges, and
	{\ASNIRR} scales better as predicted by Theorems~\ref{thm:ASVRR} and~\ref{thm:SNI}.
}\OnlyInFull{
We conduct experiments to evaluate both the effectiveness 
  and the efficiency of our {\ASVRR} algorithm on large networks.
For large networks, it is no longer easy to inspect rankings manually, 
  especially when these datasets lack user profiles.
For the effectiveness, we 
  assess the effectiveness of Shapley and SNI centrality 
  rankings through the lens of influence maximization.
In particular,  we use top rankers of Shapley/SNI
   centrality as seeds and measure their effectiveness for influence maximization.
We compare the quality and performance of our algorithm 
  with  the state-of-the-art scalable algorithm {\IMM} proposed in \cite{tang15} 
  for influence maximization.
Note that the {\IMM} algorithm is based on the RR set approach.
For {\IMM}, we set its parameters as $\varepsilon = 0.5$, $\ell = 1$, and $k=50$,
   matching the parameter settings we use for	{\ASVRR}/{\ASNIRR}.
We also choose a baseline algorithm {\Degree}, 
  which is based on degree centrality to select top degree nodes as seeds for influence maximization.
}

\OnlyInShort{
\begin{figure}[t]
	\centering
	\captionsetup[subfigure]{font=scriptsize,oneside,margin={0.6cm,0.0cm}}
	\setcounter{subfigure}{0}%
	\subfloat[\textit{Flixster-LN}]
	{\includegraphics[width=0.5\linewidth]{./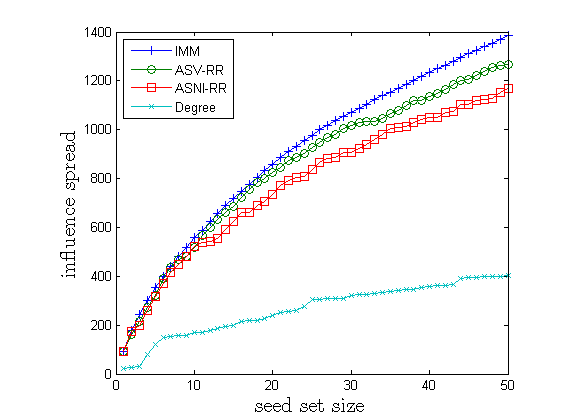}\label{fig:fxinfmax}}
	\subfloat[\textit{LiveJournal-WC}]
	{\includegraphics[width=0.5\linewidth]{./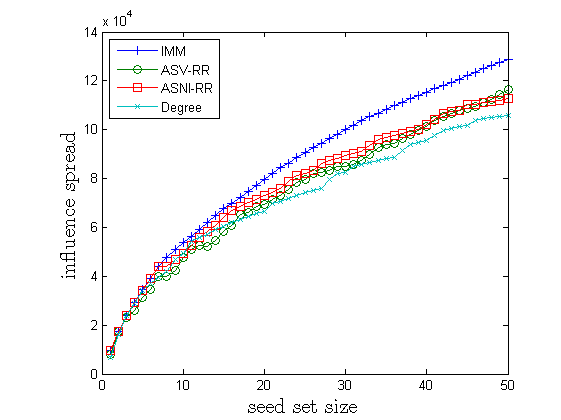}\label{fig:ljwcinfmax}}
	\caption{Influence maximization test.}
	\label{fig:infmax}
\end{figure}
}

\OnlyInFull{
\begin{figure*}[t]
	\centering
	\captionsetup[subfigure]{font=scriptsize,oneside,margin={0.6cm,0.0cm}}
	\setcounter{subfigure}{0}%
	\subfloat[\textit{Flixster-LN}]
	{\includegraphics[width=0.25\linewidth]{./fig/flixster_t01_infmax.png}\label{fig:fxinfmax}}
	\subfloat[\textit{DBLP-WC}]
	{\includegraphics[width=0.25\linewidth]{./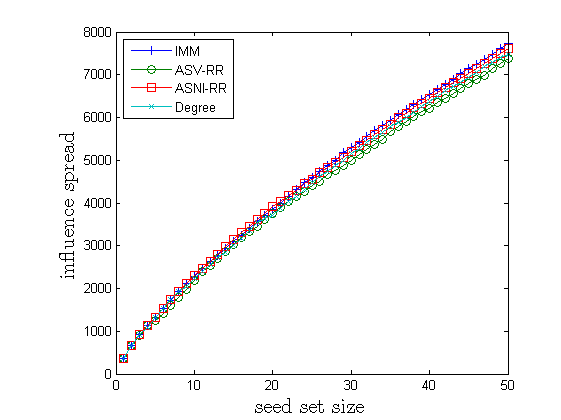}\label{fig:dblpwcinfmax}}
	\subfloat[\textit{DBLP-PR}]
	{\includegraphics[width=0.25\linewidth]{./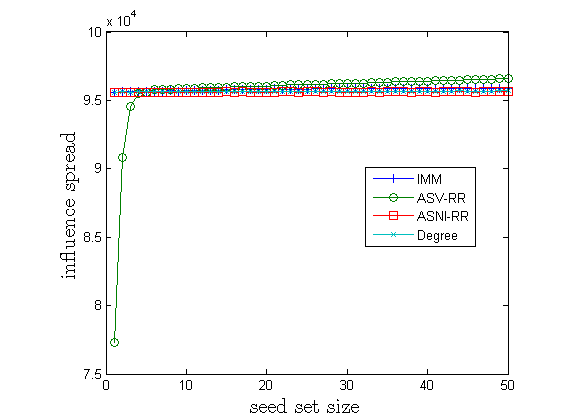}\label{fig:dblpprinfmax}}
	\subfloat[\textit{LiveJournal-WC}]
	{\includegraphics[width=0.25\linewidth]{./fig/livejournal-wc_infmax.png}\label{fig:ljwcinfmax}}
	\caption{Influence maximization test on {\IMM}, {\ASVRR}, and {\Degree}.}
	\label{fig:infmax}
\end{figure*}
}

\OnlyInFull{
We run {\ASVRR}, {\ASNIRR}, {\IMM}, and {\Degree}
   on four influence instances: 
  (1) the Flixster network with learned probability,
  (2) the DBLP network with WC parameters, 
  (3) the DBLP network with PR parameters, and
  (4)  the LiveJournal network with WC	parameters.
Figure~\ref{fig:infmax} shows the results of these 
  four tests whose objectives are to identify 50  influential seed nodes.
The influence spread in each case 
  is obtained by running 10K  Monte Carlo simulations 
  and taking the average value.
The results on all datasets in general show that
	both Shapley and SNI centrality performs reasonably well for the influence maximization task, but in some
	cases {\IMM} is still noticeably better.
This is because {\IMM} is specially designed for the influence maximization task while Shapley and SNI are
	two centrality measures related to influence but not specialized for the influence maximization task.
For the FX-LN dataset, Shapley top rankers performs noticeably better than SNI top rankers (average
	$8.3\%$ improvement).
This is perhaps due to that Shapley centrality accounts for more marginal influence, which is closer to what
	is needed for influence maximization.
This is also the test where they both significantly outperform the baseline {\Degree} heuristic, 
	again indicating that
	influence learned from the real-world data may contain significantly more information than the graph
	structure, in which case degree centrality is not a good index for node importance.

The behavior of DBLP-PR needs a bit more attention. 
For {\ASNIRR} (as well as {\IMM} and {\Degree}), the first seed selected already generates influence spread of
	$95K$, but subsequent seeds only have very small marginal contribution to the influence spread.
On the contrary, the first seed selected by {\ASVRR} only has influence spread of $77K$, and the spread reaches
	the level of {\ASNIRR} at the fourth seed.
Looking more closely, the first seed selected by {\ASVRR} has Shapley centrality of $10.3$ but its influence spread
	of $77K$ is only ranked at around $68K$ on SNI ranking, while
	the first seed of {\ASNIRR} has Shapley centrality of $3.15$,
	with Shapley ranking beyond $2100$.
This shows that when a large portion of nodes have high individual but overlapping influence 
	(due to the emergence of
	the giant component in live-edge graphs), they all become more or less replaceable, 
	and thus Shapley ranking, which focuses on marginal influence in a random order, 
	would differs from SNI ranking significantly.

Finally, we evaluate the scalability of {\ASVRR} and {\ASNIRR}, and use
	{\IMM} as a reference point, even though {\IMM} is designed for a different task.
We use the same setting of $\varepsilon=0.5$, $\ell=1$, and $k=50$.
Table~\ref{tab:time} reports the running time of 
  the three algorithms on four large influence instances.
For FX-LN, DB-WC, and LJ-WC,
	the general trend is that {\IMM} is the fastest, followed by {\ASNIRR}, and then {\ASVRR}.
This is expected, because the theoretical running time of {\IMM} is
	$\Theta((k+\ell) (m+n) \log n \cdot \E[\sigma(\tilde{\bv})]/ (OPT_k \cdot \varepsilon^2))$,
	where $OPT_k$ is maximum influence spread with $k$ seeds.
Thus comparing to the running time results in Theorems~\ref{thm:ASVRR} and~\ref{thm:SNI}, 
	typically $OPT_k$ is much larger than the $k$-th largest SNI centrality, which in turn is much larger than
	the $k$-th largest Shapley centrality, which leads to the observed running time result.
Nevertheless, both {\ASNIRR} and {\ASVRR} could be considered efficient in these cases and they can scale to
	large graphs with tens of millions of nodes and edges.

DB-PR again is an out-lier, with {\ASNIRR} faster than {\IMM}, and {\ASVRR} being too slow and inefficient.
This is because a large portion of nodes have large individual but overlapping influence, 
	so that $OPT_{50} = 95.9K$ is almost
	the same as the $50$-th largest SNI value ($94.2K$), in which case the $(k+\ell)$ factor in the running
	time of {\IMM} dominates and makes {\IMM} slower than {\ASNIRR}.
As for {\ASVRR}, due to the severe overlapping influence, the $50$-th largest Shapley value ($5.10$)
	is much smaller than the $50$-th largest SNI value or $OPT_{50}$, resulting in much slower running time
	for {\ASVRR}.
}

\OnlyInFull{
\begin{table}[t]
	\centering \caption{\label{tab:time} Running time (in seconds). 
		}
	\begin{tabular}{|l|r|r|r|r|}
		\hline 
		Algorithm & FX-LN & DB-WC & DB-PR & LJ-WC \\
		\hline
		\hline
		{\ASVRR} 		& 24.83 & 838.27 	& 594752 & 8295.57\\
		{\ASNIRR} 		& 1.36  & 61.41 	& 28.42  & 267.50\\
		{\IMM} 			& 0.62 	& 18.08 	& 336.63 & 54.88 \\
		\hline
	\end{tabular}
\end{table}
}

\OnlyInShort{
\begin{table}[t]
	\centering \caption{\label{tab:time} Running time (in seconds). 
	}
	\begin{tabular}{|l|r|r|r|}
		\hline 
					 & {\ASVRR}   &  {\ASNIRR} & {\IMM} \\
		\hline
		\hline
		FX-LN  		&  	24.83	& 	1.36	&  0.62 \\
		LJ-WC  		&   8295.57	& 	267.50	& 54.88 \\
		\hline
	\end{tabular}
\end{table}
	}

\OnlyInFull{
In summary, 
  our experimental results on small and large datasets 
  demonstrate that 
  (a) Shapley and SNI centrality behaves similarly in these networks, but with noticeable differences;
  (b) for the influence maximization task, they perform close to the specially designed IM algorithm, with 
	  Shapley centrality noticeably better than SNI in some case; and
  (c) both can scale to large graphs with tens of millions of edges, with {\ASNIRR} having better scalability.
	  except that {\ASVRR} would not be efficient for graphs with a huge gap between individual
	  influence and marginal influence.
Finally, we remark that {\ASVRR} and {\ASNIRR} do not need to store RR sets, 
	which eliminates a memory bottleneck that could be encountered by {\IMM} on large datasets.
}

\OnlyInShort{\vspace{-2mm}}
\section{Conclusion and Future Work}

Through an integrated mathematical, algorithmic, and empirical study
   of Shapley and SNI centralities
	in the context of network influence, we have shown that
	(a) both enjoy concise
	axiomatic characterizations, which precisely capture their
	similarity and differences;
	(b) both centrality measures can be efficiently 
  approximated with guarantees under the same algorithmic 
		structure, for a large class of influence models; and
	(c) Shapley centrality focuses 
  on nodes' marginal influence and their irreplaceability in group 
		influence settings, 
		while SNI centrality focuses on individual influence in isolation, and is not suitable in assessing
		nodes' ability in group influence setting, such as threshold-based models.

There are several directions 
  to extend this work and further explore 
  the interplay between social influence
	and network centrality.
One important direction is to formulate
   centrality measures that combine the advantages of Shapley and SNI 
	centralities, 
  by viewing Shapley and SNI centralities 
  as two extremes in a centrality spectrum, one
	focusing on individual influence 
   while the other focusing on marginal influence in groups of all sizes.
Then, would there be some intermediate centrality measure that provides a better balance?
Another direction is to incorporate other classical centralities into influence-based
	centralities.
For example, SNI centrality may be viewed as a generalized version of degree centrality, because
	when we restrict the influence model to deterministic activation
  of  only immediate neighbors, 
	SNI centrality essentially becomes degree centrality.
What about the general forms of closeness, betweenness, PageRank in the influence model?
Algorithmically, efficient algorithms for other influence models such as general threshold models~\cite{kempe03}
	is also interesting.
In summary, this paper lays a foundation for the further development 
of the axiomatic and algorithmic theory for influence-based network centralities, 
which we hope will provide us with deeper insights into network
structures and  influence dynamics.

\section*{Acknowledgment}

We thank Tian Lin for sharing his IMM implementation code and helping on
	data preparation.
We thank David Kempe and four anonymous reviewers for their valuable feedback.
Wei Chen is partially supported by 
  the National Natural Science Foundation of China (Grant No. 61433014).
Shang-Hua Teng is supported in part by 
  a Simons Investigator Award from the 
  Simons Foundation and by NSF grant CCF-1111270.

\OnlyInShort{\clearpage}



{\raggedright

\bibliographystyle{abbrv}
\bibliography{singlebib}
}

\OnlyInFull{
\clearpage
\appendix



\section{Proofs on Axiomatic Characterization}

\subsection{Proof of Theorem~\ref{thm:ShapleyCen}} \label{app:thm2}

We use $\cA$ to denote the set of Axioms \ref{axiom:anonymity}-\ref{axiom:critical}.                             

\subsubsection*{{\sc Analysis of Sink Nodes}}
We first prove that the involvement of sink nodes in the influence
  process is what we have expected:
(1) The marginal contribution of a sink node $v$ 
     is equal to the probability that $v$ is not influenced by the seed set.
(2) For any other node $u\in V$, 
  $u$'s activation probability is the same
  whether or not $v$ is in the seed set.

\begin{lemma} \label{lem:sinkmargin}
Suppose $v$ is a sink node in $\calI = (V,E,P_{\calI})$. Then,
(a) for any $S\subseteq V\setminus \{v\}$:
	$$\sigma_\calI(S\cup \{v\}) - \sigma_\calI(S) = \Pr(v\not\in \bI_\calI(S)).$$
(b) for any $u\ne v$ and any $S\subseteq V\setminus \{u,v\}$:
	$$\Pr(u\not\in \bI_\calI(S\cup\{v\})) = \Pr(u\not\in \bI_\calI(S)).$$
\end{lemma}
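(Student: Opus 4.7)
The plan is to prove both parts by direct computation from the definitions, exploiting the sink-node relation $P_{\calI}(S\cup\{v\}, T\cup\{v\}) = P_{\calI}(S,T) + P_{\calI}(S, T\cup\{v\})$ (for $S,T\subseteq V\setminus\{v\}$) and the progressive property $S\subseteq \bI_{\calI}(S)$ a.s.\ (so that $P_{\calI}(S,T)=0$ unless $S\subseteq T$). The unifying idea is to split every sum over activated sets $T'$ according to whether $v\in T'$ or not, write $T'=T\cup\{v\}$ in the first case with $T\subseteq V\setminus\{v\}$, and then rewrite $P_{\calI}(S\cup\{v\}, T\cup\{v\})$ using the sink-node identity. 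I expect everything to collapse after canceling matching terms; there is no conceptual subtlety beyond careful bookkeeping.

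For part~(a), I would start by writing
\[
\sigma_{\calI}(S\cup\{v\}) \;=\; \sum_{\substack{T\subseteq V\setminus\{v\}\\S\subseteq T}} P_{\calI}(S\cup\{v\}, T\cup\{v\})\cdot(|T|+1),
\]
where I used that $v\in\bI_{\calI}(S\cup\{v\})$ a.s.\ so every realized activation set has the form $T\cup\{v\}$. Substituting the sink-node identity expands this into two sums. In parallel, I would split
\[
\sigma_{\calI}(S) \;=\; \sum_{\substack{T\subseteq V\setminus\{v\}\\S\subseteq T}} P_{\calI}(S,T)\cdot |T| \;+\; \sum_{\substack{T\subseteq V\setminus\{v\}\\S\subseteq T}} P_{\calI}(S,T\cup\{v\})\cdot (|T|+1)
\]
according to whether the activated set contains $v$. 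Subtracting, the $P_{\calI}(S,T\cup\{v\})$ terms cancel outright and the $P_{\calI}(S,T)$ terms leave a factor $(|T|+1)-|T|=1$, yielding
\[
\sigma_{\calI}(S\cup\{v\}) - \sigma_{\calI}(S) \;=\; \sum_{\substack{T\subseteq V\setminus\{v\}\\S\subseteq T}} P_{\calI}(S,T) \;=\; \Pr(v\notin \bI_{\calI}(S)),
\]
which is exactly the claim.

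For part~(b), I would apply the same split-by-$v$ trick to both sides. On the left,
\[
\Pr(u\notin \bI_{\calI}(S\cup\{v\})) \;=\; \sum_{\substack{T\subseteq V\setminus\{u,v\}\\S\subseteq T}} P_{\calI}(S\cup\{v\}, T\cup\{v\}),
\]
since every realized set already contains $v$. Expanding with the sink-node identity gives
\[
\sum_{\substack{T\subseteq V\setminus\{u,v\}\\S\subseteq T}}\bigl(P_{\calI}(S,T) + P_{\calI}(S, T\cup\{v\})\bigr).
\]
On the right, splitting $\Pr(u\notin\bI_{\calI}(S))$ by whether the activated set contains $v$ yields exactly the same two sums. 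Comparing termwise closes the argument.

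The only potentially confusing step is handling the index constraints cleanly ($S\subseteq T$, $u\notin T$, $v\notin T$), so I would be careful to state the ranges of summation explicitly before invoking the sink-node identity. Beyond that, the proof is a routine verification that the abstract definition of a sink node encodes precisely the intuitive property that $v$ contributes nothing to the influence of any other node and is itself activated only by the direct presence of $v$ in the seed set.
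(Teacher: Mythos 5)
Your proof is correct and follows essentially the same route as the paper's: both apply the defining sink-node identity $P_{\calI}(S\cup\{v\},T\cup\{v\})=P_{\calI}(S,T)+P_{\calI}(S,T\cup\{v\})$ after splitting the relevant sums according to whether the activated set contains $v$, and the bookkeeping you describe is exactly what makes the terms cancel. The only difference is cosmetic — the paper recombines the two sums to recognize $\sigma_{\calI}(S)$ directly rather than subtracting at the end.
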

\begin{proof}
For (a), by the definitions of $\sigma_\calI$ and sink nodes:
	\begin{align*}
	& \sigma_\calI(S\cup \{v\}) \\
	& = \sum_{T\supseteq S\cup \{v\}} P_{\calI}(S\cup\{v\}, T) \cdot |T| \\
	& =  \sum_{T\supseteq S\cup \{v\}} \left(P_{\calI}(S, T\setminus \{v\}) + P_{\calI}(S, T)\right) \cdot |T|  \\
	& = \sum_{T\supseteq S, T\subseteq V\setminus \{v\}} P_{\calI}(S, T) (|T|+1) + 
	\sum_{T\supseteq S\cup \{v\}} P_{\calI}(S, T) \cdot |T| \\
	& = \sum_{T\supseteq S} P_{\calI}(S, T) \cdot |T| + \sum_{T\supseteq S, T\subseteq V\setminus \{v\}} P_{\calI}(S, T) \\
	& = \sigma_\calI(S)+\Pr(v\not\in \bI_\calI(S)).
	\end{align*}
	For (b),
	\begin{align*}
	&\Pr(u\not\in \bI_\calI(S\cup\{v\}))  \\
	& =   \sum_{T\supseteq S\cup\{v\}, T\subseteq V\setminus \{u\}} P_{\calI}(S\cup \{v\}, T) \\
	& =  \sum_{T\supseteq S\cup\{v\}, T\subseteq V\setminus \{u\}} \left(P_{\calI}(S, T\setminus \{v\}) + P_{\calI}(S, T)  \right)  \\
	& =  \sum_{T\supseteq S, T\subseteq V\setminus \{u\}} P_{\calI}(S, T) = \Pr(u\not\in \bI_\calI(S)).
	\end{align*}
\end{proof}

Lemma \ref{lem:sinkmargin} immediately
  implies that for any two sink nodes $u$ and $v$, 
$u$'s marginal contribution to any $S\subseteq V\setminus \{u,v\}$
  is the same as its marginal contribution to $S\cup \{v\}$:
\begin{lemma}[Independence between Sink Nodes] \label{lem:sinkind}
	If $u$ and $v$ are two sink nodes in $\calI$,  
then for any $S\subseteq V\setminus \{u,v\}$, 
	$\sigma_\calI(S\cup \{v,u\}) - \sigma_\calI(S\cup \{v\}) = \sigma_\calI(S\cup \{u\}) - \sigma_\calI(S)$.
\end{lemma}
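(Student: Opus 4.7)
The plan is to derive this lemma as an immediate corollary of Lemma~\ref{lem:sinkmargin}, using part (a) on both sides of the desired equality and then part (b) to identify the resulting two probabilities. No new combinatorial argument is needed; the content has already been prepared by the previous lemma.

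Concretely, I would first apply Lemma~\ref{lem:sinkmargin}(a), with $u$ playing the role of the sink node and $S\cup\{v\}$ playing the role of the seed set (note $S\cup\{v\}\subseteq V\setminus\{u\}$ since $u\notin S$ and $u\neq v$), to rewrite the left-hand side as
\[
\sigma_\calI(S\cup\{v,u\})-\sigma_\calI(S\cup\{v\})=\Pr(u\notin \bI_\calI(S\cup\{v\})).
\]
Next I would apply Lemma~\ref{lem:sinkmargin}(a) again, now with seed set $S$, to rewrite the right-hand side as
\[
\sigma_\calI(S\cup\{u\})-\sigma_\calI(S)=\Pr(u\notin \bI_\calI(S)).
\]
Finally, I would invoke Lemma~\ref{lem:sinkmargin}(b), taking $v$ as the sink node being added to the seed set and $u$ as the node whose activation probability we track (the hypotheses $u\neq v$ and $S\subseteq V\setminus\{u,v\}$ are exactly what the lemma requires), to conclude
\[
\Pr(u\notin\bI_\calI(S\cup\{v\}))=\Pr(u\notin\bI_\calI(S)),
\]
which makes the two expressions above equal.

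Since every step is a direct quotation of the preceding lemma, there is essentially no obstacle; the only thing to be careful about is matching up which node plays the role of the sink in each invocation, and checking the inclusion conditions on $S$ so that parts (a) and (b) of Lemma~\ref{lem:sinkmargin} are legitimately applicable. The interpretive content of the lemma, namely that two sink nodes contribute independently because their activations depend only on the non-sink part of the seed set, is fully captured once these three one-line substitutions are lined up.
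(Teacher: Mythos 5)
Your proposal is correct and is exactly the paper's argument: the paper proves this lemma in one line by noting that Lemma~\ref{lem:sinkmargin}(a) and (b) make both sides equal to $\Pr(u\not\in \bI_\calI(S))$, which is precisely the three substitutions you spell out. The role assignments and inclusion checks you flag are all valid, so nothing is missing.
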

\begin{proof}
By Lemma~\ref{lem:sinkmargin} (a) and (b), both sides 
 are  equal to $\Pr(u\not\in \bI_\calI(S))$.
\end{proof}

The next two lemmas  connect the influence spreads
  in the original and projected instances. 
\begin{lemma} \label{lem:projectsigma}
If $v$ is a sink in $\calI$, then for any $S\subseteq V\setminus \{v\}$:
$$\sigma_{\calI\setminus \{v\}}(S) = \sigma_\calI(S)- \Pr(v\in \bI_\calI(S)).$$
\end{lemma}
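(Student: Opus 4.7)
The plan is a direct calculation from definitions, using the projection rule $P_{\calI\setminus\{v\}}(S,T) = P_{\calI}(S,T) + P_{\calI}(S, T\cup\{v\})$ for $S,T \subseteq V\setminus\{v\}$ together with the definition $\sigma_{\calI}(S) = \sum_{T \supseteq S} P_{\calI}(S,T)\cdot|T|$. I do not plan to invoke the sink property structurally beyond what is already baked into the projection definition; the sinkness of $v$ enters only in that $\calI\setminus\{v\}$ is a well-defined instance via the prescribed formula.

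First I would expand $\sigma_{\calI\setminus\{v\}}(S)$ as a sum over target sets $T\subseteq V\setminus\{v\}$ with $T\supseteq S$, substitute the projection identity, and obtain
\[
\sigma_{\calI\setminus\{v\}}(S) = \sum_{T\subseteq V\setminus\{v\},\, T\supseteq S} P_{\calI}(S,T)\cdot|T| + \sum_{T\subseteq V\setminus\{v\},\, T\supseteq S} P_{\calI}(S,T\cup\{v\})\cdot|T|.
\]
Next I would partition the sum defining $\sigma_{\calI}(S)$ according to whether $v$ lies in the target set. Since $v$ cannot appear without being kept, this splits cleanly into $T\subseteq V\setminus\{v\}$ and $T = T'\cup\{v\}$ with $T'\subseteq V\setminus\{v\}$, $T'\supseteq S$. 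In the second piece $|T| = |T'|+1$, giving
\[
\sigma_{\calI}(S) = \sum_{T\subseteq V\setminus\{v\},\, T\supseteq S} P_{\calI}(S,T)\cdot|T| + \sum_{T'\subseteq V\setminus\{v\},\, T'\supseteq S} P_{\calI}(S,T'\cup\{v\})\cdot(|T'|+1).
\]

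Subtracting, the first sums cancel and the $|T'|$ terms of the second sum match, leaving a residual of $\sum_{T'\subseteq V\setminus\{v\},\, T'\supseteq S} P_{\calI}(S,T'\cup\{v\})$. I would then observe that this residual is exactly $\Pr(v\in \bI_{\calI}(S))$, since the event $\{v\in\bI_{\calI}(S)\}$ is the disjoint union over target sets of the form $T'\cup\{v\}$ with $T'\supseteq S$, $T'\subseteq V\setminus\{v\}$. Rearranging yields the claimed identity.

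I do not foresee a real obstacle here; the only mild subtlety is bookkeeping the index set of the summation to ensure the cardinality shift $|T|=|T'|+1$ is applied precisely to the $v$-containing targets and that the residual is identified with $\Pr(v\in\bI_{\calI}(S))$ rather than with some marginal quantity. In particular, the sinkness hypothesis is not needed explicitly in this calculation beyond guaranteeing that $\calI\setminus\{v\}$ is defined, so I would flag that and note the result could be stated for any $v$ for which the projection formula is applied.
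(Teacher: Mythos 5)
Your proposal is correct and follows essentially the same calculation as the paper: expand $\sigma_{\calI\setminus\{v\}}(S)$ via the projection identity, reindex the $v$-containing targets so the cardinality shift produces a residual $\sum_{T\supseteq S\cup\{v\}}P_{\calI}(S,T)=\Pr(v\in\bI_{\calI}(S))$, and recombine. Your observation that sinkness enters only through the well-definedness of the projection is also accurate and consistent with the paper's treatment.
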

\begin{proof}
By the definition of influence projection: 
	\begin{align*}
	& \sigma_{\calI\setminus \{v\}}(S) \\
	& = \sum_{T \supseteq S, T\subseteq V\setminus \{v\}} P_{\calI\setminus \{v\}}(S, T) \cdot |T| \\
	& = \sum_{T \supseteq S, T\subseteq V\setminus \{v\}} \left( P_{\calI}(S, T) + P_{\calI}(S, T\cup \{v\})\right) \cdot |T| \\
	& = \sum_{T \supseteq S, T\subseteq V\setminus \{v\}}  P_{\calI}(S, T) \cdot |T| +
	\sum_{T \supseteq S\cup \{v\}}  P_{\calI}(S, T) \cdot (|T| -1) \\
	& = \sum_{T \supseteq S} P_{\calI}(S, T) \cdot |T| - \sum_{T \supseteq S\cup \{v\}}  P_{\calI}(S, T) \\
	& = \sigma_\calI(S)- \Pr(v\in \bI_\calI(S)).
	\end{align*}
\end{proof}

\begin{lemma} \label{lem:projectmargin}
For any two sink nodes $u$ and $v$ in $\calI$:
$$\sigma_{\calI\setminus \{v\}}(S \cup \{u\}) - \sigma_{\calI\setminus \{v\}}(S) 
	= \sigma_{\calI}(S \cup \{u\}) - \sigma_{\calI}(S).$$
\end{lemma}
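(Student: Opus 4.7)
The plan is to reduce the identity to the two previously established facts, Lemma~\ref{lem:projectsigma} (which expresses $\sigma_{\calI\setminus\{v\}}$ in terms of $\sigma_{\calI}$ minus an activation-probability correction) and Lemma~\ref{lem:sinkmargin}(b) (which says that adding a sink to the seed set does not change whether any other node gets activated).

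First I would apply Lemma~\ref{lem:projectsigma} to both $S \cup \{u\}$ and $S$, noting that both are subsets of $V\setminus\{v\}$ since $u \ne v$. This rewrites the left-hand side as
\begin{equation*}
\bigl(\sigma_{\calI}(S\cup\{u\}) - \Pr(v \in \bI_{\calI}(S \cup \{u\}))\bigr) - \bigl(\sigma_{\calI}(S) - \Pr(v \in \bI_{\calI}(S))\bigr).
\end{equation*}
Rearranging, this equals $\sigma_{\calI}(S \cup \{u\}) - \sigma_{\calI}(S)$ plus the correction term $\Pr(v \in \bI_{\calI}(S)) - \Pr(v \in \bI_{\calI}(S \cup \{u\}))$, so it suffices to show the correction term vanishes.

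Second, I would invoke Lemma~\ref{lem:sinkmargin}(b), but with the roles of the two sink nodes swapped: since $u$ is a sink and $v \ne u$, that lemma gives $\Pr(v \not\in \bI_{\calI}(S \cup \{u\})) = \Pr(v \not\in \bI_{\calI}(S))$ for any $S \subseteq V \setminus \{u,v\}$, and taking complements yields $\Pr(v \in \bI_{\calI}(S \cup \{u\})) = \Pr(v \in \bI_{\calI}(S))$. Substituting this back shows the correction term is zero, completing the proof.

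There is no real obstacle here; the only subtlety is recognizing that Lemma~\ref{lem:sinkmargin}(b) must be applied with $u$ (rather than $v$) playing the role of the ``sink added to the seed set,'' which is why the hypothesis that both $u$ and $v$ are sinks is needed. Conceptually, the statement just says that once we project out one sink $v$, the marginal influence of another sink $u$ is unaffected, because neither sink influences anyone else in the first place.
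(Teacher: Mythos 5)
Your proposal is correct and follows essentially the same route as the paper's proof: apply Lemma~\ref{lem:projectsigma} to both $S\cup\{u\}$ and $S$, then cancel the correction terms via Lemma~\ref{lem:sinkmargin}(b) applied with $u$ as the sink node being added to the seed set. The role-swap subtlety you flag is exactly the point the paper's proof relies on.
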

\begin{proof}
	By Lemmas~\ref{lem:projectsigma} and~\ref{lem:sinkmargin} (b), we have
	\begin{align*}
	&\sigma_{\calI\setminus \{v\}}(S \cup \{u\}) - \sigma_{\calI\setminus \{v\}}(S) \\
	& = \sigma_\calI(S\cup \{u\})- \Pr(v\in \bI_\calI(S\cup \{u\})) \\
	& \quad 	\quad 	- \left( \sigma_\calI(S)- \Pr(v\in \bI_\calI(S))  \right) \\
	& = \sigma_\calI(S\cup \{u\})- \Pr(v\in \bI_\calI(S)) 
	- \left( \sigma_\calI(S)- \Pr(v\in \bI_\calI(S))  \right) \\
	& = \sigma_{\calI}(S \cup \{u\}) - \sigma_{\calI}(S).
	\end{align*}
\end{proof}

\subsubsection*{{\sc Soundness}}

\begin{lemma} \label{lem:shapleycen}
The Shapley centrality satisfies all	Axioms \ref{axiom:anonymity}-\ref{axiom:critical}.
\end{lemma}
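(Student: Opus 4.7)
The plan is to verify the five axioms in turn, leveraging the three sink-node lemmas just established (Lemmas~\ref{lem:sinkmargin}, \ref{lem:sinkind}, and \ref{lem:projectmargin}) together with the classical Efficiency, Symmetry, Null-Player, and Linearity properties of the Shapley value $\phi^{\Shapley}$. Since by definition $\psi^{\Shapley}(\calI) = \phi^{\Shapley}(\sigma_{\calI})$, most axioms reduce to properties of $\sigma_{\calI}$ fed into $\phi^{\Shapley}$.

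First I would dispatch the easy ones. \textbf{Anonymity} is immediate: relabeling $\calI$ by $\pi$ sends $\sigma_{\calI}$ to $\sigma_{\pi(\calI)}$ with $\sigma_{\pi(\calI)}(\pi(S)) = \sigma_{\calI}(S)$, and $\phi^{\Shapley}$ is equivariant under relabeling of the player set. \textbf{Normalization} follows from Shapley's Efficiency axiom together with $\sigma_{\calI}(V)=|V|$ (since $S\subseteq \bI_{\calI}(S)$ forces $\bI_{\calI}(V)=V$). \textbf{Bayesian Influence} follows from Shapley's Linearity: using $P_{\calI_{\cB(\{\calI^\eta\},\lambda)}}(S,T)=\sum_\eta\lambda_\eta P_{\calI^\eta}(S,T)$, one checks $\sigma_{\calI_{\cB(\{\calI^\eta\},\lambda)}}=\sum_\eta\lambda_\eta\sigma_{\calI^\eta}$, whence $\phi^{\Shapley}$ distributes over the convex combination. \textbf{Bargaining with Critical Sets} reduces to a direct combinatorial computation: in $\calI_{R,v}$ with $n=|R|+1=r+1$, the influence spread is $r+1$ when $R\subseteq S$ and $|S|$ otherwise, so for any $S\subsetneq R$ the marginal $\sigma(S\cup\{v\})-\sigma(S)=1$, while for $S=R$ the marginal is $0$; summing
\[
\phi^{\Shapley}_v(\sigma)=\sum_{k=0}^{r-1}\binom{r}{k}\frac{k!(r-k)!}{(r+1)!}=\sum_{k=0}^{r-1}\frac{1}{r+1}=\frac{r}{r+1}
\]
gives exactly $|R|/(|R|+1)$.

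The main obstacle is \textbf{Independence of Sink Nodes}, because the Shapley value of $u$ in $\sigma_{\calI}$ involves permutations of $V$, whereas in $\sigma_{\calI\setminus\{v\}}$ it involves permutations of $V\setminus\{v\}$, so I must show these two averages coincide. I would use the random-permutation formulation of Eq.~\eqref{eq:sprandom} and split on the relative order of the two sink nodes $u$ and $v$. When $v$ succeeds $u$ in $\bpi$, the preceding set $S_{\bpi,u}\subseteq V\setminus\{u,v\}$, and Lemma~\ref{lem:projectmargin} identifies the marginal contribution with the one computed in $\sigma_{\calI\setminus\{v\}}$. When $v$ precedes $u$, I first apply Lemma~\ref{lem:sinkind} to strip off $v$ from the preceding set, reducing to the same marginal $\sigma_{\calI}(S'\cup\{u\})-\sigma_{\calI}(S')$ for $S':=S_{\bpi,u}\setminus\{v\}$, and then again invoke Lemma~\ref{lem:projectmargin}. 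The Shapley value of $u$ therefore becomes a sum over $S'\subseteq V\setminus\{u,v\}$ of the projected-marginal $\sigma_{\calI\setminus\{v\}}(S'\cup\{u\})-\sigma_{\calI\setminus\{v\}}(S')$ weighted by the combined probability of all permutations of $V$ whose preceding set for $u$ equals $S'$ (when $v$ succeeds $u$) or $S'\cup\{v\}$ (when $v$ precedes $u$). A short coefficient check,
\[
\frac{|S'|!(n-|S'|-1)!}{n!}+\frac{(|S'|+1)!(n-|S'|-2)!}{n!}=\frac{|S'|!(n-|S'|-2)!}{(n-1)!},
\]
matches exactly the Shapley coefficient on $V\setminus\{v\}$ (of size $n-1$), so $\psi^{\Shapley}_u(\calI)=\psi^{\Shapley}_u(\calI\setminus\{v\})$, completing the axiom. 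The coefficient-matching identity is the crux; once it is in hand, the rest of the lemma assembles cleanly from the properties of $\phi^{\Shapley}$ and the sink-node lemmas.
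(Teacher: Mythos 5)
Your proposal is correct and follows essentially the same route as the paper: the first three axioms and the critical-set axiom reduce to standard Shapley properties plus a direct computation, and the sink-node axiom is handled by splitting on the relative order of $u$ and $v$, stripping $v$ via Lemma~\ref{lem:sinkind}, and passing to the projected instance via Lemma~\ref{lem:projectmargin}. The only cosmetic difference is that you verify the sink-node case by matching subset coefficients, whereas the paper phrases the same step as an equality of conditional expectations over random permutations; your coefficient identity checks out and is equivalent.
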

\begin{proof}
	Axioms~\ref{axiom:anonymity}, \ref{axiom:normalization}, and
	\ref{axiom:bayesian} are trivially satisfied by $\psi^{\Shapley}$, or are direct implications
	from the original Shapley axiom set.
	
	Next, we show that $\psi^{\Shapley}$ satisfies Axiom~\ref{axiom:sink}, the Axiom of 
	Independence of Sink Nodes.
	Let $u$ and $v$ be two sink nodes.
	Let $\bpi$ be a random permutation on $V$.
	Let $\bpi'$ be the random permutation on $V\setminus \{v\}$ derived from $\bpi$ by removing $v$ from the random order.
	Let $\{u \prec_{\bpi} v\}$ be the event that $u$ is ordered before $v$ in the permutation $\bpi$.
	Then we have
	\begin{align}
	& \psi^{\Shapley}_u(\calI) =  \E_{\bpi}[\sigma_\calI(S_{\bpi,u} \cup \{u\}) - \sigma_\calI(S_{\bpi,u})] \nonumber \\
	& = \Pr(u \prec_{\bpi} v)\E_{\bpi}[\sigma_\calI(S_{\bpi,u} \cup \{u\}) - \sigma_\calI(S_{\bpi,u}) \mid u \prec_{\bpi} v] + \nonumber \\
	& \quad \Pr(v \prec_{\bpi} u)\E_{\bpi}[\sigma_\calI(S_{\bpi,u} \cup \{u\}) - \sigma_\calI(S_{\bpi,u}) \mid v \prec_{\bpi} u] \nonumber \\
	& = \Pr(u \prec_{\bpi} v)\E_{\bpi'}[\sigma_\calI(S_{\bpi',u} \cup \{u\}) - \sigma_\calI(S_{\bpi',u})] + \nonumber \\
	& \quad \Pr(v \prec_{\bpi} u)\E_{\bpi}[\sigma_\calI(S_{\bpi,u} \cup \{u\}) - \sigma_\calI(S_{\bpi,u}) \mid v \prec_{\bpi} u] \nonumber \\
	& = \Pr(u \prec_{\bpi} v)\E_{\bpi'}[\sigma_\calI(S_{\bpi',u} \cup \{u\}) - \sigma_\calI(S_{\bpi',u})] + \nonumber \\
	& \quad \Pr(v \prec_{\bpi} u) \cdot \nonumber \\
	& \quad \E_{\bpi}[\sigma_\calI(S_{\bpi,u} \setminus \{v\} \cup \{u\}) - \sigma_\calI(S_{\bpi,u}\setminus \{v\} ) \mid v \prec_{\bpi} u]
	\label{eq:usesinkind} \\
	& = \Pr(u \prec_{\bpi} v)\E_{\bpi'}[\sigma_\calI(S_{\bpi',u} \cup \{u\}) - \sigma_\calI(S_{\bpi',u})] + \nonumber \\
	& \quad \Pr(v \prec_{\bpi} u) \E_{\bpi'}[\sigma_\calI(S_{\bpi',u} \cup \{u\}) - \sigma_\calI(S_{\bpi',u})] \nonumber \\
	& = \E_{\bpi'}[\sigma_\calI(S_{\bpi',u} \cup \{u\}) - \sigma_\calI(S_{\bpi',u})] \nonumber \\
	& = \E_{\bpi'}[\sigma_{\calI\setminus \{v\}}(S_{\bpi',u} \cup \{u\}) - \sigma_{\calI\setminus \{v\}}(S_{\bpi',u})]
	\label{eq:useprojectmar} \\
	& = \psi^{\Shapley}_u({\calI\setminus \{v\}}). \nonumber
	\end{align}
	Eq.\eqref{eq:usesinkind} above uses Lemma~\ref{lem:sinkind}, while Eq.\eqref{eq:useprojectmar}
	uses Lemma~\ref{lem:projectmargin}.
	
	Finally, we show that $\psi^{\Shapley}$ satisfies Axiom~\ref{axiom:critical}, the Critical Set Axiom.
	By the definition of the critical set instance, 
	we know that if influence instance $\calI$ has
	critical set $R$, then $\sigma_\calI(S) = |V|$ if $S \supseteq R$, and
	$\sigma_\calI(S) = |S|$ if $S \not\supseteq R$.
	Then for $v \not\in R$, for any $S\subseteq V\setminus \{v\}$,  
	$\sigma_\calI(S \cup \{v\}) - \sigma_\calI(S) = 0$ if $S \supseteq R$, and
	$\sigma_\calI(S \cup \{v\}) - \sigma_\calI(S) = 1$ if $S \not\supseteq R$.
	For a random permutation $\bpi$, the event $R\subseteq S_{\bpi,v}$
	is the event that all nodes in $R$ are ordered before $v$
	in $\bpi$, which has probability $1/(|R|+1)$.
	Then we have that for $v \not\in R$, 
	\begin{align*}
	& \psi^{\Shapley}_v(\calI) = \E_{\bpi}[\sigma_\calI(S_{\bpi,v} \cup \{v\}) - \sigma_\calI(S_{\bpi,v})] \\
	& =  \Pr(R\subseteq S_{\bpi,v})\E_{\bpi}[\sigma_\calI(S_{\bpi,v} \cup \{v\}) - 
	\sigma_\calI(S_{\bpi,v}) \mid R\subseteq S_{\bpi,v}] + \\
	& \quad \Pr(R \not\subseteq S_{\bpi,v})\E_{\bpi}[\sigma_\calI(S_{\bpi,v} \cup \{v\}) - 
	\sigma_\calI(S_{\bpi,v}) \mid R \not\subseteq S_{\bpi,v}] \\
	& =   \Pr(R \not\subseteq S_{\bpi,v}) = \frac{|R|}{|R|+1}.
	\end{align*}
	Therefore, Shapley centrality $\psi^{\Shapley}$ is a solution consistent with
        Axioms \ref{axiom:anonymity}-\ref{axiom:critical}.
\end{proof}

\subsubsection*{{\sc Completeness (or Uniqueness)}}

We now prove the uniqueness of axiom set $\cA$.
Fix a set $V$. 
For any $R, U \subseteq V$ with $R\ne \emptyset$ and $R\subseteq U$,
we define the critical set instance $\calI_{R,U}$, an extension to
	the critical set instance $\calI_{R,v}$ defined for
	Axiom~\ref{axiom:critical}.

\begin{definition}[General Critical Set Instances] \label{def:critical}
For any $R, U \subseteq V$ with $R\ne \emptyset$ and $R\subseteq U$,
the critical set instance $\calI_{R,U} = (V, E, P_{\calI_{R,U}})$
is the following influence instance: 
(1) The network $G = (V,E)$ contains a complete directed
    bipartite sub-graph from $R$ to $U\setminus R$, together with 
    isolated nodes $V \setminus U$.
(2) For all $S \supseteq R$, $P_{\calI_{R,U}}(S, U \cup S) = 1$, and
(3) For all $S \not \supseteq R$, $P_{\calI_{R,U}}(S, S) = 1$.
For this instance, $R$ is called the {\em critical set}, and
	$U$ is called the {\em target set}.
\end{definition}

Intuitively, in the critical set instance $\calI_{R,U}$, once
	the seed set contains the critical set $R$, it guarantees to activate
	target set $U$ together with other nodes in $S$; but as long as some
	nodes in $R$ is not included in the seed set $S$, only nodes in $S$
	can be activated.
These critical set instances play an important role in the uniqueness proof.
Thus, we first study their properties.

To study the properties of the critical set instances, it is helpful for us to introduce
	a special type of sink nodes called {\em isolated nodes}.
\remove{\begin{definition}[Isolated Nodes] \label{def:isolated}
}We say $v\in V$ is an {\em isolated node} in 
  \remove{influence instance }$\calI = (V,E,P_{\calI})$, if
  $\forall S, T \subseteq V\setminus \{v\}$ 
  with $S \subseteq T$, $P_{\calI}(S\cup\{v\}, T\cup \{v\}) = P_{\calI}(S,T)$.
In the extreme case, 
  $P_{\calI}(\{v\}, \{v\}) = P_{\calI}(\emptyset, \emptyset) = 1$, 
  meaning that $v$ only activates itself,
No seed set can influence $v$ unless it contains $v$:
For any $S, T \subseteq V\setminus \{v\}$ with $S \subseteq T$, 
	$P_{\calI}(S, T\cup \{v\}) \le 1 - \sum_{T'\supseteq S, T' \subseteq V\setminus \{v\}} P_{\calI}(S, T')
		= 1 - \sum_{T'\supseteq S, T' \subseteq V\setminus \{v\}} P_{\calI}(S\cup\{v\}, T' \cup \{v\})=0$.
\remove{\end{definition}In $\calI$, 
Intuitively, $v\in V$ is an isolated node if it can neither influence others nor be influenced by others:
In other words,}The role of $v$ in any seed set 
  is just to activate itself: The probability of activating other nodes
  is unchanged if $v$ is removed from the seed set.
It is easy to see that by definition an isolated node is a sink node.

\begin{lemma}[Sinks and Isolated Nodes] \label{lem:isosink}
In the critical set instance $\calI_{R,U}$, 
every node in $V\setminus U$ is an isolated node,
and every node in $V \setminus R$ is a sink node.
\end{lemma}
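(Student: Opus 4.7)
The plan is a direct verification from the three defining clauses of $\calI_{R,U}$, using the combinatorial structure that the only seed set behavior to distinguish is whether $R$ is a subset of the seed set or not. The key observation that unifies both halves is that for any $v \in V\setminus R$, the predicates $R\subseteq S$ and $R\subseteq S\cup\{v\}$ coincide (since $v\notin R$), so adding or removing $v$ from a seed set never flips which clause of the definition applies.

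For the first claim, fix $v \in V\setminus U$ and arbitrary $S,T\subseteq V\setminus\{v\}$ with $S\subseteq T$; I want to show $P_{\calI_{R,U}}(S\cup\{v\},T\cup\{v\})=P_{\calI_{R,U}}(S,T)$. Since $R\subseteq U$, we have $v\notin R$ and $v\notin U\cup S$. If $R\subseteq S$, the unique nonzero-probability target for seed $S$ is $U\cup S$ and for seed $S\cup\{v\}$ is $U\cup S\cup\{v\}$, and $T\cup\{v\}=U\cup S\cup\{v\}$ iff $T=U\cup S$. If $R\not\subseteq S$ (equivalently $R\not\subseteq S\cup\{v\}$), the unique nonzero target is $S$ respectively $S\cup\{v\}$, and again $T\cup\{v\}=S\cup\{v\}$ iff $T=S$. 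In both subcases the two probabilities agree, so $v$ is isolated.

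For the second claim, decompose $V\setminus R=(V\setminus U)\cup(U\setminus R)$. Nodes in $V\setminus U$ are already isolated, and I will invoke the remark that every isolated node is a sink: for $S\subseteq T$ the isolated equality gives $P_{\calI}(S\cup\{v\},T\cup\{v\})=P_{\calI}(S,T)$, and the inequality chain already displayed in the excerpt shows $P_{\calI}(S,T\cup\{v\})=0$, yielding the sink identity; for $S\not\subseteq T$ the progressive property makes all three terms vanish. It remains to handle $v \in U\setminus R$. Fixing arbitrary $S,T\subseteq V\setminus\{v\}$, I again split on whether $R\subseteq S$. When $R\subseteq S$, the seed $S$ activates exactly $U\cup S\ni v$, so $P_{\calI_{R,U}}(S,T)=0$ for every $T\subseteq V\setminus\{v\}$; meanwhile $P_{\calI_{R,U}}(S\cup\{v\},T\cup\{v\})$ and $P_{\calI_{R,U}}(S,T\cup\{v\})$ are both indicators of $T=(U\cup S)\setminus\{v\}$, so the sink equation balances. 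When $R\not\subseteq S$, both $P_{\calI_{R,U}}(S,T)$ and $P_{\calI_{R,U}}(S\cup\{v\},T\cup\{v\})$ are indicators of $T=S$, while $P_{\calI_{R,U}}(S,T\cup\{v\})=0$ because $v\notin S$ rules out $T\cup\{v\}=S$; again both sides agree.

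I do not expect a serious obstacle here: the entire proof is a bookkeeping exercise driven by the case split on $R\subseteq S$ and by whether $v\in U$ or $v\in U\setminus R$. The one place to be careful is the $v\in U\setminus R$, $R\subseteq S$ case, where one must notice that $v$ is \emph{forced} into the activated set $U\cup S$, so restricting targets to $V\setminus\{v\}$ drops $P_{\calI_{R,U}}(S,\cdot)$ to zero and shifts all mass to $P_{\calI_{R,U}}(S,T\cup\{v\})$; this is precisely what makes the sink identity (as opposed to the stronger isolated identity) the right statement for these nodes.
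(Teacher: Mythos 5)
Your proposal is correct and follows essentially the same route as the paper: a direct case split on whether $R\subseteq S$, using that adding $v\notin R$ to a seed set never changes which clause of the critical-set definition applies. The only cosmetic difference is that the paper verifies the sink identity uniformly for all $v\in V\setminus R$ (splitting internally on whether $v\in U$), whereas you dispatch the $V\setminus U$ part via the already-noted fact that isolated nodes are sinks and treat only $U\setminus R$ directly — the underlying bookkeeping is identical.
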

\begin{proof}
We first prove that every node $v \in V\setminus U$ is an isolated node.
Consider any two subsets $S,T\subseteq V\setminus \{v\}$ with $S\subseteq T$.
We first analyze the case when $S \supseteq R$.
By Definition~\ref{def:critical}, 
  $P_{\calI}(S\cup \{v\}, T\cup \{v\}) = 1$ 
  iff  $T \cup \{v\} = U \cup S \cup\{v\}$, which
	is equivalent to $T = U\cup S$ since $v \not \in U $.
	This implies that $P_{\calI}(S\cup \{v\}, T\cup \{v\}) = P_{\calI}(S, T) $.
We now analyze the case when $S\not\supseteq R$.
By Definition~\ref{def:critical},
	$P_{\calI}(S\cup \{v\}, T\cup \{v\}) = 1$ iff  $T \cup \{v\} = S \cup\{v\}$, which is
	equivalent to $T = S$.
This again implies that $P_{\calI}(S\cup \{v\}, T\cup \{v\}) = P_{\calI}(S, T) $.
Therefore, 
  $v$ is an isolated node.
	
Next we show that every node $v\not\in R$ is a sink node.
Consider any two subsets $S,T\subseteq T\setminus \{v\}$ with $S\subseteq T$.
In the case when $S \supseteq R$,
 $P_{\calI}(S\cup \{v\}, T\cup \{v\}) = 1$ iff $T \cup \{v\} = U \cup S \cup \{v\}$, 
 which is equivalent to $T = U \cup S \setminus \{v\}$.
Depending on whether $v \in U$,	$T = U \cup S \setminus \{v\}$ is equivalent to 
	exactly one of  $T = U \cup S$ or $T \cup \{v\} = U\cup S$ being true.
This implies that
	$P_{\calI}(S\cup \{v\}, T\cup \{v\}) = P_{\calI}(S, T) + P_{\calI}(S, T\cup\{v\})$.
In the case when $S \not \supseteq R$,
	$P_{\calI}(S\cup \{v\}, T\cup \{v\}) = 1$ iff
	$T\cup \{v\} = S \cup \{v\}$, which is
	equivalent to $T = S$.
This also implies that 
	$P_{\calI}(S\cup \{v\}, T\cup \{v\} = P_{\calI}(S, T) + P_{\calI}(S, T\cup\{v\})$.
Therefore, $v$ is a sink node by definition.	
\end{proof}

\begin{lemma}[Projection] \label{lem:projcritical}
	In the critical set instance $\calI_{R,U}$, 
	for any node $v\in V\setminus U$, the projected influence instance
	of $\calI_{R,U}$ on $V\setminus \{v\}$, $\calI_{R,U}\setminus \{v\}$, is
	a critical set instance with critical set $R$ and target $U$, in the
	projected graph $G\setminus \{v\} = (V\setminus \{v\}, E\setminus \{v\})$.
	For any node $v\in U\setminus R$, 	the projected influence instance
	of $\calI_{R,U}$ on $V\setminus \{v\}$, $\calI_{R,U}\setminus \{v\}$,  is
	a critical set instance with critical set $R$ and target $U\setminus \{v\}$, in the
	projected graph $G\setminus \{v\} = (V\setminus \{v\}, E\setminus \{v\})$.
\end{lemma}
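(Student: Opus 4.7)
The plan is to verify both claims by directly unfolding the projection formula
\[
P_{\calI_{R,U}\setminus\{v\}}(S,T) \;=\; P_{\calI_{R,U}}(S,T) \,+\, P_{\calI_{R,U}}(S, T\cup\{v\}),
\]
whose use is legitimate because Lemma~\ref{lem:isosink} already guarantees that every $v\in V\setminus R$ (in particular every $v\in V\setminus U$ and every $v\in U\setminus R$) is a sink node of $\calI_{R,U}$. Having done that, I will match the resulting profile on $V\setminus\{v\}$ against Definition~\ref{def:critical}.

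First I would dispose of the graph part, which is purely combinatorial. If $v\in V\setminus U$, then $v$ is isolated in $G$, so deleting $v$ preserves the complete directed bipartite part from $R$ to $U\setminus R$ and leaves $(V\setminus\{v\})\setminus U$ isolated; this is exactly the graph required by a critical set instance on $V\setminus\{v\}$ with critical set $R$ and target $U$. If $v\in U\setminus R$, then deleting $v$ also removes the edges from $R$ to $v$, and the remaining graph is a complete bipartite graph from $R$ to $(U\setminus\{v\})\setminus R$ together with isolated nodes $V\setminus U$, which matches a critical set instance with critical set $R$ and target $U\setminus\{v\}$.

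Next I would verify the probability profile by a case split on whether $S\supseteq R$, for arbitrary $S\subseteq T\subseteq V\setminus\{v\}$. In the case $v\in V\setminus U$, we have $v\notin U\cup S$. If $S\supseteq R$, the only $T'\supseteq S$ with $P_{\calI_{R,U}}(S,T')>0$ is $T'=U\cup S\subseteq V\setminus\{v\}$, so the second summand of the projection formula vanishes and the projected profile places mass $1$ precisely on $T=U\cup S$; if $S\not\supseteq R$, the only positive-mass target is $T'=S$, giving mass $1$ on $T=S$. Both match Definition~\ref{def:critical} with critical set $R$ and target $U$ on $V\setminus\{v\}$. In the case $v\in U\setminus R$, we have $v\in U\cup S$ but $v\notin S$. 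If $S\supseteq R$, the unique positive-mass target is $U\cup S$, and $T\cup\{v\}=U\cup S$ iff $T=(U\setminus\{v\})\cup S$, so the projected mass concentrates on $T=(U\setminus\{v\})\cup S$; if $S\not\supseteq R$, the only positive-mass target is $T'=S$, and since $v\notin S$ the second summand again vanishes, leaving mass $1$ on $T=S$. These match Definition~\ref{def:critical} applied with critical set $R$ and target $U\setminus\{v\}$.

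The main obstacle is nothing conceptual but careful bookkeeping: one has to keep straight whether $v$ lies in $U$ or not, whether $v$ is being added on the target side in the projection, and which of $T=U\cup S$ or $T=(U\setminus\{v\})\cup S$ is the correct image in each subcase. Once those case distinctions are pinned down, the argument is a direct unfolding of the definitions and no further machinery is required.
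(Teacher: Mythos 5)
Your proof is correct and follows essentially the same route as the paper's: a case split on $S\supseteq R$ versus $S\not\supseteq R$, unfolding the projection formula $P_{\calI_{R,U}\setminus\{v\}}(S,T)=P_{\calI_{R,U}}(S,T)+P_{\calI_{R,U}}(S,T\cup\{v\})$ and checking that the mass concentrates on the target required by Definition~\ref{def:critical}. The only (harmless) additions are your explicit treatment of the projected graph and the explicit appeal to Lemma~\ref{lem:isosink} to justify that the projection is defined, both of which the paper leaves implicit.
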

\begin{proof}
	First let  $v\in V\setminus U$ and consider the projected instance 
	$\calI_{R,U}\setminus \{v\}$.
	If $S\subseteq V\setminus \{v\}$ is a subset with $S \supseteq R$, then
	by the definition 
         of projection and critical sets: 
	\begin{align*}
	& P_{\calI_{R,U} \setminus \{v\}}(S, S\cup U)  \\
	& = P_{\calI_{R,U}}(S, S\cup U) + 
	P_{\calI_{R,U}}(S, S\cup U\cup\{v\}) \\
	& = 1 + 0 = 1. 
	\end{align*}
	If $S\subseteq V\setminus \{v\}$ is a subset with $S \not\supseteq R$, similarly, we have:
	\begin{align*}
	& P_{\calI_{R,U} \setminus \{v\}}(S, S)  \\
	& = P_{\calI_{R,U}}(S, S) + 
	P_{\calI_{R,U}}(S, S\cup\{v\}) = 1 + 0 = 1. 
	\end{align*}
	Thus by Definition~\ref{def:critical},  $\calI_{R,U} \setminus \{v\}$
	is still a critical set instance with $R$ as the critical set and $U$ as
	the target set.	
	
	Next let $v\in U\setminus R$ and consider the projected instance 
	$\calI_{R,U}\setminus \{v\}$.
	If $S\subseteq V\setminus \{v\}$ is a subset with $S \supseteq R$, then
	by the definition 
	of projection and critical sets: 
	\begin{align*}
	& P_{\calI_{R,U} \setminus \{v\}}(S, S\cup (U \setminus \{v\}))  \\
	& = P_{\calI_{R,U}}(S, S\cup (U \setminus \{v\})) + 
	P_{\calI_{R,U}}(S, S\cup (U \setminus \{v\}) \cup\{v\}) \\
	& = 0 + 1 = 1. 
	\end{align*}
	If $S\subseteq V\setminus \{v\}$ is a subset with $S \not\supseteq R$, similarly, we have:
	\begin{align*}
	& P_{\calI_{R,U} \setminus \{v\}}(S, S)  \\
	& = P_{\calI_{R,U}}(S, S) + 
	P_{\calI_{R,U}}(S, S\cup\{v\}) = 1 + 0 = 1. 
	\end{align*}
	Thus by Definition~\ref{def:critical},  $\calI_{R,U} \setminus \{v\}$
	is still a critical set instance with $R$ as the critical set and $U \setminus \{v\}$ as
	the target set.	
\end{proof}

\begin{lemma}[Uniqueness in Critical Set Instances] \label{lem:uniquecritical}
Fix a set $V$.
Let $\psi$ be a centrality measure that satisfies axiom set $\cA$.
For any $R, U \subseteq V$ with $R\ne \emptyset$ and $R\subseteq U$,
 the centrality	$\psi(\calI_{R,U})$ of the critical set instance $\calI_{R,U}$ 
  must be unique.
\end{lemma}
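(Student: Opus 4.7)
The plan is to prove uniqueness by induction on $m := |V \setminus R|$, the number of sink nodes of $\calI_{R,U}$ (by Lemma~\ref{lem:isosink}, $V\setminus R$ is precisely the set of sink nodes in $\calI_{R,U}$). The key tools are: Axiom~\ref{axiom:anonymity} to identify the centralities within each of the three orbits $R$, $U\setminus R$, and $V\setminus U$; Axiom~\ref{axiom:sink} combined with Lemma~\ref{lem:projcritical} to peel off a sink and descend in $m$; Axiom~\ref{axiom:critical} to anchor the induction; and Axiom~\ref{axiom:normalization} to close off the final degree of freedom on $R$.

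For the base cases ($m \le 1$), either the graph is edgeless (when $m=0$, i.e.\ $R=U=V$; or when $m=1$ and $U=R$), in which case Anonymity makes all $|V|$ nodes share one centrality and Normalization forces that value to be $1$; or $m=1$ with $V=U=R\cup\{v\}$, in which case Axiom~\ref{axiom:critical} pins down $\psi_v(\calI_{R,v})=|R|/(|R|+1)$, Anonymity equates the centralities of all $u\in R$, and Normalization then determines this common value to be $1+\frac{1}{|R|(|R|+1)}$.

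For the inductive step ($m\ge 2$), I would first determine the centrality of every sink $v\in V\setminus R$. Since $m\ge 2$, I can pick another sink $v'\ne v$ in $V\setminus R$, and Axiom~\ref{axiom:sink} gives $\psi_v(\calI_{R,U})=\psi_v(\calI_{R,U}\setminus\{v'\})$. By Lemma~\ref{lem:projcritical} the projected instance is again a critical set instance on $V\setminus\{v'\}$, with the same pair $(R,U)$ if $v'\in V\setminus U$, or with the pair $(R, U\setminus\{v'\})$ if $v'\in U\setminus R$; in either case the sink count drops to $m-1$, so the inductive hypothesis already makes $\psi_v(\calI_{R,U}\setminus\{v'\})$ unique. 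Once every sink centrality is fixed, Anonymity (applied to permutations that shuffle $R$ internally and fix all other nodes, which leave $\calI_{R,U}$ invariant) forces a single common value $\psi_R$ on the orbit $R$; Normalization then closes the system via
\[
|R|\,\psi_R \;=\; |V| \;-\; |U\setminus R|\,\psi_U \;-\; |V\setminus U|\,\psi_I,
\]
where $\psi_U$ and $\psi_I$ are the already-determined common centralities on $U\setminus R$ and $V\setminus U$. This uniquely determines $\psi_R$, and the induction is complete.

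The main obstacle will be the bookkeeping in the inductive step: verifying that the projection $\calI_{R,U}\setminus\{v'\}$ is genuinely a critical set instance with strictly fewer sinks (covered by Lemmas~\ref{lem:isosink} and~\ref{lem:projcritical}), and correctly using Anonymity to collapse the three orbits $R$, $U\setminus R$, $V\setminus U$ into three scalars (this requires observing that any within-orbit permutation leaves $P_{\calI_{R,U}}$ invariant, since the probability profile only depends on whether $S\supseteq R$ and on the pair $(U\cup S, S)$). Beyond these verifications, the rest of the argument is essentially linear: Axiom~\ref{axiom:critical} provides the only nonzero ``initial condition'', sink-projection propagates it outward, and Normalization supplies the one remaining linear equation to fix the centrality on $R$.
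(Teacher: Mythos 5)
Your proposal is correct and follows essentially the same route as the paper's proof: Anonymity collapses the three orbits $R$, $U\setminus R$, $V\setminus U$ to three scalars, sink projection (via Lemma~\ref{lem:projcritical} and Axiom~\ref{axiom:sink}) reduces to the instances where Axiom~\ref{axiom:critical} or pure Anonymity-plus-Normalization applies, and Normalization fixes the value on $R$ last. The only difference is presentational: you package the paper's ``iteratively remove all sink nodes'' step as a formal induction on the number of sinks, which is a clean way to justify the same peeling argument.
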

\begin{proof}
	Consider the critical set instance $\calI_{R,U}$.
	First, it is easy to check that all nodes in $R$ are symmetric to one another, all nodes in
	$U\setminus R$ are symmetric to one another, and all nodes in $V\setminus U$ are symmetric
	to one another.
	Thus, by the Anonymity Axiom (Axiom~\ref{axiom:anonymity}), all nodes in $R$ have the same
	centrality measure, say $a_{R,U}$, all nodes in $U\setminus R$ have the same centrality measure, say $b_{R,U}$,
	and all nodes in $V\setminus U$ have the same centrality measure, say $c_{R,U}$.
	By the Normalization Axiom (Axiom~\ref{axiom:normalization}), we have
	\begin{equation} \label{eq:threeparameters}
	a_{R,U} \cdot |R| + b_{R,U} \cdot (|U|-|R|) + c_{R,U} \cdot (|V| - |U|) = |V|.
	\end{equation}	
	
	Second, we consider any node $v \in V\setminus U$.
	By Lemma~\ref{lem:isosink}, $v$ is an isolated node, which is also a sink node. 
	By Lemma~\ref{lem:projcritical}, we can iteratively remove all sink nodes in $U\setminus R$,
	which would not change the centrality measure of $v$ by the Independence of Sink Nodes Axiom
	(Axiom~\ref{axiom:sink}).
	Moreover, after removing all nodes in $U\setminus R$, the projected instance $\calI'$
	is on set $R\cup (V\setminus U)$, with $R$ as both the critical set and the target set.
	In this projected instance $\calI'$, it is straightforward to check that for every 
	$S\subseteq R\cup (V\setminus U)$, $P_{\calI'}(S,S)=1$, which implies that every node
	in $R\cup (V\setminus U)$ is an isolated node.
	Then we can apply the Anonymity Axiom to know that every node in $\calI'$ has the same centrality,
	and together with the Normalization Axiom, we know that every node in $\calI'$ has centrality $1$.
	Since by the Independence of Sink Nodes Axiom removing nodes in $U\setminus R$ does not change
	the centrality of nodes in $V\setminus U$, we know that $c_{R,U} = 1$.
	
	
	Third, if $U = R$, then we do not have parameter $b_{R,U}$ and $a_{R,U}$ is determined
	by Eq.~\eqref{eq:threeparameters}.
	If $ U \ne R$, then by Lemma~\ref{lem:isosink}, any node $v \in V\setminus R$ is a sink node.
	Then we can apply the Sink Node Axiom (Axiom~\ref{axiom:sink}) to iteratively
	remove all nodes in $V\setminus (R \cup \{v\})$ (which are all sink nodes),
	such that the centrality measure of $v$ does not change
	after the removal.
	By Lemma~\ref{lem:projcritical}, the remaining instance with node set $R\cup\{v\}$ is 
	still a critical set instance with critical set $R$ and target set $R\cup \{v\}$.
	Thus we can apply the Critical Set Axiom (Axiom~\ref{axiom:critical}) to this remaining
	influence instance, and know that the centrality measure
	of $v$ is $|R|/(|R|+1)$, that is, $b_{R,U} = |R|/(|R|+1)$.
	Therefore, $a_{R,U}$ is also uniquely determined, which means that the centrality measure
	$\psi(\calI_{R,U})$ for instance $\calI_{R,U}$ is unique, 
	for every nonempty subset $R$ and its superset $U$.
\end{proof}

The influence probability profile, $(P_\calI(S,T))_{S\subseteq T\subseteq V}$,
 of each social-influence instance $\calI$
  can be viewed as a high-dimensional vector. 
Note that in the boundary cases:
(1) when $S=\emptyset$, we have $P_\calI(S,T) = 1$ iff $T = \emptyset$; and
(2)  when $S = V$, $P_\calI(S,T) = 1$ iff $T  = V$.
Thus, {\em the influence-profile vector} does not need to include $S=\emptyset$ and $S=V$.
Moreover, for any $S$, $\sum_{T\supseteq S} P_\calI(S,T) = 1$.
Thus, we can omit the entry associated with one $T \supseteq S$ from influence-profile vector.
In our proof, we canonically remove the entry associated with $T = S$ from the vector.
With a bit of overloading on the notation, we also use
$P_\calI$ to denote this influence-profile vector for $\calI$, and thus
$P_\calI(S,T)$ is the value of the specific dimension of the vector corresponding to $S,T$.
We let $M$ denote the dimension of space
  of the influence-profile vectors.
$M$ is 
  equal to the number of pairs $(S,T)$  satisfying (1) $S \subset T\subseteq V$, and (2)
   $S \not\in \{\emptyset, V\}$.
$S\subset T$ means $S\subseteq T$ but $S \ne T$.
We stress that when we use $P_\calI$ as a vector and use linear combinations of such vectors,
the vectors have no dimension corresponding to $(S,T)$ with $S \in \{\emptyset, V\}$
or $S=T$.

For each $R$ and $U$ with $R \subset U$ and $R \not\in \{\emptyset, V\}$, 
we consider the critical set instance $\calI_{R,U}$ and its
corresponding vector $P_{\calI_{R,U}}$.
Let $\cV$ be the set of these vectors.
\begin{lemma}[Linear Independence] \label{lem:linearind}
	Vectors in $\cV$ are linearly independent in the space $\R^M$.
\end{lemma}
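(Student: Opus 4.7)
My plan is to exhibit a lower-triangular structure on $\cV$ with respect to the size of the critical set $|R|$ and peel off coefficients by induction on $|R|$.

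The first step is to identify, for each $(R,U)$ with $\emptyset \neq R \subsetneq U \subseteq V$ and $R \neq V$, which coordinates of $P_{\calI_{R,U}}$ are nonzero. From Definition~\ref{def:critical} and the convention that the entries with $T=S$ and with $S \in \{\emptyset, V\}$ are suppressed from the vector, the support is exactly
\[ \{(S, U \cup S)\ :\ R \subseteq S \subsetneq V,\ U \not\subseteq S\}, \]
on which the vector equals $1$ (the condition $U \not\subseteq S$ is what guarantees $U \cup S \neq S$, so the entry survives the diagonal trimming).

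The second step picks out the column $(R,U)$ as a ``pivot'' for $\calI_{R,U}$. From the support description, $P_{\calI_{R',U'}}(R,U) = 1$ if and only if $R' \subseteq R$ and $U' \cup R = U$. I would then observe that among the contributors with $R' = R$, the requirement $R' \subsetneq U'$ together with $U' \cup R = U$ and $U' \subseteq U$ forces $U' \supseteq R \cup (U\setminus R) = U$, hence $U' = U$. Therefore $\calI_{R,U}$ is the \emph{unique} contributor at column $(R,U)$ whose critical set has size $|R|$; every other contributor satisfies $R' \subsetneq R$, in particular $|R'| < |R|$.

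The third step is the induction. Suppose $\sum_{(R',U')} c_{R',U'}\, P_{\calI_{R',U'}} = 0$ and evaluate this identity at coordinate $(R,U)$:
\[ c_{R,U} \;+\; \sum_{\substack{\emptyset \neq R' \subsetneq R \\ U' \cup R = U,\ R' \subsetneq U'}} c_{R',U'} \;=\; 0. \]
When $|R|=1$ the sum is empty (no nonempty $R' \subsetneq R$ exists), so $c_{R,U}=0$ for every admissible $U$. For $|R| \ge 2$, the inductive hypothesis $c_{R',U'}=0$ for all $|R'|<|R|$ kills every remaining term and forces $c_{R,U}=0$, which completes the induction and hence the proof of linear independence.

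The main obstacle I anticipate is the support bookkeeping in Step~1: one must be careful that the two suppressed families of coordinates (the diagonal $T=S$ and the boundary $S \in \{\emptyset, V\}$) are neither overcounted nor dropped, and in particular that the condition $U \not\subseteq S$ is carried cleanly through to the pivot argument in Step~2. Once that is nailed down, the triangular structure indexed by $|R|$ is transparent and the induction is routine.
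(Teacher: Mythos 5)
Your proof is correct and takes essentially the same route as the paper: both arguments evaluate the putative dependency at the coordinate $(R,U)$ and exploit the triangular structure in which only instances with critical set contained in $R$ (and, among those with critical set exactly $R$, only the target $U$ itself) can contribute there. The only cosmetic difference is that you organize this as an induction on $|R|$ while the paper phrases it as a minimal-counterexample argument over the smallest critical set carrying a nonzero coefficient; your Step~1 support bookkeeping (including the $U \not\subseteq S$ condition surviving the diagonal trimming) is handled correctly.
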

\begin{proof}
	
	Suppose, for a contradiction, that vectors in $\cV$ are not linearly independent.
	Then for each such $R$ and $U$, we have a number $\alpha_{R,U} \in \R$, such that
	$\sum_{R\not\in\{\emptyset,V\}, R\subset U} \alpha_{R,U} \cdot P_{\calI_{R,U}} = \vec{0}$,
	and at least some $\alpha_{R,U}\ne 0$.
	Let $S$ be the smallest set with $\alpha_{S,U} \ne 0$ for some $U \supset S$, and
	let $T$ be any superset of $S$ with $\alpha_{S,T} \ne 0$.
	By the critical set instance definition, we have $P_{\calI_{S,T}}(S,T) = 1$.
	Also since the vector does not contain any dimension corresponding to
	$P_\calI(S,S)$, we know that $T \supset S$.
	Then by the minimality of $S$, we have 
	\begin{align}
	& 0 = \sum_{R, U:R\not\in\{\emptyset,V\}, R\subset U} \alpha_{R,U} \cdot P_{\calI_{R,U}}(S,T) 
	\nonumber \\
	& = \alpha_{S,T} \cdot P_{\calI_{S,T}}(S,T) + 
	\sum_{U: U \supset S,  U \ne T} \alpha_{S,U} \cdot P_{\calI_{S,U}}(S,T) + \nonumber \\
	& \quad \quad \sum_{R, U:|R| \ge |S|, R\ne S, U \supset R} \alpha_{R,U} \cdot P_{\calI_{R,U}}(S,T) 
	\nonumber 	\\
	& = \alpha_{S,T}  + 
	\sum_{U:U \supset S,  U \ne T} \alpha_{S,U} \cdot P_{\calI_{S,U}}(S,T) + \nonumber \\
	& \quad \quad \sum_{R, U:|R| \ge |S|, R\ne S, U \supset R} \alpha_{R,U} \cdot P_{\calI_{R,U}}(S,T).
	\label{eq:linearind} 	
	\end{align}
	For the third term in Eq.\eqref{eq:linearind}, 
	consider any set $R$ with $|R| \ge |S|$ and $R\ne S$.
	We have that $S \not\supseteq R$, 
	and thus by the critical set instance definition, for any $U \supset R$, 
	$P_{\calI_{R,U}}(S,S) = 1$.
	Since $T \supset S$, we have $T \ne S$, and thus $P_{\calI_{R,U}}(S,T) = 0$.
	This means that the third term in Eq.\eqref{eq:linearind} is $0$.
	
	For the second term in Eq.\eqref{eq:linearind}, consider any 
	$U \supset S$ with $ U \ne T$.
	By the critical set instance definition, we have
	$P_{\calI_{S,U}}(S,U) = 1$ (since $S$ is the critical set and $U$ is the
	target set).
	Then $P_{\calI_{S,U}}(S,T) = 0$ since $T \ne U$.
	This means that the second term in Eq.\eqref{eq:linearind} is also $0$.
	
	Then we conclude that $\alpha_{S,T} = 0$, which is a contradiction.
	Therefore, vectors in $\cV$ are linearly independent.
\end{proof}


The following basic  lemma is useful for our uniqueness proof.
\begin{lemma} \label{lem:linearmap}
Let $\psi$ be a mapping from a convex set $D \subseteq \R^M$ to $\R^n$ satisfying
	that for any vectors $\vv_1, \vv_2, \ldots, \vv_s \in D$, 
	for any $\alpha_1, \alpha_2, \ldots, \alpha_s \ge 0$ and $\sum_{i=1}^s \alpha_i = 1$,
	$\psi(\sum_{i=1}^{s} \alpha_i \cdot \vv_i )=  \sum_{i=1}^{s} \alpha_i \cdot \psi(\vv_i)$.
	Suppose that $D$ contains a set of linearly independent basis vectors of $\R^M$,
	$\{\vb_1, \vb_2, \ldots, \vb_M\}$ and also vector $\vzero$.
	Then for any $\vv\in D$, which can be represented as $\vv = \sum_{i=1}^M \lambda_i \cdot \vb_i$
	for some $\lambda_1, \lambda_2, \ldots, \lambda_M \in \R$,
	we have 
	\[
	\psi(\vv) = \psi\left(\sum_{i=1}^M \lambda_i \cdot \vb_i\right) 
	= \sum_{i=1}^M \lambda_i \cdot \psi(\vb_i) + \left(1-\sum_{i=1}^M \lambda_i \right) \cdot \psi(\vzero).
	\]
\end{lemma}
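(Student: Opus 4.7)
The statement asserts that $\psi$, which is only assumed to preserve convex combinations, must in fact preserve the specific affine combination that expresses $\vv$ in the basis $\{\vb_1,\ldots,\vb_M\}$ augmented by $\vzero$. The plan is to encode the desired affine identity as an equality between two genuine convex combinations of points already lying in $D$, apply the hypothesis to both sides, and then rearrange. This ``affine-from-convex'' trick is standard once one introduces the right rescaling.

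First, I would unify notation by setting $c_0 = 1 - \sum_{i=1}^{M} \lambda_i$ and $c_i = \lambda_i$ for $i=1,\ldots,M$, with the convention $\vb_0 = \vzero$. Then $\vv = \sum_{i=0}^{M} c_i\,\vb_i$ and $\sum_{i=0}^{M} c_i = 1$, and the goal reduces to showing $\psi(\vv) = \sum_{i=0}^{M} c_i\,\psi(\vb_i)$. Next, partition $\{0,1,\ldots,M\}$ into $P = \{i : c_i \ge 0\}$ and $N = \{i : c_i < 0\}$, and set $L = 1 + \sum_{i \in N} |c_i| \ge 1$. Moving the negative-coefficient terms to the other side of $\vv = \sum_{i=0}^{M} c_i\,\vb_i$ gives
\[ \vv + \sum_{i \in N} |c_i|\,\vb_i \;=\; \sum_{i \in P} c_i\,\vb_i. \]
Dividing by $L$ turns each side into a bona fide convex combination: the weights on the left are $1/L$ (on $\vv$) together with $|c_i|/L$ for $i \in N$, summing to $L/L = 1$; the weights on the right are $c_i/L$ for $i \in P$, and $\sum_{i \in P} c_i = 1 - \sum_{i \in N} c_i = L$, so these also sum to $1$.

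The final step would be to invoke the hypothesis on $\psi$. Since $\vv$, $\vzero$, and every $\vb_i$ lie in the convex set $D$, both rescaled sides lie in $D$, and applying the convex-combination preservation property to each side yields
\[ \tfrac{1}{L}\,\psi(\vv) + \sum_{i \in N} \tfrac{|c_i|}{L}\,\psi(\vb_i) \;=\; \sum_{i \in P} \tfrac{c_i}{L}\,\psi(\vb_i). \]
Multiplying by $L$ and moving the $N$-sum back across recovers $\psi(\vv) = \sum_{i \in P} c_i\,\psi(\vb_i) + \sum_{i \in N} c_i\,\psi(\vb_i) = \sum_{i=0}^{M} c_i\,\psi(\vb_i)$, which, after substituting back $c_0 = 1 - \sum_{i=1}^{M}\lambda_i$ and $c_i = \lambda_i$, is exactly the claimed identity. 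No real obstacle arises: the entire proof is a rearrangement, and the assumption that $\vzero \in D$ is precisely what makes the constant term $(1-\sum_i \lambda_i)\,\psi(\vzero)$ accessible via the convex combination property.
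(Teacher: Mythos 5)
Your proof is correct. Both sides of your rescaled identity are legitimate convex combinations of points of $D$ representing the same point, so the hypothesis applies to each, and the bookkeeping with $P$, $N$, and $L$ checks out (including the degenerate case $N=\emptyset$, where the identity is already a convex combination). The paper reaches the same conclusion by a different decomposition: it fixes the centroid $\vv^{(0)} = \frac{1}{M+1}\left(\sum_{i=1}^M \vb_i + \vzero\right)$ of the basis points, forms $\vv^{(1)} = \rho\,\vv^{(0)} + (1-\rho)\vv$, and chooses $\rho$ close enough to $1$ that all the barycentric coordinates of $\vv^{(1)}$ with respect to $\{\vb_1,\ldots,\vb_M,\vzero\}$ become nonnegative; it then inverts the relation $\psi(\vv^{(1)}) = \rho\,\psi(\vv^{(0)}) + (1-\rho)\psi(\vv)$ and expands $\psi(\vv^{(1)})$ and $\psi(\vv^{(0)})$ via the hypothesis. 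Your route dispenses with the auxiliary interior point and the mixing parameter entirely: by splitting the coefficients by sign and moving the negative ones across the equation before normalizing, you get two convex representations of a single point and simply equate their images. The two arguments buy the same thing, but yours is shorter and requires no existence claim about a suitable $\rho$; the paper's version makes the geometric picture (pulling $\vv$ back into the convex hull) more explicit at the cost of extra computation.
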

\begin{proof}
	We consider the convex hull formed by $\{\vb_1, \vb_2, \ldots, \vb_M\}$ together 
	with $\vzero$.
	Let $\vv^{(0)} = \frac{1}{M+1} (\sum_{i=1}^M \vb_i + \vzero)$, 
        which is an interior point in the convex hull.
	For any $\vv \in D$, since $\{\vb_1, \vb_2, \ldots, \vb_M\}$ is a set of basis, 
	we have  $\vv = \sum_{i=1}^M \lambda_i \cdot \vb_i$
	for some $\lambda_1, \lambda_2, \ldots, \lambda_M \in \R$.
	Let $\vv^{(1)} = \rho \vv^{(0)} + (1-\rho) \vv$ with $\rho \in (0,1)$ be a convex 
	combination of $\vv^{(0)}$ and $\vv$.
	Then we have $\psi(\vv^{(1)}) = \rho \psi(\vv^{(0)})  + (1-\rho) \psi(\vv)$, or equivalently
	\begin{equation} \label{eq:convertlin}
	\psi(\vv) = \frac{1}{1-\rho}\psi(\vv^{(1)}) - \frac{\rho}{1-\rho}\psi(\vv^{(0)}).
	\end{equation}
	
	We select a $\rho$ close enough to $1$ such that for all $i\in [M]$, 
	$\frac{\rho}{M+1} + (1-\rho)\lambda_i \ge 0$, and
	$\frac{\rho}{M+1} + (1-\rho)(1 - \sum_{i=1}^M\lambda_i) \ge 0$.
	Then $\vv^{(1)} = \sum_{i=1}^M (\frac{\rho}{M+1} + (1-\rho)\lambda_i) \vb_i + 
	(\frac{\rho}{M+1} + (1-\rho)(1 - \sum_{i=1}^M\lambda_i)) \vzero$ 
  is in the convex hull of $\{\vb_1, \vb_2, \ldots, \vb_M, \vzero\}$.  
	Then from Eq.\eqref{eq:convertlin}, we have
	\begin{align*}
	\psi(\vv) = & \psi\left(\sum_{i=1}^M \lambda_i \cdot \vb_i\right) \\
	= & \frac{1}{1-\rho}\psi\left(\sum_{i=1}^M 
	\left(\frac{\rho}{M+1} + (1-\rho)\lambda_i \right) \vb_i + \right.\\
	& \quad  \left. \left(\frac{\rho}{M+1} + (1-\rho)\left(1 - \sum_{i=1}^M\lambda_i \right) \right) \vzero\right) - \\
	& \quad \frac{\rho}{1-\rho}\psi\left(\frac{1}{M+1} \left(\sum_{i=1}^M \vb_i +\vzero \right) \right) \\
	= & \frac{1}{1-\rho}\left(\sum_{i=1}^M 
	\left(\frac{\rho}{M+1} + (1-\rho)\lambda_i \right) \psi(\vb_i) + \right. \\
	& \quad  \left. \left(\frac{\rho}{M+1} + (1-\rho)\left(1 - \sum_{i=1}^M\lambda_i \right) \right) \psi(\vzero)\right) - \\
	& \quad \frac{\rho}{1-\rho}\left(\frac{1}{M+1} \left(\sum_{i=1}^M \psi(\vb_i) + \psi(\vzero) \right) \right) \\
	= & \sum_{i=1}^M \lambda_i \psi(\vb_i)+ \left(1-\sum_{i=1}^M \lambda_i \right) \cdot \psi(\vzero).
	\end{align*}
\end{proof}


\begin{lemma}[Completeness] \label{lem:unique}
	The centrality measure satisfying axiom set $\cA$ is unique.
\end{lemma}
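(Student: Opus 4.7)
The plan is to combine the three preparatory lemmas already established in the appendix to complete the uniqueness argument, following Myerson's strategy of (i) reducing the axiomatic constraints to affine-linearity on the influence-profile space, (ii) pinning down values on a basis, and (iii) extending uniquely to the whole space.

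First, I would identify the ``null instance'' $\calI_0 = (V,E_0,P_{\calI_0})$ in which every node $v\in V$ is an isolated node, so that $P_{\calI_0}(S,S)=1$ for all $S\subseteq V$ and $P_{\calI_0}(S,T)=0$ otherwise. As a vector in $\R^M$ (recall we have stripped off the $T=S$ entries), $P_{\calI_0}$ equals $\vzero$. By Anonymity (all nodes symmetric) and Normalization, any axiom-conforming centrality $\psi$ must assign $\psi_v(\calI_0)=1$ for all $v\in V$. Together with Lemma~\ref{lem:uniquecritical}, this pins down $\psi$ on the collection $\cV\cup\{\vzero\}$, where $\cV$ is the set of critical-set-instance vectors $\{P_{\calI_{R,U}} : \emptyset\neq R\subset U, R\ne V\}$.

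Next, I would show that the Bayesian Influence Axiom (Axiom~\ref{axiom:bayesian}) forces $\psi$, when viewed as a function $P_\calI\mapsto \psi(\calI)$ on the set $D\subseteq \R^M$ of all valid influence-profile vectors on $V$, to satisfy the convex-combination property required by Lemma~\ref{lem:linearmap}: for any instances $\calI^1,\dots,\calI^s$ on the common vertex set $V$ and any prior $\lambda$ on $[s]$, the Bayesian instance $\calI_{\cB(\{\calI^\eta\},\lambda)}$ has influence-profile vector $\sum_\eta \lambda_\eta P_{\calI^\eta}$, and the axiom asserts $\psi(\calI_{\cB})=\sum_\eta \lambda_\eta \psi(\calI^\eta)$. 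Here I would briefly check that $D$ is convex, that $D$ contains every critical-set vector in $\cV$ and the zero vector $P_{\calI_0}$, and (crucially) that, by Lemma~\ref{lem:linearind}, $\cV$ contains a set of $M$ linearly independent vectors and hence a basis of $\R^M$. (If $|\cV|>M$ one first extracts a basis; the remaining critical-set vectors will automatically satisfy any affine relation by Lemma~\ref{lem:uniquecritical}.)

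Finally, I would invoke Lemma~\ref{lem:linearmap} with $D$, the basis $\{\vb_1,\dots,\vb_M\}\subseteq \cV$, and the origin $\vzero\in D$: every $P_\calI\in D$ admits a unique expansion $P_\calI = \sum_i \lambda_i\vb_i$, and the lemma yields the closed-form representation
\[
\psi(\calI) \;=\; \sum_{i=1}^{M}\lambda_i\,\psi(\vb_i) + \Bigl(1-\sum_{i=1}^{M}\lambda_i\Bigr)\psi(\vzero).
\]
Since the right-hand side is expressed entirely in terms of values $\psi(\vb_i)$ and $\psi(\vzero)$ that are already uniquely determined by Lemma~\ref{lem:uniquecritical} and the null-instance calculation above, $\psi(\calI)$ is uniquely determined on all of $D$, which is the desired completeness. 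Combined with Lemma~\ref{lem:shapleycen}, which shows $\psi^{\Shapley}$ satisfies all axioms, this finishes the theorem.

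The main obstacle I anticipate is the interface between Lemma~\ref{lem:linearmap} and the Bayesian axiom: the lemma requires closure under convex combinations on $D$, but the Bayesian axiom is phrased for a fixed underlying graph while different critical-set instances $\calI_{R,U}$ live on different edge sets. I would handle this by working on the complete graph over $V$ (adding zero-probability edges as needed) so that all instances share a common edge set and the convex combinations are well-defined influence instances in $D$; this technical normalization is where one must check carefully that none of the axiomatic consequences are lost.
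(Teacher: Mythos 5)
Your proposal is correct and follows essentially the same route as the paper's proof: pin down $\psi$ on the null instance (via Anonymity and Normalization) and on the critical-set instances (Lemma~\ref{lem:uniquecritical}), use Lemma~\ref{lem:linearind} to get a basis of $\R^M$, and invoke Lemma~\ref{lem:linearmap} through the Bayesian Influence Axiom to extend uniqueness to all instances. Your closing remark about different edge sets is a reasonable extra precaution, though it is essentially moot since the axioms and the profile vector $P_\calI$ depend only on the probabilistic profile and not on $E$.
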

\begin{proof}
	Let $\psi$ be a centrality measure that satisfies axiom set $\cA$.

       	Fix a set $V$. 	Let the {\em null influence instance} $\calI^N$ to be the instance in which no seed set
	has any influence except to itself, that is, For any $S\subseteq V$, 
	$P_{\calI^N}(S,S) = 1$.
	It is straightforward to check that every node is an isolated node in the null instance,
	and thus by the Anonymity Axiom (Axiom~\ref{axiom:anonymity}) and the Normalization Axiom
		(Axiom~\ref{axiom:normalization}), 
	we have $\psi_v(\calI^N) = 1$ for all $v\in V$.
	That is, $\psi_v(\calI^N)$ is uniquely determined.
Note that, by our canonical convention of influence-profile vector space, 
        $P_{\calI^N}(S,S)$ is not in the vector representation of  $P_{\calI^N}$.
Thus vector $P_{\calI^N}$ is the all-$0$ vector in $\R^M$.
	By Lemma~\ref{lem:linearind}, we know that $\cV$ is a set of basis for $\R^M$.
	Then for any influence instance $\calI$, 
	\[
	P_\calI = \sum_{R\not\in\{\emptyset,V\}, R\subset U} \lambda_{R,U} \cdot P_{\calI_{R,U}},
	\]
	where parameters $\lambda_{R,U}\in \R$.
	Because of the Bayesian Influence Axiom (Axiom~\ref{axiom:bayesian}), and the fact that
	the all-$0$ vector in $\R^M$ is the influence instance $\calI^N$, we can apply
	Lemma~\ref{lem:linearmap} and obtain:
	\begin{align} \label{eq:linearinf}
	& \psi(P_\calI) = \sum_{R\not\in\{\emptyset,V\}, R\subset U} \lambda_{R,U} \cdot \psi(P_{\calI_{R,U}}) \nonumber \\
	& \quad \quad + \left(1 - \sum_{R\not\in\{\emptyset,V\}, R\subset U} \lambda_{R,U} \right) \psi(P_{\calI^N}),
	\end{align}
	where the notation $\psi(P_\calI) $ is the same as $\psi(\calI)$.
	By Lemma~\ref{lem:uniquecritical} we know that all $\psi(P_{\calI_{R,U}})$'s are uniquely
	determined.
	By the argument above, we also know that $\psi(P_{\calI^N})$ is uniquely determined.
	Therefore, $\psi(P_\calI)$ must be unique.
\end{proof}

\subsubsection*{{\sc Independence}}

An axiom is {\em independent} if it cannot be implied by other axioms in the axiom set.
Thus, if an axiom is not independent, 
	the centrality measure satisfying the rest axioms should still be unique
	by Lemma~\ref{lem:unique}. 
Therefore, to show the independence of an axiom,  it is sufficient to show that there is a 
centrality measure different from the Shapley centrality that satisfies the rest axioms.
We will show the independence of each axiom in $\cA$ in the next series of lemmas.

\begin{lemma} \label{lem:indaxiom1}
The Anonymity Axiom (Axiom~\ref{axiom:anonymity}) is independent.
\end{lemma}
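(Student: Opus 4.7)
The goal is to exhibit a centrality measure $\psi' \ne \psi^{\Shapley}$ that satisfies Axioms~\ref{axiom:normalization}--\ref{axiom:critical} but violates Axiom~\ref{axiom:anonymity}. By Lemma~\ref{lem:unique} applied to the axiom subsystem $\{$Axioms~\ref{axiom:normalization}--\ref{axiom:critical}$\}$, such a $\psi'$ witnesses the independence of Anonymity, since otherwise the remaining four axioms would already force $\psi' = \psi^{\Shapley}$.

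The plan is to break symmetry by fixing a total order $\prec$ on the universe of possible node labels and using it to define a label-dependent perturbation of $\psi^{\Shapley}$. I would first specify $\psi'$ on the basis of influence-profile vectors formed by critical set instances and the null instance (cf.~Lemmas~\ref{lem:linearind} and~\ref{lem:linearmap}), then extend affinely via Bayesian Influence. The key observation is that the Critical Set Bargaining Axiom pins down only the value of the unique non-critical node $v$ in $\calI_{R,v}$; on larger critical set instances $\calI_{R,U}$ and on the null instance, the centralities within each isomorphism class $R$, $U\setminus R$, $V\setminus U$ are pinned down only in aggregate by Normalization. This residual freedom is exactly what Anonymity was used to eliminate in Lemma~\ref{lem:uniquecritical}, so we can exploit it to make $\prec$-dependent assignments---for instance, on the two-node null instance over $V = \{u, w\}$ with $u \prec w$, set $\psi'_u = 3/2$ and $\psi'_w = 1/2$ in place of the symmetric $(1,1)$, and propagate analogous small perturbations across the basis.

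The main obstacle will be verifying Axiom~\ref{axiom:sink} (Independence of Sink Nodes) after extending the perturbation affinely. Sink Independence is a constraint relating $\psi'(\calI)$ and $\psi'(\calI\setminus\{v\})$ for a sink $v$, and it is not automatically preserved by an arbitrary label-dependent perturbation, since the projection operation interacts nontrivially with the basis decomposition of $P_{\calI}$. I expect the resolution is to choose the perturbation so it depends only on data invariant under the projection $\calI \mapsto \calI \setminus \{v\}$ for a sink $v$---for example, making the perturbation at a sink node depend only on the $\prec$-order among sink nodes (which is preserved under removal of another sink) and zero at non-sink nodes. A cleaner alternative worth pursuing is a direct construction as a weighted Shapley-type formula $\psi'_v(\calI) = \E_{\bpi \sim \mu_{\prec}}[\sigma(S_{\bpi,v} \cup \{v\}) - \sigma(S_{\bpi,v})]$, where $\mu_{\prec}$ is a $\prec$-biased distribution on permutations: this automatically preserves Bayesian Influence and Normalization (since the marginal sum telescopes to $\sigma(V) = |V|$ for every $\pi$), and one can try to design $\mu_{\prec}$ so that it coincides with the uniform distribution whenever restricted to subsystems of the form arising in Axiom~\ref{axiom:critical}, thereby preserving Critical Set Bargaining while still distinguishing nodes by label.
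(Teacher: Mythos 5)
Your second (``cleaner alternative'') route is exactly the one the paper takes: it defines $\psi^{(1)}_v(\calI) = \E_{\bpi\sim\Pi'}[\sigma_\calI(S_{\bpi,v}\cup\{v\}) - \sigma_\calI(S_{\bpi,v})]$ for a nonuniform distribution $\Pi'$ on permutations. However, you leave the crucial design question unresolved, and the criterion you state --- that $\mu_\prec$ ``coincides with the uniform distribution whenever restricted to subsystems of the form arising in Axiom~\ref{axiom:critical}'' --- is not the right one: the instances $\calI_{R,v}$ in that axiom live on the \emph{entire} node set $V=R\cup\{v\}$, so demanding agreement with the uniform distribution on them would force $\mu_\prec$ itself to be uniform and kill the counterexample. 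What Axiom~\ref{axiom:critical} actually requires is only a marginal property of the \emph{last} position: since $\psi'_v(\calI_{R,v}) = \Pr(R\not\subseteq S_{\bpi,v}) = 1 - \Pr(v \mbox{ is ranked last})$, it suffices that every node is last with probability exactly $1/|V|$. The paper realizes this by uniformly choosing the last element and then ordering the remaining $|V|-1$ positions by an arbitrary nonuniform rule; Anonymity then visibly fails on the instance in which every seed set deterministically activates all of $V$, where $\psi^{(1)}_v$ equals $|V|$ times the (nonuniform) probability that $v$ is ranked first.

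Two further points should not be left implicit. First, Independence of Sink Nodes is \emph{not} automatic for a biased-permutation formula: you correctly flag this as the main obstacle for your perturbation route but then drop it for the permutation route. The soundness proof of Axiom~\ref{axiom:sink} for Shapley centrality passes from $\bpi$ to the induced permutation $\bpi'$ on $V\setminus\{v\}$ and needs the conditional law of $S_{\bpi,u}$, given the relative order of $u$ and $v$, to agree with the law of $S_{\bpi',u}$; this has to be verified for the specific $\Pi'$ chosen, not merely asserted. Second, your first route (assigning perturbed values on the critical-set/null basis and extending affinely via Lemmas~\ref{lem:linearind} and~\ref{lem:linearmap}) is substantially harder than the sketch suggests: the paper's independence argument for the Normalization axiom (Lemma~\ref{lem:indaxiom2}) carries out precisely such a basis-level construction and needs a long computation to confirm that Sink Independence survives the affine extension, because projection interacts nontrivially with the basis decomposition --- exactly the difficulty you anticipated, and a reason to prefer the permutation-based construction.
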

\begin{proof}
We consider a centrality measure $\psi^{(1)}$ defined as follows.
Let $\Pi'$ be a nonuniform distribution on all permutations over set $V$, such that for any
node $v\in V$, the probability that $v$ is ordered at the last position in 
a random permutation $\pi$ drawn from $\Pi'$ is $1/|V|$, but the probabilities of $v$ in
other positions may not be uniform.
Such a nonuniform distribution can be achieved by uniformly pick $v \in V$ and put $v$
in the last position, and then apply an arbitrary nonuniform distribution for the 
rest $|V| -1$ positions.
We then define $\psi^{(1)}$ as:
\begin{equation*} 
\psi^{(1)}_v(\calI) = \E_{\bpi\sim \Pi'}[\sigma_{\calI}(S_{\bpi,v} \cup \{v\}) - \sigma_{\calI}(S_{\bpi,v})].
\end{equation*}
Since $\Pi'$ is nonuniform, the above defined $\psi^{(1)}$ is not Shapley centrality, although it has the same
form.

We now verify that $\psi^{(1)}$ satisfies Axioms~\ref{axiom:normalization}--\ref{axiom:critical}.
Actually, since $\psi^{(1)}$ follows the same form as $\psi^{\Shapley}$, one can easily check that
it also satisfies Axioms~\ref{axiom:normalization}, \ref{axiom:sink} and \ref{axiom:bayesian}.
In particular, for Axiom~\ref{axiom:sink}, one can check the proof of Lemma~\ref{lem:shapleycen} and see that
the proof for Shapley centrality satisfying Axiom~\ref{axiom:sink} does not rely on whether 
random permutation $\bpi$ is drawn from a uniform or nonuniform distribution of permutations.
Thus the same proof works for the current $\psi^{(1)}$.
For Axiom~\ref{axiom:critical}, following the same proof as in the proof of Lemma~\ref{lem:shapleycen}, we have
$\psi_v^{(1)}(\calI_{R,v}) = \Pr_{\bpi \sim \Pi'}(R \not\subseteq S_{\bpi, v}) = 1 -\Pr_{\bpi \sim \Pi'}(R = S_{\bpi, v}) $.
As we know, for distribution $\Pi'$, node $v$ appearing as the last node in a random permutation $\bpi$ drawn
from $\Pi'$ is $1/|V| = 1/(|R| +1 )$, which is exactly $\Pr_{\bpi \sim \Pi'}(R = S_{\bpi, v})$.
Therefore, we have $\psi_v^{(1)}(\calI_{R,v})  = |R|/(|R|+1)$.
Axiom~\ref{axiom:critical} also holds.

As a remark, the Anonymity Axiom~\ref{axiom:anonymity} does not hold for $\psi^{(1)}$:
Consider the influence instance $\calI$
where every subset deterministically influences all nodes.
In this case, for any permutation $\pi$, $\pi(\calI)$ is the same as $\calI$, because every node is symmetric.
Axiom~\ref{axiom:anonymity} says in this case all nodes should have the same centrality.
Notice that by our definition of $\psi^{(1)}$, $\psi_v^{(1)}$ is exactly $|V|$ times the probability of $v$ being ranked first
in a random permutation $\bpi$ drawn from $\Pi'$.
But since $\Pi'$ is nonuniform, some node $u$ would have higher probability to be ranked first
than some other node $v$, and thus $\psi_u^{(1)}(\calI) > \psi_v^{(1)}(\calI)$, and Axiom~\ref{axiom:anonymity} does not hold.
\end{proof}

\begin{lemma} \label{lem:indaxiom2}
	The Normalization Axiom (Axiom~\ref{axiom:normalization}) is independent.
\end{lemma}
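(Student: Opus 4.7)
\noindent My plan is to prove independence by exhibiting a centrality measure $\psi^{(2)} \neq \psi^{\Shapley}$ that satisfies Axioms~\ref{axiom:anonymity}, \ref{axiom:sink}, \ref{axiom:bayesian}, and~\ref{axiom:critical} while violating Normalization; by Lemma~\ref{lem:unique} such a measure must differ from $\psi^{\Shapley}$, certifying that Axiom~\ref{axiom:normalization} is not implied by the rest of $\cA$. The construction I would use is
\[
\psi^{(2)}_v(\calI) := \psi^{\Shapley}_v(\calI) + \bigl(\sigma_\calI(\{v\}) - 1\bigr) = \psi^{\Shapley}_v(\calI) + \psi^{\SNI}_v(\calI) - 1,
\]
motivated by the key fact that $\sigma_\calI(\{v\}) - 1 = 0$ whenever $v$ is a sink, since a sink only activates itself. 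This vanishing on sinks is exactly what preserves the Bargaining with Critical Sets Axiom, because the target node $v$ in any $\calI_{R,v}$ is a sink.

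The verification would proceed in five short steps. Anonymity and Bayesian Influence follow immediately because $\psi^{\Shapley}$ and $\psi^{\SNI}$ both satisfy them and the additive constant $-1$ factors out cleanly under a convex combination (using $\sum_\eta \lambda_\eta = 1$). For Independence of Sink Nodes, I would observe that if $u$ and $v$ are both sinks in $\calI$ then $u$ remains a sink in $\calI\setminus\{v\}$, so $\sigma_\calI(\{u\}) = \sigma_{\calI\setminus\{v\}}(\{u\}) = 1$; combining this with Shapley's IoSN gives $\psi^{(2)}_u(\calI) = \psi^{(2)}_u(\calI\setminus\{v\})$. Axiom~\ref{axiom:critical} holds because $v$ is a sink in $\calI_{R,v}$, so $\psi^{(2)}_v(\calI_{R,v}) = |R|/(|R|+1) + 1 - 1 = |R|/(|R|+1)$. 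Finally, $\sum_{v\in V} \psi^{(2)}_v(\calI) = |V| + \sum_{v\in V} \sigma_\calI(\{v\}) - |V| = \sum_{v\in V} \sigma_\calI(\{v\})$, which typically exceeds $|V|$ (for instance on a two-node IC graph with a directed edge of probability $1$), so Normalization fails.

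The main obstacle I anticipate is choosing the correction term. A plain additive shift $\psi^{\Shapley}_v + c$ and a multiplicative rescaling $c \cdot \psi^{\Shapley}_v$ each satisfy Axioms~\ref{axiom:anonymity}, \ref{axiom:sink}, and~\ref{axiom:bayesian}, but both break Axiom~\ref{axiom:critical} by disturbing the value $|R|/(|R|+1)$. The correction I need must be linear in $P_\calI$ (for Bayesian Influence), permutation-invariant, stable under projecting out sinks, and, crucially, zero on every target node $v$ of every $\calI_{R,v}$. The observation that sinks have $\sigma(\{v\}) = 1$ --- which the paper already uses to argue that $\psi^{\SNI}$ trivially satisfies Axiom~\ref{axiom:sink} --- makes $\sigma_\calI(\{v\}) - 1$ the natural candidate and yields the construction above.
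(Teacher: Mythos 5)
Your proposal is correct, and it takes a genuinely different and substantially simpler route than the paper. The paper constructs its counterexample the hard way: it first defines $\psi^{(2)}$ on the basis of critical set instances $\calI_{R,U}$ with free parameters $a_{|R|,|U|,|V|}$ (for nodes in $R$) and $c$ (for nodes outside $U$), extends by linearity over the coefficients $\lambda^{\calI}_{R,U}$ in the basis decomposition of $P_\calI$, and then spends several pages verifying the Independence of Sink Nodes Axiom through explicit manipulation of those coefficients (including deriving that $\lambda^{\calI}_{R,U}=0$ whenever a sink node lies in $R$). You instead perturb the Shapley centrality by $\psi^{\SNI}_v(\calI)-1=\sigma_\calI(\{v\})-1$, and every axiom check reduces to facts already established for $\psi^{\Shapley}$ and $\psi^{\SNI}$: the correction is anonymous, affine in $P_\calI$ (so Bayesian Influence survives, using $\sum_\eta\lambda_\eta=1$ to absorb the constant), and---the key point---identically zero on sink nodes, since a sink has $\sigma(\{v\})=1$ in both $\calI$ and the projection $\calI\setminus\{v\}$ (the latter via $\sigma_{\calI\setminus\{v\}}(\{u\})=\sigma_\calI(\{u\})-\Pr(v\in\bI_\calI(\{u\}))=1-0$). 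That vanishing is exactly what preserves both Independence of Sink Nodes and Bargaining with Critical Sets, since the target $v$ of $\calI_{R,v}$ is a sink. What the paper's heavier construction buys is a characterization of the \emph{entire} degree of freedom available once Normalization is dropped, which is what the authors mean when they say the independence constructions ``shed light on how related centrality measures could be formed''; your single concrete witness is all that the lemma itself requires, and is easier to verify.
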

\begin{proof}
For this axiom, we define $\psi^{(2)}$ first on the critical set instances in $\cV$, 
	and then use their linear independence to define $\psi^{(2)}$ on all instances.

For every instance $\calI_{R,U} \in \cV$, we define 
\begin{equation} \label{eq:defpsi2critiacal}
\psi^{(2)}_v(\calI_{R,U}) = 
\left\{
\begin{array}{lr}
a_{|R|,|U|, |V|} 		& v \in R, \\
\frac{|R|}{|R|+1}  	& v \in U \setminus R, \\
c 					& v \in V \setminus U.
\end{array}
\right.
\end{equation}
We can show that by Axioms~\ref{axiom:anonymity}, \ref{axiom:sink} and~\ref{axiom:critical},
	the above centrality assignments are the only possible assignments.
In fact, for every $v \in V\setminus U$, we can repeatedly apply Axiom~\ref{axiom:sink} to remove
	nodes in $U\setminus R$ first, and then remove all but $v$ to get a single node instance,
	which must have one centrality value, and we denote it $c$.
For every $v\in U \setminus R$, we apply Axiom~\ref{axiom:sink} again to remove all nodes in
	$V \setminus (R\cup \{v\})$. and then apply Axiom~\ref{axiom:critical} to show that $v$
	must have centrality $|R|/(|R|+1)$. 
For every node $v \in R$, by Anonymity Axiom, they must have the same centrality within the 
	same instance $\calI_{R,U}$, and then further apply Anonymity Axiom between two instances
	with the same size of $|V|$, $|R|$ and $|U|$, we know that they all have the same value, and thus
	we can use $a_{|R|,|U|, |V|}$ to denote it.
Thus, for a fixed $|V|$, totally the degree of freedom for $\psi^{(2)}$ is $(|V|-2)(|V|-1)/2 + 1$.

For the null instance $\calI^N$ defined in the proof of Lemma~\ref{lem:unique}
	(where every node is an isolated node), applying Axiom~\ref{axiom:sink}
	repeatedly we know that the centrality of every node in $\calI^N$ must be $c$.

For an arbitrary instance $\calI$, by Lemma~\ref{lem:linearind} we have
\begin{equation} \label{eq:linearprofile}
P_\calI = \sum_{R\not\in\{\emptyset,V\}, R\subset U} \lambda^{\calI}_{R,U} \cdot P_{\calI_{R,U}},
\end{equation}
where $\lambda^{\calI}_{R,U} \in \R$.
Now we define $\psi^{(2)}(\calI)$ as
\begin{align}
	& \psi^{(2)}(\calI) = \sum_{R\not\in\{\emptyset,V\}, R\subset U} \lambda^{\calI}_{R,U} \cdot \psi^{(2)}({\calI_{R,U}}) \nonumber \\
	& \quad \quad + \left(1 - \sum_{R\not\in\{\emptyset,V\}, R\subset U} \lambda^{\calI}_{R,U} \right) \psi^{(2)}(\calI^N).  \label{eq:psi2linear}
\end{align}

It is obvious that the definition of $\psi^{(2)}$ does not depend on node labeling, and thus
	it satisfies Axiom~\ref{axiom:anonymity}.
Since $\psi^{(2)}_v(\calI_{R,v}) = |R|/(|R|+1)$, it satisfies Axiom~\ref{axiom:critical}.
Since all construction is linear, it is not hard to see that it satisfies Axiom~\ref{axiom:bayesian}, 
	and we provide the complete derivation below.
For any Bayesian instance $\calI_{\cB(\{\calI^{\eta}\}, \lambda)}$, by definition 
we have $P_{\calI_{\cB(\{\calI^{\eta}\}, \lambda)}}(S, T)
= \sum_{\eta=1}^{r} \lambda_\eta P_{\calI^\eta}(S, T)$.
Note that here $\lambda_\eta$'s and $\lambda^{\calI}_{R,U} $'s are different sets of parameters.
Thus we have
\begin{align*}
& P_{\calI_{\cB(\{\calI^{\eta}\}, \lambda)}} = 
\sum_{\eta=1}^{r} \lambda_\eta P_{\calI^\eta} \\
& = \sum_{\eta=1}^{r} \lambda_\eta \sum_{R\not\in\{\emptyset,V\}, R\subset U} \lambda^{\calI^\eta}_{R,U} \cdot P_{\calI_{R,U}} \\
& = \sum_{R\not\in\{\emptyset,V\}, R\subset U} 
\left( \sum_{\eta=1}^{r} \lambda_\eta \lambda^{\calI^\eta}_{R,U} \right) \cdot P_{\calI_{R,U}}.
\end{align*}
Then by Eq.~\eqref{eq:psi2linear},
	\begin{align*} 
	& \psi^{(2)}(\calI_{\cB(\{\calI^{\eta}\}, \lambda)}) = \sum_{R\not\in\{\emptyset,V\}, R\subset U} \sum_{\eta=1}^{r} \lambda_\eta \lambda^{\calI^\eta}_{R,U}  \cdot \psi^{(2)}({\calI_{R,U}})  \\
	& \quad \quad + \left(1 - \sum_{R\not\in\{\emptyset,V\}, R\subset U} \sum_{\eta=1}^{r} \lambda_\eta\lambda^{\calI^\eta}_{R,U} \right) \psi^{(2)}(\calI^N) \\
	& =  \sum_{\eta=1}^{r} \lambda_\eta \cdot \left( \sum_{R\not\in\{\emptyset,V\}, R\subset U} \lambda^{\calI^\eta}_{R,U}  \cdot \psi^{(2)}({\calI_{R,U}}) \right.\\
	& \quad \quad + \left. \left(1 - \sum_{R\not\in\{\emptyset,V\}, R\subset U} \lambda^{\calI^\eta}_{R,U} \right) \psi^{(2)}(\calI^N) \right) \\
	& = \sum_{\eta=1}^{r} \lambda_\eta \psi^{(2)}(\calI^\eta),
	\end{align*}
where the second equality uses the fact that $\sum_{\eta=1}^{r} \lambda_\eta = 1$.
Therefore, the Bayesian Axiom (Axiom~\ref{axiom:bayesian}) holds.

Finally, we verify that the Independence of Sink Node Axiom (Axiom~\ref{axiom:sink}) holds.
Suppose $u$ and $v$ are two sink nodes of an instance $\calI=(V,E, P_{\calI})$.
Since we need to work on projection, we clarify the notation and use ${\calI}_{R,U}^V$ 
	and $\calI^{V,N}$ to
	represent the critical instance and the null instance, respectively, in set $V$.
By the definition of projection and Eq.~\eqref{eq:linearprofile}, we have for any 
	$\emptyset \subset S \subset T \subseteq V \setminus \{v\}$, 
\begin{align}
& P_{\calI\setminus \{v\}}(S,T) = P_{\calI}(S,T) + P_{\calI}(S, T\cup \{v\}) \nonumber \\
& = \sum_{\emptyset\subset R \subset U \subseteq V} \lambda^{\calI}_{R,U} \cdot
	\left( P_{\calI^V_{R,U}}(S,T)  +P_{\calI^V_{R,U}}(S,T\cup \{v\}) \right).
	\label{eq:psi2projection}
\end{align}
For $v\in R$, by the definition of $\calI^V_{R,U}$, we know that for 
	$S \subset T \subseteq V\setminus \{v\}$, 
	$P_{\calI^V_{R,U}}(S,T)  = P_{\calI^V_{R,U}}(S,T\cup \{v\}) = 0$.
For $v\not\in R$, $v$ is a sink node in $\calI^V_{R,U}$, and thus
	$P_{\calI^V_{R,U}}(S,T)  +P_{\calI^V_{R,U}}(S,T\cup \{v\}) = 
		P_{\calI^V_{R,U}\setminus \{v\}}(S,T)$.
From Lemma~\ref{lem:projcritical}, 	we know that the projection
	$\calI^V_{R,U}\setminus \{v\} = \calI^{V\setminus \{v\}}_{R,U}$ when $v \in V\setminus U$, and
	$\calI^V_{R,U}\setminus \{v\} = \calI^{V\setminus \{v\}}_{R,U\setminus \{v\}}$ when
	if $v\in U\setminus R$.
In particular, if $U = R\cup\{v\}$, then $\calI^{V\setminus \{v\}}_{R,U\setminus \{v\}}$ is the
	null instance where every node is an isolated node, in which case
	$P_{\calI^{V\setminus \{v\}}_{R,U\setminus \{v\}}}(S,T) = 0$ for any $S\subset T$.
Combining the above, we continue Eq.~\eqref{eq:psi2projection} to have
\begin{align}
& P_{\calI\setminus \{v\}}(S,T) \nonumber \\
& = \sum_{\emptyset\subset R \subset U \subseteq V, v\not\in U} \lambda^{\calI}_{R,U} \cdot
	P_{\calI^{V\setminus \{v\}}_{R,U}}(S,T) + \nonumber\\
& \quad \sum_{\emptyset\subset R \subset U \subseteq V, v\in U\setminus R} \lambda^{\calI}_{R,U} \cdot
	P_{\calI^{V\setminus \{v\}}_{R,U\setminus \{v\}}}(S,T) \nonumber \\
& = \sum_{\emptyset\subset R \subset U \subseteq V\setminus \{v\}} \lambda^{\calI}_{R,U} \cdot
P_{\calI^{V\setminus \{v\}}_{R,U}}(S,T) + \nonumber\\
& \quad \sum_{\emptyset\subset R \subset U \subseteq V\setminus \{v\}} \lambda^{\calI}_{R,U\cup \{v\}} \cdot
P_{\calI^{V\setminus \{v\}}_{R,U}}(S,T) \nonumber \\
& = \sum_{\emptyset\subset R \subset U \subseteq V\setminus \{v\}} 
	\left(\lambda^{\calI}_{R,U} + \lambda^{\calI}_{R,U\cup \{v\}} \right) \cdot
P_{\calI^{V\setminus \{v\}}_{R,U}}(S,T). \nonumber
\end{align}
Since $P_{\calI\setminus \{v\}}$ has unique linear representation from 
	$P_{\calI^{V\setminus \{v\}}_{R,U}}$'s, for all $\emptyset\subset R \subset U \subseteq V\setminus \{v\}$ we have
\begin{equation} \label{eq:lambdaprojection}
\lambda^{\calI\setminus \{v\}}_{R,U}  = \lambda^{\calI}_{R,U} + \lambda^{\calI}_{R,U\cup \{v\}}.
\end{equation}

We now derive the relation between $\psi^{(2)}_u(\calI\setminus \{v\})$ and
	$\psi^{(2)}_u(\calI)$.
By Eqs.~\eqref{eq:defpsi2critiacal}, \eqref{eq:psi2linear} and \eqref{eq:lambdaprojection},
\begin{align}
&\psi^{(2)}_u(\calI\setminus \{v\}) = \sum_{\emptyset\subset R \subset U \subseteq V\setminus \{v\}} 
	\lambda^{\calI\setminus \{v\}}_{R,U} \psi^{(2)}_u(\calI^{V\setminus\{v\}}_{R,U}) 
		\nonumber \\
& \quad + \left(1 -   \sum_{\emptyset\subset R \subset U \subseteq V\setminus \{v\}} 
\lambda^{\calI\setminus \{v\}}_{R,U} \right) \psi^{(2)}_u(\calI^{V\setminus \{v\},N}) \nonumber \\
& = \sum_{\emptyset\subset R \subset U \subseteq V\setminus \{v\}, u\not\in U} 
 \left(\lambda^{\calI}_{R,U} + \lambda^{\calI}_{R,U\cup \{v\}} \right) \cdot c
 \nonumber \\
& \quad +  \sum_{\emptyset\subset R \subset U \subseteq V\setminus \{v\}, u\in U\setminus R} 
\left(\lambda^{\calI}_{R,U} + \lambda^{\calI}_{R,U\cup \{v\}} \right) \cdot \frac{|R|}{|R|+1}
\nonumber \\
& \quad +  \sum_{\emptyset\subset R \subset U \subseteq V\setminus \{v\}, u\in R} 
\left(\lambda^{\calI}_{R,U} + \lambda^{\calI}_{R,U\cup \{v\}} \right) \cdot a_{|R|,|U|, |V|-1}
\nonumber \\
& \quad + \left(1 -   \sum_{\emptyset\subset R \subset U \subseteq V\setminus \{v\}} 
\left(\lambda^{\calI}_{R,U} + \lambda^{\calI}_{R,U\cup \{v\}} \right) \right) \cdot c \nonumber \\
& =  \sum_{\emptyset\subset R \subset U \subseteq V\setminus \{v\}, u\in U\setminus R} 
\left(\lambda^{\calI}_{R,U} + \lambda^{\calI}_{R,U\cup \{v\}} \right) \cdot \frac{|R|}{|R|+1}
\nonumber \\
& \quad +  \sum_{\emptyset\subset R \subset U \subseteq V\setminus \{v\}, u\in R} 
\left(\lambda^{\calI}_{R,U} + \lambda^{\calI}_{R,U\cup \{v\}} \right) \cdot a_{|R|,|U|, |V|-1}
\nonumber \\
& \quad + \left(1 -   \sum_{\emptyset\subset R \subset U \subseteq V\setminus \{v\}, u\in U} 
\left(\lambda^{\calI}_{R,U} + \lambda^{\calI}_{R,U\cup \{v\}} \right) \right) \cdot c \label{eq:psi2projected}
\end{align}
Note that
\begin{align*}
& \sum_{\emptyset\subset R \subset U \subseteq V\setminus \{v\}, u\in U} 
\left(\lambda^{\calI}_{R,U} + \lambda^{\calI}_{R,U\cup \{v\}} \right) \\
& = \sum_{\emptyset\subset R \subset U \subseteq V, u\in U, v\not\in U} \lambda^{\calI}_{R,U}
	+  \sum_{\emptyset\subset R \subset U \subseteq V, u\in U, v\in U\setminus R} \lambda^{\calI}_{R,U} \\
& = \sum_{\emptyset\subset R \subset U \subseteq V, u\in U, v\not\in R} \lambda^{\calI}_{R,U}.
\end{align*}
\begin{align*}
& \sum_{\emptyset\subset R \subset U \subseteq V\setminus \{v\}, u\in U\setminus R} 
\left(\lambda^{\calI}_{R,U} + \lambda^{\calI}_{R,U\cup \{v\}} \right) \cdot \frac{|R|}{|R|+1} \\
& = \sum_{\emptyset\subset R \subset U \subseteq V, u\in U \setminus R, v\not\in U } \lambda^{\calI}_{R,U}
	\cdot \frac{|R|}{|R|+1} \\
& \quad +  \sum_{\emptyset\subset R \subset U \subseteq V, u\in U \setminus R, v\in U\setminus R} \lambda^{\calI}_{R,U} 
	\cdot \frac{|R|}{|R|+1} \\
& = \sum_{\emptyset\subset R \subset U \subseteq V, u\in U\setminus R, v\not\in R} \lambda^{\calI}_{R,U}
	\cdot \frac{|R|}{|R|+1}.
\end{align*}
\begin{align*}
& \sum_{\emptyset\subset R \subset U \subseteq V\setminus \{v\}, u\in R} 
\left(\lambda^{\calI}_{R,U} + \lambda^{\calI}_{R,U\cup \{v\}} \right) \cdot a_{|R|,|U|, |V|-1} \\
& = \sum_{\emptyset\subset R \subset U \subseteq V, u\in R, v\not\in U} \lambda^{\calI}_{R,U} \cdot a_{|R|,|U|, |V|-1} \\
& \quad 
+  \sum_{\emptyset\subset R \subset U \subseteq V, u\in R, v\in U\setminus R} \lambda^{\calI}_{R,U} 
	\cdot a_{|R|,|U|-1, |V|-1}
\end{align*}
Plugging the above three results into Eq.~\eqref{eq:psi2projected}, we have
\begin{align}
&\psi^{(2)}_u(\calI\setminus \{v\}) = 
\sum_{\emptyset\subset R \subset U \subseteq V, u\in U\setminus R, v\not\in R} \lambda^{\calI}_{R,U}
\cdot \frac{|R|}{|R|+1} \nonumber \\
& \quad  + \sum_{\emptyset\subset R \subset U \subseteq V, u\in R, v\not\in U} \lambda^{\calI}_{R,U} \cdot a_{|R|,|U|, |V|-1} 
	\nonumber \\
& \quad + \sum_{\emptyset\subset R \subset U \subseteq V, u\in R, v\in U\setminus R} \lambda^{\calI}_{R,U} 
	\cdot a_{|R|,|U|-1, |V|-1}
	\nonumber \\
& \quad + \left( 1 - \sum_{\emptyset\subset R \subset U \subseteq V, u\in U, v\not\in R} \lambda^{\calI}_{R,U} \right) 
	\cdot c \label{eq:psi2projclean}
\end{align}

Similarly, we expand $\psi^{(2)}_u(\calI)$:
\begin{align}
&\psi^{(2)}_u(\calI) = \sum_{\emptyset\subset R \subset U \subseteq V} 
\lambda^{\calI}_{R,U} \psi^{(2)}_u(\calI^{V}_{R,U}) 
\nonumber \\
& \quad + \left(1 -   \sum_{\emptyset\subset R \subset U \subseteq V} 
\lambda^{\calI}_{R,U} \right) \psi^{(2)}_u(\calI^{V,N}) \nonumber \\
& = \sum_{\emptyset\subset R \subset U \subseteq V, u\not\in U} 
\lambda^{\calI}_{R,U}  \cdot c
\nonumber  +  \sum_{\emptyset\subset R \subset U \subseteq V, u\in U\setminus R} 
\lambda^{\calI}_{R,U} \cdot \frac{|R|}{|R|+1}
\nonumber \\
& \quad +  \sum_{\emptyset\subset R \subset U \subseteq V, u\in R} 
\lambda^{\calI}_{R,U} \cdot a_{|R|,|U|, |V|}
\nonumber  + \left(1 -   \sum_{\emptyset\subset R \subset U \subseteq V} 
\lambda^{\calI}_{R,U}  \right) \cdot c \nonumber \\
& = \sum_{\emptyset\subset R \subset U \subseteq V, u\in U\setminus R} 
\lambda^{\calI}_{R,U} \cdot \frac{|R|}{|R|+1} \nonumber \\
& \quad  +  \sum_{\emptyset\subset R \subset U \subseteq V, u\in R} 
\lambda^{\calI}_{R,U} \cdot a_{|R|,|U|, |V|} \nonumber  \\
& \quad + \left(1 -   \sum_{\emptyset\subset R \subset U \subseteq V, u\in U} 
\lambda^{\calI}_{R,U}  \right) \cdot c \label{eq:psi2nonprojected}
\end{align}
Subtracing Eq.~\eqref{eq:psi2projclean} from Eq.~\eqref{eq:psi2nonprojected}, we have
\begin{align}
&\psi^{(2)}_u(\calI)  - \psi^{(2)}_u(\calI\setminus \{v\}) =
	\sum_{\emptyset\subset R \subset U \subseteq V, u\in U\setminus R, v\in R} \lambda^{\calI}_{R,U}
	\cdot \frac{|R|}{|R|+1} \nonumber \\
& \quad  + \sum_{\emptyset\subset R \subset U \subseteq V, u\in R} \lambda^{\calI}_{R,U} \cdot a_{|R|,|U|, |V|} 
	\nonumber \\
& \quad - \sum_{\emptyset\subset R \subset U \subseteq V, u\in R, v \not \in U } \lambda^{\calI}_{R,U} 
\cdot a_{|R|,|U|, |V|-1}
\nonumber \\
& \quad - \sum_{\emptyset\subset R \subset U \subseteq V, u\in R, v\in U \setminus R} \lambda^{\calI}_{R,U} 
	\cdot a_{|R|,|U|-1, |V|-1}
	\nonumber \\
& \quad  - \sum_{\emptyset\subset R \subset U \subseteq V, u\in U, v\in R} \lambda^{\calI}_{R,U}  
	\cdot c. \label{eq:psi2diff}
\end{align}

We want the above difference to be zero, but so far we have not used the property that
	both $u$ and $v$ are sink nodes in $\calI$ yet, except that the project
	$\calI\setminus \{v\}$ is defined when $v$ is a sink node.
Next, suppose that $v$ is a sink node in $\calI$ and we would derive some properties 
	on $\lambda^{\calI}_{R,U}$ based on this fact.
By the definition of sink nodes, we have  for
all $\emptyset\subseteq S\subset T \subseteq V \setminus \{v\}$, 
\begin{align*}
& \sum_{\emptyset\subset R \subset U \subseteq V} \lambda^{\calI}_{R,U} 
P_{\calI^V_{R,U}}(S \cup\{v\},T \cup \{v\}) \\
& = P_{\calI}(S\cup \{v\}, T\cup \{v\}) \\
& = P_{\calI}(S,T) + P_{\calI}(S, T\cup\{v\})\\
& = \sum_{\emptyset\subset R \subset U \subseteq V} \lambda^{\calI}_{R,U} 
	\left( P_{\calI^V_{R,U}}(S,T) + P_{\calI^V_{R,U}}(S,T \cup \{v\}) \right). \\
\end{align*}
Note that when $v\not \in R$, $v$ is a sink node in $\calI^V_{R,U}$, and so
	we have $P_{\calI^V_{R,U}}(S,T) + P_{\calI^V_{R,U}}(S,T \cup \{v\}) = P_{\calI^V_{R,U}}(S \cup\{v\},T \cup \{v\})$.
When $v \in R$, since $v\not\in S$ and $S \subset T$, we have
	$P_{\calI^V_{R,U}}(S,T) + P_{\calI^V_{R,U}}(S,T \cup \{v\}) = 0$.
Thus the above implies that
\[
\sum_{\emptyset\subset R \subset U \subseteq V, v\in R} \lambda^{\calI}_{R,U} 
P_{\calI^V_{R,U}}(S \cup\{v\},T \cup \{v\}) = 0.
\]
Note that if $R \not \subseteq S \cup\{v\}$, then
	$P_{\calI^V_{R,U}}(S \cup\{v\},T \cup \{v\}) = 0$.
When  $R \subseteq S \cup\{v\}$, $P_{\calI^V_{R,U}}(S \cup\{v\},T \cup \{v\}) = 1$
	if and only if $T \cup \{v\} = S \cup\{v\} \cup U$; otherwise it is $0$.
Thus the above is equivalent to
\begin{equation}
\sum_{R,U: \emptyset \subset R\subset U, v\in R, R \subseteq S \cup\{v\}, T \cup \{v\} = S \cup\{v\} \cup U} \lambda^{\calI}_{R,U}  = 0. \label{eq:lambdasinkcond}
\end{equation}
Set $S = \emptyset$ first.
Then $R$ must be $\{v\}$.
For any $T$ such that $S \subset T \subseteq V\setminus \{v\}$, 
	we see that $U$ must be $T \cup \{v\}$ to satisfy the constraint in the above summation.
Thus we have $\lambda^{\calI}_{\{v\},  U} = 0$ for any $\{v\} \subset U \subseteq V$.
Now, let $|S| = 1$.
In this case $R = \{v\}$ or $R = S\cup \{v\}$.
We already know from above that if $R= \{v\}$, $\lambda^{\calI}_{\{v\},  U} = 0$.
Thus in Eq.~\eqref{eq:lambdasinkcond}, what are left are the terms with $R= S \cup\{v\}$.
When $R= S \cup\{v\}$, we can see that $U$ must be $T\cup \{v\}$.
Then we obtain that $\lambda^{\calI}_{R,U} = 0$ for every $|R|=2$, $v\in R$, and $R \subset U\subseteq V$.
Repeating the above argument for $|R|=3, 4, \ldots$ (or $|S|=2, 3, \ldots,$), we eventually conclude
	that for every $R$ and $U$ such that $v\in R$ and $R\subset U \subseteq V$, 
	$\lambda^{\calI}_{R,U} = 0$.
	
With the above important property, we look back at Eq.~\eqref{eq:psi2diff}.
Notice that in all the five summation terms, we have either $v\in R$ or $u \in R$, and both
	$u$ and $v$ are sink nodes.
Therefore, all these five summation terms are $0$, and finally we conclude that
	$\psi^{(2)}_u(\calI)  = \psi^{(2)}_u(\calI\setminus \{v\}) $ for an arbitrary instance $\calI$.
This means that the Independence of Sink Nodes Axiom (Axiom~\ref{axiom:sink}) always holds for
	the definition of $\psi^{(2)}$ (Eq.~\eqref{eq:defpsi2critiacal}) with any possible parameters
	$a_{|R|,|U|,|V|}$'s and $c$.
Hence, we have many degree of freedom to chose these parameters other than the ones determined
	by the Shapley centrality, and thus the Normalization Axiom (Axiom~\ref{axiom:normalization})
	is independent.
\end{proof}

\begin{lemma} \label{lem:indaxiom3}
	The Independence of Sink Nodes Axiom (Axiom~\ref{axiom:sink}) is independent.
\end{lemma}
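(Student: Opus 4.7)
The plan is to follow the construction strategy used in the proof of Lemma~\ref{lem:indaxiom2} and build an alternative centrality measure $\psi^{(3)}$ that satisfies Axioms~\ref{axiom:anonymity}, \ref{axiom:normalization}, \ref{axiom:bayesian}, and~\ref{axiom:critical} but violates Axiom~\ref{axiom:sink}. I would first define $\psi^{(3)}$ on the basis consisting of the critical set instances in $\cV$ together with the null instance $\calI^N$, and then extend to all influence instances linearly via the representation $P_\calI = \sum_{R,U} \lambda^\calI_{R,U} P_{\calI_{R,U}}$ (guaranteed by Lemma~\ref{lem:linearind}), exactly as in Eq.~\eqref{eq:psi2linear}.

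On each critical set instance $\calI_{R,U}$, the Anonymity Axiom reduces $\psi^{(3)}$ to three parameters $a_{|R|,|U|,|V|}$, $b_{|R|,|U|,|V|}$, $c_{|R|,|U|,|V|}$ for nodes in $R$, $U\setminus R$, and $V\setminus U$, respectively. The Critical Set Axiom directly constrains only $b_{|R|,|R|+1,|R|+1} = |R|/(|R|+1)$ (the case $\calI_{R,v}$ with $V = R\cup\{v\}$), and Normalization imposes $|R|\cdot a + (|U|-|R|)\cdot b + (|V|-|U|)\cdot c = |V|$; on the null instance, Anonymity plus Normalization force every node to have centrality $1$. All other parameters remain free. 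To violate Sink, I would deviate from the Shapley values at the single tuple $(|R|,|U|,|V|) = (1,2,3)$, setting $(a,b,c) = (3/2,\,1,\,1/2)$ instead of the Shapley values $(3/2,\,1/2,\,1)$, while leaving all other basis parameters equal to the Shapley values. Normalization is preserved since $3/2 + 1 + 1/2 = 3 = |V|$.

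To see that the Sink Node Axiom fails, consider $\calI_{R,U}$ with $|R|=1, |U|=2, |V|=3$, with $u\in V\setminus U$ an isolated sink and $v\in U\setminus R$ a sink in the target. By Lemma~\ref{lem:projcritical}, $\calI_{R,U}\setminus\{u\}$ is a critical set instance with $(|R|,|U|,|V|) = (1,2,2)$, where the Critical Set Axiom forces $\psi^{(3)}_v = b_{1,2,2} = 1/2$; but in the original instance we chose $\psi^{(3)}_v = b_{1,2,3} = 1$, so Axiom~\ref{axiom:sink} fails for this pair of sinks. The remaining axioms would then be verified exactly as in the proof of Lemma~\ref{lem:indaxiom2}: Bayesian Influence is immediate because decomposition coefficients combine additively under Bayesian mixing, Normalization lifts from the basis because every affine combination preserves the constant sum $|V|$, and the Critical Set Axiom is satisfied on the basis by construction.

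The main obstacle I anticipate is the careful verification that Anonymity transfers from the basis to arbitrary instances through the linear extension. This requires showing that the decomposition coefficients transform covariantly under node relabelings, specifically $\lambda^{\pi(\calI)}_{\pi(R),\pi(U)} = \lambda^\calI_{R,U}$; this follows from the uniqueness of the basis expansion (Lemma~\ref{lem:linearind}) combined with the identity $\pi(\calI_{R,U}) = \calI_{\pi(R),\pi(U)}$. Once this covariance is established, Anonymity on the basis (where the parameters depend only on the cardinalities $|R|,|U|,|V|$ and on which class the node belongs to) lifts cleanly to general instances, completing the independence proof.
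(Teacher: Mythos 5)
Your proposal is correct and follows essentially the same strategy as the paper: define the alternative measure $\psi^{(3)}$ on the critical-set basis $\cV$ together with the null instance, extend linearly so that the Bayesian axiom holds automatically, and exploit the fact that Axiom~\ref{axiom:critical} only pins down instances with $V=R\cup\{v\}$, so that projecting away a sink node from a larger ground set exposes the violation of Axiom~\ref{axiom:sink}. The paper's concrete assignment on the basis differs slightly (it sets $\psi^{(3)}\equiv 1$ on every critical instance with $|R|\ne |V|-1$ rather than perturbing the Shapley values at a single cardinality tuple), but this is an immaterial difference.
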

\begin{proof}
Similar to the proof of Lemma~\ref{lem:indaxiom2}, 
	we first define $\psi^{(3)}$ on critical set instances in $\cV$, and then use their linear
	independence to extend the definition to all instances.

For every instance $\calI_{R,U} \in \cV$, we know that $R \subset U$ and $R \not \in \{\emptyset,V\}$.
When $|R| = |V|-1$, we define $\psi^{(3)}_v(\calI_{R,U}) = |R|/(|R|+1)$ for the
	unique $v\in V\setminus R$, and 
	$\psi^{(3)}_u(\calI_{R,U}) = (|V| - |R|/(|R|+1))/|R|$ for every $u \in R$.
When $|R| \ne |V| -1 $, we simply define $\psi^{(3)}_v(\calI_{R,U}) = 1$ for all $v\in V$.
It is straightforward to see that for every $\calI_{R,U} \in \cV$, $\psi^{(3)}$ is anonymous
	(not depend on the label of a node), and normalized (centrality summed up to $|V|$), and
	for the critical set instance $\calI_{R,v}$, it satisfies the requirement of Axiom~\ref{axiom:critical}.
For the null instance $\calI^N$ defined in the proof of Lemma~\ref{lem:unique}, 
	we define $\psi^{(3)}_v(\calI^N) = 1$ for every $v\in V$.
Thus for $\calI^N$ $\psi^{(3)}$ is also anonymous and normalized.

Now for an arbitrary influence instance $\calI$, by Lemma~\ref{lem:linearind} we have
	\[
	P_\calI = \sum_{R\not\in\{\emptyset,V\}, R\subset U} \lambda^{\calI}_{R,U} \cdot P_{\calI_{R,U}},
	\]
	where $\lambda^{\calI}_{R,U} \in \R$.
Then we define $\psi^{(3)}(\calI)$ below patterned by Eq.~\eqref{eq:linearinf}:
	\begin{align} 
	& \psi^{(3)}(\calI) = \sum_{R\not\in\{\emptyset,V\}, R\subset U} \lambda^{\calI}_{R,U} \cdot \psi^{(3)}({\calI_{R,U}}) \nonumber \\
	& \quad \quad + \left(1 - \sum_{R\not\in\{\emptyset,V\}, R\subset U} \lambda^{\calI}_{R,U} \right) \psi^{(3)}({\calI^N}). \label{eq:psi3linear}
	\end{align}
It is straightforward to verify that when all $\calI_{R,U} \in \cV$ and $\calI^N$ are anonymous 
	and normalized, $\calI$ is also anonymous and normalized, and thus $\psi^{(3)}$ satisfies
	Axioms~\ref{axiom:anonymity} and~\ref{axiom:normalization}.
Moreover, $\psi^{(3)}$ also satisfies the Bayesian Axiom (Axiom~\ref{axiom:bayesian}), with the
	same proof as the one in the proof of Lemma~\ref{lem:indaxiom2}.
By its definition, we already know that $\psi^{(3)}$ satisfies Axiom~\ref{axiom:critical}.
Obviously, $\psi^{(3)}$ is different from $\psi^{\Shapley}$ and it does not satisfies the
	Independence of Sink Nodes Axiom (Axiom~\ref{axiom:sink}).
Hence, Axiom~\ref{axiom:sink} is independent.
\end{proof}

\begin{lemma} \label{lem:indaxiom4}
	The Bayesian Axiom (Axiom~\ref{axiom:bayesian}) is independent.
\end{lemma}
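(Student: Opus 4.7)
The plan is to show the Bayesian Influence Axiom is independent by constructing an alternative centrality measure $\psi^{(4)}$ that satisfies Axioms~\ref{axiom:anonymity}, \ref{axiom:normalization}, \ref{axiom:sink}, and~\ref{axiom:critical} but violates Axiom~\ref{axiom:bayesian}. Unlike the proofs of Lemmas~\ref{lem:indaxiom2} and~\ref{lem:indaxiom3}, I cannot use the ``linear extension via the critical-set basis'' template, since any such extension automatically satisfies the Bayesian Axiom. The key is therefore to define $\psi^{(4)}$ on non-basis instances by a \emph{nonlinear} rule.

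My proposed construction is a case split on whether the instance contains any sink node:
\[
\psi^{(4)}_v(\calI) = \begin{cases} \psi^{\Shapley}_v(\calI) & \text{if $\calI$ has at least one sink node,} \\ 1 & \text{otherwise.} \end{cases}
\]
The motivation is that Axiom~\ref{axiom:sink} is vacuous unless $\calI$ has two or more sinks, and when it does, the projection $\calI\setminus\{v\}$ still contains a sink, so both instances fall in the Shapley branch where the axiom automatically holds. On the sink-free side I enjoy full freedom, and the uniform assignment is the simplest choice compatible with anonymity and normalization.

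I would verify the four preserved axioms as follows. Anonymity holds because sinkhood is a label-invariant property of $\calI$ and both branches are label-symmetric. Normalization holds because $\sum_v 1 = |V|$ and Shapley is efficient. The Critical Set Axiom follows from the fact that every critical-set instance $\calI_{R,v}$ contains $v$ as a sink (Lemma~\ref{lem:isosink}), so $\psi^{(4)}(\calI_{R,v}) = \psi^{\Shapley}(\calI_{R,v})$, and Lemma~\ref{lem:shapleycen} applies. The Independence of Sink Nodes Axiom has content only when $\calI$ has at least two sinks $u,v$; the projection $\calI\setminus\{v\}$ still contains the sink $u$, so both land in the Shapley branch and the required equality $\psi^{(4)}_u(\calI) = \psi^{(4)}_u(\calI\setminus\{v\})$ reduces to the same identity for Shapley.

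Finally, to exhibit a violation of Axiom~\ref{axiom:bayesian}, I would choose the complete directed graph $G$ on $V=\{v_1,v_2,v_3,w\}$ and construct two IC instances on $G$: $\calI^1$ with edges $v_1\to v_2$ and $v_2\to v_3$ of probability $1$ and all other probabilities zero (so $w$ and $v_3$ are sinks, and a direct computation shows $\psi^{\Shapley}(\calI^1)$ is non-uniform --- explicitly, $\psi^{\Shapley}(\calI^1) = (11/6,\,5/6,\,1/3,\,1)$), and $\calI^2$ with every outgoing edge probability strictly positive (so $\calI^2$ is sink-free). For the mixture $\calI_{\cB}=\frac{1}{2}\calI^1+\frac{1}{2}\calI^2$, specializing the sinkhood condition to $S=T=\emptyset$ forces $P_\calI(\{v\},\{v\})=1$; since $P_{\calI^2}(\{v\},\{v\})<1$ for every $v$, convexity gives $P_{\calI_{\cB}}(\{v\},\{v\})<1$, so no node is a sink in $\calI_{\cB}$. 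Hence $\psi^{(4)}(\calI_{\cB})=(1,1,1,1)$, whereas Axiom~\ref{axiom:bayesian} would demand $\frac{1}{2}\psi^{\Shapley}(\calI^1)+\frac{1}{2}(1,1,1,1)$, which is non-uniform. The main step I expect to deserve the most scrutiny is the clean verification that the mixture is genuinely sink-free; once the necessary condition $P(\{v\},\{v\})=1$ on sinks is invoked, the rest is a one-line convexity check.
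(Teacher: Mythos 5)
Your proposal is correct, but it takes a genuinely different route from the paper's. The paper also proves independence by exhibiting an alternative measure $\psi^{(4)}$, but via an intrinsic formula: for each sink node $v$ it takes the smallest set $R_v$ that activates $v$ with probability $1$ (or $\emptyset$ if none exists), assigns $\psi^{(4)}_v=|R_v|/(|R_v|+1)$, and lets the non-sink nodes split the remaining mass equally so that Normalization holds; verifying Axiom~\ref{axiom:sink} then requires a two-sided argument that $R_u$ is unchanged when another sink $v$ is projected out, and the failure of Axiom~\ref{axiom:bayesian} is inferred only indirectly from ``$\psi^{(4)}\neq\psi^{\Shapley}$'' together with the uniqueness theorem. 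Your case split --- Shapley on any instance containing a sink, the uniform measure on sink-free instances --- makes Anonymity, Normalization and the Critical Set Axiom immediate and reduces Axiom~\ref{axiom:sink} to the already-proved Shapley case; the one fact you rely on, that a sink $u$ of $\calI$ remains a sink of $\calI\setminus\{v\}$, is true and worth writing out explicitly (expand $P_{\calI\setminus\{v\}}(S\cup\{u\},T\cup\{u\})$ via the projection and apply $u$'s sink identity to the target sets $T$ and $T\cup\{v\}$), since the paper only records this for critical-set instances in Lemma~\ref{lem:projcritical}. Your explicit Bayesian counterexample is a genuine bonus: the Shapley vector $(11/6,\,5/6,\,1/3,\,1)$ for the deterministic chain checks out, the observation that mixing with a sink-free instance destroys every sink via the necessary condition $P_{\calI}(\{v\},\{v\})=1$ is exactly right, and it makes the lemma self-contained rather than dependent on Lemma~\ref{lem:unique}. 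Both arguments are sound; yours trades the paper's standalone formula for a shorter verification and an explicit witness of the violated axiom.
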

\begin{proof}
We define a centrality measure $\psi^{(4)}$ as follows.
Given an influence instance $\calI = (V, E, P_{\calI})$, 
for every sink node $v$ in $\calI$, if there is a set $R \subseteq V \setminus \{v\}$, such that
$\sum_{T: R\cup \{v\} \subseteq T} P_{\calI}(R, T) = 1$, 
we let $R_v$ be the smallest such set (tie is broken with some arbitrary deterministic rule); if such $R$ does not exists,
then we let $R_v = \emptyset$.
Intuitively, $R_v$ is the smallest set that can influence $v$ with probability $1$.
Then, for every sink node $v$, we define $\psi^{(4)}_v(\calI) = |R_v| / (|R_v| + 1)$; for non-sink nodes,
we let them equally devide the rest share so that the total centrality is $|V| $.

The definition of $\psi^{(4)}$ does not depend on node labeling, so Axiom~\ref{axiom:anonymity} is clearly satisfied.
The definition enforces that the sum of all centralities is $|V|$, so Axiom~\ref{axiom:normalization} is satisfied.	
For the critical set infance $\calI_{R,v}$, $v$ is a sink node, and by definition $R$ is the smallest one such that 
$P_{\calI_{R,v}}(R, R\cup\{v\}) = 1$, so by the definition of $\psi^{(4)}$, we have $\psi^{(4)}_v(\calI_{R,v}) = |R|/(|R|+1)$,
thus satisfying Axiom~\ref{axiom:critical}.

For the Independence of Sink Nodes Axiom (Axiom~\ref{axiom:sink}), consider an influence instance $\calI$ and
its two sink nodes $u$ and $v$.
We claim that the projection $\calI\setminus \{v\}$ does not change set $R_u$.
In fact, suppose first that in $\calI$ there is a set $R \subseteq V \setminus \{u\}$ 
$\sum_{T: R\cup \{u\} \subseteq T} P_{\calI}(R, T) = 1$.
If $v\not\in R$, then by the definition of projection, 
\begin{align*}
& \sum_{T: R\cup \{u\} \subseteq T \subseteq V\setminus \{v\}} P_{\calI\setminus \{v\}}(R, T) \\
& = \sum_{T: R\cup \{u\} \subseteq T \subseteq V\setminus \{v\}} P_{\calI}(R, T) + P_{\calI}(R, T \cup\{v\}) \\
& = \sum_{T: R\cup \{u\} \subseteq T \subseteq V} P_{\calI}(R, T) = 1.\\
\end{align*}
Thus $R$ is still a set in $\calI\setminus \{v\}$ that influence $v$ with probability $1$.
If $v \in R$, since $v$ is a sink node, we have
\begin{align*}
& \sum_{T: R\setminus \{v\} \cup \{u\} \subseteq T \subseteq V} P_{\calI}(R \setminus \{v\}, T) \\
& =  \sum_{T: R \cup \{u\} \subseteq T \subseteq V} 
P_{\calI}(R\setminus \{v\}, T\setminus \{v\}) + P_{\calI}(R\setminus \{v\}, T) \\
& =  \sum_{T: R \cup \{u\} \subseteq T \subseteq V} 
P_{\calI}(R, T) = 1,
\end{align*}
where the second to last equality is by the definition of sink node.
Thus, the above equation implies that $R\setminus \{v\}$ is a smaller set
	that influences $u$ with probability $1$.
The cases of $v \in R$ and $v\not\in R$ together imply that $R_u$ for instance $\calI$ still works for
instance $\calI \setminus \{v\}$.

Conversely, suppose $R$ is a set influencing $u$ with probability $1$ in $\calI\setminus \{v\}$.
Then we have 
\begin{align*}
& \sum_{T: R\cup \{u\} \subseteq T \subseteq V} P_{\calI}(R, T) \\
& = \sum_{T: R\cup \{u\} \subseteq T \subseteq V\setminus \{v\}} 
	P_{\calI}(R, T) + P_{\calI}(R, T \cup \{v\}) \\
& = \sum_{T: R\cup \{u\} \subseteq T \subseteq V\setminus \{v\}} P_{\calI \setminus \{v\}}(R, T) = 1.
\end{align*}
Therefore, $R$ is still a set influencing $u$ with probability $1$ in $\calI$.

Hence, from the above argument from both sides, we know that, either there does not exists
	a set $R$ that influences $u$ with probability $1$ in $\calI$ or $\calI\setminus \{v\}$,
	or the smallest such sets in $\calI$ and $\calI\setminus \{v\}$ are the same.
Therefore, we have $\psi^{(4)}_u(\calI) = \psi^{(4)}_u(\calI \setminus \{v\})
	= |R_v|/(|R_v|+1)$, where $R_v$ is the smallest such set or $\emptyset$.
This means, $\psi^{(4)}$ satisfies the Independence of Sink Nodes Axiom (Axiom~\ref{axiom:sink}).
Since $\psi^{(4)}$ is clearly different from $\psi^{\Shapley}$, we know that
	Axiom~\ref{axiom:sink} is independent.
\end{proof}

\begin{lemma} \label{lem:indaxiom5}
	The Bargaining with Critical Sets Axiom (Axiom~\ref{axiom:critical}) is independent.
\end{lemma}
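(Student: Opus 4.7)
The plan is to exhibit an alternative centrality measure that satisfies Axioms~\ref{axiom:anonymity}, \ref{axiom:normalization}, \ref{axiom:sink}, and \ref{axiom:bayesian} while violating Axiom~\ref{axiom:critical}. Unlike the previous independence proofs (Lemmas~\ref{lem:indaxiom1}--\ref{lem:indaxiom4}), which require delicate constructions extending through the linear basis of critical set instances and careful verification that the Sink Node Axiom is preserved under linear extension, here the simplest possible construction works: define $\psi^{(5)}_v(\calI) = 1$ for every influence instance $\calI = (V,E,P_{\calI})$ and every $v\in V$.

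The verification of the four retained axioms is essentially immediate. Anonymity follows because the constant value $1$ is label-independent. Normalization follows from $\sum_{v\in V} 1 = |V|$. The Independence of Sink Nodes Axiom holds because, for any two sink nodes $u,v$ of $\calI$, we trivially have $\psi^{(5)}_u(\calI) = 1 = \psi^{(5)}_u(\calI\setminus\{v\})$, with no need to invoke the properties of $\lambda^{\calI}_{R,U}$ that were critical in the previous proofs. The Bayesian Influence Axiom holds because, for any Bayesian combination $\calI_{\cB(\{\calI^{\eta}\}, \lambda)}$, both sides of the claimed equality equal $1$, using $\sum_{\eta=1}^r \lambda_\eta = 1$.

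On the other hand, for any critical set instance $\calI_{R,v}$ with $|R|\ge 1$ we have $\psi^{(5)}_v(\calI_{R,v}) = 1 \neq |R|/(|R|+1)$, so Axiom~\ref{axiom:critical} fails. Since $\psi^{(5)}$ thus differs from $\psi^{\Shapley}$ on every nontrivial critical set instance while satisfying the other four axioms, the completeness argument of Lemma~\ref{lem:unique} implies that Axiom~\ref{axiom:critical} cannot be derived from Axioms~\ref{axiom:anonymity}--\ref{axiom:sink} together with Axiom~\ref{axiom:bayesian}, completing the proof. I do not anticipate any real technical obstacle; the conceptual takeaway is that, among the five axioms, Axiom~\ref{axiom:critical} is the only one that pins down specific numerical values on nontrivial instances, so once it is dropped the trivial constant measure suffices as a witness.
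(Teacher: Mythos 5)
Your proposal is correct and matches the paper's own proof exactly: the paper also takes $\psi^{(5)}_v(\calI) \equiv 1$ as the witness, noting it trivially satisfies Axioms~\ref{axiom:anonymity}--\ref{axiom:bayesian} but differs from $\psi^{\Shapley}$. Your write-up simply spells out the (immediate) verifications in more detail.
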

\begin{proof}
We construct $\psi^{(5)}$ by trivially assigning every node with centrality $1$.
It is obvious that this constant $\psi^{(5)}$ satisfies Axioms~\ref{axiom:anonymity}--\ref{axiom:bayesian},
	and it is different from $\psi^{\Shapley}$.
Thus Axiom~\ref{axiom:critical} is independent.
\end{proof}

Lemmas~\ref{lem:indaxiom1}--\ref{lem:indaxiom5} together implies the following:
\begin{lemma}[Independence] \label{lem:independence}
	All axioms in the axiom set $\cA$ are independent.
\end{lemma}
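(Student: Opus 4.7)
The plan is to prove the combined independence statement by establishing, separately, that each of the five axioms in $\cA$ is independent of the remaining four. Concretely, for each axiom $\cA_i$ I will exhibit a centrality measure $\psi^{(i)} \neq \psi^{\Shapley}$ which satisfies every axiom in $\cA \setminus \{\cA_i\}$ but violates $\cA_i$. Since Lemma~\ref{lem:unique} shows that the full axiom set $\cA$ uniquely pins down $\psi^{\Shapley}$, the existence of such a $\psi^{(i)}$ is sufficient to conclude that $\cA_i$ cannot be derived from the rest.

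For the Anonymity axiom, I would take a ``Shapley-like'' formula $\E_{\bpi \sim \Pi'}[\sigma_\calI(S_{\bpi,v} \cup \{v\}) - \sigma_\calI(S_{\bpi,v})]$ with a carefully chosen non-uniform distribution $\Pi'$ on permutations whose only constraint is that each node appears in the last position with probability $1/|V|$; this constraint preserves Axioms~\ref{axiom:normalization}--\ref{axiom:critical} essentially by reusing the soundness arguments of Lemma~\ref{lem:shapleycen}, while symmetry-breaking in earlier positions produces a clear violation of Anonymity on a fully symmetric instance. For the Bargaining with Critical Sets axiom, the trivial $\psi^{(5)} \equiv 1$ assignment satisfies all other axioms and violates Axiom~\ref{axiom:critical} as soon as $|R| \ge 2$.

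For the Independence of Sink Nodes axiom, I would define $\psi^{(4)}$ directly from the instance: for a sink $v$, set $\psi^{(4)}_v(\calI) = |R_v|/(|R_v|+1)$, where $R_v$ is the smallest set influencing $v$ with probability $1$ (or $\emptyset$ if none exists); non-sink nodes divide the leftover share equally. Soundness of Axioms~\ref{axiom:anonymity}, \ref{axiom:normalization}, and \ref{axiom:critical} is immediate from the construction. Violation of Axiom~\ref{axiom:sink} must then be demonstrated by an explicit example where removing one sink changes $R_u$ for another sink $u$. The trickiest pieces will be the constructions for Normalization and for Independence of Sink Nodes (as a set-axiom). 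For these two, I would use the basis structure from Lemma~\ref{lem:linearind}: define $\psi^{(2)}$ (resp.\ $\psi^{(3)}$) on each critical-set instance $\calI_{R,U}$ and on the null instance $\calI^N$ with free parameters, and then extend linearly via the representation $P_\calI = \sum_{R,U} \lambda^\calI_{R,U} P_{\calI_{R,U}}$, using an analogue of Eq.~\eqref{eq:linearinf}. By construction these extensions satisfy Axiom~\ref{axiom:bayesian} automatically (the same expansion argument as in the uniqueness proof).

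The main obstacle is verifying that the Normalization counterexample $\psi^{(2)}$ still satisfies Axiom~\ref{axiom:sink} (Independence of Sink Nodes). This requires expanding both $\psi^{(2)}_u(\calI)$ and $\psi^{(2)}_u(\calI\setminus\{v\})$ through the basis representation, deriving the key identity $\lambda^{\calI\setminus\{v\}}_{R,U} = \lambda^\calI_{R,U} + \lambda^\calI_{R,U\cup\{v\}}$ from the definition of projection, and then exploiting the fact that $v$ being a sink forces $\lambda^\calI_{R,U} = 0$ whenever $v \in R$ (which I would prove by induction on $|R|$ from the sink-node functional equation $P_\calI(S\cup\{v\},T\cup\{v\}) = P_\calI(S,T) + P_\calI(S,T\cup\{v\})$). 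Combined with the fact that the sink-node assumption on $u$ makes the remaining terms involving $u \in R$ also vanish, every discrepancy between $\psi^{(2)}_u(\calI)$ and $\psi^{(2)}_u(\calI\setminus\{v\})$ cancels, leaving genuine freedom in the parameters $a_{|R|,|U|,|V|}$ and $c$ to violate Normalization. The analogous construction for $\psi^{(3)}$, using values $1$ except when $|R|=|V|-1$, is then a considerably simpler special case of the same linear-algebra framework, and closes the last sub-case. Taken together, Lemmas~\ref{lem:indaxiom1}--\ref{lem:indaxiom5} imply Lemma~\ref{lem:independence}.
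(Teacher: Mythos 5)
Your constructions for Anonymity ($\psi^{(1)}$ via a non-uniform permutation distribution $\Pi'$), Normalization ($\psi^{(2)}$ via free parameters on the critical-set basis plus the projection identity $\lambda^{\calI\setminus\{v\}}_{R,U}=\lambda^{\calI}_{R,U}+\lambda^{\calI}_{R,U\cup\{v\}}$ and the vanishing of $\lambda^{\calI}_{R,U}$ for $v\in R$), Independence of Sink Nodes ($\psi^{(3)}$ with the $|R|=|V|-1$ special case extended linearly), and Bargaining with Critical Sets ($\psi^{(5)}\equiv 1$) all match the paper's Lemmas on independence essentially verbatim, and that part of the plan is sound.

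There is, however, a genuine gap. You assign the $R_v$-based measure $\psi^{(4)}$ (sink $v$ gets $|R_v|/(|R_v|+1)$ where $R_v$ is the smallest set influencing $v$ with probability $1$) as a witness against the Independence of Sink Nodes Axiom, and propose to exhibit ``an explicit example where removing one sink changes $R_u$ for another sink $u$.'' No such example exists: if $v$ and $u$ are both sinks, the sink-node functional equation shows that any $R$ influencing $u$ with probability $1$ in $\calI$ still does so in $\calI\setminus\{v\}$ (and conversely, and if $v\in R$ then $R\setminus\{v\}$ already works), so the minimal set $R_u$ is invariant under the projection and $\psi^{(4)}$ in fact \emph{satisfies} Axiom~\ref{axiom:sink}. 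Moreover, even granting your claim, $\psi^{(4)}$ could not witness independence of Axiom~\ref{axiom:sink} without also verifying Axiom~\ref{axiom:bayesian} for it, which you do not do (and which fails, because the ``smallest influencing set'' operation is not linear in $P_{\calI}$). The upshot is that your plan never establishes independence of the Bayesian Influence Axiom at all: the correct role of $\psi^{(4)}$ is precisely as the witness for that axiom --- it satisfies Anonymity, Normalization, Independence of Sink Nodes, and Bargaining with Critical Sets, but violates Bayesian Influence. Reassigning $\psi^{(4)}$ to the Bayesian axiom, proving (rather than refuting) the invariance of $R_u$ under sink projection, and keeping $\psi^{(3)}$ as your sole witness for Axiom~\ref{axiom:sink} closes the gap and recovers the paper's argument.
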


\begin{proof}[of Theorem~\ref{thm:ShapleyCen}]
	The theorem is proved by combining 
	Lemmas~\ref{lem:shapleycen}, \ref{lem:unique} and~\ref{lem:independence}.
\end{proof}

\subsection{On Theorem~\ref{thm:SNICen}}

The proof of Theorem~\ref{thm:SNICen}, the axiomatic characterization of SNI centrality, follows the same
	structure as the proof of Theorem~\ref{thm:ShapleyCen}.
For soundness, it is easy to verify that SNI centrality satisfies Axioms \ref{axiom:bayesian}, \ref{axiom:uniformsink}, and \ref{axiom:criticalnodes}.
In particular, for the Bayesian Influence Axiom (Axiom~\ref{axiom:bayesian}), we can verify that 
\begin{align*}
&\sigma_{\calI_{\cB(\{\calI^{\eta}\}, \lambda)}}(\{v\})  = 
\sum_{T\subseteq V, v\in T}
P_{\calI_{\cB(\{\calI^{\eta}\}, \lambda)}}(\{v\}, T) \cdot |T| \\
& = \sum_{T\subseteq V, v\in T} \sum_{\eta=1}^{r} \lambda_\eta P_{\calI^\eta}(\{v\}, T) 
= \sum_{\eta=1}^{r} \lambda_\eta  \sum_{T\subseteq V, v\in T} P_{\calI^\eta}(\{v\}, T) \\
& = \sum_{\eta=1}^{r} \lambda_\eta  \sigma_{\calI^\eta}(\{v\}),
\end{align*}
and thus Bayesian Influence Axiom also holds for SNI centrality.

For completeness, since SNI centrality also satisfies the Bayesian Influence Axiom, we
	following the same proof structure as Lemma~\ref{lem:unique}, which
	utilizes the linear mapping lemma~\ref{lem:linearmap}.
All we need to show is that for all critical set instances $\calI_{R,U}$ as well as the null
	instance $\calI^N$, Axioms \ref{axiom:uniformsink}, and \ref{axiom:criticalnodes} dictate
	that their centrality measure is unique.
For the null instance, by the Uniform Sink Node Axiom (Axiom~\ref{axiom:uniformsink}), we know that
	all nodes have centrality value of $1$.
For the critical set instance $\calI_{R,U}$, again by the Uniform Sink Node Axiom, we know that
	all nodes in $V \setminus R$ are sink nodes and thus have centrality value of $1$.
Finally, Axiom~\ref{axiom:criticalnodes} uniquely determines the centrality value of all critical nodes
	in $R$.
Therefore,  Axioms \ref{axiom:bayesian}, \ref{axiom:uniformsink}, and \ref{axiom:criticalnodes}
	uniquely determines the centrality measure, which is SNI centrality.
This proves Theorem~\ref{thm:SNICen}.

For independence, the independence of Axiom~\ref{axiom:criticalnodes} can be shown by considering
	the uniform centrality measure where every node is assigned centrality of $1$.
We can see that the uniform centrality satisfies Axiom~\ref{axiom:bayesian} and~\ref{axiom:uniformsink}
	but not~\ref{axiom:criticalnodes}.
For the independence of Axiom~\ref{axiom:uniformsink}, we can see that Axiom~\ref{axiom:criticalnodes}
	only restricts the nodes in $R$ in the critical set instances $\calI_{R,U}$.
Then we can assign arbitrary values, for example $0$,  to nodes not in $R$ in $\calI_{R,U}$, and thus
	obtaining a centrality measure defined on the critical set instances that is consistent
	with Axiom~\ref{axiom:criticalnodes} but different from $\psi^{\SNI}$.
Next, we use the linearity (Eq.~\eqref{eq:linearinf}) to extend the centrality measure to
	arbitrary instances.
Bayesian Axiom (Axiom~\ref{axiom:bayesian}) holds because our way of linear extension.
Finally, for the independence of Axiom~\ref{axiom:bayesian}, we notice that with Axioms~\ref{axiom:uniformsink} and~\ref{axiom:criticalnodes}, the centrality for all critical
	set instances are uniquely determined, but we have the freedom to define other instances,
	as long as the sink nodes always have centrality of $1$.
This means we can easily find a centrality measure that is different from $\psi^{\SNI}$ but satisfies
	Axioms~\ref{axiom:uniformsink} and~\ref{axiom:criticalnodes}.
Therefore, Axioms~\ref{axiom:bayesian}, ~\ref{axiom:uniformsink} and~\ref{axiom:criticalnodes}
	are all independent of one another.

\section{Shapley Symmetry of Symmetric IC Models} \label{app:symmetricIC}

In this appendix section, 
  we formally prove the Shapley symmetry of 
  the symmetric IC model stated in Section  \ref{sec:Axioms}.
We restate it in the following theorem.
	
\begin{theorem}[Shapley Symmetry of Symmetric IC] \label{thm:symmetricIC}
In any symmetric IC model, the Shapley centrality of every node is the
	same.
\end{theorem}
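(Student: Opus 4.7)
The plan is to reduce the symmetric IC model to an undirected bond-percolation model and then compute the Shapley value realization-by-realization.

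First I would establish the reduction asserted in the discussion preceding the theorem: the influence spread function $\sigma$ of a symmetric IC instance (with $p_{u,v}=p_{v,u}$) coincides with the influence spread function of the undirected percolation instance in which each undirected edge $\{u,v\}$ is independently live with probability $p_{u,v}$ and a seed set activates exactly the nodes of the connected components that it meets. Because $\sigma(S)=\sum_{w\in V}\Pr[w \text{ reachable from } S]$ in both models, it suffices to check path-existence probabilities node by node. Although the dependence structures differ---the directed model treats $(u,v)$ and $(v,u)$ as independent Bernoullis, while the undirected model uses a single bit per pair---the two probabilities coincide, which can be verified by an inclusion--exclusion over simple $S$-$w$ paths with extensive cancellation. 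I expect this to be the main obstacle, because individual intersection probabilities $\Pr[P_i \text{ and } P_j \text{ both live}]$ genuinely differ between the two models (the directed model pays a separate factor $p_{u,v}$ for each direction traversed by the union of paths, while the undirected model pays only one), and the equality of the full alternating sums needs care.

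Second, with the reduction in place, I would invoke the Bayesian Influence Axiom (Axiom~\ref{axiom:bayesian})---which $\psi^{\Shapley}$ satisfies by Theorem~\ref{thm:ShapleyCen}---to decompose the (random) percolation instance $\calI$ as a convex combination of the deterministic instances $\calI_L$ indexed by realizations $L$ of the undirected live-edge graph, yielding
\[
\psi^{\Shapley}_v(\calI)=\E_{\bL}\bigl[\psi^{\Shapley}_v(\calI_{\bL})\bigr] \quad \text{for every } v\in V.
\]

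Third, I would compute $\psi^{\Shapley}_v(\calI_L)$ directly for each fixed $L$. In a deterministic undirected graph $L$, the influence-spread function takes the form $\sigma_L(S)=\sum_{C:\,C\cap S\ne\emptyset}|C|$, summing over the connected components $C$ of $L$. For a uniformly random permutation $\bpi$ of $V$, the marginal contribution of $v$ upon insertion equals $|C_L(v)|$ when $v$ precedes every other vertex of its own component $C_L(v)$ in $\bpi$, and is $0$ otherwise. Since $v$ is equally likely to occupy any of the $|C_L(v)|$ positions among the vertices of its component, the first-in-component event has probability $1/|C_L(v)|$, and hence $\psi^{\Shapley}_v(\calI_L)=|C_L(v)|\cdot(1/|C_L(v)|)=1$. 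Combining this with the previous step gives $\psi^{\Shapley}_v(\calI)=\E_{\bL}[1]=1$ for every $v\in V$, which proves the theorem (and in particular is consistent with the Normalization Axiom).
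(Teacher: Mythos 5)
Your steps (2) and (3) are correct and coincide with the paper's argument: the paper first proves a lemma on deterministic undirected instances that is exactly your component computation ($v$'s marginal contribution is $|C_L(v)|$ precisely when $v$ precedes every other vertex of its component in the random permutation, an event of probability $1/|C_L(v)|$, so its Shapley value on each fixed realization is $1$), and then takes the expectation over live-edge realizations exactly as you do. The gap is in step (1), which you yourself flag as the main obstacle and do not close. The inclusion--exclusion route is the wrong tool here: as you correctly observe, the pairwise terms $\Pr[P_i \wedge P_j]$ genuinely differ between the directed and undirected models whenever the union $P_i \cup P_j$ traverses some pair $\{u,v\}$ in both directions, so you would be left proving that two alternating sums agree despite disagreeing term by term --- a statement that is true, but for which no cancellation argument is actually supplied, and I do not see a direct one that avoids essentially re-deriving the coupling below.

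The paper closes this step with a short coupling via the \emph{principle of deferred decisions}: run the directed IC diffusion from $S$ and reveal the coin for a directed edge only at the moment its tail first becomes active. Because the process is progressive, for each pair $\{u,v\}$ at most one of the two directed coins is ever consulted --- the edge is crossed at most once, in one direction, and if the head is already active the outcome of the reverse coin is irrelevant --- and since $p_{u,v}=p_{v,u}$ the single consulted coin can be identified with one undirected Bernoulli$(p_{u,v})$ variable. This couples the directed diffusion with reachability in an undirected live-edge graph $\bar{\bL}$ and yields $\sigma(S)=\E_{\bar{\bL}}[|\Gamma(\bar{\bL},S)|]$ for every $S$ simultaneously, which is exactly what your steps (2) and (3) require. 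Replacing your inclusion--exclusion sketch with this argument completes the proof.
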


We first prove the following basic lemma.
\begin{lemma}[Deterministic Undirected Influence] \label{lem:undirected}
Consider an undirected graph $G=(V,E)$, and the IC instance 
  $\calI$ on $G$ in which
	for every undirected edge $(u,v)\in E$, $p_{u,v} = p_{v,u} =1$.
Then, $\psi^{\Shapley}_v({\calI}) = 1$, $\forall v \in V$,
\end{lemma}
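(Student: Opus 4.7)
The plan is to exploit the especially simple structure of $\sigma_\calI(\cdot)$ in this deterministic undirected setting. First I would note that, since every $p_{u,v} = 1$, the random live-edge graph $\bL$ coincides with $G$ almost surely, so for any seed set $S \subseteq V$ the activated set is deterministic: $\bI_\calI(S) = \Gamma(G, S)$ is exactly the union of connected components of $G$ that meet $S$. Writing the connected components of $G$ as $C_1, \ldots, C_k$ with sizes $c_i = |C_i|$, this gives
\[
\sigma_\calI(S) \;=\; \sum_{i:\, S \cap C_i \neq \emptyset} c_i.
\]

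Next I would fix an arbitrary $v \in V$, let $C$ denote the connected component containing $v$, and set $c = |C|$. Using the random-order formulation of the Shapley value, $\phi^{\Shapley}_v(\sigma_\calI) = \E_{\bpi \sim \Pi}[\sigma_\calI(S_{\bpi,v} \cup \{v\}) - \sigma_\calI(S_{\bpi,v})]$, the key observation is that the marginal contribution of $v$ on any prefix $S_{\bpi,v}$ collapses to a $\{0,c\}$-valued quantity: if $S_{\bpi,v} \cap C \neq \emptyset$ the component $C$ is already fully activated and the marginal is $0$, while if $S_{\bpi,v} \cap C = \emptyset$ adding $v$ activates all of $C$ and the marginal is exactly $c$. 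Equivalently, the marginal equals $c \cdot \I\{v \text{ precedes every other node of } C \text{ in } \bpi\}$.

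Finally, by symmetry among the $c$ nodes of $C$ under the uniformly random permutation $\bpi \sim \Pi$ of $V$, the probability that $v$ comes first among the nodes of $C$ is exactly $1/c$. Hence
\[
\psi^{\Shapley}_v(\calI) \;=\; \phi^{\Shapley}_v(\sigma_\calI) \;=\; \frac{1}{c} \cdot c \;=\; 1,
\]
which is the desired conclusion. There is no real obstacle in this argument; the only care needed is the clean identification of the marginal contribution as a $\{0,c\}$-indicator and the elementary symmetry computation giving probability $1/c$. As an alternative route, one could first invoke the IIA property from Section~\ref{sec:Axioms} to reduce to a single connected component where $\sigma_\calI(S) = c \cdot \I\{S \neq \emptyset\}$ for all $S \subseteq C$, and then apply the same one-line Shapley calculation; I would mention this reduction only if space permits, since the direct computation is already immediate.
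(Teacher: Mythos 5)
Your proof is correct and follows essentially the same argument as the paper's: the marginal contribution of $v$ is $|C|$ exactly when $v$ precedes all other nodes of its component $C$ in the random permutation (and $0$ otherwise), and since this event has probability $1/|C|$, the Shapley value is $1$. The explicit component-sum formula for $\sigma_\calI(S)$ and the IIA remark are harmless additions but not a different route.
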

\begin{proof}
Let $C$ be the connected component containing node $v$.
For any fixed permutation $\pi$ of $V$, 
  if some other node $u\in C$ appears before $v$ in $\pi$
 --- i.e. $u \in S_{\pi,v}$ --- then because
	all edges have influence probability $1$ in both directions,
	$u$ influences every node in $C$. 
For this permutation,  $v$ has no marginal influence: 
	$\sigma_{\calI}(S_{\pi,v}\cup \{v\}) - \sigma_{\calI}(S_{\pi,v}) = 0$.
If $v$ is the first node in $C$ that appears in $\pi$, then
	$v$ activates every node in $C$, and its marginal spread 
	is $|C|$.
The probability that $v$ appears first among all nodes in $C$ in a
	random permutation $\bpi$ is exactly $1/|C|$.
Therefore:
$$\psi^{\Shapley}_v({\calI}) = 
\E_{\bpi}[\sigma_{\calI}(S_{\bpi,v}\cup \{v\}) - \sigma_{\calI}(S_{\bpi,v})]
= 1/ |C| \cdot |C| = 1.$$
\end{proof}

\begin{proof}[of Theorem~\ref{thm:symmetricIC}]
We will use the following well-known but 
  important observation about symmetric IC models:
We can	use the following {\em undirected} 
  live-edge graph model to represent its influence spread.
For every edge $(u,v) \in E$, since we have $p_{u,v} = p_{v,u}$, 
	we sample an {\em undirected} edge $(u,v)$ with success probability
	$p_{u,v}$.
The resulting {\em undirected} random live-edge graph is denoted as $\bar{\bL}$.
For any seed set $S$, the propagation from the seed set
   can only pass through each edge $(u,v)$ at most once, 
   either from $u$ to $v$ or from $v$ to $u$, but never in both directions.
Therefore, we can apply the {\em Principle of Deferred Decision}
   and only decide the direction of the live edge $(u,v)$
   when the influence process does need to pass the edge.
Hence, the set of nodes reachable from $S$ in the undirected graph $\bar{\bL}$,
	namely $\Gamma(\bar{\bL}, S)$, is the set of activated nodes.
Thus, $\sigma_{\calI}(S) = \E_{\bar{\bL}}[|\Gamma(\bar{\bL}, S)|]$.

For each ``deferred'' realization $\bar{L}$ 
  of $\bar{\bL}$, the propagation on $\bar{L}$
  is the same as treating every edge in $\bar{L}$ having influence probability 
 $1$ in	both directions.
Then, by Lemma~\ref{lem:undirected}, 
      the Shapley centrality of every node on the fixed $\bar{L}$ is the same.
Finally, by taking expectation over the distribution of  $\bar{\bL}$, we have:
\begin{align*}
\psi^{\Shapley}_v({\calI}) 
 = & \E_{\bpi}[\sigma_{\calI}(S_{\bpi,v}\cup \{v\}) - \sigma_{\calI}(S_{\bpi,v})] \\
 = & \E_{\bpi}[\E_{\bar{\bL}}[|\Gamma(\bar{\bL}, S_{\bpi,v}\cup \{v\})| - 
	 |\Gamma(\bar{\bL}, S_{\bpi,v})|]] \\
 = & \E_{\bar{\bL}}[\E_{\bpi}[|\Gamma(\bar{\bL}, S_{\bpi,v}\cup \{v\})| - 
 |\Gamma(\bar{\bL}, S_{\bpi,v})|]] \\
 = & \E_{\bar{\bL}}[1] = 1.
\end{align*}
\end{proof}


\section{Analysis of {\ASVRR}} \label{app:thm1}

%
In this appendix, we provide a complete proof of Theorem~\ref{thm:ASVRR}, and briefly
extend the discussion to the proof of Theorem~\ref{thm:ASVRRsigma}.
In the discussion below, we will use $\bv \sim V$
to denote that $\bv$ is drawn uniformly at random from $V$.
We will use $\bpi \sim \Pi(V)$
to denote that $\bpi$ is a uniform random permutation
of $V$.
Let $\I\{\mathcal{E}\}$ be the indicator function for event $\mathcal{E}$.
Let $m =|E|$ and $n = |V|$. 

\subsection{Unbiasedness and Absolute Normalization of the Shapley Estimator of {\ASVRR}}

We first build connections between random RR sets and the Shapley value
computation.
The following is a straightforward proposition to verify:
\begin{proposition} \label{lem:orderfirst}
	Fix a subset $R \subseteq V$. 
	For any $v\in R$, $\Pr(R \cap S_{\bpi,v} = \emptyset ) = 1/|R|$, where $\bpi\sim \Pi(V)$
	and $S_{\bpi,v}$ is the subset of nodes preceding $v$ 	in 
	$\bpi$.
\end{proposition}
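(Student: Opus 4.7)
The plan is to reduce this probability computation to a simple symmetry argument about which element of $R$ comes first in a uniform random permutation. The key observation is that the event $\{R \cap S_{\bpi,v} = \emptyset\}$ holds if and only if no element of $R$ precedes $v$ in $\bpi$, which, since $v \in R$, is equivalent to $v$ being the first element of $R$ to appear in the permutation $\bpi$.

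With that reformulation, I would argue as follows. Let $\bpi \sim \Pi(V)$ and consider the induced relative order of the elements of $R$ inside $\bpi$. A standard fact about uniform random permutations is that the restriction of $\bpi$ to any fixed subset $R \subseteq V$ is itself a uniform random permutation of $R$. Consequently, for every $v \in R$, the probability that $v$ occupies the first position in this induced order is exactly $1/|R|$. Combining this with the reformulation above yields $\Pr(R \cap S_{\bpi,v} = \emptyset) = \Pr(v \text{ is first among } R \text{ in } \bpi) = 1/|R|$.

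Alternatively, one can give a more explicit counting argument: there are $n!$ total permutations; the permutations in which $v$ appears before every other element of $R$ can be enumerated by first choosing the positions of the $|R|$ elements of $R$ in the permutation (in $\binom{n}{|R|}$ ways), then placing $v$ in the smallest of these positions and arranging the remaining $|R|-1$ elements of $R$ in the other $|R|-1$ chosen positions (in $(|R|-1)!$ ways), and finally arranging the $n-|R|$ elements of $V \setminus R$ in the remaining positions (in $(n-|R|)!$ ways). The ratio evaluates to $1/|R|$, confirming the result.

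The statement is labeled as a proposition that is ``straightforward to verify,'' and indeed there is no genuine obstacle here; the only subtlety is to state the symmetry cleanly so that no conditioning on positions is left implicit. I expect a two or three line proof suffices.
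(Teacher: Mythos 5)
Your proof is correct and follows essentially the same approach as the paper: both reformulate the event as ``$v$ is first among $R$ in $\bpi$'' and conclude by the symmetry of a uniform random permutation restricted to $R$. The extra counting argument you supply is a fine (if unnecessary) confirmation of the same fact.
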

\begin{proof}
	The event $R \cap S_{\bpi,v} = \emptyset $ 
	is equivalent to 
	$\bpi$ placing $v$ ahead of other nodes in $R$.
	Because $\bpi \sim \Pi(V)$, 
	this event happens with probability exactly $1/|R|$.
\end{proof}

\begin{proposition}\label{RRSetLiveGraph}
	A random RR set $\bR$ is equivalently generated by first
	(a) generating a random live-edge graph  $\bL$, and
	(b) selecting  $\bv\sim V$.
	Then, $\bR$ is the set of nodes that can reach $\bv$ in $\bL$.
\end{proposition}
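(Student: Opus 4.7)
The plan is to establish the distributional equivalence of the two generative processes by a standard principle-of-deferred-decisions coupling: both processes perform the same reverse breadth-first search (BFS) rooted at $\bv$ through the live-edge graph, and they differ only in \emph{when} the independent triggering sets are sampled.

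First, I would make both processes explicit in terms of the triggering sets $\{\bT(v)\}_{v\in V}$, which by assumption of the triggering model are mutually independent. In process (b), all $\bT(v)$ are drawn up front, the live-edge graph is $\bL = (V,\{(u,v): u\in \bT(v)\})$, and $\bv\sim V$ is chosen independently; the set of nodes reaching $\bv$ in $\bL$ is the smallest set $R\ni \bv$ satisfying $\bT(u)\subseteq R$ for every $u\in R$, equivalently, the fixed point of the backward BFS that starts at $\bv$ and expands any discovered $u$ by adding all $w$ with $w\in \bT(u)$. In process (a), one starts from $\bR=\{\bv\}$ and iteratively draws $\bT(\bu)$ for each $\bu\in\bR$ not yet assigned a triggering set, adding $\bT(\bu)$ to $\bR$; this is exactly the same backward BFS, except that $\bT(v)$ is drawn only when (and if) $v$ is encountered.

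Next, I would couple the two processes: sample one global collection $\{\bT(v)\}_{v\in V}$ of mutually independent triggering sets and a root $\bv\sim V$, and run both algorithms on this shared randomness. Under this coupling, the two procedures produce identical sets $\bR$ at every iteration, since both perform the same deterministic backward BFS on the same realized triggering sets. The only difference is that process (a) never consults $\bT(v)$ for $v$ outside the BFS's reachable set, while process (b) draws them but never uses them. Since the unconsulted triggering sets are independent of the consulted ones and of $\bv$, marginalizing them out does not change the distribution of $\bR$. Hence the two processes yield identically distributed RR sets.

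The argument is essentially bookkeeping, so I do not anticipate a substantive obstacle; the only care needed is to justify the deferred-decision step, i.e., that the BFS's reachable vertex set is measurable with respect to only the triggering sets of encountered vertices, which follows because at each BFS step the set of ``newly encountered'' vertices is determined by the triggering sets of the currently encountered vertices, inductively. This justifies reading off $\bR$ from $(\bL,\bv)$ in process (b) as the set of nodes with a directed path to $\bv$ through live edges, completing the equivalence.
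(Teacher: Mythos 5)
Your proof is correct, and it takes the approach the paper intends: the paper states this proposition without proof (treating the equivalence as known from the influence-maximization literature), and your deferred-decisions coupling---running the same backward BFS on a shared realization of the mutually independent triggering sets, then marginalizing out the unconsulted ones---is precisely the standard justification, which the paper itself invokes by name for an analogous argument about symmetric IC models. The measurability point you flag (that the set of nodes reaching $\bv$ in $\bL$ depends only on the triggering sets of the nodes in that set) is indeed the one step that needs care, and you handle it correctly.
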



\begin{lemma}[Marginal Contribution] \label{lem:margin}
	Let $\bR$ be a random RR set.
	For any $S\subseteq V$ and $v\in V\setminus S$:
	\begin{eqnarray}
	\sigma(S) &  = &  n\cdot \Pr(S \cap \bR \ne \emptyset),  \label{eq:rrsetinf}\\
	\sigma(S\cup \{v\}) - \sigma(S)&  = & n \cdot \Pr(v\in \bR \wedge S\cap \bR = \emptyset).
	\end{eqnarray}
\end{lemma}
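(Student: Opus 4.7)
The plan is to prove both identities by passing through the equivalent live-edge characterization of a random RR set given in Proposition~\ref{RRSetLiveGraph}, so that both sides of each equation can be expressed as expectations over the same joint distribution of $(\bL,\bv)$, where $\bL$ is a random live-edge graph and $\bv \sim V$ is uniform. The key observation that I will exploit throughout is the directional duality: in a fixed live-edge graph $L$, a node $v$ is reachable from $S$ (i.e. $v \in \Gamma(L,S)$) if and only if some node of $S$ belongs to the reverse-reachable set $R$ of $v$ in $L$, i.e. $S \cap R \neq \emptyset$.

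First I would establish Eq.~\eqref{eq:rrsetinf}. Starting from the live-edge formulation of the triggering model, $\sigma(S) = \E_{\bL}[|\Gamma(\bL,S)|]$. Writing $|\Gamma(\bL,S)| = \sum_{v\in V}\I\{v \in \Gamma(\bL,S)\}$ and using $\bv\sim V$, I have $|\Gamma(\bL,S)|/n = \Pr_{\bv}(\bv \in \Gamma(\bL,S) \mid \bL)$. By the reachability duality, $\bv \in \Gamma(\bL,S)$ is exactly the event $S \cap \bR \neq \emptyset$ under the coupling of Proposition~\ref{RRSetLiveGraph}. Taking the outer expectation over $\bL$ and multiplying through by $n$ yields $\sigma(S) = n \cdot \Pr(S\cap \bR \neq \emptyset)$.

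For the second identity, I would simply apply the first identity to both $S\cup\{v\}$ and $S$ and subtract:
\[
\sigma(S\cup\{v\}) - \sigma(S) = n\bigl[\Pr((S\cup\{v\})\cap \bR \neq \emptyset) - \Pr(S\cap \bR \neq \emptyset)\bigr].
\]
Since $\{S \cap \bR \neq \emptyset\} \subseteq \{(S\cup\{v\})\cap \bR \neq \emptyset\}$, the difference of probabilities equals $\Pr((S\cup\{v\})\cap \bR \neq \emptyset \wedge S\cap \bR = \emptyset)$. On the event $S\cap \bR = \emptyset$, the condition $(S\cup\{v\})\cap \bR \neq \emptyset$ is equivalent to $v \in \bR$, giving the desired $n\cdot \Pr(v\in \bR \wedge S \cap \bR = \emptyset)$.

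I do not anticipate a significant obstacle here: the only non-trivial ingredient is the reachability duality between the forward influence process and the reverse reachability process, which is already packaged in Proposition~\ref{RRSetLiveGraph}. The remaining steps are elementary manipulations of indicator functions and probabilities. The proof is essentially a clean rewriting, and its main value is as the stepping stone towards the Shapley Centrality Identity (Lemma~\ref{lem:spexpMainbody}), where one will combine the marginal contribution formula with Proposition~\ref{lem:orderfirst} on the probability that $v$ is the first element of $\bR$ under a random permutation.
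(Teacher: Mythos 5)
Your proof is correct and follows essentially the same route as the paper's: both identities rest on the live-edge formulation $\sigma(S)=\E_{\bL}[|\Gamma(\bL,S)|]$ together with the reachability duality packaged in Proposition~\ref{RRSetLiveGraph}. The only (harmless) difference is that for the second identity the paper directly computes $\E_{\bL}[|\Gamma(\bL,S\cup\{v\})\setminus\Gamma(\bL,S)|]$ and identifies that event with $\{v\in\bR \wedge S\cap\bR=\emptyset\}$, whereas you obtain it by subtracting two instances of the first identity and using monotonicity of the events $\{S\cap\bR\neq\emptyset\}\subseteq\{(S\cup\{v\})\cap\bR\neq\emptyset\}$, which is an equally valid and slightly more economical derivation.
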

\begin{proof}
	Let $\bL$ be a random live-edge graph generated by the triggering model (see Section~\ref{sec:infmodel}).
	Recall that $\Gamma(L, S)$ denote the set of nodes in graph $L$ 
	reachable from set $S$.
	Then: 
	\begin{align*}
	\sigma(S) = & \E_{\bL}[|\Gamma(\bL, S)|] \\
	= & \E_{\bL} \left[\sum_{u\in V} \I\{u \in \Gamma(\bL, S) \}\right] \\
	= &  n \cdot \E_{\bL} \left[ \sum_{u\in V} \frac{1}{n} \cdot \I\{u \in \Gamma(\bL, S) \}\right] \\
	= &   n \cdot \E_{\bL}\left[ \E_{\bu\sim V} [\I\{\bu \in \Gamma(\bL, S) \} ]   \right] \\
	= &  n \cdot \Pr_{\bL,\bu\sim V}\{\bu \in \Gamma(\bL, S) \},
	\end{align*}
	Note that for any function $f$, and  random variables $\bx, \by$:
	$$\E_{\bx}\left[ \E_{\by} [f(\bx, \by) ]   \right] = \E \left[ \E [f(\bx, \by) \mid \bx = x]   \right].$$
	In other words, we can evaluate the expectation as the following:
	(1) fix the value of random variable $\bx$ to $x$ first, then 
	(2) take the conditional expectation of $f(\bx, \by)$ conditioned upon $\bx = x$,  and finally
	(3) take the expectation according to $\bx$'s distribution.
	
	By Proposition~\ref{RRSetLiveGraph}, event $\bu \in \Gamma(\bL, S) $ is the same as the event $S \cap \bR \ne \emptyset$.
	Hence we have $\sigma(S) = n \cdot \Pr(S\cap \bR \ne \emptyset)$.
	
	Similarly,
	\begin{align*}
	& \sigma(S\cup \{v\}) - \sigma(S) \\
	& \quad =  \E_{\bL}[|\Gamma(\bL,S\cup\{v\})\setminus \Gamma(\bL, S)|] \\
	& \quad = \E_{\bL} \left[\sum_{u\in V} \I\{u \in \Gamma(\bL,S\cup\{v\})\setminus \Gamma(\bL, S) \}\right] \\
	& \quad =  n \cdot \E_{\bL} \left[ \sum_{u\in V} \frac{1}{n} \cdot \I\{u \in \Gamma(\bL,S\cup\{v\})\setminus \Gamma(\bL, S) \}\right] \\
	& \quad =  n \cdot \E_{\bL}\left[ \E_{\bu\sim V} [\I\{\bu \in \Gamma(\bL,S\cup\{v\})\setminus \Gamma(\bL, S) \} ]   \right] \\
	& \quad = n \cdot \Pr_{\bL,\bu\sim V}\{\bu \in \Gamma(\bL,S\cup\{v\})\setminus \Gamma(\bL, S) \}.
	\end{align*}
	By a similar argument, 
	event $\bu \in \Gamma(\bL,S\cup\{v\})\setminus \Gamma(\bL, S) $ is the same as the event $v \in \bR \wedge S \cap \bR = \emptyset$.
	Hence we have $\sigma(S\cup \{v\}) - \sigma(S) = n \cdot \Pr(v\in \bR \wedge S\cap \bR = \emptyset)$.
\end{proof}

For a fixed subset $R\subseteq V$ and a node $v\in V$, define:
\[
X_R(v) = \left\{ \begin{array}{lr}
0 & \mbox{if $v\not\in R$;}\\
\frac{1}{|R|} & \mbox{if $v\in R$.}
\end{array}
\right.
\]
If $\bR$ is a random RR set, then $X_{\bR}(v)$ is a random variable.
The following is a restatement of Lemma~\ref{lem:spexpMainbody} using
the $X_{\bR}(v)$ random variable.
\begin{lemma}[Shapley Value Identity] \label{lem:spexp}
	Let $\bR$ be a random RR set.
	Then, for all $v\in V$, the Shapley centrality of $v$ is
	$\psi_v = n\cdot \E_{\bR}[X_{\bR}(v)]$.
\end{lemma}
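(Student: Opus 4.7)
\medskip

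\noindent\textbf{Proof proposal.}
The plan is to derive the identity by chaining the two tools already established in the excerpt: the marginal-contribution formula in Lemma~\ref{lem:margin} (which rewrites each Shapley increment in terms of random RR sets) and the ``first-in-$R$'' identity in Proposition~\ref{lem:orderfirst} (which evaluates the probability that $v$ precedes every other element of a fixed set in a uniform random permutation).

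First I would begin from the permutation form of the Shapley value,
\[
\psi_v \;=\; \E_{\bpi\sim \Pi(V)}\bigl[\sigma(S_{\bpi,v}\cup\{v\})-\sigma(S_{\bpi,v})\bigr].
\]
Conditioning on the value of $\bpi$, the set $S_{\bpi,v}$ becomes deterministic, so I can apply the marginal contribution identity from Lemma~\ref{lem:margin} (second equality) to each realization of $\bpi$. This replaces the marginal influence by a probability over an independent random RR set $\bR$:
\[
\sigma(S_{\bpi,v}\cup\{v\})-\sigma(S_{\bpi,v}) \;=\; n\cdot \Pr_{\bR}\bigl(v\in \bR \;\wedge\; S_{\bpi,v}\cap \bR=\emptyset \,\big|\, \bpi\bigr).
\]
Taking expectation over $\bpi$ and using the fact that $\bR$ is drawn independently of $\bpi$, the two randomizations merge into a single joint probability, yielding $\psi_v = n\cdot\Pr_{\bpi,\bR}(v\in\bR \wedge S_{\bpi,v}\cap \bR=\emptyset)$.

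Next I would condition on $\bR$ and evaluate the inner probability over $\bpi$. On the event $v\in\bR$, the event $\{S_{\bpi,v}\cap \bR=\emptyset\}$ is precisely the event that $v$ appears before every other element of $\bR$ in $\bpi$, which by Proposition~\ref{lem:orderfirst} has probability $1/|\bR|$. On the event $v\notin\bR$, the joint event is impossible. Hence
\[
\Pr_{\bpi,\bR}\bigl(v\in\bR \wedge S_{\bpi,v}\cap \bR=\emptyset\bigr)
\;=\; \E_{\bR}\!\left[\tfrac{\I\{v\in\bR\}}{|\bR|}\right]
\;=\; \E_{\bR}[X_{\bR}(v)],
\]
giving $\psi_v = n\cdot \E_{\bR}[X_{\bR}(v)]$ as desired.

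I do not anticipate a serious obstacle: the argument is essentially a two-line reduction once Lemma~\ref{lem:margin} and Proposition~\ref{lem:orderfirst} are in hand. The only subtlety worth writing carefully is the independence of $\bpi$ and $\bR$ that justifies swapping the order of expectations (so that ``fix $\bR$, then ask the permutation question'' is legitimate), and the observation that the event $S_{\bpi,v}\cap\bR=\emptyset$ under $v\in\bR$ is exactly ``$v$ is ordered first among the elements of $\bR$,'' matching the hypothesis of Proposition~\ref{lem:orderfirst} applied to $R=\bR$.
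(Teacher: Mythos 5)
Your proposal is correct and follows essentially the same route as the paper's proof: both start from the permutation form of the Shapley value, invoke Lemma~\ref{lem:margin} to rewrite each marginal contribution as $n\cdot\Pr(v\in\bR \wedge S_{\bpi,v}\cap\bR=\emptyset)$, swap the order of the two independent expectations, and then apply Proposition~\ref{lem:orderfirst} conditioned on $\bR$ to obtain $\E_{\bpi}[\I\{v\in\bR \wedge S_{\bpi,v}\cap\bR=\emptyset\}]=X_{\bR}(v)$. The subtleties you flag (independence of $\bpi$ and $\bR$, and the identification of the joint event with ``$v$ is first in $\bR$'') are exactly the points the paper's proof relies on.
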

\begin{proof}
	Let $\bR$ be a random RR set.
	We have
	\begin{align*}
	\psi_v = & \E_{\bpi}[\sigma(S_{\bpi,v} \cup \{v\}) - \sigma(S_{\bpi,v})] & \mbox{\{by Eq.~\eqref{eq:sprandom}\} }\\
	= & \E_{\bpi}[n \cdot \Pr(v\in \bR \wedge S_{\bpi,v}\cap \bR = \emptyset)] & \mbox{\{by Lemma~\ref{lem:margin}\} } \\
	= & n \cdot \E_{\bpi}[\E_{\bR} [\I\{v\in \bR \wedge S_{\bpi,v}\cap \bR = \emptyset\} ] ] \\
	= & n \cdot \E_{\bR}[\E_{\bpi} [\I\{v\in \bR \wedge S_{\bpi,v}\cap \bR = \emptyset\} ] ].
	\end{align*}
	By Proposition~\ref{lem:orderfirst}, for any realization of $\bR$:
	\[
	\E_{\bpi\sim \Pi(V)} [\I\{v\in \bR \wedge S_{\bpi,v}\cap \bR = \emptyset\} ] = 
	\left\{ 
	\begin{array}{lr}
	0 & \mbox{if $v\not\in \bR$, }\\
	\frac{1}{|\bR|} & \mbox{if $v\in \bR$.}
	\end{array}
	\right.
	\]
	This means that $\E_{\bpi\sim\Pi(V)} [\I\{v\in \bR \wedge S_{\bpi,v}\cap \bR = \emptyset\} ]
	= X_{\bR}(v)$.
	Therefore, $\psi_v = n\cdot \E_{\bR}[X_{\bR}(v)]$.
\end{proof}

After the above preparation, we are ready to show the unbiasedness of
our Shapley estimator.

\begin{lemma}[Unbiased Estimator] \label{lem:unbiassp}
	For any $v\in V$, the estimated value $\epsi_v$ returned by 
	Algorithm~\ref{alg:rrshapleynew} satisfies $\E[\epsi_v] = \psi_v$, where
	the expectation is taken over all randomness used in Algorithm {\ASVRR}.
\end{lemma}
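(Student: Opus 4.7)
The plan is to leverage Lemma~\ref{lem:spexp}, which already identifies $\psi_v$ as $n$ times the expectation of the single-RR-set statistic $X_{\bR}(v) = \I\{v\in \bR\}/|\bR|$. So, at the core, each RR set produced in Phase~2 yields an unbiased estimate of $\psi_v/n$, and averaging $\btheta$ such estimates and multiplying by $n$ should give unbiasedness of $\epsi_v$. The only subtlety is that $\btheta$ is itself random, being determined in Phase~1.

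Concretely, I would proceed as follows. First, I would observe that Phase~2 (lines~\ref{line:resetest}--\ref{line:adjustnew2}) draws a fresh sequence of RR sets $\bR^{(1)},\ldots,\bR^{(\btheta)}$, independent of all randomness used in Phase~1; this is precisely the design choice mentioned in the text that distinguishes {\ASVRR} from {\IMM}. Second, I would condition on $\btheta$: given $\btheta = \theta$, the $\theta$ RR sets $\bR^{(1)},\ldots,\bR^{(\theta)}$ are i.i.d.\ random RR sets, independent of $\btheta$. Then by construction $\est_v = \sum_{j=1}^{\theta} X_{\bR^{(j)}}(v)$, so by linearity of expectation and Lemma~\ref{lem:spexp},
\[
\E[\est_v \mid \btheta = \theta] = \theta \cdot \E_{\bR}[X_{\bR}(v)] = \theta \cdot \psi_v / n.
\]
Therefore $\E[\epsi_v \mid \btheta = \theta] = n \cdot (\theta \cdot \psi_v/n)/\theta = \psi_v$, which is constant in $\theta$. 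Applying the tower property yields $\E[\epsi_v] = \psi_v$, giving unbiasedness.

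I do not foresee a serious obstacle here, because once one observes that Phase~2's RR sets are drawn independently of $\btheta$, unbiasedness is immediate from Lemma~\ref{lem:spexp} and linearity of expectation. The only step that requires care is explicitly noting the independence between $\btheta$ and the Phase~2 samples, which is what justifies conditioning and pulling $\btheta$ out of the expectation. A brief remark could also be included that absolute normalization $\sum_{v\in V}\epsi_v = n$ holds in every run because $\sum_{v\in \bR} 1/|\bR| = 1$ for each RR set, so $\sum_v \est_v = \btheta$ deterministically and $\sum_v \epsi_v = n \cdot \btheta/\btheta = n$; this is not strictly needed for the lemma but was part of the theorem's guarantee.
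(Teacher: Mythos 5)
Your proof is correct and follows essentially the same route as the paper's: condition on $\btheta=\theta$, write $\est_v$ as a sum of i.i.d.\ terms $X_{\bR_j}(v)$, invoke Lemma~\ref{lem:spexp} and linearity of expectation, and conclude via the tower property since the conditional expectation is constant in $\theta$. Your explicit remark that Phase~2's RR sets are drawn independently of Phase~1 (and hence of $\btheta$) is exactly the justification the paper relies on, so there is nothing to add.
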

%
%
%
\begin{proof}
	In Phase 2 of Algorithm {\ASVRR}, when $\btheta$ is fixed to 
	$\theta$,
	the algorithm generates $\theta$ independent random RR sets $\bR_1, \ldots, \bR_{\theta}$.
	Let $\est^\theta_v$ be the value of $\est_v$ at the end of the for-loop in Phase 2,
	when $\btheta = \theta$.
	It is straightforward to see that $\est^\theta_v = \sum_{i=1}^\theta X_{\bR_i}(v)$.
	Therefore,  by Lemma~\ref{lem:spexp}:
	$$\E[\epsi_v \mid \btheta = \theta] = \E[n\cdot \est^\theta_v / \theta]
	= \E[n\cdot \sum_{i=1}^\theta X_{\bR_i}(v) / \theta]
	= \psi_v.$$
	Since this is true for every fixed $\theta$, we have $\E[\epsi_v] = \psi_v$.
\end{proof}

\begin{lemma}[Absolute Normalization] \label{lem:absnorm}
	In every run of {\ASVRR}, we have $\sum_{v\in V} \epsi_v = n$.
\end{lemma}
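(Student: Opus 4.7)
The plan is to unfold the definition of $\epsi_v$ in terms of the RR-set updates performed in Phase 2 of Algorithm~{\ASVRR}, and then swap the order of summation to observe a telescoping-style cancellation with $|\bR|$.

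First, I would fix the (random) value $\btheta$ computed at the end of Phase 1 and let $\bR_1,\bR_2,\ldots,\bR_{\btheta}$ denote the independent RR sets generated in Phase 2. Immediately before line~\ref{line:adjustnew2}, the inner loop (line~\ref{line:estnew2}) guarantees the identity
\[
\est_v \;=\; \sum_{i=1}^{\btheta} \frac{1}{|\bR_i|}\,\I\{v\in \bR_i\},
\qquad \forall v\in V,
\]
since the only increment to $\est_v$ is $1/|\bR_i|$ whenever $v\in\bR_i$.

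Next, summing over $v\in V$ and exchanging the order of summation gives
\[
\sum_{v\in V} \est_v
\;=\; \sum_{i=1}^{\btheta} \frac{1}{|\bR_i|} \sum_{v\in V} \I\{v\in \bR_i\}
\;=\; \sum_{i=1}^{\btheta} \frac{|\bR_i|}{|\bR_i|}
\;=\; \btheta.
\]
Note that each $\bR_i$ contains its root and hence is non-empty, so the division by $|\bR_i|$ is well-defined. Substituting into line~\ref{line:adjustnew2}, we obtain
\[
\sum_{v\in V} \epsi_v \;=\; \sum_{v\in V} \frac{n \cdot \est_v}{\btheta}
\;=\; \frac{n \cdot \btheta}{\btheta} \;=\; n,
\]
which holds deterministically in every run, regardless of the realized value of $\btheta$ or of the RR sets. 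There is no real obstacle here: the only subtlety worth mentioning is the non-emptiness of every RR set (ensured by Step~(1) of the RR-set construction which always adds the random root), and the fact that Phase~1 always terminates with $\btheta$ well-defined and finite so that the final rescaling in line~\ref{line:adjustnew2} is valid. Once these are noted, the computation above gives the absolute normalization conclusion directly.
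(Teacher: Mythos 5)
Your proof is correct and follows essentially the same argument as the paper: each RR set generated in Phase~2 contributes a total increment of exactly $1$ to $\sum_{v}\est_v$, so the sum equals $\btheta$, and the rescaling in line~\ref{line:adjustnew2} yields $n$ deterministically. The only difference is that you write out the summation exchange explicitly, which is a harmless elaboration.
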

\begin{proof}
	According to line~\ref{line:estnew2} of the algorithm, for every RR set $\bR$ generated in Phase 2,
	each node $\bu \in \bR$ increases its estimate $\est_{\bu}$	by $1/|\bR|$ and no other nodes
	increase their estimates.
	Thus the total increase in the estimates of all nodes for each $\bR$ is exactly $1$.
	Then after generating $\btheta$ RR sets, the sum of estimates is $\btheta$.
	According to line~\ref{line:adjustnew2}, we conclude that  $\sum_{v\in V} \epsi_v = n$.
\end{proof}

\subsection{Robustness of the Shapley Estimator of {\ASVRR}} \label{app:robustness}

The analysis on the robustness and time complexity is similar to that of {\IMM} in
\cite{tang15}, but since we are working on Shapley values while {\IMM} is for influence
maximization, there are also a number of differences.
In what follows, we provide an indepdent and complete proof for our algorithm, borrowing some
ideas from \cite{tang15}.

We will use the following basic Chernoff bounds~\cite{MU05,CL06} in our analysis.

\begin{fact}[Chernoff Bounds] \label{fact:chernoff}
	Let $\bY$ be the sum of $t$ i.i.d. random variables with mean $\mu$ and value range
	$[0,1]$. For any $\delta > 0$, we have:
	\begin{align*}
	\Pr\{\bY - t \mu \ge \delta \cdot t\mu \} &\le \exp\left( - \frac{\delta^2}{2+\frac{2}{3}\delta}t\mu \right).
	\end{align*}
	For any $0 < \delta < 1$, we have
	\begin{align*}
	\Pr\{\bY - t \mu \le - \delta \cdot t\mu \} &\le \exp\left( - \frac{\delta^2}{2}t\mu  \right).
	\end{align*}
\end{fact}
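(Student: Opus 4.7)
The plan is to prove both tails by the standard Chernoff moment-generating-function (MGF) technique, and then relax the resulting classical multiplicative-form bound to the cleaner exponential form stated in the fact. First I would apply Markov's inequality to $e^{s\bY}$ for $s>0$: $\Pr\{\bY \ge (1+\delta)t\mu\} \le \E[e^{s\bY}]/e^{s(1+\delta)t\mu}$. By i.i.d.\ independence, $\E[e^{s\bY}] = (\E[e^{s\bY_1}])^t$, and convexity of $x \mapsto e^{sx}$ on $[0,1]$ gives $e^{sx} \le 1 + x(e^s-1)$ for $x \in [0,1]$. Taking expectations yields $\E[e^{s\bY_1}] \le 1 + \mu(e^s-1) \le \exp(\mu(e^s-1))$, where the last step uses $1+z \le e^z$. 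Combining and choosing the optimal $s = \ln(1+\delta) > 0$ yields the classical multiplicative Chernoff bound $\Pr\{\bY \ge (1+\delta)t\mu\} \le \bigl(e^\delta/(1+\delta)^{1+\delta}\bigr)^{t\mu}$.

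To arrive at the exact form stated in the fact, the remaining task is to verify the analytic inequality $(1+\delta)\ln(1+\delta) - \delta \ge \delta^2/(2+\tfrac{2}{3}\delta)$ for all $\delta \ge 0$; taking logarithms of the classical bound and applying this inequality gives the first claim directly. I expect this analytic step to be the main obstacle, since the particular denominator $2+\tfrac{2}{3}\delta$ is precisely calibrated to the third-order Taylor expansion of $(1+\delta)\ln(1+\delta)$. The plan is to define $f(\delta) := (1+\delta)\ln(1+\delta) - \delta - \delta^2/(2+\tfrac{2}{3}\delta)$, observe $f(0)=0$, and check $f'(\delta) \ge 0$ on $[0,\infty)$ by direct differentiation; after clearing denominators this reduces to a polynomial/logarithm inequality in $\delta$ that can be shown nonnegative by one further differentiation, so the argument, while fiddly, is self-contained.

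The lower tail proceeds analogously: apply Markov to $e^{-s\bY}$ with $s>0$, use the same convexity-plus-$1+z\le e^z$ bound on each factor, and optimize at $s = -\ln(1-\delta)$, yielding $\Pr\{\bY \le (1-\delta)t\mu\} \le \bigl(e^{-\delta}/(1-\delta)^{1-\delta}\bigr)^{t\mu}$ for $\delta \in (0,1)$. To obtain $\exp(-\delta^2 t\mu/2)$, I would verify the simpler companion inequality $(1-\delta)\ln(1-\delta) + \delta \ge \delta^2/2$, which is strictly easier because the denominator is constant: expanding $-\ln(1-\delta) = \sum_{k\ge 1}\delta^k/k$ gives $(1-\delta)\ln(1-\delta)+\delta = \sum_{k\ge 2}\delta^k/(k(k-1))$, whose $k=2$ term alone equals $\delta^2/2$ and all remaining terms are nonnegative on $(0,1)$. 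Combining with the classical bound and exponentiating completes the second claim.
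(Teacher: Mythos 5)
Your proposal is correct and follows essentially the same route as the paper: Markov's inequality applied to the moment generating function, the convexity bound $e^{sx}\le 1+x(e^s-1)$ on $[0,1]$ (the paper packages this as a comparison with a Bernoulli variable of the same mean), optimization at $s=\ln(1+\delta)$ resp.\ $s=-\ln(1-\delta)$ to get the classical multiplicative bounds, and then the two analytic inequalities relaxing those to the stated exponential forms. The only cosmetic differences are that the paper proves the more general martingale version (with conditional expectations) and verifies the lower-tail inequality $(1-\delta)\ln(1-\delta)+\delta\ge\delta^2/2$ by differentiating twice, whereas your power-series identity $\sum_{k\ge 2}\delta^k/(k(k-1))$ is a slightly cleaner way to see the same thing.
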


Let $\psi^{(k)}$ be the $k$-th largest value among all
shapley values in $\{\psi_v\}_{v\in V}$,
as defined in Theorem~\ref{thm:ASVRR}.
The following lemma provides a condition for robust Shapley value estimation.

\begin{lemma} \label{lem:suftheta}
	At the end of  Phase 2 of Algorithm {\ASVRR},
	\[
	\left\{ 
	\begin{array}{lr}
	|\epsi_v - \psi_v| \le \varepsilon \psi_v & \forall v\in V \mbox{ with } \psi_v > \psi^{(k)},\\
	|\epsi_v - \psi_v| \le \varepsilon \psi^{(k)} & \forall v\in V \mbox{ with } \psi_v \le \psi^{(k)}.
	\end{array}
	\right.
	\]
	holds with probability at least $1 - \frac{1}{2n^\ell}$, 
	provided that the realization  $\theta$ of $\btheta$ satisfies:
	\begin{equation} \label{eq:thetabound}
	\theta \ge  \frac{n ((\ell+1)\ln n + \ln 4)(2+\frac{2}{3}\varepsilon) }{\varepsilon^2 \psi^{(k)} }.
	\end{equation}
\end{lemma}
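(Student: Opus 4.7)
The plan is to apply a multiplicative Chernoff bound to each node $v\in V$ separately and then union-bound. By Lemma~\ref{lem:spexp} and the fact that in Phase~2 the $\btheta$ RR sets are drawn independently, conditional on the realization $\btheta=\theta$ the estimate $\est_v = \sum_{i=1}^{\theta} X_{\bR_i}(v)$ is a sum of i.i.d.\ $[0,1]$-valued random variables with mean $\mu_v := \psi_v/n$, and $\epsi_v = n\cdot \est_v/\theta$. The target bound in the lemma is equivalent to $|\est_v-\theta\mu_v| \le \delta_v\cdot \theta\mu_v$ for the node-dependent deviation
\[
\delta_v \;:=\; \varepsilon\cdot \max\!\left(1,\;\psi^{(k)}/\psi_v\right),
\]
so that $\delta_v\ge \varepsilon$ and $\delta_v\,\mu_v = \max(\varepsilon\psi_v,\,\varepsilon\psi^{(k)})/n$ in both regimes.

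Next I would invoke Fact~\ref{fact:chernoff} with this $\delta_v$. For $\delta_v\ge 1$ the lower-tail bad event is vacuous since $\est_v\ge 0$; combining upper and lower tails the per-node failure probability is at most $2\exp\!\bigl(-\tfrac{\delta_v^2\theta\mu_v}{2+\tfrac{2}{3}\delta_v}\bigr)$. The crux is to show that, under the hypothesis~\eqref{eq:thetabound}, this quantity is at most $1/(2n^{\ell+1})$; a union bound over the $n$ nodes then yields the advertised failure probability $1/(2n^\ell)$.

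The algebraic estimate splits into two cases. When $\psi_v>\psi^{(k)}$, we have $\delta_v=\varepsilon$ and $\mu_v > \psi^{(k)}/n$, so
\[
\frac{\delta_v^2\,\theta\mu_v}{2+\tfrac{2}{3}\delta_v} \;>\; \frac{\varepsilon^2\,\theta\,\psi^{(k)}/n}{2+\tfrac{2}{3}\varepsilon},
\]
and~\eqref{eq:thetabound} forces this to be at least $(\ell+1)\ln n+\ln 4$. When $\psi_v\le \psi^{(k)}$, the clean observation is that $\delta_v\,\theta\mu_v = \varepsilon\,\theta\,\psi^{(k)}/n$ is \emph{independent of $v$}; since the map $\delta\mapsto \delta/(2+\tfrac{2}{3}\delta)$ is monotone increasing and $\delta_v\ge \varepsilon$, I obtain
\[
\frac{\delta_v^2\,\theta\mu_v}{2+\tfrac{2}{3}\delta_v} \;=\; (\delta_v\,\theta\mu_v)\cdot\frac{\delta_v}{2+\tfrac{2}{3}\delta_v} \;\ge\; \frac{\varepsilon^2\,\theta\,\psi^{(k)}/n}{2+\tfrac{2}{3}\varepsilon},
\]
which is again $\ge(\ell+1)\ln n+\ln 4$ by~\eqref{eq:thetabound}. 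Plugging in gives $2\exp(-((\ell+1)\ln n+\ln 4)) = 1/(2n^{\ell+1})$, as required.

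The main obstacle — and the reason a naive single Chernoff bound with a uniform deviation does not suffice — is the regime $\psi_v\le \psi^{(k)}$: here the target additive error $\varepsilon\psi^{(k)}$ can vastly exceed the mean $\psi_v$, so one must deviate by a \emph{relative} factor $\delta_v\gg\varepsilon$. The trick of letting $\delta_v$ absorb the $\psi^{(k)}/\psi_v$ factor, combined with the identity $\delta_v\theta\mu_v\equiv \varepsilon\theta\psi^{(k)}/n$ and the monotonicity of $\delta/(2+\tfrac{2}{3}\delta)$, is what lets a single threshold on $\theta$ control both regimes uniformly.
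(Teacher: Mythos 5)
Your proposal is correct and follows essentially the same route as the paper's proof: a per-node multiplicative Chernoff bound with deviation $\varepsilon$ for nodes above $\psi^{(k)}$ and deviation $\varepsilon\psi^{(k)}/\psi_v$ for nodes below, followed by a union bound over the $n$ nodes; your monotonicity-of-$\delta/(2+\tfrac{2}{3}\delta)$ step is algebraically identical to the paper's bound $2\psi_v+\tfrac{2}{3}\varepsilon\psi^{(k)}\le(2+\tfrac{2}{3}\varepsilon)\psi^{(k)}$. Your explicit remark that the lower tail is vacuous when $\delta_v\ge 1$ is a small point of care the paper leaves implicit.
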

\begin{proof}
	Let $\bR_1, \bR_2, \ldots, \bR_{\theta}$ be the $\theta$ independent and
	random RR sets generated in Phase 2.
	Let $\est^\theta_v$ be the value of $\est_v$ at the end of the for-loop in Phase 2,
	when $\btheta = \theta$.
	Then,  $\est^\theta_v = \sum_{i=1}^\theta X_{\bR_i}(v)$, $\forall v\in V$.
	By Lemma~\ref{lem:spexp}, $\E[X_{\bR_i}(v)] = \psi_v / n$.
	
	For every $v\in V$ with $\psi_v > \psi^{(k)}$, 
	we apply the Chernoff bounds (Fact~\ref{fact:chernoff}) and have:
	\begin{align*}
	&\Pr\{|\epsi_v - \psi_v| \ge \varepsilon \psi_v \} \\
	& \quad = \Pr\{|n \cdot \est^\theta_v /\theta - \psi_v| \ge \varepsilon \psi_v \} \\
	& \quad = \Pr\{|\est^\theta_v  - \theta \cdot \psi_v / n| \ge  \varepsilon  \cdot (\theta\cdot \psi_v/n)\} \\
	& \quad \le  2 \exp\left( - \frac{\varepsilon^2}{2+ \frac{2}{3}\varepsilon}\cdot \theta\cdot \psi_v/n\right) \\
	& \quad \le  2 \exp\left( - \frac{\varepsilon^2 \cdot \psi_v}{(2+ \frac{2}{3}\varepsilon) \cdot n} 
	\cdot \frac{n ((\ell+1)\ln n + \ln 4)(2+\frac{2}{3}\varepsilon) }{\varepsilon^2 \psi^{(k)}} \right) \\
	& \quad \le 2 \exp\left( - (\ell+1)\ln n + \ln 4 \right) 
	\qquad \qquad \quad \mbox{\{since $\psi_v > \psi^{(k)} $\}} \\
	& \quad \le \frac{1}{2n^{\ell+1}}. 
	\end{align*}
	For every $v\in V$ with $\psi_v \le \psi^{(k)}$, 
	we also apply the Chernoff bound and have:
	\begin{align*}
	&\Pr\{|\epsi_v - \psi_v| \ge \varepsilon \psi^{(k)} \} \\
	& \quad = \Pr\{|n \cdot \est^\theta_v /\theta - \psi_v| \ge \varepsilon \psi^{(k)} \} \\
	& \quad = \Pr\{|\est^\theta_v  - \theta \cdot \psi_v / n| \ge  (\varepsilon \psi^{(k)}  / \psi_v) \cdot (\theta\cdot \psi_v/n)\} \\
	& \quad \le  2 \exp\left( - \frac{(\varepsilon \psi^{(k)}  / \psi_v)^2}{2+ \frac{2}{3}(\varepsilon \psi^{(k)}  / \psi_v)}\cdot \theta\cdot \psi_v/n\right) \\
	& \quad =  2 \exp\left( - \frac{\varepsilon^2 (\psi^{(k)} )^2}{n (2\psi_v+\frac{2}{3}\varepsilon \psi^{(k)} )}\cdot \theta\right) \\
	& \quad \le 2 \exp\left( - \frac{\varepsilon^2 \psi^{(k)} }{n (2+\frac{2}{3}\varepsilon)}\cdot \theta\right)
	\qquad \qquad \quad \mbox{\{since $\psi_v \le \psi^{(k)} $\}} \\
	& \quad \le \frac{1}{2n^{\ell+1}}. \qquad  \qquad \qquad\qquad \qquad \qquad \ \mbox{\{use Eq.~\eqref{eq:thetabound}\}}
	\end{align*}
	Finally, we take the union bound among all $n$ nodes in $V$ to obtain the result.
\end{proof}

For Phase 1, we need to show that with high probability $\LB \le \psi^{(k)}$, and thus
Eq.\eqref{eq:thetabound} hold for the random $\btheta$ set in line~\ref{line:thetanew}
of the algorithm.
The structure of the Phase 1 of {\ASVRR} follows the Sampling() algorithm 
in~\cite{tang15} (Algorithm 2, lines 1-13), with the difference that 
our Phase 1 is to estimate a lower bound for $\psi^{(k)}$, while
their purpose is to estimate a lower bound for $OPT_k$, the maximum influence spread
of any $k$ seed nodes.
The probabilistic analysis follows the same approach, and for completeness, we provide
an independent analysis for our algorithm.

Let $\btheta'$ be the number of RR sets generated in Phase 1, and 
$\bR^{(1)}_1, \bR^{(1)}_2, \ldots, \bR^{(1)}_{\btheta'}$ be these RR sets.
Note that these random RR sets are not mutually independent, because earlier generated RR sets
are used to determine if more RR sets need to be generated
(condition in line~\ref{line:cond1}).
However, once RR sets $\bR^{(1)}_1, \ldots, \bR^{(1)}_{i-1}$ are generated, the generation of
RR set $\bR^{(1)}_i$ follows the same random behavior for each $i$, which means we could
use martingale approach \cite{MU05} to analyze these RR sets and Phase 1 of Algorithm {\ASVRR}.

\begin{definition}[Martingale]
	A sequence of random variables $\bY_1, \bY_2, \bY_3, \ldots$ is a martingale, if and only
	if $\E[|\bY_i|] < +\infty$ and $\E[\bY_{i+1} \mid \bY_1, \bY_2, \ldots, \bY_{i}] = \bY_{i}$ for any
	$i \ge 1$.
\end{definition}

In our case, let $\bY_i(v) = \sum_{j=1}^i (X_{\bR^{(1)}_j}(v) - \psi_v / n)$, for any
$v\in V$ and any $i$.
\begin{lemma} \label{lem:martingale}
	For every $v\in V$ and every $i\ge 1$, $\E[X_{\bR^{(1)}_{i}}(v) \mid 
	X_{\bR^{(1)}_{1}}(v), \ldots, X_{\bR^{(1)}_{i-1}}(v)] = \psi_v / n$.
	As a consequence, for every $v\in V$, the sequence of random variables $\{\bY_i(v), i\ge 1\}$
	is a martingale.
\end{lemma}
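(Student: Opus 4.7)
The plan is to first establish the single-step conditional expectation identity, from which the martingale claim will follow by routine manipulations. The key observation is the following: although whether or not $\bR^{(1)}_i$ is ever generated depends on the previous RR sets via the break condition in line~\ref{line:cond1}, the \emph{generation mechanism} for $\bR^{(1)}_i$ itself---selecting a uniformly random root $\bv_i \sim V$ and then carrying out the reversed diffusion using fresh draws of the triggering sets $\{\bT(\cdot)\}$---uses randomness that is independent of $\bR^{(1)}_1, \ldots, \bR^{(1)}_{i-1}$. Hence, on the event that $\bR^{(1)}_i$ is generated, its conditional distribution given the history equals the unconditional distribution of a fresh random RR set $\bR$.

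Given this, I would invoke Lemma~\ref{lem:spexp} to obtain
$\E[X_{\bR^{(1)}_i}(v) \mid \bR^{(1)}_1, \ldots, \bR^{(1)}_{i-1}] = \E_{\bR}[X_{\bR}(v)] = \psi_v / n$.
Because each $X_{\bR^{(1)}_j}(v)$ is a deterministic function of $\bR^{(1)}_j$, the $\sigma$-algebra generated by $X_{\bR^{(1)}_1}(v), \ldots, X_{\bR^{(1)}_{i-1}}(v)$ is coarser than that generated by $\bR^{(1)}_1, \ldots, \bR^{(1)}_{i-1}$, so the tower property of conditional expectation immediately yields the stated identity $\E[X_{\bR^{(1)}_i}(v) \mid X_{\bR^{(1)}_1}(v), \ldots, X_{\bR^{(1)}_{i-1}}(v)] = \psi_v/n$.

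The martingale property then follows almost mechanically. Integrability holds because $X_{\bR^{(1)}_j}(v) \in [0,1]$ and $\psi_v/n \leq 1$, giving $\E[|\bY_i(v)|] \leq 2i < \infty$. Since $\bY_j(v)$ is an invertible affine function of the partial sum $\sum_{k \leq j} X_{\bR^{(1)}_k}(v)$, conditioning on $\bY_1(v), \ldots, \bY_i(v)$ generates the same $\sigma$-algebra as conditioning on $X_{\bR^{(1)}_1}(v), \ldots, X_{\bR^{(1)}_i}(v)$. Combining this with the recursion $\bY_{i+1}(v) = \bY_i(v) + (X_{\bR^{(1)}_{i+1}}(v) - \psi_v/n)$ and the first part yields the desired identity $\E[\bY_{i+1}(v) \mid \bY_1(v), \ldots, \bY_i(v)] = \bY_i(v)$.

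The only real subtlety---which I do not expect to be a serious obstacle---is the stopping-time nature of Phase 1: strictly speaking, $\bR^{(1)}_i$ is undefined when $i > \btheta'$. I would handle this by first considering the idealized process that keeps drawing $\bR^{(1)}_1, \bR^{(1)}_2, \ldots$ ad infinitum, in which case the sequence $\{X_{\bR^{(1)}_j}(v)\}_{j \geq 1}$ is i.i.d.\ with mean $\psi_v/n$ and the martingale property is immediate, and then noting that $\btheta'$ is a bounded stopping time with respect to the natural filtration, so the martingale identity transfers to the truncated sequence actually produced by {\ASVRR} and is available for the concentration arguments in the subsequent lemmas.
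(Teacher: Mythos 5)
Your proposal is correct and follows essentially the same route as the paper's proof: both rest on the observation that the history only determines \emph{whether} $\bR^{(1)}_i$ is generated while its generation uses fresh, independent randomness, then invoke Lemma~\ref{lem:spexp} and conclude the martingale property from the recursion $\bY_{i+1}(v) = \bY_i(v) + (X_{\bR^{(1)}_{i+1}}(v) - \psi_v/n)$. Your added care about the coarser $\sigma$-algebra and the truncation at the stopping time $\btheta'$ is sound and slightly more rigorous than the paper's treatment, but it is not a different argument.
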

\begin{proof}
	Consider a node $v$ and an index $i \ge 1$.
	Note that RR sets $\bR^{(1)}_1, \ldots, \bR^{(1)}_i$ determines whether $\bR^{(1)}_{i+1}$ should be generated,
	but the actual random generation process of $\bR^{(1)}_{i+1}$, i.e. selecting the random root 
	and the random live edge graph, is independent of $\bR^{(1)}_1, \ldots, \bR^{(1)}_i$.
	Therefore, by Lemma~\ref{lem:spexp} we have
	\begin{equation} \label{eq:martingaleconditional}
	\E\left[X_{\bR^{(1)}_{i}}(v) \left\lvert  
	X_{\bR^{(1)}_{1}}(v), \ldots, X_{\bR^{(1)}_{i-1}}(v) \right. \right] = \psi_v / n. 
	\end{equation}
	
	From the definition of $\bY_i(v)$, it is straightforward to see that the value range
	of $\bY_i(v)$ is $[-i, i]$, and thus $\E(|\bY_i(v)|] < + \infty$.
	Second, by definition $\bY_{i+1}(v) = X_{\bR^{(1)}_{i+1}}(v) - \psi_v / n + \bY_{i}(v) $.
	With the similar argument as for Eq.~\eqref{eq:martingaleconditional}, we have
	\begin{align*}
	&\E[\bY_{i+1}(v) \mid \bY_1(v), \ldots, \bY_i(v)] \\
	& =  \E[X_{\bR^{(1)}_{i+1}}(v) - \psi_v / n \mid \bY_1(v), \ldots, \bY_i(v)]  \\
	&  \quad + \E[\bY_{i}(v) \mid  \bY_1(v), \ldots, \bY_i(v)]  \\
	&  = 0 + \bY_{i}(v) = \bY_{i}(v). 
	\end{align*}
	Therefore, $\{\bY_i(v), i\ge 1\}$ is a martingale.
\end{proof}

Martingales have similar tail bounds as the Chernoff bound given in Fact~\ref{fact:chernoff},
as we give below.
For convenience, we did not explicitly refer to the sequence below as a martingale, but
notice that if we define $\bY_i = \sum_{j=1}^i (\bX_i - \mu)$, then
$\{\bY_1, \bY_2, \ldots, \bY_t \}$ is indeed a martingale.
\begin{fact}[Martingale Tail Bounds] \label{fact:martingalechernoff}
	Let $\bX_1, \bX_2, \ldots, \bX_t$ be random variables with range $[0,1]$, and
	for some $\mu \in [0,1]$, $\E[\bX_i \mid \bX_1, \bX_2, \ldots, \bX_{i-1}] = \mu$ for every
	$i\in [t]$.
	Let $\bY = \sum_{i=1}^t \bX_i$.
	For any $\delta > 0$, we have:
	\begin{align*}
	\Pr\{\bY - t \mu \ge \delta \cdot t\mu \} &\le \exp\left( - \frac{\delta^2}{2+\frac{2}{3}\delta}t\mu \right).
	\end{align*}
	For any $0< \delta < 1$, we have
	\begin{align*}
	\Pr\{\bY - t \mu \le - \delta \cdot t\mu \} &\le \exp\left( - \frac{\delta^2}{2}t\mu  \right).
	\end{align*}
\end{fact}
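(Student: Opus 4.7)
The plan is to adapt the standard Chernoff-bound argument---moment generating function (MGF) plus Markov's inequality---to the martingale setting, exploiting the fact that the conditional-mean hypothesis provides exactly the same single-step MGF bound one would get under independence.

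First, I would establish a conditional MGF bound. Since $\bX_i \in [0,1]$, the convexity of $x \mapsto e^{\lambda x}$ on $[0,1]$ gives the pointwise inequality $e^{\lambda \bX_i} \le 1 + \bX_i(e^\lambda - 1)$ for every $\lambda \in \R$. Taking conditional expectation given $\bX_1, \ldots, \bX_{i-1}$ and using the hypothesis $\E[\bX_i \mid \bX_1, \ldots, \bX_{i-1}] = \mu$ yields
\[
\E[e^{\lambda \bX_i} \mid \bX_1, \ldots, \bX_{i-1}] \le 1 + \mu(e^\lambda - 1) \le \exp(\mu(e^\lambda - 1)).
\]
This is the key step where the martingale structure pays off: the conditioning is absorbed thanks to the bounded range of $\bX_i$. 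Iterating via the tower property---writing $e^{\lambda \bY} = e^{\lambda \bX_t} \cdot e^{\lambda \sum_{i<t}\bX_i}$ and peeling off one factor at a time---I obtain $\E[e^{\lambda \bY}] \le \exp(t\mu(e^\lambda - 1))$.

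Second, I would apply Markov's inequality. For the upper tail with $\lambda > 0$, Markov gives $\Pr[\bY \ge (1+\delta)t\mu] \le \exp(t\mu(e^\lambda - 1) - \lambda(1+\delta)t\mu)$. Optimizing at $\lambda = \ln(1+\delta)$ produces the classical Chernoff expression $\exp\bigl(-t\mu \cdot [(1+\delta)\ln(1+\delta) - \delta]\bigr)$, and the stated upper-tail bound follows from the standard numerical inequality $(1+\delta)\ln(1+\delta) - \delta \ge \delta^2/(2 + 2\delta/3)$ valid for $\delta > 0$. The lower tail is symmetric: applying Markov to $e^{\lambda \bY}$ with $\lambda < 0$ and choosing $\lambda = \ln(1-\delta)$ yields $\Pr[\bY \le (1-\delta)t\mu] \le \exp\bigl(-t\mu[\delta + (1-\delta)\ln(1-\delta)]\bigr)$, and the inequality $\delta + (1-\delta)\ln(1-\delta) \ge \delta^2/2$ on $(0,1)$ delivers the stated bound.

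There is no substantial technical obstacle here---this is textbook material---but it is worth highlighting the single place where the argument differs from the independent case: only in the first step, where independence would yield $\E[e^{\lambda \bX_i}] \le \exp(\mu(e^\lambda - 1))$ directly, whereas here we obtain the same bound conditionally via the convexity of the exponential and the conditional-mean hypothesis. All subsequent algebra is identical to the classical Chernoff proof, which is precisely why the tail bounds match those in Fact~\ref{fact:chernoff} despite the absence of independence.
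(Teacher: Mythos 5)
Your proof is correct and follows essentially the same route as the paper's: a conditional MGF bound obtained from convexity of $x\mapsto e^{\lambda x}$ on $[0,1]$, peeled off term by term via the tower property, followed by Markov's inequality, optimization of $\lambda$, and the standard numerical inequalities $(1+\delta)\ln(1+\delta)-\delta\ge \delta^2/(2+2\delta/3)$ and $\delta+(1-\delta)\ln(1-\delta)\ge\delta^2/2$. The only cosmetic difference is that you inline the convexity step as the pointwise chord bound $e^{\lambda x}\le 1+x(e^\lambda-1)$, whereas the paper first proves the Bernoulli case and then invokes a separate lemma stating that a $[0,1]$-valued variable is dominated by its Bernoulli counterpart under convex test functions --- these are the same argument.
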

Since there are numerous variants of Chernoff and martingale tail bounds in the literature, 
and the ones we found in
\cite{MU05,CL06,tang15} are all slightly different from the above, in Appendix~\ref{sec:martingale}
we provide a pair of general martingale tail bounds that cover Facts~\ref{fact:chernoff} and~\ref{fact:martingalechernoff}
we need in this paper, with a complete proof.

For each $i = 1, 2, \ldots, \lfloor \log_2 n \rfloor -1$, 
let $x_i = n/2^i$, 
and let $\est^{(k)}_i$ be the value of $\est^{(k)}$ set in line~\ref{line:kmaxest} in the $i$-th
iteration of the for-loop (lines~\ref{line:phase1forb}--\ref{line:phase1fore})
of Phase 1. 
\begin{lemma} \label{lem:LBx}
	For each $i= 1, 2, \ldots, \lfloor \log_2 n \rfloor -1$,  
	\begin{itemize}
		\item[(1)]
		If $x_i = n/2^i > \psi^{(k)}$,  then with probability at least  $1 - \frac{1}{2n^\ell \log_2 n}$, 
		$n \cdot \est^{(k)}_i / \theta_i < (1+\varepsilon') \cdot x_i$.
		\item[(2)] 
		If $x_i = n/2^i \le \psi^{(k)}$,  then with probability at least  $1 - \frac{1}{2n^\ell \log_2 n}$, 
		$n \cdot \est^{(k)}_i / \theta_i < (1+\varepsilon') \cdot \psi^{(k)}$.
	\end{itemize}
\end{lemma}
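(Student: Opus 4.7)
The plan is to apply the martingale tail bound (Fact~\ref{fact:martingalechernoff}) to each node's sequence of RR-set contributions $\bX_j := X_{\bR^{(1)}_j}(v)$. By Lemma~\ref{lem:martingale} these have conditional mean $\psi_v/n$ and by definition lie in $[0,1]$, so after the $\theta_i$ samples generated up to iteration $i$, the partial sum $\bY := \sum_j \bX_j$ (which equals $\est_v$ at the end of iteration $i$) concentrates near $\theta_i\psi_v/n$. I will prove both parts by the same recipe: fix a target threshold $c\in\{(1+\varepsilon')x_i/n,\ (1+\varepsilon')\psi^{(k)}/n\}$, call a node \emph{good} if $\psi_v\le nc/(1+\varepsilon')$ and \emph{bad} otherwise. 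Because $x_i>\psi^{(k)}$ in case~(1) and $nc/(1+\varepsilon')=\psi^{(k)}$ in case~(2), in both cases every bad node satisfies $\psi_v>\psi^{(k)}$, so the number of bad nodes is at most $k-1$. I will show that with high probability every good node satisfies $\est_v<c\theta_i$; then at most $k-1$ nodes can exceed $c\theta_i$, forcing $\est^{(k)}_i<c\theta_i$, which is exactly the desired inequality in each case.

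For a good node $v$, set $\mu=\psi_v/n\le c$ and apply Fact~\ref{fact:martingalechernoff} with $\delta=c/\mu-1\ge 0$, obtaining
\[
\Pr[\bY\ge c\theta_i]\;\le\;\exp\!\left(-\frac{\delta^2\,\theta_i\mu}{2+\tfrac{2}{3}\delta}\right)\;=\;\exp\!\left(-\frac{3(c-\mu)^2\,\theta_i}{4\mu+2c}\right),
\]
where the equality is a short algebraic manipulation using $\delta\mu=c-\mu$. The key technical point is that $g(\mu):=\frac{3(c-\mu)^2}{4\mu+2c}$ is strictly decreasing on $(0,c)$ (direct differentiation), so $g(\mu)\ge g(nc/(1+\varepsilon'))$; substitution simplifies this to $g(nc/(1+\varepsilon'))=\frac{\varepsilon'^2 c}{(1+\varepsilon')(2+\frac{2}{3}\varepsilon')}$. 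In case~(1), this yields the exponent at least $\frac{\varepsilon'^2 x_i\,\theta_i}{n(2+\frac{2}{3}\varepsilon')}$, which by the choice of $\theta_i$ in line~\ref{line:setthetai} is at least $(\ell+1)\ln n+\ln\log_2 n+\ln 2$. In case~(2), I will use the assumption $x_i\le\psi^{(k)}$ to obtain $\theta_i\ge\frac{n((\ell+1)\ln n+\ln\log_2 n+\ln 2)(2+\frac{2}{3}\varepsilon')}{\varepsilon'^2\psi^{(k)}}$, giving the same lower bound on the exponent with $\psi^{(k)}$ replacing $x_i$. Either way, the per-node failure probability is at most $1/(2n^{\ell+1}\log_2 n)$.

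A union bound over the at most $n$ good nodes yields overall failure probability at most $1/(2n^\ell\log_2 n)$, and on the complement event the counting argument above gives both claims. The only nontrivial step is the monotonicity analysis of $g(\mu)$, which ties the flexible Chernoff bound (in which $\delta$ grows as $\mu$ shrinks) to the specific $\theta_i$ chosen in line~\ref{line:setthetai}; everything else is standard martingale machinery combined with the elementary observation that at most $k-1$ nodes have Shapley value exceeding $\psi^{(k)}$.
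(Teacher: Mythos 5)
Your proof is correct and follows essentially the same route as the paper's: a per-node martingale tail bound (Fact~\ref{fact:martingalechernoff} via Lemma~\ref{lem:martingale}), a union bound over nodes, and the observation that at most $k-1$ nodes have Shapley value above $\psi^{(k)}$, with your monotonicity analysis of $g(\mu)$ playing the role of the paper's direct relaxation $2\psi_v+\frac{2}{3}\varepsilon' x_i\le(2+\frac{2}{3}\varepsilon')x_i$. (One minor slip: the argument of $g$ should be $c/(1+\varepsilon')$ rather than $nc/(1+\varepsilon')$, since $\mu=\psi_v/n$; the value you compute for $g$ at that point is nonetheless correct.)
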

\begin{proof}
	Let $\bR^{(1)}_1, \bR^{(1)}_2, \ldots, \bR^{(1)}_{\theta_i}$ be the $\theta_i$ generated
	RR sets by the end of the $i$-th iteration of the for-loop (lines~\ref{line:phase1forb}--\ref{line:phase1fore})
	of Phase 1. 
	For every $v\in V$, let $\est_{v,i}$ be the value of $\est_v$ at line~\ref{line:kmaxest} in the $i$-th
	iteration of the same for-loop. 
	Then we have $\est_{v,i} = \sum_{j=1}^{\theta_i} X_{\bR^{(1)}_j}(v)$.
	
	By Lemma~\ref{lem:martingale}, we have for every $1\le j \le \theta_i$, 
	$	\E[X_{\bR^{(1)}_{i}}(v) \mid  
	X_{\bR^{(1)}_{1}}(v), \ldots, X_{\bR^{(1)}_{i-1}}(v) ] = \psi_v / n$. 
	Then we can apply the martingale tail bound of Fact~\ref{fact:martingalechernoff} on the sequence.
	For the Statement (1) of the lemma, we consider $x_i = n/2^i > \psi^{(k)}$, and
	for every $v\in V$ such that $\psi_v \le \psi^{(k)}$, we obtain
	\begin{align}
	&\Pr\{n \cdot \est_{v,i} / \theta_i \ge (1+\varepsilon') \cdot x_i\}  \nonumber \\
	& = \Pr\{ \est_{v,i} \ge (1+\varepsilon') \cdot \theta_i \cdot x_i  / n\} \nonumber \\
	& \le \Pr\{ \est_{v,i} - \theta_i \cdot \psi_v / n\ge (\varepsilon'\cdot x_i / \psi_v ) \cdot \theta_i \cdot \psi_v / n\} \nonumber \\
	& \le \exp\left( - \frac{(\varepsilon'\cdot x_i / \psi_v )^2}{2+\frac{2}{3}(\varepsilon'\cdot x_i / \psi_v )}\cdot \theta_i \cdot \psi_v / n \right) \nonumber \\
	& = \exp\left( - \frac{\varepsilon'^2 \cdot x_i^2}{2\psi_v+\frac{2}{3}\cdot \varepsilon'\cdot x_i}\cdot \theta_i / n \right) \nonumber \\
	& \le \exp\left( - \frac{\varepsilon'^2 \cdot x_i}{2+\frac{2}{3}\varepsilon'}\cdot \theta_i / n \right) 
	\qquad \mbox{\{use $\psi_v \le \psi^{(k)} < x_i$\}}\nonumber \\
	& \le \frac{1}{2 n^{\ell+1} \log_2 n}. \label{eq:estvi1}
	\end{align}
	Note that $\est^{(k)}_{i}$ is the $k$-th largest among $\est_{v,i}$'s, while
	there are at most $k-1$ nodes $v$ with $\psi_v > \psi^{(k)}$.
	This means that there is at least one node $v$ with $\psi_v \le \psi^{(k)}$ and 
	$\est_{v,i} \ge \est^{(k)}_{i}$.
	Thus, by taking union bound on Eq.~\eqref{eq:estvi1}, we have
	\begin{align*}
	& \Pr\{n \cdot \est^{(k)}_{i} / \theta_i \ge (1+\varepsilon') \cdot x_i\}  \\
	& \le \Pr\{\exists v\in V, \psi_v \le \psi^{(k)},  n \cdot \est_{v,i} / \theta_i \ge (1+\varepsilon') \cdot x_i\}
	\\
	& \le \frac{1}{2 n^{\ell} \log_2 n}. 
	\end{align*}
	
	For the Statement (2) of the lemma, we consider $x_i = n/2^i \le \psi^{(k)}$, and 
	for every $v\in V$ with $\psi_v \le \psi^{(k)}$, we have
	\begin{align*}
	&\Pr\{n \cdot \est_{v,i} / \theta_i \ge (1+\varepsilon') \cdot \psi^{(k)}\} \\
	& = \Pr\{ \est_{v,i} \ge (1+\varepsilon') \cdot \theta_i \cdot \psi^{(k)}  / n\} \\
	& \le \Pr\{ \est_{v,i} - \theta_i \cdot \psi_v / n\ge (\varepsilon'\cdot \psi^{(k)} / \psi_v ) \cdot \theta_i \cdot \psi_v / n\} \\
	& \le \exp\left( - \frac{(\varepsilon'\cdot \psi^{(k)} / \psi_v )^2}{2+\frac{2}{3}(\varepsilon'\cdot \psi^{(k)} / \psi_v )}\cdot \theta_i \cdot \psi_v / n \right) \\
	& = \exp\left( - \frac{\varepsilon'^2 \cdot {\psi^{(k)}}^2}{2\psi_v+\frac{2}{3}\cdot \varepsilon'\cdot \psi^{(k)}}\cdot \theta_i / n \right) \\
	& \le \exp\left( - \frac{\varepsilon'^2 \cdot \psi^{(k)}}{2+\frac{2}{3}\varepsilon'}\cdot \theta_i / n \right) 
	\qquad \mbox{\{use $\psi_v \le \psi^{(k)}$\}}\\
	& \le \exp\left( - \frac{\varepsilon'^2 \cdot x_i}{2+\frac{2}{3}\varepsilon'}\cdot \theta_i / n \right) 
	\qquad \mbox{\{use $x_i \le \psi^{(k)}$\}}\\
	& \le \frac{1}{2 n^{\ell+1} \log_2 n}.
	\end{align*}
	Similarly, by taking the union bound, we have
	\begin{align*}
	& \Pr\{n \cdot \est^{(k)}_{i} / \theta_i \ge (1+\varepsilon') \cdot \psi^{(k)} \}  \\
	& \le \Pr\{\exists v\in V, \psi_v \le \psi^{(k)},  n \cdot \est_{v,i} / \theta_i 
	\ge (1+\varepsilon') \cdot \psi^{(k)}\}
	\\
	& \le \frac{1}{2 n^{\ell} \log_2 n}. 
	\end{align*}
	Thus the lemma holds.
\end{proof}

\begin{lemma} \label{lem:LB}
	Suppose that $\psi^{(k)} \ge 1$.
	In the end of Phase 1, with probability at least $1-\frac{1}{2n^\ell}$, 
	$\LB \le \psi^{(k)}$.
\end{lemma}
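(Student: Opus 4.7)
The plan is to combine Lemma~\ref{lem:LBx} across all iterations via a union bound, and then do a simple case analysis on when (and if) the algorithm breaks out of the Phase~1 for-loop.

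First, I would define the ``good event'' $\mathcal{G}$ as the intersection, over all iterations $i = 1, 2, \ldots, \lfloor \log_2 n \rfloor - 1$, of the high-probability events guaranteed by the appropriate case of Lemma~\ref{lem:LBx}: namely, if $x_i > \psi^{(k)}$ then $n\cdot \est^{(k)}_i/\theta_i < (1+\varepsilon') x_i$, and if $x_i \le \psi^{(k)}$ then $n\cdot \est^{(k)}_i/\theta_i < (1+\varepsilon') \psi^{(k)}$. Each individual event fails with probability at most $\frac{1}{2n^\ell \log_2 n}$, so by the union bound over at most $\log_2 n$ iterations, $\Pr(\mathcal{G}) \ge 1 - \frac{1}{2n^\ell}$.

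Next, I would argue that on $\mathcal{G}$ the conclusion $\LB \le \psi^{(k)}$ holds, by considering which iteration (if any) triggers the break at line~\ref{line:cond1}. If no iteration triggers the break, then $\LB$ retains its initial value $1$, and by hypothesis $\psi^{(k)} \ge 1$, so $\LB \le \psi^{(k)}$. Otherwise, let $i$ be the iteration in which the algorithm breaks. Under $\mathcal{G}$ it is impossible that $x_i > \psi^{(k)}$: Lemma~\ref{lem:LBx}(1) would imply $n \cdot \est^{(k)}_i/\theta_i < (1+\varepsilon') x_i$, contradicting the break condition. Therefore $x_i \le \psi^{(k)}$, and Lemma~\ref{lem:LBx}(2) gives $n \cdot \est^{(k)}_i/\theta_i < (1+\varepsilon') \psi^{(k)}$, whence
\[
\LB \;=\; \frac{n\cdot \est^{(k)}_i}{\theta_i\,(1+\varepsilon')} \;<\; \psi^{(k)}.
\]

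The proof is essentially bookkeeping on top of Lemma~\ref{lem:LBx}, so there is no real obstacle; the only subtle step is handling the ``no break'' case, which is exactly where the hypothesis $\psi^{(k)} \ge 1$ is used. Everything else is a union bound plus a two-case analysis, which I expect to occupy no more than a short paragraph in the written proof.
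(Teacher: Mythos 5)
Your proof is correct and follows essentially the same route as the paper's: both rest entirely on Lemma~\ref{lem:LBx} plus a union bound over the at most $\log_2 n$ iterations, followed by a case analysis on whether and where the loop breaks (with $\psi^{(k)}\ge 1$ covering the no-break case). The only difference is bookkeeping: you take one global union bound over a single good event, whereas the paper splits the union bound around the smallest index $i$ with $x_i \le \psi^{(k)}$ --- the two are equivalent.
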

\begin{proof}
	Let $\LB_i = n\cdot \est^{(k)}_i / (\theta_i \cdot (1+\varepsilon'))$.
	Suppose first that $\psi^{(k)} \ge x_{\lfloor \log_2 n \rfloor - 1} $, and 
	let $i$ be the smallest index such that $\psi^{(k)} \ge x_i$.
	Thus, for each $i' \le i-1$, $\psi^{(k)} < x_{i'}$.
	By Lemma~\ref{lem:LBx} (1), for each $i' \le i-1$, with probability at most $\frac{1}{2n^\ell \log_2 n}$,
	$n \cdot \est^{(k)}_{i'} / \theta_{i'} \ge (1+\varepsilon') \cdot x_{i'}$.
	Taking union bound, we know that with probability at least $1-\frac{i - 1}{2n^\ell \log_2 n}$,
	for all $i' \le i-1$, $n \cdot \est^{(k)}_{i'} / \theta_{i'} < (1+\varepsilon') \cdot x_{i'}$.
	This means, with probability at least $1-\frac{i - 1}{2n^\ell \log_2 n}$,
	that the for-loop in Phase 1 would not break at the $i'$-th iteration for $i'\le i-1$,
	and thus $\LB = \LB_{i''}$ for some $i''\ge i$, or $\LB = 1$.
	Since for every $i'' \ge i$, we have $x_{i''} \le \psi^{(k)}$, by Lemma~\ref{lem:LBx} (2), 
	for every such $i''$, with probability at most $\frac{1}{2n^\ell \log_2 n}$, 
	$\LB_{i''} > \psi^{(k)}$.
	Taking union bound again, we know that with probability at most  $\frac{1}{2n^\ell}$,
	$\LB > \psi^{(k)}$.
	
	Finally, if $\psi^{(k)} < x_{\lfloor \log_2 n \rfloor - 1} $, use the similar argument as the above,
	we can show that, with probability at least $1-\frac{1}{2n^\ell}$,
	the for-loop would not break at any iteration, which means $\LB = 1$, which still
	implies that $\LB \le \psi^{(k)}$ since $\psi^{(k)} \ge 1$.
\end{proof}

\begin{lemma}[Robust Estimator] \label{lem:errorbound}
	Suppose that $\psi^{(k)} \ge 1$.
	With probability at least $1-\frac{1}{n^\ell}$, Algorithm {\ASVRR} returns 
	$\{\epsi_v\}_{v \in V}$ that satisfy:
	\[
	\left\{ 
	\begin{array}{lr}
	|\epsi_v - \psi_v| \le \varepsilon \psi_v & \forall v\in V \mbox{ with } \psi_v > \psi^{(k)},\\
	|\epsi_v - \psi_v| \le \varepsilon \psi^{(k)} & \forall v\in V \mbox{ with } \psi_v \le \psi^{(k)}.
	\end{array}
	\right.
	\]
\end{lemma}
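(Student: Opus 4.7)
The plan is to combine Lemma~\ref{lem:suftheta} and Lemma~\ref{lem:LB} through a union bound, leveraging the crucial algorithmic fact (emphasized in the paper's discussion of {\ASVRR}) that Phase 2 generates its own fresh RR sets and does not reuse any of those generated in Phase 1. Concretely, I would let $\mathcal{E}_1$ denote the event $\{\LB \le \psi^{(k)}\}$ at the end of Phase 1, and $\mathcal{E}_2$ the event that the error bounds claimed in the lemma hold at the end of Phase 2.

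First, under the hypothesis $\psi^{(k)} \ge 1$, Lemma~\ref{lem:LB} directly yields $\Pr(\mathcal{E}_1) \ge 1 - \frac{1}{2n^\ell}$. On the event $\mathcal{E}_1$, I would next verify that the random value $\btheta$ chosen in line~\ref{line:thetanew} satisfies the threshold in Eq.~\eqref{eq:thetabound}: since $\LB \le \psi^{(k)}$ and
\[
\btheta = \left\lceil \frac{n ((\ell+1)\ln n + \ln 4)(2+ \frac{2}{3} \varepsilon) }{\varepsilon^2 \cdot \LB} \right\rceil \ge \frac{n ((\ell+1)\ln n + \ln 4)(2+ \frac{2}{3} \varepsilon) }{\varepsilon^2 \cdot \psi^{(k)}},
\]
the preconditions of Lemma~\ref{lem:suftheta} are met.

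The main subtlety I need to handle carefully is the conditioning: $\btheta$ is itself a random variable determined by Phase 1, so a priori the RR sets drawn in Phase 2 are not independent of $\btheta$. However, because the algorithm resets $\est_v$ in line~\ref{line:resetest} and draws fresh random roots and triggering sets in Phase 2, conditioned on any realization $\theta$ of $\btheta$ the Phase 2 RR sets $\bR_1,\ldots,\bR_\theta$ are i.i.d.\ and independent of the Phase 1 randomness. Therefore Lemma~\ref{lem:suftheta} applies verbatim to the conditional distribution, giving $\Pr(\mathcal{E}_2 \mid \btheta = \theta) \ge 1 - \frac{1}{2n^\ell}$ for every $\theta$ that meets Eq.~\eqref{eq:thetabound}. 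Integrating over the conditional distribution of $\btheta$ restricted to $\mathcal{E}_1$ yields $\Pr(\mathcal{E}_2 \mid \mathcal{E}_1) \ge 1 - \frac{1}{2n^\ell}$.

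Finally, I would conclude by a simple union bound: $\Pr(\mathcal{E}_1^c \cup \mathcal{E}_2^c) \le \Pr(\mathcal{E}_1^c) + \Pr(\mathcal{E}_2^c \mid \mathcal{E}_1) \le \frac{1}{2n^\ell} + \frac{1}{2n^\ell} = \frac{1}{n^\ell}$, which gives the claimed $1 - \frac{1}{n^\ell}$ success probability. The only real technical point to be careful about is the conditional-independence argument of Phase 2 on Phase 1; everything else is a direct plug-in of the two previously established lemmas.
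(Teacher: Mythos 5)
Your proposal is correct and follows essentially the same route as the paper's own proof: invoke Lemma~\ref{lem:LB} to get $\LB \le \psi^{(k)}$ with probability at least $1-\frac{1}{2n^\ell}$, observe that this forces $\btheta$ to meet the threshold of Eq.~\eqref{eq:thetabound}, condition on the realized $\btheta$ to apply Lemma~\ref{lem:suftheta}, and finish with a union bound. Your explicit remark that Phase 2 uses fresh RR sets independent of Phase 1 is exactly the justification the paper relies on (implicitly, via fixing $\LB$ to a value) for the conditional application of Lemma~\ref{lem:suftheta}.
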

\begin{proof}
	By Lemma~\ref{lem:LB}, we know that at the end of Phase 1, with
	probability at least $1 - \frac{1}{2n^\ell}$, $\LB \le \psi^{(k)}$.
	
	Then by Lemma~\ref{lem:suftheta}, we know that when we fix $\LB$
	to any fixed value $LB$ with $LB \le \psi^{(k)}$, 
	with probability at least $1-\frac{1}{2n^\ell}$, 
	we have 
	\[
	\left\{ 
	\begin{array}{lr}
	|\epsi_v - \psi_v| \le \varepsilon \psi_v & \forall v\in V \mbox{ with } \psi_v > \psi^{(k)},\\
	|\epsi_v - \psi_v| \le \varepsilon \psi^{(k)} & \forall v\in V \mbox{ with } \psi_v \le \psi^{(k)}.
	\end{array}
	\right.
	\]
	Taking the union bound, we know that with probability at least $1-\frac{1}{n^\ell}$, 
	we have
	\[
	\left\{ 
	\begin{array}{lr}
	|\epsi_v - \psi_v| \le \varepsilon \psi_v & \forall v\in V \mbox{ with } \psi_v > \psi^{(k)},\\
	|\epsi_v - \psi_v| \le \varepsilon \psi^{(k)} & \forall v\in V \mbox{ with } \psi_v \le \psi^{(k)}.
	\end{array}
	\right.
	\]
\end{proof}

\subsection{Time Complexity of {\ASVRR}} \label{app:time}

Finally, we argue about the time complexity of the algorithm.
For this purpose, we need to refer to the martingale stopping theorem, explained below.

A random variable $\btau$ is a {\em stopping time} for martingale $\{\bY_i, i\ge 1\}$
if $\btau$ takes positive integer values, and the event 
$\btau = i$ depends only on the values of $\bY_1, \bY_2, \ldots, \bY_i$.
The following martingale stopping theorem is an important fact for our analysis.

\begin{fact}[Martingale Stopping Theorem \cite{MU05}] \label{fact:stopping}
	Suppose that $\{\bY_i, i\ge 1\}$ is a martingale and $\btau$ is a stopping time for $\{\bY_i, i\ge 1\}$.
	If $\btau \le c$ for some constant $c$ independent of $\{\bY_i, i\ge 1\}$, then
	$\E[\bY_{\btau}] = \E[\bY_1]$.\footnote{There are two other alternative conditions
		besides that $\tau$ is bounded by a constant, but they are not needed in our analysis
		and thus are omitted.}
\end{fact}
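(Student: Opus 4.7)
The plan is to prove the Martingale Stopping Theorem via a finite telescoping decomposition, exploiting the fact that $\btau$ is bounded by a constant $c$. First, I would observe that on the event $\{\btau = t\}$ we can write
\[
\bY_{\btau} = \bY_1 + \sum_{i=1}^{t-1}(\bY_{i+1} - \bY_i) = \bY_1 + \sum_{i=1}^{c-1}(\bY_{i+1} - \bY_i)\cdot \I\{\btau > i\},
\]
since the indicator zeroes out exactly those increments that occur after the stopping time. Because $\btau \le c$ is bounded, this identity holds surely (no tail issues), and the sum is finite. I would then take expectations on both sides and, by linearity (applicable since each $\E[|\bY_i|]<\infty$ by the definition of a martingale and the sum has only $c-1$ terms), reduce the theorem to showing that every term $\E[(\bY_{i+1}-\bY_i)\,\I\{\btau > i\}]$ vanishes.

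The second step is to show this vanishing. Here I would apply the tower property: condition on $\bY_1,\ldots,\bY_i$. The key point is that $\btau$ is a stopping time, so the complement event $\{\btau \le i\}$ (and therefore $\{\btau > i\}$) is determined by $\bY_1,\ldots,\bY_i$. Hence the indicator $\I\{\btau > i\}$ can be pulled outside the inner conditional expectation:
\[
\E[(\bY_{i+1}-\bY_i)\,\I\{\btau>i\} \mid \bY_1,\ldots,\bY_i] = \I\{\btau>i\}\cdot \E[\bY_{i+1}-\bY_i\mid \bY_1,\ldots,\bY_i] = 0,
\]
where the last equality uses the martingale property $\E[\bY_{i+1}\mid \bY_1,\ldots,\bY_i] = \bY_i$. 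Taking outer expectations gives $\E[(\bY_{i+1}-\bY_i)\,\I\{\btau>i\}] = 0$, and summing over $i=1,\ldots,c-1$ yields $\E[\bY_{\btau}] = \E[\bY_1]$.

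The main subtlety — and what I expect to be the only real obstacle — lies in carefully invoking the stopping-time property to justify that $\I\{\btau>i\}$ is measurable with respect to the history $(\bY_1,\ldots,\bY_i)$, so that it can be extracted from the conditional expectation. The bounded stopping time $\btau\le c$ is crucial here because it keeps the decomposition into a finite sum, sidestepping any integrability or dominated-convergence issues that arise in the general (unbounded) optional stopping theorem. Everything else — linearity, the tower property, and the martingale identity — is routine once this measurability point is made precise.
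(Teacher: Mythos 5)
Your proof is correct. Note, however, that the paper does not prove this statement at all: it is stated as a \emph{Fact} and cited directly from the Mitzenmacher--Upfal textbook, so there is no in-paper argument to compare against. Your argument is the canonical proof of the optional stopping theorem for bounded stopping times, and every step checks out against the paper's own definitions: the telescoping identity $\bY_{\btau} = \bY_1 + \sum_{i=1}^{c-1}(\bY_{i+1}-\bY_i)\I\{\btau>i\}$ holds surely because $\btau\le c$; linearity of expectation applies since the sum is finite and each $\E[|\bY_i|]<\infty$; and the paper's definition of a stopping time (the event $\{\btau=i\}$ depends only on $\bY_1,\dots,\bY_i$) implies $\{\btau>i\}=\bigl(\bigcup_{j\le i}\{\btau=j\}\bigr)^{c}$ is determined by $\bY_1,\dots,\bY_i$, which is exactly the measurability you need to pull the indicator out of the conditional expectation and invoke $\E[\bY_{i+1}\mid\bY_1,\dots,\bY_i]=\bY_i$. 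The one point you correctly flag as the crux --- that boundedness of $\btau$ keeps the decomposition finite and avoids any dominated-convergence or uniform-integrability argument --- is precisely why the paper can restrict to this special case and relegate the more general hypotheses to a footnote.
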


Given a fixed set $R\subseteq V$, let the {\em width} of $R$, denoted
$\omega(R)$,
be the total in-degrees of nodes in $R$.
By Assumption \ref{assump:ComputationalTriggerModel},
the time complexity to generate the random RR set $\bR$ 
is  $\Theta(\omega(\bR)+1)$.
We leave the constant $1$ in the above formula because $\omega(\bR)$ could be less than $1$
or even $o(1)$ when $m < n$, while $\Theta(1)$ time is needed just to select a random root.
The expected time complexity to generate a
random RR set is $\Theta(\E[\omega(\bR)]+1)$. 

Let $EPT = \E[\omega(\bR)]$ be the expected width of a random RR set.
Let $\btheta'$ be the random variable denoting the number of RR sets
generated in Phase 1.
\begin{lemma} \label{lem:timefrombtheta}
	Under Assumption \ref{assump:ComputationalTriggerModel},
	the expected running time of {\ASVRR} is
	$\Theta((\E[\btheta']+\E[\btheta])\cdot (EPT+1))$.
\end{lemma}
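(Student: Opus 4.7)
The plan is to decompose the total running time into (i) the cost of generating all RR sets and performing the per-sample bookkeeping in lines~\ref{line:estnew1} and~\ref{line:estnew2}, and (ii) the residual overhead (computing $\est^{(k)}$, the rescaling in line~\ref{line:adjustnew2}, and arithmetic on $\theta_i$, $\LB$). By Assumption~\ref{assump:ComputationalTriggerModel}, drawing the triggering sets of all nodes in a random RR set $\bR$ costs time $\Theta(\omega(\bR)+1)$, and the subsequent updates $\est_{\bu} \mathrel{+}= 1/|\bR|$ for $\bu\in \bR$ add only $O(|\bR|)=O(\omega(\bR)+1)$. Thus each RR-set generation has expected cost $\Theta(EPT+1)$.

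For Phase 2, the value $\btheta$ is measurable with respect to the randomness of Phase 1, and conditional on $\btheta$ the $\btheta$ RR sets in Phase 2 are drawn independently of Phase 1 and of each other. Conditioning on $\btheta$ and then taking expectation yields an expected Phase 2 cost of $\E[\btheta]\cdot \Theta(EPT+1)$.

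For Phase 1, the key subtlety is that $\btheta'$ is a \emph{stopping time} for the sequence of generated RR sets, since the break condition in line~\ref{line:cond1} depends on the already-generated samples. I would apply the martingale stopping theorem (Fact~\ref{fact:stopping}) to the sequence
\[
\bY_i \;=\; \sum_{j=1}^{i}\bigl(\omega(\bR^{(1)}_j)+1 - (EPT+1)\bigr),
\]
which is a martingale by the same reasoning as Lemma~\ref{lem:martingale}: given the first $i-1$ RR sets, the $i$-th is drawn from the same underlying distribution, so the conditional expected width is $EPT$. Because $\btheta'$ is bounded above by $\theta_{\lfloor \log_2 n\rfloor-1}$ (a constant fixed by the inputs, independent of the sample randomness), Fact~\ref{fact:stopping} applies and gives $\E[\bY_{\btheta'}]=\E[\bY_1]$ after a standard rearrangement, which is equivalent to
\[
\E\!\left[\sum_{j=1}^{\btheta'}\bigl(\omega(\bR^{(1)}_j)+1\bigr)\right]
= \E[\btheta']\cdot (EPT+1).
\]

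Finally, I would verify that the non-sampling overhead is dominated. The Phase 1 for-loop runs at most $\lfloor \log_2 n\rfloor -1$ times, each iteration doing at most $O(n)$ work for $\est^{(k)}$ and constant work for $\theta_i$ and $\LB$; line~\ref{line:adjustnew2} contributes another $O(n)$. Since $\btheta'+\btheta\ge 1$ deterministically and $EPT+1\ge 1$, and since in fact $\E[\btheta]=\Omega(n \log n / \psi^{(k)})$ from line~\ref{line:thetanew}, this overhead is absorbed into $\Theta((\E[\btheta']+\E[\btheta])(EPT+1))$. The main obstacle is the Phase 1 argument: the samples there are not independent of $\btheta'$, so one must resist the temptation to treat them as i.i.d.\ and instead invoke the martingale stopping theorem with a deterministically bounded stopping time.
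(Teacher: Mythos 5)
Your proposal matches the paper's proof essentially step for step: the same per-RR-set cost accounting $\Theta(\omega(\bR)+1)$ (using $|\bR|\le\omega(\bR)+1$ for the bookkeeping), the same martingale $\sum_j(\omega(\bR^{(1)}_j)-EPT)$ with the stopping time $\btheta'$ bounded by the deterministic constant $\theta_{\lfloor\log_2 n\rfloor-1}$ so that Fact~\ref{fact:stopping} applies, and the same direct treatment of Phase~2 by conditioning on $\btheta$. Your extra paragraph on the non-sampling overhead is a detail the paper silently omits, but it does not change the argument.
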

\begin{proof}
	Let $\bR^{(1)}_1, \bR^{(1)}_2, \ldots, \bR^{(1)}_{\btheta'}$ be the
	RR sets generated in Phase 1.
	Under Assumption~\ref{assump:ComputationalTriggerModel}, 
	for each RR set $\bR^{(1)}_j$, the time to generate $\bR^{(1)}_j$ is
	$\Theta(\omega(\bR^{(1)}_j) + 1)$, where the constant $\Theta(1)$ term is to accommodate
	the time just to select a random root node for the RR set, and it is not absorbed by
	$ \Theta(\omega(\bR^{(1)}_j)$ because the width of an RR set could be less than $1$.
	After generating $\bR^{(1)}_j$, {\ASVRR} also needs to go through all
	entries $\bu\in \bR^{(1)}_j$ to update $\est_{\bu}$ (line~\ref{line:estnew1}),
	which takes $\Theta(|\bR^{(1)}_j|)$ time.
	Note that for every random RR set $\bR$, we have  $|\bR| \le \omega(\bR) + 1$, 
	because the RR set generation
	process guarantees that the induced sub-graph of any RR set must
	be weakly connected.
	Thus, for each RR set $\bR^{(1)}_j$, {\ASVRR} takes 
	$\Theta(\omega(\bR^{(1)}_j) + 1 + |\bR^{(1)}_j|) = \Theta(\omega(\bR^{(1)}_j) + 1)$
	time, and summing up for all $\btheta'$ RR sets,
	the total running time of Phase 1 is
	$\Theta(\sum_{j=1}^{\btheta'} (\omega(\bR^{(1)}_j)+1))$.
	
	We define $\bW_i = \sum_{j=1}^i (\omega(\bR^{(1)}_j) - EPT)$, for $i\ge 1$.
	By an argument similar to Lemma~\ref{lem:martingale}, 
	we know that $\{\bW_i, i\ge 1\}$
	is a martingale.
	Moreover, $\btheta'$ is a stopping time of $\{\bW_i, i\ge 1\}$ because its value
	is only determined by the RR sets already generated.
	The value of $\btheta'$ is upper bounded by 
	$\theta_{\lfloor \log_2 n \rfloor -1 }$, which is a constant set
	in line~\ref{line:setthetai}.
	Therefore, we can apply the martingale stopping theorem (Fact~\ref{fact:stopping})
	and obtain
	\begin{align*}
	0 & = \E[\bW_1] = \E[\bW_{\btheta'}] = 
	\E\left[\sum_{j=1}^{\btheta'} \omega(\bR^{(1)}_j) - \btheta' \cdot EPT \right]\\
	& = \E\left[\sum_{j=1}^{\btheta'} \omega(\bR^{(1)}_j)\right] - \E[\btheta']\cdot EPT.
	\end{align*}
	This implies that the expected running time of Phase 1 is 
	$\Theta(\E[\btheta']\cdot (EPT+1))$.
	
	For Phase 2, all $\btheta$ RR sets are independently generated, 
	and thus the expected running time of Phase 2 is
	$\Theta(\E[\btheta]\cdot (EPT+1))$.
	Together, we know that the expected running time of {\ASVRR} is
	$\Theta((\E[\btheta']+\E[\btheta])\cdot (EPT+1))$.
\end{proof}

We now connecting $EPT$ with the influence spread of a single node,
first established in \cite{tang14} (Lemma 7).
For completeness, we include a proof here.
\begin{lemma}[Expected Width of Random RR Sets] \label{lem:EPT}
	Let $\tilde{\bv}$ be a random node drawn from $V$ with probability proportional to the 
	in-degree of $\tilde{\bv}$.
	Let $\bR$ be a random RR set.
	Then: $$EPT = \E_{\bR}[\omega(\bR)] = \frac{m}{n} \E_{\tilde{\bv}}[\sigma(\{\tilde{\bv} \})].$$
\end{lemma}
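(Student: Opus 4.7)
The plan is to decompose $\omega(\bR)$ into a sum of indicator contributions per node, apply linearity of expectation, and then identify the resulting probabilities as single-node influence spreads.

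First I would write
\[
\omega(\bR) = \sum_{u \in V} d_{\mathrm{in}}(u) \cdot \I\{u \in \bR\},
\]
where $d_{\mathrm{in}}(u)$ is the in-degree of $u$. By linearity of expectation,
\[
\E_{\bR}[\omega(\bR)] = \sum_{u \in V} d_{\mathrm{in}}(u) \cdot \Pr(u \in \bR).
\]
The key identity to establish is that $\Pr(u \in \bR) = \sigma(\{u\})/n$. This follows from Proposition~\ref{RRSetLiveGraph}: a random RR set is generated by drawing a random live-edge graph $\bL$ and an independent uniform root $\bv \sim V$, and then $u \in \bR$ iff $u$ can reach $\bv$ in $\bL$, i.e.\ iff $\bv \in \Gamma(\bL,\{u\})$. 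Hence
\[
\Pr(u \in \bR) = \E_{\bL}\!\left[\tfrac{1}{n}\,|\Gamma(\bL,\{u\})|\right] = \tfrac{1}{n}\,\sigma(\{u\}),
\]
using the live-edge characterization of $\sigma$ from Section~\ref{sec:infmodel}.

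Combining these two steps yields
\[
\E_{\bR}[\omega(\bR)] = \tfrac{1}{n}\sum_{u \in V} d_{\mathrm{in}}(u)\,\sigma(\{u\}).
\]
Since in a directed graph $\sum_{u \in V} d_{\mathrm{in}}(u) = m$, the random node $\tilde{\bv}$ drawn with probability proportional to in-degree has distribution $\Pr(\tilde{\bv}=u) = d_{\mathrm{in}}(u)/m$. Therefore
\[
\E_{\tilde{\bv}}[\sigma(\{\tilde{\bv}\})] = \sum_{u \in V} \tfrac{d_{\mathrm{in}}(u)}{m}\,\sigma(\{u\}),
\]
and multiplying by $m/n$ exactly recovers the previous display, completing the proof.

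There is no real obstacle here; the argument is a short double-counting/Fubini computation. The only minor point to be careful about is the role of the two independent sources of randomness (the live-edge graph $\bL$ and the root $\bv$) when expressing $\Pr(u \in \bR)$, but this is handled cleanly by conditioning on $\bL$ and using the live-edge representation of single-node influence already recorded in the preliminaries.
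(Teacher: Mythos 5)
Your proof is correct and follows essentially the same route as the paper: the paper phrases the node-level decomposition $\omega(R)=\sum_{u}d_{\mathrm{in}}(u)\,\I\{u\in R\}$ as the probability $p(R)=\omega(R)/m$ that a uniformly random edge points into $R$, but this is the same double-counting, and both arguments then invoke the identity $\Pr(u\in\bR)=\sigma(\{u\})/n$ (Lemma~\ref{lem:margin} specialized to singletons) and exchange the order of expectation. No gaps.
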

\begin{proof}
	For a fixed set $R\subseteq V$, let $p(R)$ be the probability that
	a randomly selected edge (from $E$) points to a node in $R$.
	Since $R$ has $\omega(R)$ edges pointing to nodes in $R$, we have
	$p(R) = \omega(R)/m$.
	
	Let $d_v$ denotes the in-degree of node $v$.
	Let $\tilde{\bv}$ be a random node drawn from $v$ with probability proportional to the
	in-degree of $\tilde{\bv}$.
	We have:
	\begin{align*}
	p(R) = & \sum_{(u,v)\in E} \frac{1}{m} \cdot \I\{v \in R\}  \\
	= & \sum_{ v \in V} \frac{d_v}{m} \cdot \I\{v \in R\} = \E_{\tilde{\bv}}[\I\{\tilde{\bv} \in R\}].
	\end{align*}
	Let $\bR$ be a random RR set. Then, we have:
	\begin{align*}
	\E_{\bR}[\omega(\bR)] = & m \cdot \E_{\bR}[p(\bR)] \\
	= & m \cdot \E_{\bR}[\E_{\tilde{\bv}}[\I\{\tilde{\bv} \in \bR\}]] \\
	= & m \cdot \E_{\tilde{\bv}}[\E_{\bR}[\I\{\tilde{\bv} \in \bR\}]] \\
	= & m \cdot \E_{\tilde{\bv}}[\Pr_{\bR}(\tilde{\bv} \in \bR)] \\
	= & m \cdot \E_{\tilde{\bv}}[\sigma(\{\tilde{\bv}\}) / n],
	\end{align*}
	where the last equality is by Lemma~\ref{lem:margin}.
\end{proof}

Next we need to bound $\E[\btheta']$ and $\E[\btheta]$.
\begin{lemma} \label{lem:stopearly}
	For each $i = 1, 2, \ldots, \lfloor \log_2 n \rfloor -1$, 
	if $\psi^{(k)} \ge (1+\varepsilon')^2 \cdot x_i$, then
	with probability at least $1 - \frac{k}{2n^{\ell+1}\log_2 n}$, 
	$n \cdot \est^{(k)}_i / \theta_i > \psi^{(k)} / (1+\varepsilon')$, and
	$n \cdot \est^{(k)}_i / \theta_i > (1+\varepsilon') \cdot x_i$.
\end{lemma}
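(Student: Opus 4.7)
The plan is to mirror the style of Lemma~\ref{lem:LBx}, but now use the lower tail of the martingale Chernoff bound (Fact~\ref{fact:martingalechernoff}) instead of the upper tail, applied to nodes $v$ with $\psi_v \ge \psi^{(k)}$ (the ``heavy'' nodes). First, I would observe that since $\psi^{(k)} \ge (1+\varepsilon')^2 x_i$ is assumed, once we establish $n\cdot \est^{(k)}_i/\theta_i > \psi^{(k)}/(1+\varepsilon')$ the second conclusion $n\cdot \est^{(k)}_i/\theta_i > (1+\varepsilon') x_i$ is automatic, since $\psi^{(k)}/(1+\varepsilon') \ge (1+\varepsilon') x_i$. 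So the task reduces to proving the first inequality with high probability.

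Fix a node $v$ with $\psi_v \ge \psi^{(k)}$. As in Lemma~\ref{lem:LBx}, write $\est_{v,i} = \sum_{j=1}^{\theta_i} X_{\bR^{(1)}_j}(v)$; by Lemma~\ref{lem:martingale} this is a sum whose conditional expectations are $\psi_v/n$, so the martingale lower tail bound of Fact~\ref{fact:martingalechernoff} applies. Choosing $\delta = \varepsilon'/(1+\varepsilon') \in (0,1)$, $\mu = \psi_v/n$, $t = \theta_i$, I would compute
\begin{align*}
&\Pr\{n\cdot \est_{v,i}/\theta_i \le \psi^{(k)}/(1+\varepsilon')\} \\
&\le \Pr\{\est_{v,i} - \theta_i \psi_v/n \le -\delta\cdot \theta_i \psi_v/n\} \\
&\le \exp\!\left(-\tfrac{\delta^2}{2}\cdot \theta_i \psi_v/n\right),
\end{align*}
where I use $\psi^{(k)}/(1+\varepsilon') \le \psi_v/(1+\varepsilon') = (1-\delta)\psi_v$ to justify the first inequality. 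Then plugging in $\theta_i \ge n((\ell+1)\ln n + \ln\log_2 n + \ln 2)(2+\tfrac{2}{3}\varepsilon')/(\varepsilon'^2 x_i)$ from line~\ref{line:setthetai} and using $\psi_v \ge \psi^{(k)} \ge (1+\varepsilon')^2 x_i$, the exponent simplifies: the ratio $\psi_v/x_i \ge (1+\varepsilon')^2$ cancels with $\delta^2/\varepsilon'^2 = 1/(1+\varepsilon')^2$, leaving a factor of $(2+\tfrac{2}{3}\varepsilon')/2 \ge 1$ multiplying $((\ell+1)\ln n + \ln\log_2 n + \ln 2)$. Thus the bound is at most $\exp(-(\ell+1)\ln n - \ln\log_2 n - \ln 2) = \frac{1}{2n^{\ell+1}\log_2 n}$.

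Finally, I would pick any $k$ nodes from the set $\{v : \psi_v \ge \psi^{(k)}\}$ (which has size at least $k$) and apply a union bound: with probability at least $1 - k/(2n^{\ell+1}\log_2 n)$, every one of these $k$ nodes has $n\cdot \est_{v,i}/\theta_i > \psi^{(k)}/(1+\varepsilon')$. Since $\est^{(k)}_i$ is the $k$-th largest entry of $\{\est_{v,i}\}_{v\in V}$, the existence of $k$ estimates exceeding $\theta_i \psi^{(k)}/(n(1+\varepsilon'))$ forces $n \cdot \est^{(k)}_i/\theta_i > \psi^{(k)}/(1+\varepsilon')$, as required. The main subtlety, which is not a real obstacle but must be done cleanly, is the algebraic verification that the choice $\delta = \varepsilon'/(1+\varepsilon')$ and the definition of $\theta_i$ interact exactly so as to yield a failure probability of $1/(2n^{\ell+1}\log_2 n)$ per node; everything else is a mechanical union bound and a rewrite using the hypothesis on $\psi^{(k)}$.
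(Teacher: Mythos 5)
Your proposal is correct and follows essentially the same route as the paper's proof: the same lower-tail martingale bound with $\delta = \varepsilon'/(1+\varepsilon')$ applied to nodes with $\psi_v \ge \psi^{(k)}$, the same cancellation of $(1+\varepsilon')^2$ against $\psi_v/x_i$ using the hypothesis, a union bound over $k$ heavy nodes, and the same observation that the second inequality follows from the first. The only cosmetic difference is that you phrase the union bound positively (all $k$ chosen heavy nodes have large estimates, hence so does the $k$-th largest) where the paper argues contrapositively; these are equivalent.
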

\begin{proof}
	Let $\bR^{(1)}_1, \bR^{(1)}_2, \ldots, \bR^{(1)}_{\theta_i}$ be the $\theta_i$ generated
	RR sets by the end of the $i$-th iteration of the for-loop (lines~\ref{line:phase1forb}--\ref{line:phase1fore})
	of Phase 1. 
	For every $v\in V$, let $\est_{v,i}$ be the value of $\est_v$ at line~\ref{line:kmaxest} in the $i$-th
	iteration of the same for-loop. 
	Then we have $\est_{v,i} = \sum_{j=1}^{\theta_i} X_{\bR^{(1)}_j}(v)$.

	
	Suppose that $\psi^{(k)} \ge (1+\varepsilon')^2 \cdot x_i$.
	For every node $v\in V$ with $\psi_v \ge \psi^{(k)}$, 
	we apply the lower tail of the martingale tail bound (Fact~\ref{fact:martingalechernoff}) and obtain
	\begin{align*}
	& \Pr \{n \cdot \est_{v,i} / \theta_i \le \psi_v / (1+\varepsilon')   \} \\
	& \le \Pr\{\est_{v,i}  \le \frac{1}{1+\varepsilon'} \cdot \theta_i \cdot \psi_v/ n\} \\
	& = \Pr\{\est_{v,i} - \theta_i\cdot \psi_{v} / n \le 
	- \frac{\varepsilon'}{1+\varepsilon'}  \cdot  \theta_i \cdot\psi_{v}/ n\} \\
	& \le \exp \left(-\frac{\varepsilon'^2}{2(1+\varepsilon')^2 } 
	\cdot \theta_i \cdot\psi_{v}/ n \right) \\
	& \le \exp \left(-\frac{\varepsilon'^2 \cdot x_i}{2n} \cdot \theta_i\right)
	\quad \qquad  \mbox{\{use $x_i \le \frac{\psi^{(k)}}{(1+\varepsilon')^2} \le
		\frac{\psi_v}{(1+\varepsilon')^2}$\} } \\
	& \le \frac{1}{2 n^{\ell+1} \log_2 n}.
	\end{align*}
	
	Note that $\est_i^{(k)}$ is the $k$-th largest value among $\{\est_{v,i}\}_{v\in V}$, or
	equivalently $(n-k+1)$-th smallest among $\{\est_{v,i}\}_{v\in V}$.
	But there are at most $n-k$ nodes $v$ with $\psi_v < \psi^{(k)}$, which means that
	there is at least one node $v$ with $\psi_v \ge \psi^{(k)}$ and
	$\est_{v,i} \le \est_i^{(k)}$.
	To be precise, such a $v$ has $\psi_v$ ranked before or the same as $\psi^{(k)}$, and thus there
	are at most $k$ such nodes.
	Then we have
	\begin{align*}
	& \Pr\{n \cdot \est^{(k)}_{i} / \theta_i \le \psi^{(k)} / (1+\varepsilon') \} \\
	& \le \Pr \{\exists v\in V, \psi_v \ge \psi^{(k)}, 
	n \cdot \est_{v,i} / \theta_i \le \psi_v / (1+\varepsilon')   \} \\
	& \le k \Pr \{n \cdot \est_{v,i} / \theta_i \le \psi_v / (1+\varepsilon')   \} \\
	& \le \frac{k}{2 n^{\ell+1}\log_2 n}.
	\end{align*}
	Since $\psi^{(k)} \ge (1+\varepsilon')^2 \cdot x_i$, we have that
	$n \cdot \est^{(k)}_i / \theta_i > \psi^{(k)} / (1+\varepsilon')$ implies that
	$n \cdot \est^{(k)}_i / \theta_i > (1+\varepsilon') \cdot x_i$.
\end{proof}

\begin{lemma} \label{lem:expectedthetas}
	For both $\btheta'$ and $\btheta$, we have
	$\E[\btheta'] = O(\ell n \log n / (\psi^{(k)} \varepsilon^2))$ and
	$\E[\btheta'] = O(\ell n \log n / (\psi^{(k)} \varepsilon^2))$, 
	when $\ell \ge (\log_2 k - \log_2 \log_2 n)/\log_2 n$.
\end{lemma}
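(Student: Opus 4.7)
\textbf{Proof plan for Lemma~\ref{lem:expectedthetas}.} The plan is to bound both $\E[\btheta']$ and $\E[\btheta]$ by a case analysis that separates a high-probability ``good'' event (where Phase~1 terminates early, yielding a tight lower bound $\LB$) from a low-probability ``bad'' event (where we fall back on the trivial worst-case bound). The condition $\ell \ge (\log_2 k - \log_2 \log_2 n)/\log_2 n$ is exactly what is needed so that the bad-event contribution is absorbed.

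\textbf{Bounding $\E[\btheta']$.} Define $i^*$ to be the smallest index with $x_{i^*} = n/2^{i^*} \le \psi^{(k)}/(1+\varepsilon')^2$; so $x_{i^*} \ge \psi^{(k)}/(2(1+\varepsilon')^2)$. By Lemma~\ref{lem:stopearly} applied at iteration $i^*$, with probability at least $1 - k/(2n^{\ell+1}\log_2 n)$, the break condition in line~\ref{line:cond1} is triggered at or before iteration $i^*$, so $\btheta' \le \theta_{i^*}$. Plugging $x_{i^*} \ge \psi^{(k)}/(2(1+\varepsilon')^2)$ into the formula for $\theta_{i^*}$ in line~\ref{line:setthetai} (and using $\varepsilon' = \sqrt{2}\,\varepsilon$) yields $\theta_{i^*} = O(n\ell \log n/(\psi^{(k)}\varepsilon^2))$. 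In the complementary bad event, the loop may run to the last iteration, giving $\btheta' \le \theta_{\lfloor \log_2 n\rfloor -1} = O(n\ell\log n/\varepsilon^2)$. Thus
\[
\E[\btheta'] \;\le\; \theta_{i^*} + \frac{k}{2n^{\ell+1}\log_2 n}\cdot \theta_{\lfloor \log_2 n\rfloor -1}
\;=\; O\!\left(\frac{n\ell\log n}{\psi^{(k)}\varepsilon^2}\right) + O\!\left(\frac{k\,\ell}{n^{\ell}\varepsilon^2}\right).
\]
The condition on $\ell$ is equivalent to $k \le n^\ell \log_2 n$, which combined with $\psi^{(k)} \le n$ shows the second term is dominated by the first, giving the desired $\E[\btheta'] = O(n\ell \log n/(\psi^{(k)}\varepsilon^2))$.

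\textbf{Bounding $\E[\btheta]$.} Since $\btheta = \lceil n((\ell+1)\ln n + \ln 4)(2+\tfrac{2}{3}\varepsilon)/(\varepsilon^2 \LB)\rceil$, it suffices to bound $\E[1/\LB]$. Whenever the break condition in line~\ref{line:cond1} triggers at some iteration $i$, line~\ref{line:setLB} sets $\LB \ge x_i$; and whenever this happens for $i \le i^*$, we get $\LB \ge x_{i^*} \ge \psi^{(k)}/(2(1+\varepsilon')^2)$. By Lemma~\ref{lem:stopearly}, the loop does break by iteration $i^*$ with probability at least $1 - k/(2n^{\ell+1}\log_2 n)$. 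In the bad event, the algorithm falls back to $\LB = 1$. Combining,
\[
\E[1/\LB] \;\le\; \frac{2(1+\varepsilon')^2}{\psi^{(k)}} + \frac{k}{2n^{\ell+1}\log_2 n}\cdot 1 \;=\; O\!\left(\frac{1}{\psi^{(k)}}\right),
\]
where again the bad-event term is absorbed using $\psi^{(k)} \le n$ and the assumed condition on $\ell$. Multiplying through gives $\E[\btheta] = O(n\ell \log n/(\psi^{(k)}\varepsilon^2))$.

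\textbf{Main obstacle.} The routine parts are the algebraic manipulation of $\theta_i$ and the direct appeal to Lemma~\ref{lem:stopearly}. The one genuinely delicate point is reconciling the bad-event contribution with the condition $\ell \ge (\log_2 k - \log_2 \log_2 n)/\log_2 n$: one must verify that in the worst-case fallback ($\btheta'$ as large as $\theta_{\lfloor \log_2 n\rfloor-1}$, or $\LB$ as small as $1$), the probability $k/(2n^{\ell+1}\log_2 n)$ of that event is small enough to be dominated by the main term $O(n\ell\log n/(\psi^{(k)}\varepsilon^2))$. This is where the precise form of the $\ell$-condition matters, and keeping track of the $\psi^{(k)} \le n$ bound alongside the $k \le n^\ell \log_2 n$ bound that the condition provides is the part most prone to slip-ups.
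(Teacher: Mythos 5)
Your proof follows essentially the same route as the paper's: identify the first index $i^*$ at which $\psi^{(k)} \ge (1+\varepsilon')^2 x_{i^*}$, invoke Lemma~\ref{lem:stopearly} to get early termination with probability $1 - k/(2n^{\ell+1}\log_2 n)$, and absorb the worst-case fallback using $\psi^{(k)}\le n$ together with the condition $\ell \ge (\log_2 k - \log_2\log_2 n)/\log_2 n$; routing the bound on $\E[\btheta]$ through $\E[1/\LB]$ is only a cosmetic repackaging of the paper's direct conditioning. The one thing you omit is the boundary case where $\psi^{(k)} < (1+\varepsilon')^2 x_{\lfloor\log_2 n\rfloor-1}$, so that no index $i^*$ exists within the loop's range and Lemma~\ref{lem:stopearly} cannot be applied; the paper dispatches this case first by noting that then $\psi^{(k)} < 4(1+\varepsilon')^2 = O(1)$, so even the unconditional worst-case bounds $\btheta' \le \theta_{\lfloor\log_2 n\rfloor-1} = O(\ell n\log n/\varepsilon^2)$ and $\btheta = O(\ell n\log n/\varepsilon^2)$ (from $\LB\ge 1$) are already $O(\ell n\log n/(\psi^{(k)}\varepsilon^2))$. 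Adding that one observation makes your argument complete.
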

\begin{proof}
	If $\psi^{(k)} < (1+\varepsilon')^2 \cdot x_{\lfloor \log_2 n \rfloor -1}$,
	then 	$\psi^{(k)} < 4(1+\varepsilon')^2$.
	In this case, in the worst case,
	\begin{align*}
	& \btheta' = \theta_{\lfloor \log_2 n \rfloor -1} \\
	& \le  \left\lceil \frac{ n \cdot 
		((\ell + 1)\ln n + \ln \log_2 n + \ln 2) \cdot (2+\frac{2}{3}\varepsilon')}
	{\varepsilon'^2 } \right \rceil \nonumber \\
	& \le \left\lceil \frac{ n \cdot 
		((\ell + 1)\ln n + \ln \log_2 n + \ln 2) \cdot (2+\frac{2}{3}\varepsilon') 
		\cdot 4(1+\varepsilon')^2}
	{\varepsilon'^2 \cdot \psi^{(k)}} \right \rceil \nonumber \\
	& = O(\ell n \log n / (\psi^{(k)} \varepsilon^2)), 
	\end{align*}
	where the last equality uses the fact that $\varepsilon' = \sqrt{2} \cdot \varepsilon$, 
	and the big O notation is for sufficiently small $\varepsilon$.
	Similarly, for $\btheta$, since $\LB \ge 1$, we have
	\begin{align*}
	& \btheta \le \left\lceil \frac{n ((\ell+1)\ln n + \ln 4)(2+ \frac{2}{3} \varepsilon) }{\varepsilon^2} \right\rceil \\
	& \le \left\lceil \frac{n ((\ell+1)\ln n + \ln 4)(2+ 
		\frac{2}{3} \varepsilon) \cdot 4(1+\varepsilon')^2}{\varepsilon^2\cdot \psi^{(k)}} \right\rceil\\
	& = O(\ell n \log n / (\psi^{(k)} \varepsilon^2)). 
	\end{align*}
	Therefore, the lemma holds when $\psi^{(k)} < (1+\varepsilon')^2 \cdot x_{\lfloor \log_2 n \rfloor -1}$.
	
	Now suppose that $\psi^{(k)} \ge (1+\varepsilon')^2 \cdot x_{\lfloor \log_2 n \rfloor -1}$.
	Let $i$ be the smallest index such that 
	$\psi^{(k)} \ge (1+\varepsilon')^2 \cdot x_{i}$.
	Thus $\psi^{(k)} < (1+\varepsilon')^2 \cdot x_{i-1}
	= 2(1+\varepsilon')^2 x_i$ (denote $x_0 = n$).
	
	For $\E[\btheta']$, 
	by Lemma~\ref{lem:stopearly}, with probability at least $1 - \frac{k}{2n^{\ell+1}\log_2 n}$,
	$n \cdot \est^{(k)}_i / \theta_i > (1+\varepsilon') \cdot x_i$, which means that
	Phase 1 would stop in the $i$-th iteration, and thus 
	\begin{align}
	& \btheta' = \theta_i = \left\lceil \frac{ n \cdot 
		((\ell + 1)\ln n + \ln \log_2 n + \ln 2) \cdot (2+\frac{2}{3}\varepsilon')}
	{\varepsilon'^2 \cdot x_i} \right \rceil \nonumber \\
	& \le \left\lceil \frac{ n \cdot 
		((\ell + 1)\ln n + \ln \log_2 n + \ln 2) \cdot (2+\frac{2}{3}\varepsilon') 
		\cdot 2(1+\varepsilon')^2}
	{\varepsilon'^2 \cdot \psi^{(k)}} \right \rceil \nonumber \\
	& = O(\ell n \log n / (\psi^{(k)} \varepsilon^2)), \label{eq:thetaprimebound}
	\end{align}
	where the last equality uses the fact that $\varepsilon' = \sqrt{2} \cdot \varepsilon$.
	
	When Phase 1 stops at the $i$-th iteration, we know that 
	$\LB = n\cdot \est^{(k)}_i / (\theta_i\cdot (1+\varepsilon')) \ge \psi^{(k)} / (1+\varepsilon')^2$,
	again by Lemma~\ref{lem:stopearly}.
	Then, for Phase 2 we have
	\begin{align}
	& \btheta \le \left\lceil \frac{n ((\ell+1)\ln n + \ln 4)(2+ 
		\frac{2}{3} \varepsilon) \cdot (1+\varepsilon')^2}{\varepsilon^2\cdot \psi^{(k)}} \right\rceil
	\nonumber \\
	& = O(\ell n \log n / (\psi^{(k)} \varepsilon^2)). \label{eq:thetaasympbound}
	\end{align}

	With probability at most $\frac{k}{2n^{\ell+1}\log_2 n}$, Phase 1 does not stop at
	the $i$-th iteration and continues to iterations $i' > i$.
	In the worst case, it continues to iteration $\lfloor \log_2 n \rfloor -1$,
	and $\btheta' = O(\ell n \log n / \varepsilon^2)$.
	Combining with Eq.~\eqref{eq:thetaprimebound}, we have
	\begin{align*}
	\E[\btheta'] & = O(\ell n \log n / (\psi^{(k)} \varepsilon^2)) + 
	\frac{k}{2n^{\ell+1}\log_2 n} \cdot O(\ell n \log n / \varepsilon^2) \\
	& = O(\ell n \log n / (\psi^{(k)} \varepsilon^2)),
	\end{align*}
	where the last equality uses the fact that $\psi^{(k)} \le n$, and
	the condition that $\ell \ge (\log_2 k - \log_2 \log_2 n)/\log_2 n$.
	Similarly, for Phase 2, in the worst case $\LB = 1$ and we have 
	$\btheta = O(\ell n \log n / \varepsilon^2)$.
	Combining with Eq.~\eqref{eq:thetaasympbound}, we have
	\begin{align*}
	\E[\btheta] & = O(\ell n \log n / (\psi^{(k)} \varepsilon^2)) + 
	\frac{1}{2n^{\ell+1}\log_2 n} \cdot O(\ell n \log n / \varepsilon^2) \\
	& = O(\ell n \log n / (\psi^{(k)} \varepsilon^2)).
	\end{align*}
	This concludes the lemma.
\end{proof}

We remark that, the setting of $\varepsilon' = \sqrt{2}\cdot \varepsilon$ is to
balance the complicated terms appearing in the upper bound
of $\E[\btheta'] + \E[\btheta]$, 
as suggested in~\cite{tang15}.

\begin{lemma}[Shapley Value Estimators: Scalability]  \label{lem:time}
	Under 
	Assumption \ref{assump:ComputationalTriggerModel}
	and the condition $\ell \ge (\log_2 k - \log_2 \log_2 n)/\log_2 n$, the expected running time
	of {\ASVRR} is 
	$
	O(\ell n \log n \cdot (EPT+1) / (\psi^{(k)} \varepsilon^2)) = 
	O(\ell (m+n) \log n \cdot \E[\sigma(\tilde{\bv})]/ (\psi^{(k)} \varepsilon^2)),
	$
	where $EPT=\E[\omega(\bR)]$ is the expected width of a random RR set,
	and $\tilde{\bv}$ is a random node drawn from $V$ with probability proportional to the 
	in-degree of $\tilde{\bv}$.
\end{lemma}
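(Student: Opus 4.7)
The plan is to derive this scalability bound by chaining together the three ingredients that have already been established in the excerpt: the structural time accounting from Lemma~\ref{lem:timefrombtheta}, the bounds on the expected number of RR sets from Lemma~\ref{lem:expectedthetas}, and the identity relating $EPT$ to single-node influence from Lemma~\ref{lem:EPT}. Each of these has already absorbed the hard probabilistic work, namely the martingale stopping argument for the time accounting, the Phase~1 break-condition analysis combined with Chernoff/martingale tail bounds for the $\btheta$ bounds, and a direct combinatorial computation for $EPT$. What remains is essentially a sequence of substitutions and constant-factor simplifications.

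First I would invoke Lemma~\ref{lem:timefrombtheta}, which says the expected running time of {\ASVRR} is $\Theta((\E[\btheta']+\E[\btheta])\cdot (EPT+1))$. This cleanly separates the expected total number of RR sets sampled, across both Phase~1 and Phase~2, from the average cost per RR set. Next, under the hypothesis $\ell \ge (\log_2 k - \log_2 \log_2 n)/\log_2 n$, I would apply Lemma~\ref{lem:expectedthetas} to conclude $\E[\btheta'] + \E[\btheta] = O(\ell n \log n / (\psi^{(k)} \varepsilon^2))$. Multiplying by $(EPT+1)$ yields the first form of the stated bound, namely $O(\ell n \log n \cdot (EPT+1) / (\psi^{(k)} \varepsilon^2))$.

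To obtain the second (more interpretable) form, I would then apply Lemma~\ref{lem:EPT}, which gives $EPT = (m/n)\cdot \E_{\tilde{\bv}}[\sigma(\{\tilde{\bv}\})]$ where $\tilde{\bv}$ is drawn from $V$ with probability proportional to in-degree. The short calculation
\[
n\cdot (EPT+1) \;=\; m \cdot \E_{\tilde{\bv}}[\sigma(\{\tilde{\bv}\})] + n \;\le\; (m+n)\cdot \E_{\tilde{\bv}}[\sigma(\{\tilde{\bv}\})],
\]
which uses the trivial fact that $\sigma(\{v\}) \ge 1$ for every $v \in V$ (so the expectation is at least $1$), then lets me fold the $+1$ into the leading factor and replace $n$ by $m+n$. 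Substituting into the expression from the previous paragraph produces the second stated bound $O(\ell (m+n)\log n \cdot \E[\sigma(\tilde{\bv})]/(\psi^{(k)}\varepsilon^2))$.

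Since essentially all the analytic difficulty was handled in the three prior lemmas, the only place that requires any genuine care is the last step: correctly handling the additive $+1$ inside $EPT+1$ and the factor of $n$ out front so that the $m$ and $n$ terms combine cleanly into $(m+n)$ while preserving the $\E[\sigma(\tilde{\bv})]$ factor. This is the main obstacle only in the sense of bookkeeping; no new probabilistic argument or concentration inequality is needed beyond what has already been invoked.
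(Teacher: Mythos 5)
Your proposal is correct and follows essentially the same route as the paper, which derives the lemma immediately from Lemmas~\ref{lem:timefrombtheta}, \ref{lem:EPT}, and~\ref{lem:expectedthetas}. Your extra care in the final substitution step (using $\E_{\tilde{\bv}}[\sigma(\{\tilde{\bv}\})]\ge 1$ to absorb the additive $n$ into $(m+n)\cdot\E[\sigma(\tilde{\bv})]$) is a valid and slightly more explicit rendering of what the paper leaves implicit.
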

\begin{proof}
	The result is immediate from Lemmas~\ref{lem:timefrombtheta}, \ref{lem:EPT},
	and~\ref{lem:expectedthetas}.
\end{proof}

Together, Lemmas~\ref{lem:unbiassp}, \ref{lem:absnorm}, \ref{lem:errorbound}
and \ref{lem:time} establish Theorem~\ref{thm:ASVRR}. 

\subsection{On Theorem~\ref{thm:SNI}}

The proof of Theorem~\ref{thm:SNI} follows the same proof structure as the proof of Theorem~\ref{thm:ASVRR}.
The only difference is to replace the definition $X_R(v)$ with the following $X'_R(v)$:
\[
X'_R(v) = \left\{ \begin{array}{lr}
0 & \mbox{if $v\not\in R$; }\\
1 & \mbox{if $v\in R$.}
\end{array}
\right.
\]
Then, by Eq.~\eqref{eq:rrsetinf} in Lemma~\ref{lem:margin}, we know that the SNI centrality of node $v$ is:
$\psi^{\SNI}_v(\calI) = \sigma_{\calI}(\{v\}) = n \cdot  \E_{\bR}[X_{\bR}(v)]$.
This replaces the corresponding Lemma~\ref{lem:spexp} for Shapley centrality.
In the rest of the proof of Theorem~\ref{thm:ASVRR}, we replace every occurrence of $X_R(v)$ with
$X'_R(v)$ and keep in mind that now $\psi$ refers to the SNI centrality, then all the analysis went through
without change for SNI centrality, and thus Theorem~\ref{thm:SNI} holds
(Lemma~\ref{lem:absnorm} for absolute normalization does not apply to SNI centrality and no other
result depends on this lemma).

\subsection{Adaptation to Near-Linear-Time Algorithm} \label{app:nearlt}

We remark that, if we want a near-linear time algorithm for Shapley or SNI centrality
with some relaxation in robustness,
we can make an easy change to Algorithm {\ASVRR} or {\ASNIRR} as follows.
In line~\ref{line:estnew1}, replace
$\est_{\bu} = \est_{\bu} + 1/|\bR|$ with $\est_{\bu} = \est_{\bu} + 1$, and use parameter
$k=1$.
What this does is to estimate a lower bound $\LB$ of the largest single node influence
$\sigma^*_1 = \max_{v\in V} \sigma(\{v\})$.
Note that for line~\ref{line:estnew2}, it is still the case that,
for {\ASVRR} we use $\est_{\bu} = \est_{\bu} + 1/|\bR|$ while for {\ASNIRR} we use
$\est_{\bu} = \est_{\bu} + 1$.
Then we have this alternative result:
\begin{theorem} \label{thm:ASVRRsigma}
	If we use $k=1$ and replace $\est_{\bu} = \est_{\bu} + 1/|\bR|$ in line~\ref{line:estnew1} 
	of Algorithm~\ref{alg:rrshapleynew} with
	$\est_{\bu} = \est_{\bu} + 1$, while keeping the rest the same for {\ASVRR} and 
	{\ASNIRR} respectively, then the revised algorithm
	guarantees that with probability at least $1-\frac{1}{n^\ell}$;
	\[
	|\epsi_v - \psi_v| \le \varepsilon \sigma^*_1, \forall v\in V,
	\]
	with expected running time $O(\ell (m+n) \log n / \varepsilon^2)$.
	Note that $\psi_v$ above represents Shapley centrality of $v$ for {\ASVRR} and SNI centrality of $v$
	for {\ASNIRR}, and $\epsi_v$ is the algorithm output for the corresponding estimated centrality of $v$.
\end{theorem}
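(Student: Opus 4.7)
The plan is to mirror the proof structure of Theorem~\ref{thm:ASVRR} (Lemmas~\ref{lem:suftheta}--\ref{lem:time}), with two key modifications: Phase~1 now estimates $\sigma^*_1$ instead of the $k$-th largest Shapley/SNI value, and the absolute error in Phase~2 is calibrated to $\sigma^*_1$. The central structural fact enabling both modifications is that $\psi_v\le \sigma^*_1$ for all $v\in V$ under either centrality notion: for SNI this is by definition, and for Shapley it follows from submodularity of $\sigma$, since every marginal contribution $\sigma(S\cup\{v\})-\sigma(S)\le \sigma(\{v\})\le \sigma^*_1$, so the expectation defining $\psi^{\Shapley}_v$ is also bounded by $\sigma^*_1$.

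First, I would analyze Phase~1. With the modified update $\est_{\bu}=\est_{\bu}+1$, the aggregate $\est_{v,i}=\sum_{j=1}^{\theta_i}X'_{\bR^{(1)}_j}(v)$ with $X'_R(v)=\I\{v\in R\}$, and by Eq.~\eqref{eq:rrsetinf} each summand has mean $\sigma(\{v\})/n$. Setting $k=1$, $\est^{(k)}_i=\max_v \est_{v,i}$ targets $\sigma^*_1$. Replaying the martingale-tail argument of Lemma~\ref{lem:LBx} (with $\psi^{(k)}$ replaced by $\sigma^*_1$ and $k=1$) and then Lemma~\ref{lem:LB}, I would conclude that, with probability at least $1-1/(2n^\ell)$, the value $\LB$ set in line~\ref{line:setLB} satisfies $\LB\le \sigma^*_1$. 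The assumption $\psi^{(k)}\ge 1$ used in Lemma~\ref{lem:LB} is automatic here since $\sigma^*_1\ge 1$.

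Second, I would handle Phase~2 accuracy for both algorithms uniformly. In both cases the per-sample variables $X_{\bR}(v)$ (Shapley) or $X'_{\bR}(v)$ (SNI) lie in $[0,1]$ with mean $\psi_v/n$. Condition on $\LB\le \sigma^*_1$, so by line~\ref{line:thetanew}, $\btheta\ge n((\ell+1)\ln n+\ln 4)(2+\tfrac{2}{3}\varepsilon)/(\varepsilon^2\sigma^*_1)$. Fix $v$ with $\psi_v>0$ and apply Chernoff (Fact~\ref{fact:chernoff}) with deviation parameter $\delta=\varepsilon\sigma^*_1/\psi_v$. For the upper tail (valid for all $\delta>0$), using $\psi_v\le \sigma^*_1$ in the denominator gives
\begin{equation*}
\Pr(\est_v-\tfrac{\btheta\psi_v}{n}\ge \varepsilon\sigma^*_1\tfrac{\btheta}{n})\le \exp\!\left(-\tfrac{\varepsilon^2\sigma^*_1\btheta}{n(2+\tfrac{2}{3}\varepsilon)}\right)\le \tfrac{1}{4n^{\ell+1}}.
\end{equation*}
For the lower tail there is a minor subtlety, which I flag as the one nonroutine point: if $\psi_v<\varepsilon\sigma^*_1$ then $\delta\ge 1$ and the stated lower-tail bound does not apply, but then $\est_v\ge 0$ forces $\est_v-\btheta\psi_v/n\ge -\varepsilon\sigma^*_1\btheta/n$ deterministically, so the bad event has probability zero; otherwise $\delta<1$ and the same Chernoff calculation yields the required bound. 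Union bounding over $v\in V$ gives failure probability $\le 1/(2n^\ell)$ in Phase~2, and combined with Phase~1 this totals $1/n^\ell$, proving the stated accuracy.

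Finally, for the running time I would adapt Lemmas~\ref{lem:stopearly}, \ref{lem:expectedthetas}, \ref{lem:timefrombtheta}, and \ref{lem:EPT}. With $k=1$, the stopping-time argument now gives $\E[\btheta']+\E[\btheta]=O(\ell n\log n/(\varepsilon^2\sigma^*_1))$: in the main case one chooses the smallest $i$ with $\sigma^*_1\ge (1+\varepsilon')^2 x_i$ (so $x_i=\Theta(\sigma^*_1)$), and the tail contribution from the failure event of size $1/(2n^{\ell+1}\log_2 n)$ is absorbed since $\sigma^*_1\le n$. Using Lemma~\ref{lem:EPT}, $EPT=m\,\E[\sigma(\tilde{\bv})]/n$, and the key bound $\E[\sigma(\tilde{\bv})]\le \sigma^*_1$ (every realization of $\sigma(\tilde{\bv})$ is at most $\sigma^*_1$), Lemma~\ref{lem:timefrombtheta} yields
\begin{equation*}
O\!\left(\tfrac{\ell n\log n}{\varepsilon^2\sigma^*_1}\right)\cdot\left(\tfrac{m\E[\sigma(\tilde{\bv})]}{n}+1\right)=O\!\left(\tfrac{\ell m\log n}{\varepsilon^2}\cdot\tfrac{\E[\sigma(\tilde{\bv})]}{\sigma^*_1}+\tfrac{\ell n\log n}{\varepsilon^2\sigma^*_1}\right)=O\!\left(\tfrac{\ell(m+n)\log n}{\varepsilon^2}\right),
\end{equation*}
using $\sigma^*_1\ge 1$ in the second term. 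This is the claimed near-linear bound and completes the proof sketch.
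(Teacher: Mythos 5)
Your proposal is correct and follows essentially the same route as the paper, which itself only sketches this proof by saying ``replace $\psi^{(k)}$ with $\sigma^*_1$ everywhere'' in the analysis of Theorem~\ref{thm:ASVRR} and note that $\E[\sigma(\tilde{\bv})]\le\sigma^*_1$ kills the ratio in the running time. You usefully make explicit two points the paper leaves implicit --- that $\psi_v\le\sigma^*_1$ for both centralities (so every node falls into the ``error relative to the threshold'' case of Eq.~\eqref{eq:relativeerror}) and the lower-tail edge case when $\psi_v<\varepsilon\sigma^*_1$ --- both handled correctly.
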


The proof of Theorem~\ref{thm:ASVRRsigma} would follow exactly the same structure as
the proof of Theorem~\ref{thm:ASVRR}.
To complete the proof, one only needs to observe that
by Lemma~\ref{lem:margin}, $\sigma(\{u\}) = n\cdot \E[\I\{u\in \bR\}]$ with a random
RR set $\bR$, and thus after changing to $\est_u = \est_u + 1$ in line~\ref{line:estnew1},
$n \cdot \est_u/\theta_i$ provides an estimate of $\sigma(\{u\})$ at the end of
the $i$-th iteration of Phase 1.
This means that $\LB$ obtained in Phase 1 is an estimate of the lower bound of the largest
single node influence $\sigma^*_1$.
Thus, essentially we only need to replace $\psi^{(k)}$ with $\sigma^*_1$ everywhere in the proof and
the theorem statement.
Finally, because $\E[\sigma(\tilde{\bv})] \le \sigma^*_1$, 
the time complexity no longer has the extra ratio term $\E[\sigma(\tilde{\bv})] / \sigma^*_1$.
The detailed proof is thus omitted.

\section{Extension to Weighted Influence Models} \label{app:weightedinf}

In this section, we extend 
 our results to models with weighted influence-spread functions.
We focus on the extension of Shapley centrality, and results on SNI centrality can be similarly derived.
The extended model uses node weights to capture 
  the practical ``nodes are not equal when activated'' 
  in network influence.
Let $w: V \rightarrow \R$ be a 
	non-negative weight function over $V$,
	 i.e., $w(v) \ge 0$, $\forall v\in V$. 
For any subset $S \subseteq V$, let $w(S) = \sum_{v\in S} w(v)$.
We can extend the cardinality-based influence spread $\sigma(S)$ to
  {\em weighted influence spread}: $\sigma^w(S) = \E[w(\bI(S))]$.
Here, the influence spread is weighted based on the value of activated nodes
in $\bI(S)$.
Note that, in the equivalent live-edge graph model for the triggering model, 
  we have:
$\sigma^w(S) = \E_{\bL}[w(\Gamma(\bL, S))]$.
Note also that set function $\sigma^w(S)$ is still monotone and submodular.
The influence instance $\calI$ is extended to include weight $w$.

\subsection{Algorithm {\ASVRRW}} 

Our Algorithm {\ASVRR} can be extended to the triggering model
  with weighted influence spreads.
Algorithm {\ASVRRW} follows essentially 
  the same steps of {\ASVRR}.
The only exception is that,
  when generating a random RR set $\bR$ 
  rooted at a random node $\bv$	(either in Phase 1 or Phase 2), 
  we select the root $\bv$ with probability 
  proportional to the weights of nodes.
To differentiate from random $\bv\sim V$, we use $\bv^w\sim_w V$ to denote
   a random node $\bv^w$ selected from $V$ according to
   node weights.
The random RR set generated from root $\bv^w$ is denoted as $\bR(\bv^w)$.
All the other aspects of the algorithm remains exactly the same.
In particular, the statement of Theorem~\ref{thm:ASVRR} remains essentially the same, except that
	$\psi$ is now the Shapley centrality of the weighted influence instance $\calI = (G, E, P_{\calI}, w)$.


The proof of Lemma~\ref{lem:margin} is changed accordingly to:
\begin{align*}
\sigma^w(S) = & n \cdot \E_{\bL}\left[ \E_{\bu^w} [\I\{\bu^w \in \Gamma(\bL, S) \} ] 
\right] \\
= &   n \cdot \E_{\bL,\bu^w} [\I\{\Gamma^-(\bL, \bu^w) \cap S \ne \emptyset \} ],
\end{align*}
where $\Gamma^-(L, u)$ is the set of nodes in graph $L$ that can reach $u$, and
$\bu^w$ is a random node drawn proportionally according to weight function $w$.
With random live-edge graph $\bL$, $\Gamma^-(\bL, u)$ is the same as the
RR set generated from root $u$, which is denoted as $\bR(u)$.
Thus, we have:
\begin{align*}
\sigma^w(S) = & n \cdot \E_{\bR(),\bu^w} [\I\{\bR(\bu^w) \cap S \ne \emptyset \}] \\
= &  n \cdot \Pr_{\bR(),\bu^w} (\bR(\bu^w) \cap S \ne \emptyset ),
\end{align*}
where the notation $\bR()$ means the randomness is only on the random generation
of reversed reachable set, but not on the random choice of the root node.
We use $\bR()$ to distinguish it from $\bR$, 
  which include the randomness of selecting the root node.
Weighted marginal spread $\sigma^w(S\cup \{v\}) - \sigma^w(S)$ 
  can be similarly argued.
  
The rest of the proof, including the proof on robustness and time complexity, essentially
	remains the same as given in Appendix~\ref{app:thm1}.

\subsection{Centrality Axioms for Weighted Influence Models}

In this section, we presented our axiomatic analysis for weighted influence
  models.

\subsubsection*{{\sc Weighted Social-Influence Instances}}

Mathematically, 
a weighted social-influence instance is a $4$-tuple $\calI^W=(V, E, P_{\calI},W)$,
 where (1) the influence instance $\calI=(V, E, P_{\calI})$ characterizes 
   the probabilistic profile of the influence model.
(2) $W$ is a 
	non-negative weight 
  function over $V$, i.e., $W(v) \geq 0, \forall v\in V$.
Although $W$ does not impact the influence process, 
  it defines the value of the activated set, and hence
  impacts the influence-spread profile of the model:
The weighted influence spread $\sigma_{\calI^W}$ is then given by:
\[
\sigma_{\calI^W}(S) = \E[W(\bI_{\calI}(S))] = \sum_{T\supseteq S} P_{\calI}(S, T) W(T).
\] 

Note that here we use the capital letter $W$ as the weight function that is integrated
	into the weighted influence instance $\calI^W = (V,E, P_{\calI}, W)$.
The capital letter $W$ is used to differentiate from the small letter $w$ used later
	as the parametrized weight function outside the influence instance.

Because $\calI$ and $W$ address different aspects of 
  the weighted influence model, $\calI^W=(V, E, P_{\calI},W)$,
 we assume they are independent of each other. 
We also extend the definition of centrality measure (Definition~\ref{def:CM}) to
{\em weighted centrality measure}, which is a mapping from a weighted influence
instance $\calI^W=(V, E, P_{\calI},W)$ to a real vector in $\R^{|V|}$.
We use $\psi^W$ to denote such a mapping.

\subsubsection*{{\sc Extension of Axioms 1-5}}

\begin{itemize}
\item Axiom \ref{axiom:anonymity} (Anonymity) has a natural extension, 
  if when we permute the influence-distribution-profile $\calI$ 
  with a $\pi$, we also permute weight function $W$ by $\pi$.
We will come back to this if-condition shortly.

\item Axiom \ref{axiom:normalization} (Normalization) is slightly changed such
	that the sum of the centrality measures is the total weights of all nodes:
\begin{axiom}[Weighted Normalization] \label{axiom:weightednormalization}
	For every weighted influence instance $\calI^W=(V, E, P_{\calI},W)$, 
	$\sum_{v\in V} \psi_{v}(\calI)  = W(V)$.
\end{axiom}

%
%
%

\item Axiom~\ref{axiom:sink} (Independence of Sink Nodes)
  remains the same.
%

\item Axiom \ref{axiom:bayesian} (Bayesian Influence) remains the same.

\item Axiom~\ref{axiom:critical} (Bargaining with Critical Sets) is replaced by 
  the following natural weighted version:

\begin{axiom} \mbox{\em (}{\sc Weighted Bargaining with Critical Sets}\mbox{\em )}
   \label{axiom:weightedcritical}
For the weighted critical set instance 
 $\calI^W_{R,v} = (R\cup\{v\}, E, P_{\calI_{R, v}}, W)$,
   the weighted centrality measure of $v$
is $\frac{|R| W(v)}{|R|+1}$, i.e. $\psi^W_v(\calI^W_{R,v}) = \frac{|R| W(v)}{|R|+1}$. 
\end{axiom}
The justification of the above axiom follows the same 
Nash bargaining argument for the non-weighted case.
Now the threat point is $(W(R), 0)$ and the slack is $W(v)$. 
The solution of 
$$(x_1,x_2) \in \argmax_{x_1\ge r, x_2\ge0, x_1+x_2=r+1} (x_1-W(R))^{1/r} \cdot x_2$$
	gives the fair share of $v$ as $\frac{|R| W(v)}{|R|+1}$.

\end{itemize}

\subsubsection*{\sc Characterization of Weighted Social Influence Model}

Let $\cA^W$ denote the set of Axioms~\ref{axiom:anonymity},
	\ref{axiom:sink}, \ref{axiom:bayesian}, , \ref{axiom:weightednormalization} and \ref{axiom:weightedcritical}.
Let {\em weighted Shapley centrality}, denoted as $\psi^{W,Shapley}$, 
	be the Shapley value of the weighted influence spread
	$\sigma_{\calI^W}$,
i.e., $\psi^{W,Shapley}(\calI^W) = \phi^{\Shapley}(\sigma_{\calI^W})$.
We now prove the following characterization theorem for weighted influence models:

\begin{theorem} {\sc (Shapley Centrality of Weighted Social Influence)} 
	\label{thm:weightedShapleyCen}
	Among all weighted centrality measures, 
	the weighted Shapley centrality $\psi^{W,Shapley}$ is 
	the unique weighted centrality measure that satisfies axiom set $\cA^W$
	(Axioms~\ref{axiom:anonymity},
	\ref{axiom:sink}, \ref{axiom:bayesian}, \ref{axiom:weightednormalization}, 
	and \ref{axiom:weightedcritical}).
\end{theorem}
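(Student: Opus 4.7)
The plan is to mirror the proof of Theorem~\ref{thm:ShapleyCen}, threading the weight function $W$ through each step. For soundness I would verify each axiom directly for $\psi^{W,\Shapley} = \phi^{\Shapley}(\sigma_{\calI^W})$: Anonymity and Bayesian Influence follow from the corresponding properties of $\phi^{\Shapley}$; Weighted Normalization is Shapley efficiency combined with $\sigma_{\calI^W}(V) = W(V)$; Weighted Bargaining with Critical Sets reduces to a direct calculation, since the marginal of $v$ in $\calI^W_{R,v}$ is $W(v)$ exactly on permutations with $S_{\bpi,v} \not\supseteq R$, an event of probability $|R|/(|R|+1)$. The Independence of Sink Nodes axiom requires lifting Lemmas~\ref{lem:sinkmargin}--\ref{lem:projectmargin} to the weighted setting via the identity $\sigma_{\calI^W}(S \cup \{v\}) - \sigma_{\calI^W}(S) = W(v)\cdot \Pr(v \notin \bI_{\calI}(S))$ for a sink $v$, after which the permutation-splitting argument of Lemma~\ref{lem:shapleycen} transfers without structural change.

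For uniqueness I would follow the skeleton of Lemma~\ref{lem:unique}. Fixing $W$, Bayesian Influence forces any conforming $\psi^W$ to be an affine map from the $P_{\calI}$-profile space $\R^M$ to $\R^{|V|}$, and Lemma~\ref{lem:linearind} continues to span $\R^M$ together with the null profile since it depends only on $P_{\calI}$; hence it suffices to pin $\psi^W$ down on every $\calI^W_{R,U}$ and on $\calI^{W,N}$. For the null instance, iterating Axiom~\ref{axiom:sink} to a single-node residual and then applying Weighted Normalization yields $\psi^W_v(\calI^{W,N}) = W(v)$. For a sink node $v \in V \setminus R$ in $\calI^W_{R,U}$, Lemma~\ref{lem:projcritical} (whose proof uses only the $P_{\calI}$-structure and is therefore unaffected by weights) combined with Axiom~\ref{axiom:sink} reduces $\psi^W_v(\calI^W_{R,U})$ to the value in $\calI^W_{R,v}$ on $R \cup \{v\}$, which Axiom~\ref{axiom:weightedcritical} fixes at $|R|W(v)/(|R|+1)$.

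The hard part will be pinning down $\psi^W_u(\calI^W_{R,U})$ for a critical node $u \in R$ under an arbitrary weight assignment on $R$: Weighted Normalization only constrains the sum $\sum_{u\in R}\psi^W_u$, and Anonymity equates two critical nodes' values only when their weights already coincide, since a witnessing permutation transports the weights along with the labels. To break this residual freedom I would read Bayesian Influence in the weighted setting as linearity of $\psi^W$ jointly in $(P_{\calI}, W)$---justified because $\sigma_{\calI^W}$ is linear in both arguments, so any Bayesian mixture of weighted influence processes yields the convex-combined characteristic function---and then decompose $W = \sum_{w \in V} W(w)\,\delta_w$ along the coordinate weights $\delta_w$. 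Each $\psi^{\delta_w}$ is then highly constrained: Axiom~\ref{axiom:weightedcritical} handles the sink coordinate, iterated sink projection on $\calI^{W,N}$ combined with Normalization forces $\psi^{\delta_w}_w = 1$ and $\psi^{\delta_w}_u = 0$ on the null instance for $u \ne w$, and a Null-Player-style argument---identifying every non-$w$ critical node as contributing zero marginal in $\sigma_{\calI^{\delta_w}}$ and lifting this property through the affine $P_{\calI}$-mapping---pins each remaining coordinate.

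Finally, for independence of the five axioms I would adapt each of Lemmas~\ref{lem:indaxiom1}--\ref{lem:indaxiom5} by inserting a $W(v)$ factor where constants were previously used, so that Weighted Normalization is preserved and each alternative measure still satisfies the remaining four axioms. The main conceptual subtlety of the whole argument is justifying the joint $(P_{\calI}, W)$-linearity reading of Bayesian Influence together with the accompanying Null-Player-style reduction; without these two ingredients the axiom list fixes only the sum and the sink-node values but leaves the individual critical-node centralities underdetermined, so their combination appears essential for uniqueness.
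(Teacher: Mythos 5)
Your soundness argument and your treatment of the sink and target nodes in $\calI^W_{R,U}$ match the paper's proof (Lemmas~\ref{lem:weightedshapleycen}, \ref{lem:weightedsink}, and the first parts of Lemma~\ref{lem:weighteduniquebasis}): the weighted sink-marginal identity, the reduction of $v\in U\setminus R$ to $\calI^W_{R,v}$ via projection, and the null-instance value $W(v)$ are all exactly as in the paper. You have also correctly located the crux: with a fixed non-uniform $W$, Anonymity and Weighted Normalization alone leave the critical-node values in $\calI^W_{R,U}$ underdetermined. The problem is your proposed resolution. The ``Null-Player-style argument'' is not available: Null Player is a property of the Shapley value, not an axiom in $\cA^W$, so invoking it to constrain an \emph{arbitrary} conforming $\psi^W$ is circular in a uniqueness proof. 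Moreover, your decomposition $W=\sum_w W(w)\delta_w$ produces the family $\{(P_{\calI_{R,U}},\delta_w)\}$, which is badly overcomplete in the joint space, so even granting joint linearity you would have to verify that the values you assign are consistent across the many different affine representations of the same instance --- and without the Null-Player step the linear relations among the $\psi^{\delta_w}(\calI_{R,U})$ (Normalization plus the Anonymity relation $\psi^{\delta_w}_u(\calI_{R,U})=\psi^{\delta_u}_w(\calI_{R,U})$) do not close: they fix only sums, leaving the individual critical-node coordinates free.

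The paper closes this gap by a different choice of basis rather than by an extra reduction. It treats $(P_{\calI},W)$ as a single vector in $\R^{M+n}$ and takes as basis the \emph{uniform-weight} critical set instances $(P_{\calI_{R,U}},\vec{1})$ together with the single-coordinate-weight null instances $(P_{\calI^N},\vec{e}_i)$ (Lemma~\ref{lem:weightedlinearind} proves linear independence, with $(P_{\calI^N},\vec{0})$ as the origin). On each basis instance the axioms pin the centrality down without any Null-Player input: for $(P_{\calI_{R,U}},\vec{1})$ all critical nodes carry equal weight, so Anonymity genuinely equates them and Weighted Normalization fixes their common value; for $(P_{\calI^N},\vec{e}_i)$ iterated sink projection plus Weighted Normalization suffices. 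Arbitrary weight functions are then reached purely by the affine extension of Lemma~\ref{lem:linearmap}; the instances $\calI^{\delta_w}_{R,U}$ you wanted to evaluate directly are never evaluated axiomatically at all --- their values fall out of linearity afterwards. To repair your proof you should replace the $\delta_w$-decomposition and the Null-Player step with this basis construction (or an equivalent one in which every critical-set instance used as a building block has weight-symmetric critical nodes).
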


The proof of Theorem~\ref{thm:weightedShapleyCen} follows the same proof structure of
	Theorem~\ref{thm:ShapleyCen}, and the main extension is on building a new full-rank
	basis for the space of weighted influence instances $\{\calI^W\}$, since this space has
	higher dimension than the unweighted influence instances $\{\calI\}$.

\begin{lemma}[Weighted Soundness] \label{lem:weightedshapleycen}
	The weighted Shapley centrality $\psi^{W,Shapley}$ satisfies all Axioms in $\cA^W$.
\end{lemma}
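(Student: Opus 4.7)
My plan is to verify each of the five axioms in $\cA^W$ for $\psi^{W,\Shapley}$ separately, exploiting the classical axioms of the Shapley value applied to the weighted characteristic function $\sigma_{\calI^W}$. The overall strategy parallels the proof of Lemma~\ref{lem:shapleycen}, but the sink-node identities and the critical-set computation must be rederived with the weight function $W$ in play.

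Three of the axioms are essentially immediate. The Weighted Normalization Axiom (Axiom~\ref{axiom:weightednormalization}) follows from Shapley's Efficiency: $\sum_{v\in V} \psi^{W,\Shapley}_v(\calI^W) = \sigma_{\calI^W}(V) = P_{\calI}(V,V)\cdot W(V) = W(V)$. Anonymity follows because relabeling $(V,E,P_{\calI},W)$ by a permutation $\pi$ produces $\sigma_{\pi(\calI^W)}(\pi(S)) = \sigma_{\calI^W}(S)$, and the Shapley value commutes with player relabeling. For the Bayesian Influence Axiom, linearity of expectation in the weight gives
\[
\sigma_{\calI^W_{\cB(\{\calI^\eta\},\lambda)}}(S) = \sum_{T\supseteq S}\Big(\sum_\eta \lambda_\eta P_{\calI^\eta}(S,T)\Big)W(T) = \sum_\eta \lambda_\eta \sigma_{\calI^{W,\eta}}(S),
\]
after which Shapley's Linearity delivers $\psi^{W,\Shapley}(\calI^W_{\cB(\{\calI^\eta\},\lambda)}) = \sum_\eta \lambda_\eta \psi^{W,\Shapley}(\calI^{W,\eta})$.

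For Independence of Sink Nodes I would first establish a weighted analog of Lemma~\ref{lem:sinkmargin}: if $v$ is a sink in $\calI$, expanding $\sigma_{\calI^W}(S\cup\{v\})$ via the sink identity $P_{\calI}(S\cup\{v\},T\cup\{v\}) = P_{\calI}(S,T) + P_{\calI}(S,T\cup\{v\})$ and regrouping yields
\[
\sigma_{\calI^W}(S\cup\{v\}) - \sigma_{\calI^W}(S) = W(v)\cdot\Pr(v\notin \bI_{\calI}(S)).
\]
Independence of sink marginals ($\sigma_{\calI^W}(S\cup\{u,v\}) - \sigma_{\calI^W}(S\cup\{v\}) = \sigma_{\calI^W}(S\cup\{u\}) - \sigma_{\calI^W}(S)$ for two sinks $u,v$) then drops out exactly as in Lemma~\ref{lem:sinkind}. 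Defining the weighted projection $\calI^W\setminus\{v\}$ by projecting $P_{\calI}$ and restricting $W$ to $V\setminus\{v\}$, the weighted versions of Lemmas~\ref{lem:projectsigma} and~\ref{lem:projectmargin} follow by the same regrouping. The permutation-splitting argument in the proof of Lemma~\ref{lem:shapleycen}, conditioned on whether $v$ precedes $u$ in $\bpi$, then gives $\psi^{W,\Shapley}_u(\calI^W) = \psi^{W,\Shapley}_u(\calI^W\setminus\{v\})$ verbatim.

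The Weighted Bargaining with Critical Sets Axiom reduces to a direct calculation on $\calI^W_{R,v}$. For any $S\subseteq R$ one checks from the definition that $\sigma_{\calI^W_{R,v}}(S) = W(S)$ if $R\not\subseteq S$ and $\sigma_{\calI^W_{R,v}}(S) = W(S) + W(v)$ if $R\subseteq S$, so $v$'s marginal contribution is $W(v)$ exactly when $R\not\subseteq S_{\bpi,v}$ and $0$ otherwise. Averaging over a uniformly random permutation and using $\Pr(R\subseteq S_{\bpi,v}) = 1/(|R|+1)$ gives $\psi^{W,\Shapley}_v(\calI^W_{R,v}) = W(v)(1 - 1/(|R|+1)) = |R|W(v)/(|R|+1)$, which matches Axiom~\ref{axiom:weightedcritical}. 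The main obstacle is really just careful bookkeeping with $W$ in the sink-node identities; once those weighted analogs are in hand, each remaining step reduces to the corresponding argument in the unweighted setting.
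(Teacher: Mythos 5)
Your proposal is correct and follows essentially the same route as the paper: the paper's own proof is a sketch that reduces everything to the proof of Lemma~\ref{lem:shapleycen} after establishing exactly the weighted sink-node identities you state, namely $\sigma_{\calI^W}(S\cup\{v\})-\sigma_{\calI^W}(S)=W(v)\Pr(v\notin\bI_\calI(S))$ and the weighted projection formula, and your critical-set computation matches the intended one. No gaps.
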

\begin{proof}[Sketch]
The proof essentially follows the same proof of Lemma~\ref{lem:shapleycen}, after replacing
	unweighted influence spread $\sigma_\calI$ with weighted influence
	spread $\sigma_{\calI^W}$.
Note that the proof of Lemma~\ref{lem:shapleycen} relies on earlier lemmas on the properties
	of sink nodes, which would be extended to the weighted version.
In particular, the result of Lemma~\ref{lem:sinkmargin} (a) is extended to:
	$$\sigma_{\calI^W}(S\cup \{v\}) - \sigma_{\calI^W}(S) = \Pr(v\not\in \bI_\calI(S)) \cdot W(v).$$
Lemma~\ref{lem:projectsigma} is extended to:
$$\sigma_{\calI\setminus \{v\}^W}(S) = \sigma_{\calI^W}(S)- \Pr(v\in \bI_\calI(S)) \cdot W(v).$$
All other results in Lemmas~\ref{lem:sinkmargin}--\ref{lem:projectmargin} are either the same,
	or extended by replacing $\sigma_{\calI}$ and $\sigma_{\calI\setminus \{v\}}$  to $\sigma_{\calI^W}$ and $\sigma_{\calI\setminus \{v\}^W}$, respectively.
With the above extension, the proof of Lemma~\ref{lem:weightedshapleycen} follows
	in the same way as the proof of Lemma~\ref{lem:shapleycen}.
\end{proof}

To prove the uniqueness, consider the profile of a weighted influence instance
	$\calI^W =(V, E, P_{\calI},W)$.
Comparing to the corresponding unweighted influence instance $\calI = (V, E, P_{\calI})$,
	$\calI^W$ has $n = |V|$ additional dimensions for the weights of the nodes, and thus
		we need $n$ additional parameters to specify node weights.
Recall that in the proof of Theorem~\ref{thm:ShapleyCen}, we overload the notation
	$P_{\calI}$ as a vector of $M$ dimensions to represent the influence probability profile
	of unweighted influence instance $\calI = (V, E, P_{\calI})$.
Similarly, we overload $W$ to represent a vector of $n$ dimensions 
  for the weights of
	$n$ nodes. 
Together, we use vector $(P_{\calI}, W)$ to represent a vector of $M' = M + n$ dimensions
	that fully determines a weighted influence instance $\calI$.

We now need to construct a set of basis vectors in $\R^{M'}$, each of which corresponds to
	a weighted influence instance.
The construction is still based on the critical set instance defined in
	Definition~\ref{def:critical}.
For every $R\subseteq V$ with $R\not\in \{\emptyset, V \}$ and every $U \supset R$,
	we consider the critical set instance $\calI_{R,U}$ with uniform weights (i.e.
	all nodes have weight $1$).
We use $\vec{1}$ to denote the uniform weight vector.
Then vector $(P_{\calI_{R,U}}, \vec{1}) \in \R^{M'}$ is the vector specifying the
	corresponding weighted critical set influence instance, denoted
	as $\calI_{R,U}^{\vec{1}}$.
Let $\vec{e}_i \in \R^{n}$ be the unit vector with $i$-th entry being $1$ and all other
	entries being $0$, for $i\in [n]$.
Then $\vec{e}_i$ corresponds to a weight assignment where the $i$-th node has weight $1$,
	and all other nodes have weight $0$.
Consider the null influence instance $\calI^N$, in which every node is an isolated node,
	same as defined in Lemma~\ref{lem:unique}.
We add weight vector $\vec{e}_i$ to the null instance $\calI^N$, to construct a unit-weight
	null instance $\calI^{N,\vec{e}_i}$, where every node is an isolated node, the
	$i$-th node has weight $1$, 
	and the rest have
	weight $0$, for every $i\in [n]$.
The vector representation of $\calI^{N,\vec{e}_i}$ is $(P_{\calI^N}, \vec{e}_i)$.
Note that, as already argued in the proof of Lemma~\ref{lem:unique}, 
	vector $P_{\calI^N}$ is the all-$0$ vector in $\R^M$.

Given the above preparation, we now define $\cV'$ as the set containing all the above
	vectors, that is:
\begin{align*}
\cV' = & 
\{(P_{\calI_{R,U}}, \vec{1}) \mid R, U \subseteq V, R \notin\{\emptyset, V\}, R\subset U  \} \\
	&  \quad	\cup \{(P_{\calI^N}, \vec{e}_i) \mid i \in [n]  \}.
\end{align*}
We prove the following lemma: 

\begin{lemma}[Independence of Weighted Influence] \label{lem:weightedlinearind}
	Vectors in $\cV'$ are linearly independent in the space $\R^{M'}$.
\end{lemma}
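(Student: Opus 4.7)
The plan is a direct reduction to the unweighted linear independence result already proved in Lemma~\ref{lem:linearind}, exploiting the crucial fact that under the paper's canonical convention the null instance $\calI^N$ has $P_{\calI^N} = \vec{0} \in \R^M$. So the vectors corresponding to $\calI^{N,\vec{e}_i}$ contribute nothing to the influence-profile block and act only in the weight block.

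First, I would set up a putative linear dependence: suppose there exist scalars $\alpha_{R,U}$ (for each $R,U$ with $R\not\in \{\emptyset, V\}$ and $R\subset U$) and $\beta_i$ (for $i\in [n]$) such that
\[
\sum_{R,U} \alpha_{R,U}\,(P_{\calI_{R,U}}, \vec{1}) + \sum_{i=1}^n \beta_i\,(P_{\calI^N}, \vec{e}_i) = \vec{0} \in \R^{M'}.
\]
Since $P_{\calI^N} = \vec{0} \in \R^M$, this single equation in $\R^{M+n}$ splits cleanly into two component equations:
\[
\sum_{R,U} \alpha_{R,U}\,P_{\calI_{R,U}} = \vec{0} \in \R^M, \qquad \sum_{R,U} \alpha_{R,U}\,\vec{1} + \sum_{i=1}^n \beta_i\,\vec{e}_i = \vec{0} \in \R^n.
\]

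Second, I would invoke Lemma~\ref{lem:linearind} on the first equation: the vectors $\{P_{\calI_{R,U}}\}$ are already known to be linearly independent in $\R^M$, so every coefficient $\alpha_{R,U}$ must vanish. Substituting $\alpha_{R,U} = 0$ into the second equation reduces it to $\sum_{i=1}^n \beta_i\,\vec{e}_i = \vec{0}$, and since $\{\vec{e}_1,\ldots,\vec{e}_n\}$ is the standard basis of $\R^n$, each $\beta_i = 0$. Hence all coefficients are zero, proving linear independence.

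A small sanity check I would include is a dimension count to confirm $\cV'$ actually spans a full basis of $\R^{M'}$: the critical set instances contribute $M$ vectors and the unit-weight null instances contribute $n$ additional vectors, for a total of $M + n = M'$ vectors, matching the ambient dimension. This will be needed in the subsequent uniqueness argument (analogue of Lemma~\ref{lem:unique}) where one represents an arbitrary weighted instance as an affine combination of elements of $\cV'$ and applies an extension of Lemma~\ref{lem:linearmap} via Axiom~\ref{axiom:bayesian}. The main obstacle here is essentially cosmetic: one has to be careful that the paper's vector convention genuinely makes $P_{\calI^N}$ the zero vector (this uses the omission of the $T=S$ entry and the boundary cases $S\in\{\emptyset,V\}$), and that the weight vector $\vec{1}$ attached to each critical set instance truly lives in the weight block independently of the influence block. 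Once this bookkeeping is confirmed, the argument is a clean block-triangular linear independence reduction.
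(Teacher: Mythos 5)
Your proof is correct and follows essentially the same route as the paper: both arguments hinge on the fact that $P_{\calI^N}=\vec{0}\in\R^M$ under the canonical vector convention, reduce the influence-profile block to the unweighted independence of the $P_{\calI_{R,U}}$, and then dispose of the weight block using the standard basis $\{\vec{e}_i\}$. The only difference is organizational: you invoke Lemma~\ref{lem:linearind} as a black box after splitting into blocks, whereas the paper re-runs that lemma's minimal-set argument inline, so your version is, if anything, slightly cleaner.
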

\begin{proof}
Our proof extends the proof of Lemma~\ref{lem:linearind}.
	Suppose, for a contradiction, that vectors in $\cV'$ are not linearly independent.
	Then for each $R$ and $U$ with $R, U \subseteq V, R \notin\{\emptyset, V\}, R\subset U $, we have a number $\alpha_{R,U} \in \R$, and for each $i$ we have a number $\alpha_i \in \R$,
	such that:
\begin{align} \label{eq:weightedlinear}
\sum_{R\not\in\{\emptyset,V\}, R\subset U} \alpha_{R,U} \cdot (P_{\calI_{R,U}}, \vec{1})
 + \sum_{i\in [n]} \alpha_i \cdot (P_{\calI^N}, \vec{e}_i)  = \vec{0},
\end{align}
	and at least some $\alpha_{R,U}\ne 0$ or some $\alpha_i \ne 0$.
	Suppose first that some $\alpha_{R,U}\ne 0$.
	Let $S$ be the smallest set with $\alpha_{S,U} \ne 0$ for some $U \supset S$, and
	let $T$ be any superset of $S$ with $\alpha_{S,T} \ne 0$.
	By the critical set instance definition, we have $P_{\calI_{S,T}}(S,T) = 1$.
	Since the vector does not contain any dimension corresponding to
	$P_\calI(S,S)$, we know that $T \supset S$.
	Moreover, since $P_{\calI^N}$ is an all-$0$ vector, we know that
	$P_{\calI^N}(S, T) = 0$.
	
	Then by the minimality of $S$, we have: 
	\begin{align}
	& 0 = \sum_{R, U:R\not\in\{\emptyset,V\}, R\subset U} \alpha_{R,U} \cdot P_{\calI_{R,U}}(S,T) 
	\nonumber \\
	& = \alpha_{S,T} \cdot P_{\calI_{S,T}}(S,T) + 
	\sum_{U: U \supset S,  U \ne T} \alpha_{S,U} \cdot P_{\calI_{S,U}}(S,T) + \nonumber \\
	& \quad \quad \sum_{R, U:|R| \ge |S|, R\ne S, U \supset R} \alpha_{R,U} \cdot P_{\calI_{R,U}}(S,T) 
	\nonumber 	\\
	& = \alpha_{S,T}  + 
	\sum_{U:U \supset S,  U \ne T} \alpha_{S,U} \cdot P_{\calI_{S,U}}(S,T) + \nonumber \\
	& \quad \quad \sum_{R, U:|R| \ge |S|, R\ne S, U \supset R} \alpha_{R,U} \cdot P_{\calI_{R,U}}(S,T). 	\nonumber
	\end{align}
	Following the same argument as in the proof of Lemma~\ref{lem:linearind}, we have
	$\alpha_{S,T} = 0$, which is a contradiction.

Therefore, we know that $\alpha_{R,U} = 0$ for all $R,U$ pairs, and there must be some
	$i$ with $\alpha_i \ne 0$.	
However, when all $\alpha_{R,U}$'s are $0$, what left in Eq.~\eqref{eq:weightedlinear} is
	$\sum_{i\in [n]} \alpha_i \cdot \vec{e}_i = \vec{0}$.
But since vectors $\vec{e}_i$'s are obviously linearly independent, the above cannot be true
	unless all $\alpha_i$'s are $0$, another contradiction.

Therefore, vectors in $\cV'$ are linearly independent.
\end{proof}

\begin{lemma}[Centrality Uniqueness of the Basis] \label{lem:weighteduniquebasis}
	Fix a set $V$.
	Let $\psi^W$ be a weighted centrality measure that satisfies axiom set $\cA^W$.
	For any instance $\calI^W$ that corresponds to a vector in $\cV'$,
	the centrality	$\psi(\calI^W)$ is unique.
\end{lemma}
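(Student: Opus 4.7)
The plan is to mirror the structure of Lemma~\ref{lem:uniquecritical} in the unweighted case, treating the two families in $\cV'$ separately and in each case pinning down all centrality values through a short chain of axiom applications. Throughout I will use that when we project out a sink node $v$ from a weighted instance, the resulting weight function is simply the restriction of the original $W$ to $V\setminus\{v\}$, so the uniform weight vector $\vec{1}$ stays uniform and the unit vector $\vec{e}_i$ (for $i\ne v$) stays unit at $i$.

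First I would handle the weighted critical set instances $\calI_{R,U}^{\vec{1}}$. Since $W=\vec{1}$ is invariant under every permutation, the symmetry argument of Lemma~\ref{lem:uniquecritical} applies verbatim via the weighted Anonymity axiom: the nodes in $R$, in $U\setminus R$, and in $V\setminus U$ form three symmetry classes and receive common values $a_{R,U}$, $b_{R,U}$, $c_{R,U}$ respectively. The weighted Normalization axiom then yields $a_{R,U}|R| + b_{R,U}(|U|-|R|) + c_{R,U}(|V|-|U|) = W(V) = |V|$. For any $v\in V\setminus U$, I would repeatedly invoke the Sink Node Axiom together with Lemma~\ref{lem:projcritical} to iteratively remove all of $U\setminus R$ and then all of $V\setminus(U\cup\{v\})$, producing a single-node instance where $v$ is an isolated node with weight $1$; the weighted Normalization axiom forces $\psi_v^W=1$, so $c_{R,U}=1$. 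For $v\in U\setminus R$ (when $U\ne R$), a similar projection down to $\calI_{R,R\cup\{v\}}^{\vec{1}'}$ and then one application of the Weighted Bargaining with Critical Sets axiom gives $b_{R,U}=|R|W(v)/(|R|+1) = |R|/(|R|+1)$. The value $a_{R,U}$ is then uniquely determined from the normalization identity (and in the degenerate case $U=R$, all nodes are isolated in $\calI_{R,R}$ and symmetry together with normalization forces everyone to centrality $1$).

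Next I would treat the null-plus-unit-weight instance $\calI^{N,\vec{e}_i}$. Here every node is an isolated node, so every node is a sink. For the ``selected'' node $i$, I would apply the Sink Node Axiom to iteratively remove every other node one at a time; each removal preserves $\psi_i^W$, and the weight vector remains $\vec{e}_i$ on the smaller set. Eventually only node $i$ remains, carrying weight $1$, and Weighted Normalization forces $\psi_i^W=1$ in the reduced instance and hence also in $\calI^{N,\vec{e}_i}$. For any $j\ne i$, I would instead iteratively remove all nodes other than $j$ (including $i$, which is permissible since it is a sink); this leaves a single-node instance in which $j$ has weight $0$, so Weighted Normalization forces $\psi_j^W=0$.

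The main subtlety, and the only place where the weighted argument diverges meaningfully from the unweighted one, is verifying that the iterative sink-node removal respects the weight structure --- specifically, that the Independence of Sink Nodes axiom as stated (on the influence profile $P_{\calI}$) is legitimately being applied to a weighted instance whose weight component remains untouched by the projection. I would make this explicit by noting that the weighted centrality measure $\psi^W$ is a function on the full tuple $(V,E,P_{\calI},W)$, that the projection $\calI^W\setminus\{v\}$ canonically restricts $W$ to $V\setminus\{v\}$, and that the hypothesis of the Sink Node Axiom --- being a sink in the underlying unweighted $P_{\calI}$ --- is unchanged by the weight component. Once this is set up, the chain of reductions goes through exactly as above, and combining the two cases yields the uniqueness claim for every vector in $\cV'$.
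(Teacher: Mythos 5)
Your proposal is correct and follows essentially the same route as the paper: split $\cV'$ into the uniform-weight critical set instances (handled by transplanting the argument of Lemma~\ref{lem:uniquecritical} with the weighted Normalization and weighted Bargaining axioms in place of their unweighted counterparts) and the instances $\calI^{N,\vec{e}_i}$ (handled by repeated sink-node projection down to a single node followed by weighted Normalization). The extra care you take in checking that projection restricts the weight function and preserves the sink-node hypothesis is a sound elaboration of what the paper leaves implicit.
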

\begin{proof}
Suppose first that $\calI^W$ is a weighted critical set instance
	$\calI_{R,U}^{\vec{1}}$.
Since $\calI_{R,U}^{\vec{1}}$ has the same weight
	for all nodes, its weighted centrality uniqueness can be argued in
	the exact same way
	as in the proof of Lemma~\ref{lem:uniquecritical}
	 (except that
	the unweighted Axioms~\ref{axiom:normalization} and~\ref{axiom:critical} are
	replaced by the corresponding weighted Axioms~\ref{axiom:weightednormalization}
	and~\ref{axiom:weightedcritical}).

Now suppose that $\calI^W$ is one of the instances $\calI^{N,\vec{e}_i}$, for some
	$i\in [n]$.
Since in instance $\calI^{N,\vec{e}_i}$ all nodes are isolated nodes, and thus sink nodes,
	for each node $v$, we can repeatedly apply the Sink Node Axiom (Axiom~\ref{axiom:sink})
	to remove all other nodes until $v$ is the only node in the graph, and this repeated
	projection will not change the centrality of $v$.
When $v$ is the only node in the graph, by the Weighted Normalization Axiom
	(Axiom~\ref{axiom:weightednormalization}), we know that $v$'s weighted centrality
	measure is $W(v)$.
Since the weights of all nodes are determined by the vector $\vec{e}_i$, 
	the weighted centrality of $\calI^{N,\vec{e}_i}$ is fully determined and is unique.
\end{proof}

\begin{lemma}[Weighted Completeness] \label{lem:weightedunique}
	The weighted centrality measure satisfying axiom set $\cA^W$ is unique.
\end{lemma}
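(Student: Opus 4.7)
The plan is to lift the proof of Lemma~\ref{lem:unique} to the higher-dimensional space $\R^{M'}$ that now encodes both the probabilistic profile $P_{\calI}$ and the weight vector $W$. First I will observe that by Lemma~\ref{lem:weightedlinearind} the set $\cV'$ consists of $M+n = M'$ linearly independent vectors in $\R^{M'}$ and therefore forms a basis; Lemma~\ref{lem:weighteduniquebasis} already pins down $\psi^W$ uniquely on every instance whose profile vector lies in $\cV'$. Hence it suffices to show that the value of $\psi^W$ on an arbitrary weighted instance is determined by its values on $\cV'$.

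Next I identify the zero vector $\vzero \in \R^{M'}$ with the null instance equipped with the all-zero weight function, call it $\calI^{N,\vzero}$. Every node there is an isolated (hence sink) node, so by iterating the Independence of Sink Nodes Axiom (Axiom~\ref{axiom:sink}) each $\psi^W_v(\calI^{N,\vzero})$ equals the centrality of $v$ in the one-node instance carrying weight $0$, which is $0$ by the Weighted Normalization Axiom (Axiom~\ref{axiom:weightednormalization}) together with Anonymity. Thus $\psi^W(\calI^{N,\vzero}) = \vzero$. Now for an arbitrary weighted instance $\calI^W$ with profile vector $(P_{\calI}, W) \in \R^{M'}$, express $(P_\calI, W) = \sum_{\vb \in \cV'} \lambda_{\vb}\vb$ with unique real coefficients $\lambda_\vb$. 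Invoking the Bayesian Influence Axiom (Axiom~\ref{axiom:bayesian}) to conclude that $\psi^W$ is affine in the profile vector, Lemma~\ref{lem:linearmap} (with $\calI^{N,\vzero}$ serving as the role of $\vzero$ in its hypothesis) then forces
\[
\psi^W(\calI^W) = \sum_{\vb \in \cV'} \lambda_\vb \cdot \psi^W(\vb) + \Bigl(1 - \sum_\vb \lambda_\vb\Bigr)\psi^W(\calI^{N,\vzero}),
\]
which is uniquely determined by Lemma~\ref{lem:weighteduniquebasis}.

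The main obstacle is justifying affine linearity over the full $M'$-dimensional space from Axiom~\ref{axiom:bayesian}: since $\sigma_{\calI^W}(S) = \sum_T P_\calI(S,T)\,W(T)$ is bilinear in $(P_\calI, W)$, naively Bayesian-mixing weighted instances with differing weights does not correspond to mixing their profile vectors. I plan to circumvent this in two moves. First, for any fixed weight $W$, the Bayesian Axiom gives linearity in the $P_\calI$ coordinates exactly as in the unweighted case, so the centrality on instances $(P_\calI, W)$ is affine in $P_\calI$ with coefficients depending on $W$. Second, I will use the Sink Node and Weighted Normalization Axioms on the instances $\calI^{N,\vec{e}_i}$ to establish linearity in $W$: each $\vec{e}_i$-null instance isolates a single weight coordinate, and projection through sink removals reduces any $(P_{\calI^N}, W)$ to a sum $\sum_i W(v_i) \psi^W(\calI^{N,\vec{e}_i})$. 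Combining the two yields the required affine structure on all of $\R^{M'}$, completing the application of Lemma~\ref{lem:linearmap} and therefore the uniqueness claim.
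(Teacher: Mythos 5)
Your proposal follows the same route as the paper's proof: take $\cV'$ as a basis of $\R^{M'}$ (Lemma~\ref{lem:weightedlinearind}), pin down $\psi^W$ on the basis instances (Lemma~\ref{lem:weighteduniquebasis}) and on the zero-weight null instance $\calI^{N,\vec{0}}$ sitting at the origin, and then extend affinely via Lemma~\ref{lem:linearmap}. The paper's (sketched) argument simply applies the Bayesian Influence Axiom (Axiom~\ref{axiom:bayesian}) and Lemma~\ref{lem:linearmap} to the joint profile vector $(P_{\calI},W)$, exactly as in Lemma~\ref{lem:unique}. You are right to be uneasy about that step --- $\sigma_{\calI^W}$ is bilinear in $(P_{\calI},W)$, so a convex combination of weighted profile vectors is not the same thing as a Bayesian mixture of influence processes --- and this is a subtlety the paper's sketch does not engage with.

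However, your patch does not close the gap. Knowing (i) that $P\mapsto\psi^W(P,W)$ is affine for each fixed $W$ and (ii) that $W\mapsto\psi^W(P_{\calI^N},W)$ is linear does not force $\psi^W$ to be affine on $\R^{M'}$: any map of the form $\psi^W(P,W)=f(W)+g(W)\cdot P$ with $f$ linear and $g$ arbitrary satisfies both (i) and (ii) without being jointly affine. Concretely, your step (i) expresses $\psi^W(P,W)$ in terms of the values $\psi^W(P_{\calI_{R,U}},W)$ for the \emph{given} weight $W$, whereas $\cV'$ and Lemma~\ref{lem:weighteduniquebasis} only determine $\psi^W(P_{\calI_{R,U}},\vec{1})$; the values $\psi^W(P_{\calI_{R,U}},W)$ for non-uniform $W$ are exactly what your argument leaves undetermined (the weighted Anonymity Axiom no longer equalizes the nodes of $R$ once their weights differ, so Weighted Normalization fixes only the sum of their centralities, not the individual values). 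To complete the argument you would need either to prove uniqueness of $\psi^W(\calI_{R,U})$ under arbitrary weight functions directly from the axioms, or to justify the convex-combination property for mixtures in which both $P_{\calI}$ and $W$ vary, before Lemma~\ref{lem:linearmap} can legitimately be invoked over all of $\R^{M'}$.
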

\begin{proof}[Sketch]
The proof follows the proof structure of Lemma~\ref{lem:unique}.
Lemma~\ref{lem:weightedlinearind} already show that $\cV'$ is a set of basis vectors
	in the space $\R^{M'}$, and Lemma~\ref{lem:weighteduniquebasis} further shows that
	instances corresponding to these basis vectors have unique weighted centrality measures.
In addition, we define the $0$-weight null instance $\calI^{N,\vec{0}}$ to be
	an instance in which all nodes are isolated nodes, and all nodes have 
	weight $0$.
Then the vector corresponding to $\calI^{N,\vec{0}}$ is the all-$0$ vector in $\R^{M'}$.
Moreover, similar to $\calI^{N,\vec{e}_i}$, the weighted centrality of $\calI^{N,\vec{0}}$
	satisfying axiom set $\cA^W$ is also uniquely determined.
	
With the above preparation, the rest of the proof follows exactly the same logic as
	the one in the proof of Lemma~\ref{lem:unique}.
\end{proof}

\begin{proof}[of Theorem~\ref{thm:weightedShapleyCen}]
Theorem~\ref{thm:weightedShapleyCen} follows from 
  Lemmas~\ref{lem:weightedshapleycen}
	and~\ref{lem:weightedunique}.
\end{proof}
\subsubsection*{\sc Axiom Set Parametrized by Node Weights}

The above axiomatic characterization is based on the direct axiomatic extension 
	from unweighted influence models to the weighted influence models, where node
	weight function $W$ is directly added as part of the influence instance.
One may further ask the question: ``What if we treat node weights as parameters outside
	the influence instance? Is it possible to have an axiomatic characterization
	on such parametrized influence models, for {\em every} weight function?'' 

The answer to the above question would further highlight 
	the impact of the weight function to the influence model.
Since our goal is to achieve axiomatization that works for {\em every} weight function,
	we may need to seek for stronger axioms.

To achieve the above goal, for a given set $V$, we assume that the node weight function 
   cannot be permuted. 
To differentiate parametrized weight function from the integrated weight function $W$
	discussed before, we use small letter $w$ to represent the parametrized
	weight function: $w: V\rightarrow \R^+ \cup \{0\}$. 
The weight parameter $w$ appearing on the superscripts
	of notations such as influence instance $\calI$ and influence spread $\sigma$ 
	denotes that these quantities are parametrized by weight function $w$.
The influence spread $\sigma^w_{\calI}$ in influence instance
	$\calI = (V,E,P_{\calI})$ parametrized by weight $w$ is defined as:
\[
\sigma^w_{\calI}(S) = \E[w(\bI_{\calI}(S))] = \sum_{T\supseteq S} P_{\calI}(S, T) w(T).
\]

We would like to provide a natural axiom set $\cA^w$ parametrized by 
	$w: V\rightarrow \R^+\cup \{0\}$, such that
	the Shapley value for the weighted influence spread $\sigma^w$, 
	denoted as $\psi^{w,\Shapley}(\calI) = \phi^{\Shapley}(\sigma^w_\calI)$, 
  is the unique weighted centrality measure
	satisfying the axiom set $\cA^w$, for {\em every} such weight function $w$.
Recall that the weight function $w$ satisfies that $w(v)\ge 0$ for all $v\in V$.
Let $\psi^w$ denote a centrality measure satisfying the axiom set $\cA^w$.

Our Axiom set $\cA^w$ contains the weighted version of 
	Axioms~\ref{axiom:normalization}--\ref{axiom:critical},
	namely Axioms~\ref{axiom:sink}, \ref{axiom:bayesian},
	\ref{axiom:weightednormalization}, and \ref{axiom:weightedcritical}
	(of course, notation $W(v)$ is replaced by $w(v)$).
But it also needs an replacement of the Anonymity Axiom (Axiom~\ref{axiom:anonymity}).


By making  $w$ ``independent'' of the distribution profile of the influence 
  model $\calI = (V,E,P_\calI)$,
  the extension of  Axiom  Anonymity does not seem 
  to have a direct weighted version.
Conceptually, Axiom Anonymity is about node symmetry in the influence model.
However, when influence instance is parametrized by
	node weights, which cannot be 
  permuted and may not be uniform, 
  even if the influence instance $\calI$ has node symmetry, it does not imply that
  their weighted centrality is still the same.
This is precisely the reason we assume $w$ can not be permuted.

Therefore, we are seeking a new property 
  about node symmetry in the influence model parametrized by node weights
  to replace Axiom Anonymity.
We first define node pair symmetry as follows.
We denote $\pi_{uv}$ as the permutation in which $u$ and $v$ are mapped to each other while other nodes are mapped to themselves.
\begin{definition} \label{def:symmetric}
A node pair $u,v \in V$ is symmetric in the influence instance $\calI$ if
	for every $S, T \subseteq V$, 
	$P_\calI(S, T) = P_\calI(\pi_{uv}(S), \pi_{uv}(T))$, where
	$\pi_{uv}(S) = \{\pi_{uv}(v') \mid v'\in S\}$.
\end{definition}

We now give the axiom about node symmetry 
  in the weighted case, related to sink nodes and social influence
	projections.
\begin{axiom}[Weighted Node Symmetry] \label{axiom:weightedsymmetry}
In an influence instance $\calI=(V, E, P_{\calI})$, 
	let $S$ be the set of sink nodes.
If every pair of none-sink nodes are symmetric, then for any $v\in S$ and any $u \not\in S$, 
	$\psi^w_u(\calI) = \psi^w_u(\calI\setminus \{v\}) + \frac{1}{|V\setminus S|} (w(v) - \psi^w_v(\calI))$.
\end{axiom}
We justify the above axiom as follows.
Consider a sink node $v \in S$.
$\psi^w_v(\calI)$ is its fair share to the influence game.
Since $v$ cannot influence other nodes but may be influenced by others, its fair share is at most its
	weight $w(v)$ (can be formally proved).
Thus the leftover share of $v$, $w(v) - \psi^w_v(\calI)$, is divided among the rest nodes.
Since sink nodes do not influence others, they should have no contribution for the above
	leftover share from $v$.
Thus, the leftover share should be divided only among the rest non-sink nodes.
By the assumption of the axiom, all non-sink nodes are symmetric to one another, therefore
	they equally divide $w(v) - \psi^w_v(\calI)$, leading to $\frac{1}{|V\setminus S|} (w(v) - \psi^w_v(\calI))$
	contribution from each non-sink node.
Here an important remark is that, the weights of the non-sink nodes do not play a role
	in dividing the leftover share form $v$.
This is because, the weight of a node is an indication of the node's importance when it is influenced,
	but not its power in influencing others.
In other words, the influence power is determined by the influence instance $\calI$, in particular $P_\calI$,
	and it is unrelated to node weights.
Therefore, the above equal division of the leftover share is reasonable.
After this division, we can apply the influence projection to remove sink node $v$,
	and the remaining share
	of a non-sink node $u$ is simply the share of $u$ in the projected instance.	

The parametrized weighted axiom set $\cA^w$ is formed by Axioms~\ref{axiom:normalization}, \ref{axiom:sink},
	\ref{axiom:bayesian}, \ref{axiom:weightednormalization}, \ref{axiom:weightedcritical},
	and \ref{axiom:weightedsymmetry} (after replacing the weight notation $W()$ with $w()$
	in the corresponding axioms).
We define the weighted Shapley centrality $\psi^{w,\Shapley}(\calI)$ as the Shapley value of the 
	weighted influence spread $\phi^{\Shapley}(\sigma^w)$.
Note that this definition coincides with the definition of $\psi^{W,Shapley}(\calI^W)$, that is,
	whether or not we treat the weight function as an outside parameter or
	integrated into the influence instance, the weighted version of Shapley
	centrality is the same.
The following theorem summarizes the axiomatic characterization for the case of 
	parametrized weighted influence model.

\begin{theorem} {\sc (Parametrized Weighted Shapley Centrality of Social Influence)} \label{thm:weightedShapleyCen2}
	Fix a node set $V$. 
	For any normalized and non-negative node weight function $w: V\rightarrow R^+\cup \{0\}$, 
	the weighted Shapley centrality $\psi^{w,\Shapley}$ is 
	the unique weighted centrality measure that satisfies
	axiom set $\cA^w$ (Axioms~\ref{axiom:normalization}, \ref{axiom:sink},
	\ref{axiom:bayesian}, \ref{axiom:weightednormalization}, \ref{axiom:weightedcritical},
	and \ref{axiom:weightedsymmetry}).
	
\end{theorem}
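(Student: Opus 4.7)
The plan is to follow the same soundness-and-completeness architecture used for Theorem~\ref{thm:ShapleyCen} and Theorem~\ref{thm:weightedShapleyCen}, replacing only those steps that explicitly relied on Anonymity (Axiom~\ref{axiom:anonymity}). For soundness, I would verify that $\psi^{w,\Shapley}(\calI) = \phi^{\Shapley}(\sigma^w_\calI)$ satisfies each axiom in $\cA^w$. Most are immediate: Weighted Normalization follows from the Efficiency property of Shapley, Bayesian Influence from Linearity, Weighted Critical Sets by the same random-order calculation as in Lemma~\ref{lem:shapleycen} (with $\sigma$ replaced by $\sigma^w$), and Independence of Sink Nodes by the same manipulation using the weighted analogues of Lemmas~\ref{lem:sinkmargin}--\ref{lem:projectmargin}. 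The only nontrivial verification is Weighted Node Symmetry: I would argue that when all non-sink nodes are $P_\calI$-symmetric, for any sink $v$ and any non-sink $u$, the expected marginal contribution $\E_{\bpi}[\sigma^w_\calI(S_{\bpi,u}\cup\{u\}) - \sigma^w_\calI(S_{\bpi,u})]$ can be partitioned according to whether $v$ precedes or follows $u$ in $\bpi$, so that the identity $\psi^{w,\Shapley}_u(\calI) = \psi^{w,\Shapley}_u(\calI\setminus\{v\}) + \tfrac{1}{|V\setminus S|}(w(v) - \psi^{w,\Shapley}_v(\calI))$ reduces to the well-known fact that symmetric players split the Shapley value leftover equally.

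For completeness, I would rerun the linear-algebraic template of Lemma~\ref{lem:unique}. Since the weight $w$ is now fixed outside the instance, the influence profile still lives in $\R^M$, and Lemma~\ref{lem:linearind} still provides a basis via the critical-set instances $\{\calI_{R,U}\}$ plus the null instance $\calI^N$. The Bayesian Influence axiom together with Lemma~\ref{lem:linearmap} then reduces the uniqueness question to showing that $\psi^w$ is uniquely determined on every $\calI_{R,U}$ and on $\calI^N$. For $\calI^N$ every node is a sink; iterated application of Independence of Sink Nodes together with Weighted Normalization on the remaining singleton forces $\psi^w_v(\calI^N) = w(v)$.

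For a general critical-set instance $\calI_{R,U}$, I would first use Independence of Sink Nodes (Lemma~\ref{lem:projcritical} still applies and is about the graph, not about weights) to peel off the isolated nodes in $V\setminus U$, reducing the question to instances over $U$ with target $U$ and critical set $R$. For the centrality of sink nodes in $U\setminus R$, I would iteratively project to an instance on $R\cup\{v\}$ which is exactly the Weighted Bargaining instance, and apply Axiom~\ref{axiom:weightedcritical} to pin down $\psi^w_v = |R|w(v)/(|R|+1)$. For the critical nodes in $R$, this is where Weighted Node Symmetry is essential and replaces Anonymity: in the reduced instance all nodes in $R$ are $P_\calI$-symmetric and are the only non-sink nodes, so Axiom~\ref{axiom:weightedsymmetry} relates $\psi^w_u(\calI_{R,U})$ to $\psi^w_u(\calI_{R,U}\setminus\{v\})$ plus an equally-divided leftover contribution $\tfrac{1}{|R|}(w(v) - \psi^w_v(\calI_{R,U}\setminus\{v\}))$ for each sink $v\in U\setminus R$ we strip off. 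Iterating this and combining with Weighted Normalization as the closing constraint uniquely determines the common value of $\psi^w_u$ for $u\in R$.

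The main obstacle will be the last step: unlike in the unweighted case where Anonymity directly equalizes symmetric nodes, here Weighted Node Symmetry is a delicate recursive identity that interacts with repeated projections. I expect the bookkeeping challenge to be showing that the iterated equal division of leftover shares is consistent regardless of the order in which sink nodes from $U\setminus R$ are removed, and that the resulting system of equations has a unique solution that matches the Shapley value computation on $\sigma^w_{\calI_{R,U}}$. A secondary subtlety is confirming that in the reduced instances $\calI_{R,U}\setminus\{v\}$ for intermediate $v \in U\setminus R$ the hypothesis of Weighted Node Symmetry (namely, that all non-sink nodes are pairwise $P_\calI$-symmetric) is preserved under projection, which follows from Lemma~\ref{lem:projcritical} but should be stated explicitly.
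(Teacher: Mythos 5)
Your proposal is correct and follows essentially the same route as the paper: soundness by direct verification with Weighted Node Symmetry as the one nontrivial axiom (established, as you sketch, by conditioning on the relative order of $u$ and the sink $v$ plus a telescoping-and-symmetry computation), and completeness by reusing the unweighted basis of critical-set instances in $\R^M$ and redoing the critical-set uniqueness lemma with Weighted Node Symmetry standing in for Anonymity, closing with Weighted Normalization on the fully projected instance. The subtleties you flag --- preservation of pairwise symmetry of the non-sink nodes under projection and order-independence of stripping off the sinks in $U\setminus R$ --- do arise and resolve exactly as you anticipate, yielding $\psi^w_v(\calI_{R,U}) = w(v) + w(U\setminus R)/(|R|(|R|+1))$ for $v\in R$.
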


\begin{lemma} \label{lem:weightedsink}
	If $v$ is a sink node in $\calI$, then for any $S\subseteq V\setminus \{v\}$, 
	(a) $\sigma^w_\calI(S\cup \{v\}) - \sigma^w_\calI(S) = w(v) \Pr(v\not\in \bI_\calI(S))$; and 
	(b) $\sigma^w_{\calI\setminus \{v\}}(S) = \sigma^w_\calI(S)- w(v)\Pr(v\in \bI_\calI(S))$.
\end{lemma}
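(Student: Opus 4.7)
The plan is to prove both (a) and (b) by direct algebraic manipulation, essentially mirroring the proofs of Lemma~\ref{lem:sinkmargin}(a) and Lemma~\ref{lem:projectsigma}, but with the additive weight $w(T)=\sum_{u\in T}w(u)$ taking the role previously played by the cardinality $|T|$. The crucial observation that makes this work is that $w$ is additive over disjoint unions, so $w(T\cup\{v\}) = w(T)+w(v)$ whenever $v\notin T$; this is the exact analog of $|T\cup\{v\}|=|T|+1$ that drove the unweighted proofs. No other property of $w$ beyond additivity is used.

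For part (a), I would start from the definition $\sigma^w_{\calI}(S\cup\{v\}) = \sum_{T\supseteq S\cup\{v\}} P_{\calI}(S\cup\{v\},T)\cdot w(T)$, reindex the sum by writing each $T\supseteq S\cup\{v\}$ as $T'\cup\{v\}$ with $T'\supseteq S$ and $T'\subseteq V\setminus\{v\}$, and apply the sink-node identity $P_{\calI}(S\cup\{v\},T'\cup\{v\})=P_{\calI}(S,T')+P_{\calI}(S,T'\cup\{v\})$ along with $w(T'\cup\{v\})=w(T')+w(v)$. Expanding the resulting product and subtracting the analogous decomposition of $\sigma^w_{\calI}(S) = \sum_{T'\supseteq S, T'\subseteq V\setminus\{v\}}P_{\calI}(S,T')\cdot w(T') + \sum_{T'\supseteq S, T'\subseteq V\setminus\{v\}}P_{\calI}(S,T'\cup\{v\})\cdot w(T'\cup\{v\})$ causes all $w(T')$-terms to cancel and leaves $w(v)\cdot \sum_{T'\supseteq S, T'\subseteq V\setminus\{v\}}P_{\calI}(S,T')$. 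The remaining sum is exactly $\Pr(v\notin \bI_{\calI}(S))$, yielding (a).

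For part (b), I would use the definition of the projected instance, $P_{\calI\setminus\{v\}}(S,T) = P_{\calI}(S,T)+P_{\calI}(S,T\cup\{v\})$, and plug into $\sigma^w_{\calI\setminus\{v\}}(S)=\sum_{T\supseteq S, T\subseteq V\setminus\{v\}}P_{\calI\setminus\{v\}}(S,T)\cdot w(T)$. Splitting the sum and then comparing to the same decomposition of $\sigma^w_{\calI}(S)$ used in part (a), the terms involving $P_{\calI}(S,T)\cdot w(T)$ match, while the second piece differs exactly by the factor $w(T)$ versus $w(T)+w(v)$. The resulting gap is $w(v)\cdot \sum_{T\supseteq S, T\subseteq V\setminus\{v\}}P_{\calI}(S,T\cup\{v\}) = w(v)\cdot \Pr(v\in \bI_{\calI}(S))$, which gives (b).

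There is no real obstacle here; the proof is purely book-keeping on the double sums. The one thing to be careful about is the reindexing step in (a), where one must verify that the bijection between $\{T:T\supseteq S\cup\{v\}\}$ and $\{T':T'\supseteq S,\ T'\subseteq V\setminus\{v\}\}$ via $T\mapsto T\setminus\{v\}$ is used consistently, and that in both (a) and (b) one exploits the additivity $w(T\cup\{v\})=w(T)+w(v)$ at the right place. Once the sums are carefully partitioned over whether $v\in T$ or not, both identities fall out mechanically, and the proof will be essentially a word-for-word transcription of the unweighted proofs with $w(\cdot)$ replacing $|\cdot|$.
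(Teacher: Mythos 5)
Your proof is correct and takes essentially the same route as the paper: the paper's own proof of this lemma is just the one-line remark that it ``follows the proofs of Lemma~\ref{lem:sinkmargin}~(a) and Lemma~\ref{lem:projectsigma}, except replacing $1$ with weight $w(v)$,'' which is precisely the bookkeeping you spell out. Your identification of additivity $w(T\cup\{v\})=w(T)+w(v)$ as the sole property replacing $|T\cup\{v\}|=|T|+1$, together with the split of the sums over whether $v\in T$, is exactly the intended argument, and the algebra checks out.
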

\begin{proof}
The proof follows the proofs of Lemma~\ref{lem:sinkmargin} (a) and Lemma~\ref{lem:projectsigma}, except replacing $1$ with weight $w(v)$.
\end{proof}

\begin{lemma} \label{lem:symmetryinf}
If node pair $u, u'$ are symmetric in $\calI$, then for any $v \in V\setminus \{u,u'\}$,
	(a) for any
	$S\subseteq V$, 
	$\Pr(v \in \bI_{\calI}(S)) = \Pr(v \in \bI_{\calI}(\pi_{uu'}(S)))$'
	(b) for any random permutation $\bpi'$ on $V\setminus \{v\}$,
	$\E_{\bpi'}[\Pr(v\in \bI_{\calI}(S_{\bpi',u}))]
	= \E_{\bpi'}[\Pr(v\in \bI_{\calI}(S_{\bpi',u'}))]$, and
	$\E_{\bpi'}[\Pr(v\in \bI_{\calI}(S_{\bpi',u}\cup \{u\}))]
		= \E_{\bpi'}[\Pr(v\in \bI_{\calI}(S_{\bpi',u'}\cup \{u'\}))]$.
\end{lemma}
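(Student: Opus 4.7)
The plan is to prove (a) by a change-of-variables in the defining sum for $\Pr(v\in \bI_{\calI}(S))$, and then derive (b) from (a) via a measure-preserving bijection on permutations of $V\setminus\{v\}$.

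For (a), I would start from the identity $\Pr(v\in \bI_{\calI}(S)) = \sum_{T\supseteq S,\, v\in T} P_{\calI}(S,T)$. Since $v\notin\{u,u'\}$, the transposition $\pi_{uu'}$ fixes $v$, so $v\in T$ is equivalent to $v\in \pi_{uu'}(T)$, and $T\supseteq S$ is equivalent to $\pi_{uu'}(T)\supseteq \pi_{uu'}(S)$. Invoking the symmetry hypothesis $P_{\calI}(S,T)=P_{\calI}(\pi_{uu'}(S),\pi_{uu'}(T))$ from Definition~\ref{def:symmetric} and reindexing the sum with $T'=\pi_{uu'}(T)$, which is a bijection on $2^V$, the sum rewrites as $\sum_{T'\supseteq \pi_{uu'}(S),\, v\in T'} P_{\calI}(\pi_{uu'}(S),T')$, which is precisely $\Pr(v\in \bI_{\calI}(\pi_{uu'}(S)))$.

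For (b), I would introduce the map $\pi'\mapsto \pi''$ on permutations of $V\setminus\{v\}$ that applies $\pi_{uu'}$ entrywise, so that if $\pi'=(v_1,\ldots,v_{n-1})$ then $\pi''=(\pi_{uu'}(v_1),\ldots,\pi_{uu'}(v_{n-1}))$. This map is an involution, hence a bijection, and it preserves the uniform distribution on permutations of $V\setminus\{v\}$. The key observation is that if $u$ occupies position $k$ of $\pi'$, then $u'$ occupies position $k$ of $\pi''$, and hence $S_{\pi'',u'}=\pi_{uu'}(S_{\pi',u})$ and likewise $S_{\pi'',u'}\cup\{u'\}=\pi_{uu'}(S_{\pi',u}\cup\{u\})$. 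Combining with part (a), $\Pr(v\in \bI_{\calI}(S_{\pi'',u'}))=\Pr(v\in \bI_{\calI}(S_{\pi',u}))$, and similarly for the $\cup\{u\}$ variant. Taking expectations over the uniform measure on permutations, and using the bijection to rewrite $\E_{\bpi'}[\,\cdot\,]=\E_{\bpi''}[\,\cdot\,]$, yields both equalities of (b).

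The argument is essentially a bookkeeping exercise; no part appears to require a substantive new idea beyond the change of variables enabled by symmetry. The subtlest step is the identity $S_{\pi'',u'}=\pi_{uu'}(S_{\pi',u})$: it hinges on tracking that applying $\pi_{uu'}$ entrywise relabels nodes without disturbing their positional order, so the prefix ending just before $u$ in $\pi'$ becomes the prefix ending just before $u'$ in $\pi''$, relabeled by $\pi_{uu'}$. Once this identity is in place, invoking (a) together with the measure-preserving property of the bijection closes (b).
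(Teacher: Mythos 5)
Your proof is correct and follows essentially the same route as the paper's: part (a) by reindexing the defining sum $\sum_{T\supseteq S\cup\{v\}}P_{\calI}(S,T)$ via the bijection $T\mapsto\pi_{uu'}(T)$ using the symmetry hypothesis, and part (b) by applying (a) and then observing that the entrywise relabeling of $\bpi'$ by $\pi_{uu'}$ is a measure-preserving bijection under which the prefix before $u$ maps to the prefix before $u'$. The key identity $S_{\pi'',u'}=\pi_{uu'}(S_{\pi',u})$ that you flag is exactly the observation the paper makes.
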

\begin{proof}
For (a), by the definition of symmetric node pair (Definition~\ref{def:symmetric}), we have
\begin{align*}
& \Pr(v \in \bI_{\calI}(S)) = \sum_{T\supseteq S\cup\{v\}} P_{\calI}(S, T) \\
& = \sum_{T\supseteq S\cup\{v\}} P_{\calI}(\pi_{uu'}(S), \pi_{uu'}(T)) \\
& = \sum_{\pi^{-1}_{uu'}(T)\supseteq S\cup\{v\}} P_{\calI}(\pi_{uu'}(S), T) \\
& = \sum_{T\supseteq \pi_{uu'}(S)\cup\{v\}} P_{\calI}(\pi_{uu'}(S), T) 
	= \Pr(v \in \bI_{\calI}(\pi_{uu'}(S))).
\end{align*}
For (b), we use (a) and obtain
\begin{align*}
\E_{\bpi'}[\Pr(v\in \bI_{\calI}(S_{\bpi',u}))] =  \E_{\bpi'}[\Pr(v\in \bI_{\calI}(\pi_{uu'}(S_{\bpi',u})))]. 
\end{align*}
Note that $\pi_{uu'}(S_{\bpi',u})$ is a random set obtained by first generating a random
	permutation $\bpi'$, then selecting the prefix node set $S_{\bpi',u}$ 
	before node $u$ in $\bpi'$, and finally
	replacing the possible occurrence of $u'$ in $S_{\bpi',u}$ with $u$
	($u$ cannot occur in $S_{\bpi',u}$ so there is no replacement of $u$ with $u'$).
This random set can be equivalently obtained by first generating the
	random permutation $\bpi'$, then switching the position of $u$ and $u'$
	(denote the new random permutation $\pi_{uu'}(\bpi')$),
	and finally selecting the prefix node set $S_{\pi_{uu'}(\bpi'),u'}$ before
		$u'$ in $\pi_{uu'}(\bpi')$.
We note that random permutations $\bpi'$ and $\pi_{uu'}(\bpi')$ follow
	the same distribution, and thus we have
\begin{align*}
\E_{\bpi'}[\Pr(v\in \bI_{\calI}(S_{\bpi',u}))] =  \E_{\bpi'}[\Pr(v\in \bI_{\calI}(S_{\bpi',u'}))]. 
\end{align*}
The equality $\E_{\bpi'}[\Pr(v\in \bI_{\calI}(S_{\bpi',u}\cup \{u\}))]
= \E_{\bpi'}[\Pr(v\in \bI_{\calI}(S_{\bpi',u'}\cup \{u'\}))]$ can
	be argued in the same way.
\end{proof}

\begin{lemma}[Weighted Soundness] \label{lem:weightedShapley2}
Weighted Shapley centrality $\psi^{w,\Shapley}(\calI)$ satisfies all axioms in $\cA^w$.
\end{lemma}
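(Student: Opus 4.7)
The plan is to verify each of the six axioms in $\cA^w$ in turn, following the same strategy as Lemma~\ref{lem:shapleycen} but substituting the weighted marginal-contribution identities of Lemma~\ref{lem:weightedsink}. First, Weighted Normalization (Axiom~\ref{axiom:weightednormalization}) is immediate from the classical Efficiency property of the Shapley value applied to the characteristic function $\sigma^w_{\calI}$, since $\sigma^w_{\calI}(V) = w(V)$. Bayesian Influence (Axiom~\ref{axiom:bayesian}) follows from the Linearity property of the Shapley value together with the fact that $\sigma^w_{\calI_{\cB(\{\calI^{\eta}\}, \lambda)}} = \sum_\eta \lambda_\eta \sigma^w_{\calI^{\eta}}$, which is a direct consequence of the definition of a Bayesian influence instance and linearity of expectation. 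Weighted Bargaining (Axiom~\ref{axiom:weightedcritical}) is handled essentially as in the soundness argument of Lemma~\ref{lem:shapleycen}: for $v$ the sink node in $\calI^w_{R,v}$ and $\bpi\sim\Pi$, the marginal contribution $\sigma^w_{\calI}(S_{\bpi,v}\cup\{v\})-\sigma^w_{\calI}(S_{\bpi,v})$ equals $w(v)$ exactly when $R\not\subseteq S_{\bpi,v}$ and $0$ otherwise, and $\Pr(R\not\subseteq S_{\bpi,v})=|R|/(|R|+1)$.

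Second, for Independence of Sink Nodes (Axiom~\ref{axiom:sink}), I would mimic the computation in the proof of Lemma~\ref{lem:shapleycen} word for word, splitting the expectation over $\bpi$ according to whether $v$ precedes $u$. The two ingredients needed are: (a) a weighted analogue of Lemma~\ref{lem:sinkind} stating that when $u,v$ are both sinks, $\sigma^w_\calI(S\cup\{v,u\})-\sigma^w_\calI(S\cup\{v\})=\sigma^w_\calI(S\cup\{u\})-\sigma^w_\calI(S)$ (both equal $w(u)\Pr(u\notin\bI_\calI(S))$ by Lemma~\ref{lem:weightedsink}(a) and (an analogue of) Lemma~\ref{lem:sinkmargin}(b), which extends to the weighted case since weights only rescale marginal contributions); and (b) a weighted analogue of Lemma~\ref{lem:projectmargin}, obtained by combining Lemma~\ref{lem:weightedsink}(b) with Lemma~\ref{lem:sinkmargin}(b). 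These reduce the centrality of $u$ in $\calI$ and in $\calI\setminus\{v\}$ to the same expectation over permutations of $V\setminus\{v\}$.

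The main obstacle will be Weighted Node Symmetry (Axiom~\ref{axiom:weightedsymmetry}), because it is a genuinely new axiom tying together the centralities of sinks and non-sinks through the weight function. My plan is as follows. Let $S$ denote the set of sinks and pick $v\in S$, $u\notin S$. Using Lemma~\ref{lem:weightedsink}(a) and unfolding the Shapley formula for $v$, I obtain $\psi^{w,\Shapley}_v(\calI)=w(v)\cdot\E_{\bpi}[\Pr(v\notin\bI_\calI(S_{\bpi,v}))]$. Similarly, after splitting the random permutation $\bpi$ on $V$ into the case $v\prec_{\bpi} u$ versus $u\prec_{\bpi}v$ and letting $\bpi'$ be the induced permutation on $V\setminus\{v\}$, the marginal contribution of $u$ in $\calI$ can be related to the marginal contribution in $\calI\setminus\{v\}$ plus a correction term involving $\Pr(v\in\bI_\calI(\cdot))\cdot w(v)$ (via Lemma~\ref{lem:weightedsink}(b) for the projection identity). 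The claim then reduces to showing the correction equals $\frac{1}{|V\setminus S|}(w(v)-\psi^{w,\Shapley}_v(\calI))$. This is precisely where Lemma~\ref{lem:symmetryinf} enters: symmetry of all non-sink node pairs makes $\E_{\bpi'}[\Pr(v\in\bI_{\calI}(S_{\bpi',u}))]$ independent of the choice of non-sink $u$, so averaging over $u\in V\setminus S$ and using efficiency gives the desired identity $\sum_{u\in V\setminus S}\E_{\bpi'}[\Pr(v\in\bI_\calI(S_{\bpi',u}\cup\{u\}))-\Pr(v\in\bI_\calI(S_{\bpi',u}))]=\Pr(v\in\bI_\calI(V\setminus\{v\}))$, matching $(w(v)-\psi^{w,\Shapley}_v(\calI))/w(v)$.

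The trickiest bookkeeping step will be the switching-of-positions argument used to turn $\E_{\bpi}[\cdots]$ on permutations of $V$ into $\E_{\bpi'}[\cdots]$ on permutations of $V\setminus\{v\}$ while correctly accounting for the marginal impact of $v$ on $u$'s contribution; I expect to handle it by conditioning on the relative order of $u$ and $v$ and applying Lemma~\ref{lem:symmetryinf}(b) twice. Once this identity is established for a single non-sink $u$, the symmetry among non-sinks plus summing over $u\in V\setminus S$ and invoking Efficiency for $\sigma^w_\calI$ closes the loop and yields Axiom~\ref{axiom:weightedsymmetry}.
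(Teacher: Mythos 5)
Your treatment of Weighted Normalization, Bayesian Influence, Weighted Bargaining, and Independence of Sink Nodes matches the paper's route (the paper dispatches these by direct analogy with Lemma~\ref{lem:shapleycen}, using exactly the weighted marginal identities you cite), and your decomposition of $\psi^{w,\Shapley}_u(\calI)$ into $\psi^{w,\Shapley}_u(\calI\setminus\{v\})$ plus a correction term is also the paper's starting point for the Weighted Node Symmetry axiom. The gap is in how you close that last argument. You propose to evaluate the total correction by telescoping over a random permutation $\bpi'$ of $V\setminus\{v\}$, obtaining $\sum_{u\in V\setminus S}\E_{\bpi'}[\Pr(v\in \bI_\calI(S_{\bpi',u}\cup\{u\}))-\Pr(v\in \bI_\calI(S_{\bpi',u}))]=\Pr(v\in \bI_\calI(V\setminus\{v\}))$, and you then assert this equals $(w(v)-\psi^{w,\Shapley}_v(\calI))/w(v)$. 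That last equation is false. Since $\psi^{w,\Shapley}_v(\calI)=w(v)\,\E_{\bpi}[\Pr(v\notin \bI_\calI(S_{\bpi,v}))]$, the quantity $(w(v)-\psi^{w,\Shapley}_v(\calI))/w(v)$ equals $\E_{\bpi}[\Pr(v\in \bI_\calI(S_{\bpi,v}))]$, an average over random prefixes of all sizes, not the probability under the full seed set $V\setminus\{v\}$. Concretely, take $V=\{a,b,v\}$ in the IC model with edges $(a,v)$ and $(b,v)$, each of probability $p$: then $\Pr(v\in\bI_\calI(\{a,b\}))=2p-p^2$ while $\E_{\bpi}[\Pr(v\in\bI_\calI(S_{\bpi,v}))]=\tfrac{1}{3}p+\tfrac{1}{3}(2p-p^2)=p-\tfrac{p^2}{3}$; these disagree (and not by any uniform constant factor), so the claimed matching cannot be repaired by adjusting the $1/2$ from $\Pr(u\prec_{\bpi}v)$.

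The paper avoids this by telescoping over the full permutation $\bpi$ of $V$ rather than over $V\setminus\{v\}$: writing $1=\E_{\bpi}[\sum_{u'\in V}(\Pr(v\in\bI_\calI(S_{\bpi,u'}\cup\{u'\}))-\Pr(v\in\bI_\calI(S_{\bpi,u'})))]$, the $u'=v$ summand produces exactly the term $\E_{\bpi}[\Pr(v\notin\bI_\calI(S_{\bpi,v}))]=\psi^{w,\Shapley}_v(\calI)/w(v)$, the other sink summands vanish, and the surviving non-sink summands are (by Lemma~\ref{lem:symmetryinf}) identical to one another and to the correction term in your decomposition, because they carry the \emph{same} conditional prefix distribution. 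A cleaner alternative that sidesteps the permutation bookkeeping entirely: once you have (i) the decomposition $\psi^{w,\Shapley}_u(\calI)=\psi^{w,\Shapley}_u(\calI\setminus\{v\})+c_u$, (ii) $c_u=0$ for sink $u$ (from Independence of Sink Nodes), and (iii) $c_u=c_{u'}$ for symmetric non-sinks, you can sum over all $u\ne v$ and apply Weighted Normalization to both $\calI$ and $\calI\setminus\{v\}$ to get $\sum_{u\in V\setminus S}c_u=w(v)-\psi^{w,\Shapley}_v(\calI)$ directly, with no need to evaluate $\Pr(v\in\bI_\calI(V\setminus\{v\}))$ at all. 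As written, your step (3) would not go through.
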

\begin{proof}
Satisfaction of Axioms~\ref{axiom:normalization}, \ref{axiom:sink}, 
\ref{axiom:bayesian}, \ref{axiom:weightednormalization} and \ref{axiom:weightedcritical}
	can be similarly verified as in the proof of Lemma~\ref{lem:weightedshapleycen}.
We now verify Axiom~\ref{axiom:weightedsymmetry}.

Let $v$ be a sink node and $u$ be a non-sink node.
Let $\bpi'$ be a random permutation on node set $V\setminus \{v\}$.
We have
	\begin{align}
	& \psi^{w,\Shapley}_u(\calI) =  \E_{\bpi}[\sigma^w_\calI(S_{\bpi,u} \cup \{u\}) - \sigma^w_\calI(S_{\bpi,u})] \nonumber \\
	& = \Pr(u \prec_{\bpi} v)\E_{\bpi}[\sigma^w_\calI(S_{\bpi,u} \cup \{u\}) - \sigma^w_\calI(S_{\bpi,u}) \mid u \prec_{\bpi} v] + \nonumber \\
	& \quad \Pr(v \prec_{\bpi} u)\E_{\bpi}[\sigma^w_\calI(S_{\bpi,u} \cup \{u\}) - \sigma^w_\calI(S_{\bpi,u}) \mid v \prec_{\bpi} u] \nonumber \\
	& = \E_{\bpi'}[\sigma^w_\calI(S_{\bpi',u} \cup \{u\}) - \sigma^w_\calI(S_{\bpi',u})] /2 + \nonumber \\
	& \quad \E_{\bpi}[\sigma^w_\calI(S_{\bpi,u} \cup \{u\}) - \sigma^w_\calI(S_{\bpi,u}) \mid v \prec_{\bpi} u] / 2 \nonumber \\
	& = \E_{\bpi'}[\sigma^w_\calI(S_{\bpi',u} \cup \{u\}) - \sigma^w_\calI(S_{\bpi',u})] /2 + \nonumber \\
	& \quad \E_{\bpi}[\sigma^w_\calI(S_{\bpi,u} \setminus \{v\} \cup \{u\}) - \sigma^w_\calI(S_{\bpi,u}\setminus \{v\} ) \mid v \prec_{\bpi} u] / 2 \nonumber \\
	& \quad + w(v)\E_{\bpi}[ \Pr(v\not\in \bI_\calI(S_{\bpi,u} \setminus \{v\} \cup \{u\})) - \nonumber \\
	& \quad \quad  \Pr(v\not\in \bI_\calI(S_{\bpi,u} \setminus \{v\}))  \mid v \prec_{\bpi} u] / 2
	\label{eq:wusesinkind} \\
	& = \E_{\bpi'}[\sigma^w_\calI(S_{\bpi',u} \cup \{u\}) - \sigma^w_\calI(S_{\bpi',u})] \nonumber \\
	& \quad + w(v)\E_{\bpi'}[ \Pr(v\not\in \bI_\calI(S_{\bpi',u} \cup \{u\})) - \nonumber \\
	& \quad \quad  \Pr(v\not\in \bI_\calI(S_{\bpi',u} )) ] / 2 \nonumber \\
	& = \E_{\bpi'}[\sigma^w_{\calI\setminus \{v\}}(S_{\bpi',u} \cup \{u\}) - \sigma^w_{\calI\setminus \{v\}}(S_{\bpi',u})] \nonumber \\
	& \quad + w(v)\E_{\bpi'}[ \Pr(v\in \bI_\calI(S_{\bpi',u} \cup \{u\})) - \nonumber \\
	& \quad \quad  \Pr(v\in \bI_\calI(S_{\bpi',u} )) ]  \nonumber\\
	& \quad + w(v)\E_{\bpi'}[ \Pr(v\not\in \bI_\calI(S_{\bpi',u} \cup \{u\})) - \nonumber \\
	& \quad \quad  \Pr(v\not\in \bI_\calI(S_{\bpi',u} )) ] / 2 
	\label{eq:wuseprojectmar} \\
	& = \psi^{w,\Shapley}_u({\calI\setminus \{v\}}) \nonumber \\
	& \quad + w(v)\E_{\bpi'}[ \Pr(v\in \bI_\calI(S_{\bpi',u} \cup \{u\})) - \nonumber \\
	& \quad \quad  \Pr(v\in \bI_\calI(S_{\bpi',u} )) ] / 2. \label{eq:shapleysink}
	\end{align}
Eq.\eqref{eq:wusesinkind} is by Lemma~\ref{lem:weightedsink} (a), and Eq.\eqref{eq:wuseprojectmar} is by Lemma~\ref{lem:weightedsink} (b).
For $v$'s weighted Shapley centrality, we have
\begin{align}
& \psi^{w,\Shapley}_v(\calI) = \E_{\bpi}[\sigma^w_\calI(S_{\bpi,v} \cup \{v\}) - \sigma^w_\calI(S_{\bpi,v})] \nonumber \\
& \quad = w(v) \E_{\bpi}[ \Pr(v\not\in \bI_\calI(S_{\bpi,v} )) ], \label{eq:wsinkv}
\end{align}
where the last equality above is also by Lemma~\ref{lem:weightedsink} (a).

Recall that in Axiom~\ref{axiom:weightedsymmetry} $S$ is the set of sink nodes
	and $u \in V\setminus S$ is a non-sink node.
Then we have
\begin{align}
& 1 =  \E_{\bpi}[\Pr(v \in \bI_{\calI}(V) )]  \nonumber \\
&  =  \E_{\bpi}[\sum_{u'\in V} (\Pr(v\in \bI_{\calI}(S_{\bpi,u'}\cup \{u'\})) - \Pr(v\in \bI_{\calI}(S_{\bpi,u'})))]  \label{eq:telescoping} \\
&  =  \sum_{u'\in V\setminus \{v\}} \Pr(u' \prec_{\bpi} v) \E_{\bpi}[\Pr(v\in \bI_{\calI}(S_{\bpi,u'}\cup \{u'\})) - \nonumber \\
& \quad \Pr(v\in \bI_{\calI}(S_{\bpi,u'})) \mid u' \prec_{\bpi} v ] + \nonumber \\
& \quad  \E_{\bpi}[\Pr(v\in \bI_{\calI}(S_{\bpi,v}\cup \{v\})) - \Pr(v\in \bI_{\calI}(S_{\bpi,v}))] \label{eq:includecancel} \\
&  =  \sum_{u'\in V\setminus \{v\}} \E_{\bpi'}[\Pr(v\in \bI_{\calI}(S_{\bpi',u'}\cup \{u'\})) - \nonumber \\
& \quad \quad \Pr(v\in \bI_{\calI}(S_{\bpi',u'}))] /2  + \E_{\bpi}[\Pr(v \not\in \bI_{\calI}(S_{\bpi,v}))] \nonumber \\
& =  \sum_{u' \in V\setminus S} \E_{\bpi'}[\Pr(v\in \bI_{\calI}(S_{\bpi',u'}\cup \{u'\})) - \nonumber \\
& \quad \quad \Pr(v\in \bI_{\calI}(S_{\bpi',u'}))] /2  
   + \E_{\bpi}[\Pr(v \not\in \bI_{\calI}(S_{\bpi,v}))] \label{eq:sinknomargin} \\
& =  |V\setminus S|\cdot \E_{\bpi'}[\Pr(v\in \bI_{\calI}(S_{\bpi',u}\cup \{u\})) - \nonumber \\
& \quad \quad \Pr(v\in \bI_{\calI}(S_{\bpi',u}))]/2   
	+ \E_{\bpi}[\Pr(v \not\in \bI_{\calI}(S_{\bpi,v}))] \label{eq:symmetryinf} 
\end{align}
Eq.~\eqref{eq:telescoping} is a telescoping series where all middle terms are canceled out.
Eq.~\eqref{eq:includecancel} is because when $v \prec_{\bpi} u'$, $v\in S_{\bpi,u'}$ and thus 
	$\Pr(v\in \bI_{\calI}(S_{\bpi,u'}\cup \{u'\})) = \Pr(v\in \bI_{\calI}(S_{\bpi,u'}))) = 1$.
Eq.~\eqref{eq:sinknomargin} is by Lemma~\ref{lem:sinkmargin} (b), and
Eq.~\eqref{eq:symmetryinf} is by Lemma~\ref{lem:symmetryinf} (b).
Therefore, from Eq.~\eqref{eq:symmetryinf}, we have
\begin{align*}
&\E_{\bpi'}[\Pr(v\in \bI_{\calI}(S_{\bpi',u}\cup \{u\})) - \Pr(v\in \bI_{\calI}(S_{\bpi',u}))]/2 \\
& \quad = \frac{1}{|V\setminus S|} (1 - \E_{\bpi}[\Pr(v \not\in \bI_{\calI}(S_{\bpi,v}))]).
\end{align*}
Plugging the above equality into Eq.~\eqref{eq:shapleysink}, we obtain
\begin{align*}
& \psi^{w,\Shapley}_u(\calI)  \\
& \quad = \psi^{w,\Shapley}_u({\calI\setminus \{v\}}) +
	\frac{w(v)(1 - \E_{\bpi}[\Pr(v \not\in \bI_{\calI}(S_{\bpi,v}))])}{|V\setminus S|}  \\
& \quad = \psi^{w,\Shapley}_u({\calI\setminus \{v\}}) + \frac{1}{|V\setminus S|} (w(v) - \psi^{w,\Shapley}_v(\calI)),
\end{align*}
where the last equality above uses Eq.~\eqref{eq:wsinkv}.
The above equality is exactly the one in Axiom~\ref{axiom:weightedsymmetry}.
\end{proof}

For the uniqueness of the parametrized axiom set $\cA^w$, the proof follows the same
	structure as the proof for $\cA$.
The only change is in the proof of Lemma~\ref{lem:uniquecritical}, which
	we provide the new version for the weighted case below.
\begin{lemma}[Weighted Critical Set Instances]
Fix a $V$.  
For any normalized and non-negative node weight function $w: V\rightarrow R^+\cup \{0\}$, 
let $\psi^w$ be a weighted centrality measure that satisfies axiom set $\cA^w$.
	For any $R, U \subseteq V$ with $R\ne \emptyset$ and $R\subseteq U$,
	and the critical set instance $\calI_{R,U}$ as defined
	in Definition~\ref{def:critical}, its weighted centrality
	$\psi^w(\calI_{R,U})$ must be unique.
\end{lemma}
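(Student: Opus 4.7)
The plan is to adapt the unweighted uniqueness proof (Lemma~\ref{lem:uniquecritical}) to the parametrized weighted setting, where Anonymity is unavailable but is replaced by the Weighted Node Symmetry Axiom (Axiom~\ref{axiom:weightedsymmetry}), which relates centralities across sink-projections rather than equating centralities of structurally symmetric nodes. I will proceed in three stages: first fix the centralities of all sinks $v \in V \setminus R$, then pin down centralities in the base instance $\calI_{R,R}$, and finally telescope via the symmetry axiom to reconstruct $\psi^w$ on $\calI_{R,U}$.

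For stage one, fix any sink $v \in V \setminus R$ (by Lemma~\ref{lem:isosink} these are exactly the sinks). I would repeatedly apply the Independence of Sink Nodes Axiom (Axiom~\ref{axiom:sink}) to remove every other sink while preserving $\psi^w_v$; by Lemma~\ref{lem:projcritical}, each projection yields another critical set instance with the same critical set $R$, so after removing all other sinks we arrive at $\calI_{R, R \cup \{v\}}$. The Weighted Bargaining with Critical Sets Axiom (Axiom~\ref{axiom:weightedcritical}) then gives $\psi^w_v(\calI_{R,U}) = \frac{|R|\,w(v)}{|R|+1}$. For stage two, fix any $u \in R$; in $\calI_{R,R}$ every seed set activates only itself, so every node is an isolated node and hence a sink. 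Iterated application of Axiom~\ref{axiom:sink} reduces to the single-node instance on $\{u\}$, at which point Weighted Normalization (Axiom~\ref{axiom:weightednormalization}) forces $\psi^w_u(\calI_{R,R}) = w(u)$.

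For stage three, I would invoke Weighted Node Symmetry to telescope from $\calI_{R,U}$ down to $\calI_{R,R}$. The hypothesis of Axiom~\ref{axiom:weightedsymmetry} is met throughout: in every intermediate critical set instance $\calI_{R,U'}$ the non-sink nodes are exactly $R$ and are pairwise symmetric in the sense of Definition~\ref{def:symmetric} via the complete bipartite structure from $R$ to $U' \setminus R$. Substituting the value $w(v) - \psi^w_v(\calI_{R,U'}) = \frac{w(v)}{|R|+1}$ from stage one, each removal of a sink $v$ contributes exactly $\frac{w(v)}{|R|(|R|+1)}$ to $\psi^w_u$, and summing over all removed sinks yields
\[
\psi^w_u(\calI_{R,U}) = w(u) + \frac{w(V \setminus R)}{|R|(|R|+1)}.
\]
A consistency check against Weighted Normalization confirms that the resulting values sum to $w(V)$.

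The main obstacle will be carefully verifying that the hypothesis of Axiom~\ref{axiom:weightedsymmetry} is preserved at every projection step and that stage one can be re-invoked on each intermediate instance. In particular, when a sink $v$ is removed, the centrality $\psi^w_v$ used in the symmetry equation must be computed in the current intermediate instance $\calI_{R,U'}$, not in the original $\calI_{R,U}$; this works only because Lemma~\ref{lem:projcritical} guarantees that the critical set $R$ stays fixed throughout the chain of projections, so the uniform factor $\frac{1}{|R|(|R|+1)}$ carries through the telescoping sum. A minor subtlety is the boundary case $U = R$, for which stage three is vacuous and the conclusion reduces directly to stage two.
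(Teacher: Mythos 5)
Your overall strategy (pin down the sinks first, then the base instance $\calI_{R,R}$, then telescope with the Weighted Node Symmetry Axiom) matches the paper's, but stage one contains a genuine error: you treat all sinks $v \in V\setminus R$ identically, whereas the axioms force two different values depending on whether $v \in U\setminus R$ or $v \in V\setminus U$. For $v \in U\setminus R$ your argument is fine: projecting away the other sinks yields, by Lemma~\ref{lem:projcritical}, the instance with critical set $R$ and target set $R\cup\{v\}$ on node set $R\cup\{v\}$, and Axiom~\ref{axiom:weightedcritical} gives $\frac{|R|\,w(v)}{|R|+1}$. But for an isolated node $v \in V\setminus U$, Lemma~\ref{lem:projcritical} says that removing nodes of $U\setminus R$ shrinks the \emph{target} set, so after removing all other sinks you land on the critical set instance with critical set $R$ and target set $R$ on node set $R\cup\{v\}$ --- an instance in which $R$ does \emph{not} activate $v$. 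That is not $\calI_{R,v}$, so Axiom~\ref{axiom:weightedcritical} does not apply; instead every node in this projected instance is isolated, and further projection plus Weighted Normalization (Axiom~\ref{axiom:weightednormalization}) forces $\psi^w_v(\calI_{R,U}) = w(v)$, not $\frac{|R|\,w(v)}{|R|+1}$.

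This error propagates into stage three: the leftover share $w(v)-\psi^w_v$ from an isolated node is $0$, not $\frac{w(v)}{|R|+1}$, so the telescoping sum for $u\in R$ should run only over $U\setminus R$, giving $\psi^w_u(\calI_{R,U}) = w(u) + \frac{w(U\setminus R)}{|R|(|R|+1)}$ rather than your $w(u) + \frac{w(V\setminus R)}{|R|(|R|+1)}$. Your normalization sanity check does not catch this, because your (incorrect) assignment also happens to sum to $w(V)$; the two assignments are distinguished only by actually running the axioms on the isolated nodes. The fix is exactly the case split the paper makes (and that the unweighted Lemma~\ref{lem:uniquecritical} already makes): handle $V\setminus U$, $U\setminus R$, and $R$ as three separate cases, with the first resolved by projecting down to a single isolated node and applying Weighted Normalization. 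The remainder of your argument --- in particular the verification that the hypothesis of Axiom~\ref{axiom:weightedsymmetry} holds at every intermediate projection because the non-sinks are exactly the pairwise-symmetric set $R$, and that $\psi^w_v$ must be evaluated in the current intermediate instance --- is sound and matches the paper.
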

\begin{proof}
	Consider the critical set instance $\calI_{R,U}$.
	We first consider a node $v \in V\setminus U$.
	By Lemma~\ref{lem:isosink}, every node $u \in V\setminus R$ is a sink node.
	Then we can apply the Sink Node Axiom (Axiom~\ref{axiom:sink}) to iteratively
	remove all nodes in $U\setminus R$ without changing $v$'s centrality measure.
	After removing nodes in $U\setminus R$, we know that in the remaining projected instance,
	every node becomes an isolated node.
	Then we can further remove all other nodes and only leave $v$ in the graph, still
	not changing $v$'s centrality measure.
	When $v$ is the only node left in the graph, 
	we then apply the Weighted Normalization Axiom (Axiom~\ref{axiom:weightednormalization})
	and obtain that 
	$\psi^w_v(\calI_{R,U}) = w(v)$.
	Thus, every node $v \in V\setminus U$ has uniquely determined centrality measure $w(v)$.
	
	Next, we consider a node $v\in U\setminus R$.
	By Lemma~\ref{lem:isosink}, every node $v \in V\setminus R$ is a sink node.
	Then we can apply the Sink Node Axiom (Axiom~\ref{axiom:sink}) to iteratively
	remove all these sink nodes except $v$, such that the centrality measure of $v$ does not change
	after the removal.
	By Lemma~\ref{lem:projcritical}, the remaining instance with node set $R\cup\{v\}$ is 
	still a critical set instance with critical set $R$ and target set $R\cup \{v\}$.
	Thus we can apply the Weighted Bargaining with Critical Set Axiom (Axiom~\ref{axiom:weightedcritical}) to this remaining
	influence instance, and know $\psi^w_v(\calI_{R,U}) = |R|w(v)/(|R|+1)$,
	for every node $v\in U\setminus R$.
	
	Finally, we consider a node $v\in R$.
	Again we can remove all sink nodes in $V\setminus R$ iteratively by influence projection until
	we only have nodes in $R$ left, which is the instance $\calI_{R,R}$ in the graph with node set $R$.
	It is straightforward to verify that every pair of nodes in $R$ are symmetric.
	Therefore, we can apply the Weighted Node Symmetry Axiom (Axiom~\ref{axiom:weightedsymmetry})
	to node $v\in R$.
	In particular, for every isolated node $u \in V\setminus U$, since we have $\psi^w_u(\calI_{R,U}) = w(u)$,
	there is no leftover share from $u$ that $v$ could claim.
	For every node $u' \in U \setminus R$, we have $\psi^w_{u'}(\calI_{R,U}) = |R|w(u')/(|R|+1)$,
	and thus the leftover share from $u'$ is $w(u')/(|R|+1)$.
	By Axiom~\ref{axiom:weightedsymmetry}, node $v\in R$ would obtain $w(u')/(|R|(|R|+1))$ from $u'$.
	In the final projected instance $\calI_{R,R}$ with node set $R$, it is easy to check that every node
	is an isolated node. 
	Thus by a similar argument of removing all other nodes and applying Weighted Normalization Axiom,
	we know that in this final projected instance $v$'s weighted centrality is $w(v)$.
	Summing them up by Axiom~\ref{axiom:weightedsymmetry}, we have
	$\psi^w_v(\calI_{R,U}) = w(v) + \frac{w(U\setminus R)}{|R|(|R|+1)}$.
	
	Therefore, the weighted centrality measure for instance $\psi^w(\calI_{R,U})$ is uniquely determined.
\end{proof}

Once we set up the uniqueness for the critical set instances in the above lemma, the rest proof
	follows the proof for the unweighted axiom set $\cA$.
In particular, the linear independence lemma (Lemma~\ref{lem:linearind}) remains the same, since it
	only concerns about influence instances and is not related to node weights.
Lemma~\ref{lem:unique} also follows, excepted that when arguing the centrality uniqueness for
	the null influence instance $\calI^N$, we again use repeated projection and apply
	the Weighted Normalization Axiom (Axiom~\ref{axiom:weightednormalization}) 
	to show that each node $v$ in the null instance has the unique centrality measure
	of $w(v)$.
Therefore, Theorem~\ref{thm:weightedShapleyCen2} holds.

\section{Martingale Tail Bounds} \label{sec:martingale}

There are numerous variants of Chernoff bounds and the more general martingale tail bounds
	in the literature (e.g.~\cite{MU05,CL06,tang15}).
However, they either cover the case of independent variables, or Bernoulli variables, or a bit
	looser bounds, or some general cases with different conditions.
In this section, for completeness, we state the general martingale tail bounds that we need
	for this paper, and provide a complete proof.
The proof structure follows that of \cite{MU05} for Chernoff bounds, but the result is more general.

\begin{theorem}[Martingale Tail Bounds] \label{thm:martingaletail}
	Let $\bX_1, \bX_2, \ldots, \bX_t$ be random variables 
	with range $[0,1]$.
\begin{enumerate}
	\item[(1)] 
	Suppose that $\E[\bX_i \mid \bX_1, \bX_2, \ldots, \bX_{i-1}] \le \mu_i$ for every
	$i\in [t]$.
	Let $\bY = \sum_{i=1}^t \bX_i$, and $\mu = \sum_{i=1}^t \mu_i$.
	For any $\delta > 0$, we have:
	\begin{align*}
	\Pr\{\bY - \mu \ge \delta \cdot \mu \} &\le \exp\left( - \frac{\delta^2}{2+\frac{2}{3}\delta} \mu \right).
	\end{align*}
	
	\item[(2)] 
	Suppose that $\E[\bX_i \mid \bX_1, \bX_2, \ldots, \bX_{i-1}] \ge \mu_i$, $\mu_i \ge 0$, for every
	$i\in [t]$.
	Let $\bY = \sum_{i=1}^t \bX_i$, and $\mu = \sum_{i=1}^t \mu_i$.
	For any $0< \delta < 1$, we have:
	\begin{align*}
	\Pr\{\bY - \mu \le - \delta \cdot  \mu \} &\le \exp\left( - \frac{\delta^2}{2} \mu  \right).
	\end{align*}
\end{enumerate}	
\end{theorem}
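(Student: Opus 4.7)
\textbf{Proof proposal for Theorem~\ref{thm:martingaletail}.}
My plan is to adapt the standard moment-generating-function (Chernoff) argument so that it only requires conditional one-sided control on $\E[\bX_i \mid \bX_1, \ldots, \bX_{i-1}]$ rather than full independence. The central observation is that for any real $s$ and any $x \in [0,1]$, convexity of $e^{sx}$ gives the chord bound $e^{sx} \le 1 + x(e^s - 1)$. Taking conditional expectation on $\bX_i$ turns this into a bound on the conditional MGF in terms of $\E[\bX_i \mid \bX_1, \ldots, \bX_{i-1}]$, which is exactly where the hypotheses of Parts (1) and (2) enter.

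For Part (1), I would fix $s > 0$ and iteratively bound the joint MGF using the tower property:
\begin{align*}
\E[e^{s\bY}] &= \E\!\left[ e^{s\bY_{t-1}} \cdot \E[e^{s\bX_t} \mid \bX_1, \ldots, \bX_{t-1}] \right] \\
 &\le \E\!\left[ e^{s\bY_{t-1}} \cdot \left(1 + \mu_t(e^s - 1)\right) \right] \\
 &\le e^{\mu_t(e^s - 1)} \cdot \E[e^{s\bY_{t-1}}],
\end{align*}
where the first inequality uses $\E[\bX_t \mid \bX_1, \ldots, \bX_{t-1}] \le \mu_t$ together with $e^s - 1 > 0$, and the second uses $1 + y \le e^y$. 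Iterating gives $\E[e^{s\bY}] \le e^{\mu(e^s - 1)}$. Markov's inequality then yields $\Pr[\bY \ge (1+\delta)\mu] \le \exp(\mu(e^s - 1 - s(1+\delta)))$, and optimizing over $s$ by choosing $s = \ln(1+\delta)$ produces the classical Chernoff exponent $\mu\left(\delta - (1+\delta)\ln(1+\delta)\right)$. The final step is the elementary analytic inequality $(1+\delta)\ln(1+\delta) - \delta \ge \delta^2/(2 + \tfrac{2}{3}\delta)$ valid for all $\delta > 0$, which delivers the advertised bound.

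For Part (2), the same template works but with $s < 0$. The chord bound still gives $\E[e^{s\bX_i} \mid \bX_1, \ldots, \bX_{i-1}] \le 1 + \E[\bX_i \mid \bX_1, \ldots, \bX_{i-1}](e^s - 1)$, and now $e^s - 1 < 0$ combined with $\E[\bX_i \mid \bX_1, \ldots, \bX_{i-1}] \ge \mu_i \ge 0$ again yields $\E[e^{s\bX_i} \mid \bX_1, \ldots, \bX_{i-1}] \le e^{\mu_i(e^s - 1)}$; sign bookkeeping is the only subtlety here and is why the hypothesis $\mu_i \ge 0$ is required. Applying Markov to $e^{s\bY} \ge e^{s(1-\delta)\mu}$ (the direction flips because $s < 0$) and then setting $s = \ln(1-\delta)$ gives the exponent $\mu(-\delta - (1-\delta)\ln(1-\delta))$, and the closing analytic inequality $-\delta - (1-\delta)\ln(1-\delta) \le -\delta^2/2$ for $0 < \delta < 1$ finishes the argument.

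The main obstacle I anticipate is purely bookkeeping: keeping the direction of the inequalities consistent when $s$ changes sign in Part (2), and making sure the one-sided conditional expectation hypothesis ($\le$ versus $\ge$) lines up with the sign of $e^s - 1$ at each step of the MGF recursion. The two terminal analytic inequalities on $(1+\delta)\ln(1+\delta) - \delta$ and $-\delta - (1-\delta)\ln(1-\delta)$ are standard and can be verified by Taylor expansion or by checking the sign of an auxiliary function and its derivative, so I would simply invoke them. No martingale-specific machinery (such as optional stopping) is needed for this theorem; the proof is essentially a conditional-MGF version of Chernoff.
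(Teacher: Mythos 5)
Your proposal is correct and follows essentially the same route as the paper: a conditional-MGF Chernoff argument via the tower property, the linear (chord) upper bound on $e^{sx}$ for $x\in[0,1]$, Markov's inequality with the optimal choice $s=\pm\ln(1\pm\delta)$, and the two standard analytic inequalities relating the Chernoff exponents to $\delta^2/(2+\tfrac{2}{3}\delta)$ and $\delta^2/2$. The only cosmetic difference is that the paper first proves the Bernoulli case and then invokes a separate convex-domination lemma to reduce general $[0,1]$-valued variables to Bernoulli ones, whereas you apply the convexity bound pointwise in a single step; these are equivalent.
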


\begin{lemma} \label{lem:bernoullitightbound}
Let $\bX_1, \bX_2, \ldots, \bX_t$ be Bernoulli random variables with range $\{0,1\}$.
\begin{enumerate}
	\item[(1)] 
	Suppose that $\E[\bX_i \mid \bX_1, \bX_2, \ldots, \bX_{i-1}] \le \mu_i$ for every
	$i\in [t]$.
	Let $\bY = \sum_{i=1}^t \bX_i$, and $\mu = \sum_{i=1}^t \mu_i$.
	For any $\delta > 0$, we have:
	\begin{align} 
	\Pr\{\bY - \mu \ge \delta \cdot \mu \} &\le 
		\left( \frac{e^\delta}{(1+\delta)^{(1+\delta)}}\right)^\mu. \label{eq:uppertail}
	\end{align}
	
	\item[(2)] 
	Suppose that $\E[\bX_i \mid \bX_1, \bX_2, \ldots, \bX_{i-1}] \ge \mu_i$, $\mu_i \ge 0$, for every
	$i\in [t]$.
	Let $\bY = \sum_{i=1}^t \bX_i$, and $\mu = \sum_{i=1}^t \mu_i$.
	For any $ 0 < \delta < 1$, we have:
	\begin{align}
	\Pr\{\bY - \mu \le - \delta \cdot  \mu \} &\le \left( \frac{e^{-\delta}}{(1-\delta)^{(1-\delta)}}  \right)^\mu. \label{eq:lowertail}
	\end{align}
\end{enumerate}	
\end{lemma}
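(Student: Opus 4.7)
}
The plan is to follow the standard exponential-moment (Chernoff-style) approach, but carried out one step at a time via the tower property so that only conditional expectations are needed, and exploiting the Bernoulli structure to get a clean moment generating function bound at each step.

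For part (1), I would start from Markov's inequality applied to $e^{s\bY}$ with a parameter $s>0$ to be chosen later:
\[
\Pr\{\bY \ge (1+\delta)\mu\} = \Pr\{e^{s\bY} \ge e^{s(1+\delta)\mu}\} \le \frac{\E[e^{s\bY}]}{e^{s(1+\delta)\mu}}.
\]
The core technical step is to bound $\E[e^{s\bY}]$ using tower property. Since each $\bX_i \in \{0,1\}$, conditional on $\bX_1,\ldots,\bX_{i-1}$:
\[
\E[e^{s\bX_i} \mid \bX_1,\ldots,\bX_{i-1}] = 1 + \E[\bX_i \mid \bX_1,\ldots,\bX_{i-1}](e^s-1).
\]
Since $s>0$, we have $e^s-1>0$, so by the hypothesis $\E[\bX_i\mid\cdots]\le \mu_i$ this is at most $1 + \mu_i(e^s-1) \le e^{\mu_i(e^s-1)}$. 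Iteratively conditioning from $i=t$ down to $i=1$, I obtain $\E[e^{s\bY}] \le e^{\mu(e^s-1)}$. Choosing $s=\ln(1+\delta)>0$ and simplifying yields Eq.~\eqref{eq:uppertail}.

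For part (2), I would apply Markov's inequality to $e^{-s\bY}$ with $s>0$:
\[
\Pr\{\bY \le (1-\delta)\mu\} = \Pr\{e^{-s\bY} \ge e^{-s(1-\delta)\mu}\} \le \frac{\E[e^{-s\bY}]}{e^{-s(1-\delta)\mu}}.
\]
Again using the Bernoulli identity conditionally, $\E[e^{-s\bX_i}\mid \bX_1,\ldots,\bX_{i-1}] = 1 + \E[\bX_i\mid\cdots](e^{-s}-1)$. Now $e^{-s}-1<0$, and the hypothesis gives the lower bound $\E[\bX_i\mid\cdots]\ge \mu_i\ge 0$, so this product is at most $1 + \mu_i(e^{-s}-1) \le e^{\mu_i(e^{-s}-1)}$. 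Tower-property iteration yields $\E[e^{-s\bY}]\le e^{\mu(e^{-s}-1)}$. Setting $s=-\ln(1-\delta)>0$ (valid since $0<\delta<1$) produces Eq.~\eqref{eq:lowertail}.

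The subtle point to be careful about is the direction of the inequality when moving from the conditional expectation hypothesis to the moment generating function bound: in part (1) the coefficient $e^s-1$ is positive so the upper bound on $\E[\bX_i\mid\cdots]$ transfers directly, whereas in part (2) the coefficient $e^{-s}-1$ is negative, which is precisely why the lower bound hypothesis on $\E[\bX_i\mid\cdots]$ (together with nonnegativity $\mu_i\ge 0$) is exactly what is needed. Apart from this sign-watching, the argument is a direct conditional adaptation of the standard Bernoulli Chernoff derivation; no additional martingale machinery (e.g., Azuma/Hoeffding) is required, since the Bernoulli range makes the per-step moment generating function exactly computable.
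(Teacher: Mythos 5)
Your proposal is correct and follows essentially the same route as the paper's proof: Markov's inequality applied to the exponential moment, a per-step conditional bound on the Bernoulli moment generating function via $1+z\le e^z$, iteration through the tower property, and the optimal choices $s=\ln(1+\delta)$ and $s=-\ln(1-\delta)$. Your explicit remark about the sign of $e^{-s}-1$ in part (2) is exactly the point the paper handles implicitly, so nothing is missing.
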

\begin{proof}
Let $\bY_i = \sum_{j=1}^i \bX_j$, for $i\in [t]$.
For (1), applying Markov's inequality, for any $\alpha > 0$, we have
\begin{align}
& \Pr\{\bY - \mu \ge \delta \cdot \mu \}  \nonumber \\
	& = \Pr\{ e^{\alpha\bY} \ge e^{\alpha(1+\delta)\mu} \} \nonumber\\
	& \le \frac{\E[e^{\alpha\bY}]}{e^{\alpha(1+\delta)\mu}} \label{eq:markov}
		\\
	& = \frac{\E[e^{\alpha(\bX_t + \bY_{t-1})}]}{e^{\alpha(1+\delta)\mu}} \nonumber\\
	& = \frac{\E[ \E[e^{\alpha(\bX_t + \bY_{t-1})} \mid \bX_1,\ldots, \bX_{t-1}]]}{e^{\alpha(1+\delta)\mu}} \nonumber\\
	& = \frac{\E[ e^{\alpha \bY_{t-1}}\E[e^{\alpha \bX_t} \mid \bX_1,\ldots, \bX_{t-1}]]}{e^{\alpha(1+\delta)\mu}}, \label{eq:Yt}
\end{align}
where Inequality~\eqref{eq:markov} is by the Markov's inequality.
Next, for the term $\E[e^{\alpha \bX_t} \mid \bX_1,\ldots, \bX_{t-1}]$, we utilize the fact that $X_t$ is a Bernoulli
	random variable and have
\begin{align}
& \E[e^{\alpha \bX_t} \mid \bX_1,\ldots, \bX_{t-1}] \nonumber \\
	& = \Pr\{\bX_t = 0 \mid \bX_1,\ldots, \bX_{t-1}\} \E[1 \mid \bX_1,\ldots, \bX_{t-1}]  \nonumber \\
	& \quad + \Pr\{\bX_t = 1 \mid \bX_1,\ldots, \bX_{t-1}\} \E[e^\alpha \mid \bX_1,\ldots, \bX_{t-1}]  \nonumber \\
	& = (e^\alpha - 1) \E[\bX_t \mid \bX_1,\ldots, \bX_{t-1}] + 1 \nonumber \\
	& \le (e^\alpha - 1) \mu_t + 1 \nonumber \\
	& \le e ^{(e^\alpha - 1) \mu_t}, \nonumber
\end{align}
where the last inequality uses the fact that $1+z \le e^z$ for any $z$.
Plugging the above inequality into Eq.\eqref{eq:Yt}, we have
\begin{align*}
& \Pr\{\bY - \mu \ge \delta \cdot \mu \} \\
	& \le \frac{ e ^{(e^\alpha - 1) \mu_t} \E[ e^{\alpha \bY_{t-1}}]}{e^{\alpha(1+\delta)\mu}} \\
	& \le \frac{ e ^{(e^\alpha - 1) (\mu_{t-1} + \mu_t)} \E[ e^{\alpha \bY_{t-2}}]}{e^{\alpha(1+\delta)\mu}} \\
	& \le \cdots \le \frac{ e ^{(e^\alpha - 1) \mu} }{e^{\alpha(1+\delta)\mu}}.
\end{align*}
Finally, by setting $\alpha = \ln(\delta + 1)$, we obtain Inequality~\eqref{eq:uppertail}.

For (2), the analysis follows the same strategy: for any $\alpha>0$, we have
\begin{align*}
& \Pr\{\bY - \mu \le - \delta \cdot \mu \} \\
& = \Pr\{e^{-\alpha\bY} \ge e^{-\alpha(1-\delta)\mu} \} \\
	& \le \frac{\E[e^{-\alpha\bY}]}{e^{-\alpha(1-\delta)\mu}} 
	\\
	& = \frac{\E[ e^{-\alpha \bY_{t-1}}\E[e^{-\alpha \bX_t} \mid \bX_1,\ldots, \bX_{t-1}]]}{e^{-\alpha(1-\delta)\mu}} \\
	& \le \frac{ e ^{(e^{-\alpha} - 1) \mu_t} \E[ e^{\alpha \bY_{t-1}}]}{e^{-\alpha(1-\delta)\mu}} \\
	& \le \frac{ e ^{(e^{-\alpha} - 1) \mu} }{e^{-\alpha(1-\delta)\mu}}.
\end{align*}
Finally, by setting $\alpha = -\ln(1-\delta)$, we obtain Inequality~\eqref{eq:lowertail}.
\end{proof}

Recall that a function $f$ is {\em convex} if for any $x_1$ and $x_2$ and for any $0\le \lambda \le 1$,
\[
f(\lambda x_1 + (1-\lambda) x_2) \le \lambda f(x_1) + (1-\lambda) f(x_2).
\]

\begin{lemma} \label{lem:convex}
Let $\bX$ be a random variable with range $[0,1]$, and let $p = \E[\bX]$.
Let $\bZ$ be the Bernoulli random variable with $\Pr(\bZ=1) = p$.
For any convex function $f$, we have $\E[f(\bX)] \le \E[f(\bZ)]$.
\end{lemma}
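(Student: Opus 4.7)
The plan is to use the convexity of $f$ pointwise on $[0,1]$ and then take expectations. The key observation is that for any $x \in [0,1]$, we can write $x$ as the convex combination $x = (1-x)\cdot 0 + x \cdot 1$. Since $f$ is convex, applying the definition of convexity to this combination gives the pointwise bound $f(x) \le (1-x) f(0) + x f(1)$ for every $x \in [0,1]$.

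Next, I would substitute the random variable $\bX$ (which takes values in $[0,1]$) into this inequality, yielding $f(\bX) \le (1-\bX) f(0) + \bX f(1)$ almost surely. Taking expectations of both sides and using linearity of expectation together with $\E[\bX] = p$, one obtains
\[
\E[f(\bX)] \le (1-p) f(0) + p f(1).
\]
Finally, by the definition of $\bZ$, the right-hand side is exactly $\E[f(\bZ)] = \Pr(\bZ = 0) f(0) + \Pr(\bZ = 1) f(1)$, which completes the proof.

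There is no real obstacle here since the entire argument is a one-line application of convexity followed by taking expectations. The only subtlety worth mentioning is that we must ensure the pointwise inequality is valid for \emph{every} $x\in[0,1]$ (not just the extreme points), which is exactly what convexity on the interval $[0,1]$ provides. The boundedness assumption $\bX \in [0,1]$ is essential because it lets us represent each realization of $\bX$ as a convex combination of the endpoints $0$ and $1$; this would fail if $\bX$ could exceed $1$ or be negative.
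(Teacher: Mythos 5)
Your proof is correct and follows essentially the same route as the paper's: both apply convexity pointwise via the representation $x = (1-x)\cdot 0 + x\cdot 1$ to get $f(x) \le (1-x)f(0) + xf(1)$ on $[0,1]$, then take expectations (the paper just writes this out as an explicit integral over the underlying probability space). No gaps.
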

\begin{proof}
Let $\bX$ be a random variable defined on the probability space $(\Omega, \Sigma, P)$, then we have
\[
p = \E[\bX] = \int_{\Omega} \bX(\omega) P({\rm d} \omega),
\]
and
\[
\E[f(\bX)] = \int_{\Omega} f(\bX(\omega)) P({\rm d} \omega).
\]
Applying the convexity of $f$, together with the assumption that the range of $\bX$ is $[0,1]$, we have
\begin{align*}
\E[f(\bX)] & = \int_{\Omega} f(\bX(\omega)) P({\rm d} \omega) \\
& \le \int_{\Omega} ((1-\bX(\omega)) f(0) + \bX(\omega) f(1)) P({\rm d} \omega)\\
& = \left(1 - \int_{\Omega} \bX(\omega) P({\rm d} \omega) \right) f(0) \\
& \quad + \left(\int_{\Omega} \bX(\omega) P({\rm d} \omega)\right) \cdot f(1) \\
& = (1-p) f(0) + p f(1) = \E[f(\bZ)].
\end{align*}
%
\end{proof}

\begin{lemma}
	Let $\bX_1, \bX_2, \ldots, \bX_t$ be random variables with range  $[0,1]$.
Items (1) and (2) in Lemma~\ref{lem:bernoullitightbound} still hold.
\end{lemma}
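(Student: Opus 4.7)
The plan is to bootstrap from the Bernoulli case (Lemma~\ref{lem:bernoullitightbound}) to the $[0,1]$ case by using Lemma~\ref{lem:convex} to compare moment generating functions. The only place in the proof of Lemma~\ref{lem:bernoullitightbound} where Bernoulli-ness was used was in the line $\E[e^{\alpha \bX_t} \mid \bX_1,\ldots,\bX_{t-1}] = (e^\alpha - 1)\E[\bX_t \mid \bX_1,\ldots,\bX_{t-1}] + 1$; every other step only relied on $\E[\bX_i \mid \bX_1,\ldots,\bX_{i-1}] \le \mu_i$ (resp.\ $\ge \mu_i$) and telescoping. So the strategy is to replace that single equality with an inequality valid for all $[0,1]$-valued variables.

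Concretely, I would first fix any $\alpha \in \R$ and observe that $x \mapsto e^{\alpha x}$ is convex. Then, conditioning on $\bX_1,\ldots,\bX_{t-1}$, the conditional distribution of $\bX_t$ is a distribution on $[0,1]$ with some conditional mean $p_t := \E[\bX_t \mid \bX_1,\ldots,\bX_{t-1}]$. Applying Lemma~\ref{lem:convex} to this conditional distribution and the convex function $f(x) = e^{\alpha x}$ yields
\[
\E[e^{\alpha \bX_t} \mid \bX_1,\ldots,\bX_{t-1}] \;\le\; (1-p_t)\cdot 1 + p_t \cdot e^{\alpha} \;=\; (e^\alpha - 1) p_t + 1.
\]
For part (1) take $\alpha = \ln(1+\delta) > 0$ so $e^\alpha - 1 = \delta > 0$, and use $p_t \le \mu_t$; for part (2) take $\alpha = -\ln(1-\delta) < 0$ so $e^\alpha - 1 = \delta/(1-\delta) > 0$, wait — need to be careful: in the lower-tail case we instead work with $e^{-\alpha \bX_t}$ for $\alpha > 0$, and then $e^{-\alpha} - 1 < 0$, so the inequality $p_t \ge \mu_t$ flips the sign correctly. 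In both cases we arrive at $\E[e^{\pm\alpha \bX_t} \mid \bX_1,\ldots,\bX_{t-1}] \le e^{(e^{\pm\alpha}-1)\mu_t}$, which is exactly the bound used in Lemma~\ref{lem:bernoullitightbound}. Iterating the tower-property argument verbatim and optimizing $\alpha$ gives Inequalities~\eqref{eq:uppertail} and~\eqref{eq:lowertail} for the $[0,1]$ case.

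Finally, to obtain the cleaner forms stated in Theorem~\ref{thm:martingaletail}, I would invoke the standard analytic inequalities
\[
(1+\delta)\ln(1+\delta) - \delta \;\ge\; \frac{\delta^2}{2+\tfrac{2}{3}\delta} \quad (\delta > 0), \qquad
(1-\delta)\ln(1-\delta) + \delta \;\ge\; \frac{\delta^2}{2} \quad (0 < \delta < 1),
\]
which follow by comparing Taylor series of both sides (both reduce to verifying that a certain explicit smooth function is nonnegative on the relevant interval, e.g.\ by checking the sign of its derivatives at zero). Plugging these in turns the product forms $(e^\delta/(1+\delta)^{1+\delta})^\mu$ and $(e^{-\delta}/(1-\delta)^{1-\delta})^\mu$ into the exponential bounds of Theorem~\ref{thm:martingaletail}.

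The only mildly subtle step is the convexity reduction itself: one must apply Lemma~\ref{lem:convex} conditionally and then integrate, which is why Lemma~\ref{lem:convex} was stated in the abstract probability-space form rather than for a single scalar variable. Everything else is mechanical reuse of the Bernoulli-case proof and standard log-inequality manipulations, so I do not expect a genuine obstacle beyond being careful with signs in the lower-tail computation.
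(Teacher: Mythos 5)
Your proposal is correct and follows essentially the same route as the paper: the paper likewise applies Lemma~\ref{lem:convex} conditionally to the convex functions $e^{\alpha x}$ and $e^{-\alpha x}$, replacing $\bX_t$ by a Bernoulli variable with the same conditional mean to recover the bound $\E[e^{\pm\alpha \bX_t}\mid \bX_1,\ldots,\bX_{t-1}]\le e^{(e^{\pm\alpha}-1)\mu_t}$, after which the Bernoulli-case argument goes through verbatim. Your sign check in the lower-tail case is exactly the point the paper also relies on, and the final relaxation to the exponential forms is handled in the paper by the separate subsequent lemma rather than inside this one.
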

\begin{proof}
For item (1), following the proof of Lemma~\ref{lem:bernoullitightbound}, we can still
	obtain Inequality~\eqref{eq:Yt}.
For the term $\E[e^{\alpha \bX_t} \mid \bX_1,\ldots, \bX_{t-1}]$, 
	notice that function $f(x) = e^{\alpha x}$ is convex for any $\alpha > 0$.
Therefore, we can apply Lemma~\ref{lem:convex}.
In particular, let $\bZ_t$ be a Bernoulli random variable with 
	$\E[\bZ_t] = \E[\bX_t \mid \bX_1, \ldots, \bX_{t-1}]$.
By Lemma~\ref{lem:convex}, we have
\begin{align*}
& \E[e^{\alpha \bX_t} \mid \bX_1,\ldots, \bX_{t-1}]  \\
& \le \E[e^{\alpha \bZ_t}] \\
& = \Pr\{\bZ_t = 0 \} \cdot 1  + \Pr\{\bZ_t = 1\} \cdot e^\alpha   \\
& = (e^\alpha - 1) \E[\bZ_t] + 1  \\
& \le (e^\alpha - 1) \mu_t + 1  \\
& \le e ^{(e^\alpha - 1) \mu_t}.
\end{align*}
The rest of the proof of item (1) is the same.

For item (2), the treatment is the same, as long as we notice that
	function $g(x) = e^{-\alpha x}$ is also convex for any $\alpha > 0$.
\end{proof}

\begin{lemma}
For $\delta > 0$, we have
\begin{align} \label{eq:relaxuppertail}
\frac{e^\delta}{(1+\delta)^{(1+\delta)}} \le \exp\left( - \frac{\delta^2}{2+\frac{2}{3}\delta} \right).
\end{align}
For $0< \delta < 1$, we have
\begin{align} \label{eq:relaxlowertail}
\frac{e^{-\delta}}{(1-\delta)^{(1-\delta)}} \le \exp\left( - \frac{\delta^2}{2} \right).
\end{align}
\end{lemma}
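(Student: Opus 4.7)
The plan is to take logarithms on both sides and reduce each inequality to a one-variable calculus exercise that can be dispatched by repeated differentiation plus the observation that a function vanishing at $0$ whose derivative is non-negative is itself non-negative on $[0,\infty)$ (resp.\ $[0,1)$).

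For the lower-tail inequality \eqref{eq:relaxlowertail}, taking logarithms reduces the claim to
$G(\delta) := (1-\delta)\ln(1-\delta) + \delta - \tfrac{\delta^2}{2} \ge 0$ for $0 \le \delta < 1$.
I would observe $G(0)=0$ and compute $G'(\delta) = -\ln(1-\delta) - \delta$, so $G'(0) = 0$, and $G''(\delta) = \tfrac{1}{1-\delta} - 1 = \tfrac{\delta}{1-\delta} \ge 0$ on $[0,1)$. Thus $G'$ is nondecreasing on $[0,1)$ starting from $G'(0)=0$, hence $G' \ge 0$, hence $G$ is nondecreasing from $G(0)=0$, hence $G \ge 0$. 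This is straightforward.

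For the upper-tail inequality \eqref{eq:relaxuppertail}, taking logarithms gives the equivalent statement $(1+\delta)\ln(1+\delta) - \delta \ge \tfrac{\delta^2}{2+\tfrac{2}{3}\delta}$ for $\delta > 0$. Since both sides are positive, I would clear the denominator $2 + \tfrac{2}{3}\delta = \tfrac{6+2\delta}{3}$ and reduce to showing
$F(\delta) := (6+2\delta)\bigl[(1+\delta)\ln(1+\delta) - \delta\bigr] - 3\delta^2 \ge 0$
for $\delta > 0$. Direct differentiation yields the clean cascade $F(0) = F'(0) = F''(0) = 0$, with
$F'(\delta) = 4\bigl[(2+\delta)\ln(1+\delta) - 2\delta\bigr],$
$F''(\delta) = 4\bigl[\ln(1+\delta) - \tfrac{\delta}{1+\delta}\bigr],$ and
$F'''(\delta) = \tfrac{4\delta}{(1+\delta)^2} \ge 0.$
Applying the ``vanishes at $0$ with nonnegative derivative'' argument three times in succession (first to lift $F''' \ge 0$ to $F'' \ge 0$, then to $F' \ge 0$, then to $F \ge 0$) closes the case.

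The main obstacle is purely mechanical: verifying the chain of derivatives for $F$ and making sure the constants line up after clearing the denominator $2+\tfrac{2}{3}\delta$. There is no conceptual difficulty — the choice of the auxiliary polynomial factor $(6+2\delta)$ in $F$ is precisely what makes the third derivative come out to a manifestly nonnegative expression, which is what drives the proof. If one preferred to avoid triple differentiation, an alternative would be to expand both sides of the logarithmic form as power series in $\delta$ — namely $(1+\delta)\ln(1+\delta) - \delta = \sum_{k\ge 2} \tfrac{(-1)^k \delta^k}{k(k-1)}$ versus the geometric expansion of $\tfrac{\delta^2/2}{1+\delta/3}$ — and compare term by term after grouping, but the derivative approach is cleaner.
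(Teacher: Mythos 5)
Your proof is correct. The lower-tail inequality is handled exactly as in the paper (your $G$ is the negative of the paper's $g$, with the same second-derivative argument). For the upper-tail inequality your route genuinely differs in one technical respect: the paper keeps the rational term and works with $f(\delta)=\delta-(1+\delta)\ln(1+\delta)+\frac{\delta^2}{2+\frac{2}{3}\delta}$ directly, which forces it to analyze a third derivative $f'''(\delta)=\frac{1}{(1+\delta)^2}-\frac{81}{(3+\delta)^4}$ that changes sign at $\delta=3$, and then to invoke $\lim_{\delta\to\infty}f''(\delta)=0$ to conclude $f''\le 0$ on all of $[0,\infty)$. By clearing the denominator first and working with $F(\delta)=(6+2\delta)\bigl[(1+\delta)\ln(1+\delta)-\delta\bigr]-3\delta^2$, you obtain $F'''(\delta)=\frac{4\delta}{(1+\delta)^2}\ge 0$ everywhere, so the argument collapses to three clean applications of ``vanishes at $0$ with nonnegative derivative,'' with no sign-change or limit analysis needed. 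I verified your derivative cascade ($F(0)=F'(0)=F''(0)=0$, $F'=4[(2+\delta)\ln(1+\delta)-2\delta]$, $F''=4[\ln(1+\delta)-\frac{\delta}{1+\delta}]$) and it checks out. One minor remark: you do not need both sides of the logarithmic inequality to be positive to clear the denominator --- only the positivity of $2+\frac{2}{3}\delta$ matters --- but this is harmless. Your version is arguably the cleaner of the two proofs.
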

\begin{proof}
For Inequality~\eqref{eq:relaxuppertail}, we take logarithm of both sides, and obtain
	the following equivalent inequality:
\[
f(\delta) = \delta - (1+\delta)\ln (1+\delta) + \frac{\delta^2}{2+\frac{2}{3}\delta} \le 0.
\]
Taking the derivatives of $f(\delta)$, we have:
\begin{align*}
f'(\delta) & = 1 - \frac{1+\delta}{1+\delta} - \ln(1+\delta) + \frac{2\delta}{2+\frac{2}{3}\delta}
	- \frac{\frac{2}{3}\delta^2}{(2+\frac{2}{3}\delta)^2} \\
	& = - \ln(1+\delta) + \frac{18\delta + 3\delta^2}{2 \cdot (3+\delta)^2} \\
	& = - \ln(1+\delta) + \frac{3 \cdot ((3 +\delta)^2 - 9)}{2 \cdot (3+\delta)^2} \\
	& = - \ln(1+\delta) + \frac{3}{2} - \frac{27}{2 \cdot (3+\delta)^2}; \\
f''(\delta) & = -\frac{1}{1+\delta} + \frac{27}{(3+\delta)^3}; \\
f'''(\delta) & = \frac{1}{(1+\delta)^2} - \frac{81}{(3+\delta)^4}. \\
\end{align*}
When $\delta \ge 0$, $f'''(\delta)=0$ has exactly two solutions at $\delta_1=0$ and
	$\delta_2 = 3$.
When $\delta \in (0,3)$, $f'''(\delta) <0$, and
	when $\delta > 3$, $f'''(\delta) > 0$.
	
Looking at $f''(\delta)$, we have $f''(0)=0$ and $\lim_{\delta\rightarrow +\infty} f''(\delta) = 0$.
When $\delta$ increases from $0$ to $3$, since $f'''(\delta) <0$, $f''(\delta)$ decreases, which means
	$f''(\delta) <0$; when $\delta$ increases from $3$, since $f'''(\delta)>0$,
	$f''(\delta)$ keeps increasing, but never increases above $0$.
Thus, for all $\delta \ge 0$, $f''(\delta) \le 0$.

Looking at $f'(\delta)$, we have $f'(0) = 0$.
Since $f''(\delta)\le 0$ for all $\delta \ge 0$, $f'(\delta)$ is monotonically non-increasing, and thus
	$f'(\delta) \le 0$ for all $\delta \ge 0$.
	
Finally, looking at $f(\delta)$, we have $f(0) = 0$.
Since $f'(\delta) \le 0$ for all $\delta \ge 0$, $f(\delta)$ is monotonically non-increasing, and thus
$f(\delta) \le 0$ for all $\delta \ge 0$.

For Inequality~\eqref{eq:relaxlowertail}, we take logarithm of both sides, and obtain the following
	equivalent inequality:
\begin{align*}
g(\delta) = -\delta - (1-\delta) \ln (1-\delta) + \frac{\delta^2}{2} \le 0.
\end{align*}
Taking the derivatives of $g(\delta)$, we have:
\begin{align*}
g'(\delta) & = -1 +\ln(1-\delta) + \frac{1-\delta}{1-\delta} + \delta \\
		& = \ln(1-\delta) + \delta; \\
g''(\delta) & = -\frac{1}{1-\delta} + 1.
\end{align*}
Looking at $g''(\delta)$, it is clear that $g''(\delta) < 0$ for $\delta\in (0,1)$, and $g''(0)=0$.
This implies that $g'(\delta)$ is monotonically decreasing in $(0,1)$.
Since $g'(0) = 0$, we have $g'(\delta) \le 0$ for $\delta \in (0,1)$.
This implies that $g(\delta)$ is monotonically increasing in $(0,1)$.
Finally, since $g(0) = 0$, we have $g(\delta)\le 0$ for all $\delta \in (0,1)$.
\end{proof}
}

\end{document}